\pgfplotsset{compat=1.17}
\newtheorem{theorem}{Theorem}[section]
\newtheorem{lemma}[theorem]{Lemma}
\newtheorem{proposition}[theorem]{Proposition}
\newtheorem{corollary}[theorem]{Corollary}
\newtheorem{definition}[theorem]{Definition}
\newtheorem{assumption}[theorem]{Assumption}
\newtheorem{remark}[theorem]{Remark}
\newcommand{\Yin}{Y_{\mathrm{in}}}
\tikzset{
    capacitor/.style={thick, decoration={markings, mark=at position 0.5 with {\node[transform shape] {$\parallel$};}}, postaction={decorate}},
    inductor/.style={thick, decoration={coil, aspect=0.4, segment length=3mm, amplitude=2mm}, decorate},
    resistor/.style={thick, decoration={zigzag, segment length=4mm, amplitude=1mm}, decorate},
    junction/.style={circle, fill=black, inner sep=0pt, minimum size=4pt},
    block/.style={draw, thick, minimum width=2cm, minimum height=1cm, align=center},
    arrow/.style={-{Stealth[length=2mm]}, thick},
    dashedarrow/.style={-{Stealth[length=2mm]}, thick, dashed},
}
\begin{document}
\raggedbottom

\title{Exact Multimode Quantization of Superconducting Circuits via Boundary Admittance and Continued Fractions}

\author{Mustafa Bakr}
\email{mustafa.bakr@physics.ox.ac.uk}
\affiliation{Clarendon Laboratory, Department of Physics, University of Oxford}
\author{Robin Wopalenski}
\affiliation{Clarendon Laboratory, Department of Physics, University of Oxford}

\begin{abstract}
Accurate extraction of linearized quantum circuit models from electromagnetic simulations is essential for the design of superconducting circuits. We present a quantization framework based on the driving-point admittance $Y_{\mathrm{in}}(s)$ seen by a Josephson junction embedded in an arbitrary passive linear environment. By taking the Schur complement of the nodal admittance matrix, we show that the linearized coupled system obeys an eigenvalue-dependent boundary condition, $s Y_{\mathrm{in}}(s) + 1/L_J = 0$, whose roots determine the dressed linear mode frequencies. This boundary condition admits an exact continued fraction representation: any positive-real admittance can be realized as a canonical Cauer ladder, yielding a tridiagonal (Jacobi) structure that enables certified convergence bounds via interlacing theorems.For the full nonlinear Hamiltonian, we treat Josephson junctions in the charge basis, where each cosine potential is exactly tridiagonal, and couple them to cavity modes in the Fock basis; in the general multi-junction case this yields a block-tridiagonal structure solvable by matrix continued fractions, enabling systematic diagonalization across all coupling regimes from dispersive through ultrastrong and deep strong coupling. The resulting quantization procedure is: (i)~compute or measure $Y_{\mathrm{in}}(s)$, (ii)~solve the boundary condition to obtain dressed eigenfrequencies, (iii)~synthesize an equivalent passive network, and (iv)~quantize while retaining the full cosine nonlinearity of the Josephson junction. We prove that junction participation decays as $\mathcal{O}(\omega_n^{-1})$ at high frequencies for any circuit with finite shunt capacitance, ensuring ultraviolet convergence of perturbative corrections without imposed cutoffs. This framework bridges classical microwave network theory and circuit quantum electrodynamics, enabling Hamiltonian parameters, including corrections beyond the rotating-wave and dispersive approximations, to be extracted directly from simulated or measured admittance data.

\end{abstract}

\maketitle

%%%%%%%%%%%%%%%%%%%%%%%%%%%%%%%%%%%%%%%%%%%%%%%%%%%%%%%%%%%%%%%%%%%%%%%%%%%%%%%

\section{Introduction}
\label{sec:introduction}

A Josephson junction~\cite{Josephson1962} embedded in an electromagnetic environment experiences that environment as a frequency-dependent load. At low frequencies, the surrounding circuit appears predominantly inductive; at high frequencies, predominantly capacitive; and near resonances, strongly reactive. The starting point of this paper is that this entire influence can be compressed into a single scalar object that fully determines the junction's linear response: the driving-point admittance seen at the junction port $\Yin(s)$. Once $\Yin(s)$ is known, either from measurement or from electromagnetic simulation followed by network reduction, it can be used to construct a quantitative quantum description of the coupled system.

The picture is simple. Consider a junction shunted by its capacitance $C_J$ and connected to an otherwise arbitrary passive linear network built from resonators, transmission lines, and coupling elements. When the junction oscillates at angular frequency $\omega$, it injects current into the network, and the ratio of current to voltage at the port is $\Yin(s)$. This back-action reshapes the junction dynamics. Near an environmental resonance, the admittance is large and the junction hybridizes strongly with the resonant structure into dressed normal modes. Far from resonances, the junction behaves as a weakly perturbed anharmonic oscillator with small admixtures of environmental degrees of freedom. A crucial consequence is that, under a finite effective shunt capacitance at the junction port, high-frequency environmental modes are naturally suppressed at the junction. For $\omega \gg \omega_p$, where $\omega_p$ is the junction plasma frequency~\cite{Koch2007}, the shunt capacitance presents a low impedance that effectively short-circuits the junction port. In this regime, environmental eigenfunctions have vanishing junction participation: the junction node becomes energetically expensive to excite, suppressing high-frequency mode amplitudes at the junction. This physical suppression provides an intrinsic ultraviolet regularization within passive circuit models with finite junction-port capacitance, a condition satisfied by standard superconducting devices. This ultraviolet decoupling is consistent with exact multimode quantization analyses of superconducting circuits, where treating the electromagnetic environment as a passive infinite-dimensional system yields finite Hamiltonians without ad-hoc cutoffs~\cite{ParraRodriguez2018, Malekakhlagh2016, Malekakhlagh2017}. Closely related ultraviolet-convergent treatments of capacitive coupling to infinite-dimensional circuits were developed earlier in~\cite{Paladino2003, Gely2017}.

A central thesis of this paper is that the continued fraction is not merely a computational tool but a natural mathematical structure for quantum circuits with localized nonlinearity coupled to linear environments. From network theory, Cauer showed that any positive-real driving-point admittance $\Yin(s)$ admits a canonical ladder realization, corresponding to an exact (finite or infinite) continued-fraction expansion~\cite{Brune1931, Cauer1958}. The physical circuit topology is a chain of LC sections with nearest-neighbor coupling, which upon quantization yields a nearest-neighbor (tridiagonal, or block-tridiagonal) chain representation for the linear environment. From spectral theory, the resolvent of a tridiagonal (Jacobi) matrix is exactly a continued fraction whose poles are eigenvalues and whose residues encode eigenvector components~\cite{Stieltjes1894, Chihara1978, Simon2005OPUC}. This correspondence between network realizability and spectral theory is the bridge connecting classical circuit analysis to circuit QED. The same chain structure appears in quantum dissipation theory, where a system coupled to a harmonic bath maps to a semi-infinite chain via the Lanczos algorithm~\cite{Caldeira1983, Chin2010, Prior2010}. For circuit QED, the admittance $\Yin(s)$ encodes the complete information about the linear environment seen at the junction port; combined with the junction parameters, it defines a Hamiltonian whose spectrum can be obtained nonperturbatively, with certified convergence, across coupling regimes. 

The framework developed here begins from $\Yin(s)$ and proceeds to an exact circuit Hamiltonian that retains the full cosine nonlinearity of the Josephson potential. From this Hamiltonian one obtains the dressed spectrum, including avoided crossings and level repulsion~\cite{JaynesCummings1963, CohenTannoudji1992}. The familiar circuit-QED parameters, qubit-mode coupling strengths~\cite{Blais2004, Wallraff2004}, transmon anharmonicity~\cite{Koch2007}, and dispersive shifts~\cite{Schuster2005, Boissonneault2009}, emerge as controlled limits in appropriate parameter regimes. Because the construction retains the full cosine potential without invoking the rotating-wave approximation~\cite{Kockum2019, Blais2021}, corrections beyond the RWA and dispersive limits are obtained by systematic expansion of the exact Hamiltonian. Dissipative rates follow from combining the admittance description with Fermi's golden rule~\cite{Dirac1927, Fermi1950}, recovering the Purcell effect~\cite{Purcell1946} and its multimode generalizations~\cite{Houck2008, Clerk2010, BakrPurcell2025}. The central mathematical result is that the Schur complement reduction, which eliminates internal degrees of freedom to yield the driving-point admittance, constructs an eigenvalue-dependent boundary condition in the sense of Sturm-Liouville spectral theory. This equivalence, made explicit in Theorem~\ref{thm:bc_equivalence}, provides both conceptual clarity and practical computational tools: the dressed mode frequencies are roots of the boundary condition equation $s\Yin(s) + 1/L_J = 0$, and the mode shapes follow from network synthesis of the positive-real function $\Yin(s)$. The continued fraction framework provides exact implicit expressions whose numerical evaluation admits certified convergence across coupling regimes. Following the coupling classification reported in~\cite{Kockum2019}, we work uniformly from the dispersive regime ($g/|\Delta| \lesssim 0.1$) where the Jaynes-Cummings model applies, through strong coupling where vacuum Rabi oscillations are resolved, to ultrastrong coupling ($g/\omega_r \gtrsim 0.1$) where counter-rotating terms become significant and the ground state contains virtual photons, and into deep strong coupling ($g/\omega_r \gtrsim 1$) where light and matter become strongly entangled in the ground state. The quantum Rabi model conserves parity $\hat{P} = e^{i\pi \hat{a}^\dagger \hat{a}} \otimes \hat{\sigma}_z$, and in the parity eigenbasis its Hamiltonian becomes tridiagonal, precisely the continued fraction structure (see Section~\ref{sec:cf_framework}). This observation, combined with Braak's proof of integrability~\cite{Braak2011}, shows that continued-fraction representations of the resolvent/spectral function yield the exact spectrum of the quantum Rabi model at arbitrary coupling.

The practical advantage of this approach is that circuit parameters are not introduced as phenomenological fit parameters. They are inferred from the environment as encoded in $\Yin(s)$, so that the quantum description remains anchored to the measured or simulated electromagnetic structure while avoiding the computational burden of full field quantization. The Schur complement, a classical tool in linear algebra for block Gaussian elimination~\cite{Zhang2005, Haynsworth1968}, has found widespread application in computational electromagnetics. In microwave engineering, the coupled-integral-equation technique uses Schur complement reductions to analyze waveguide discontinuities and multi-port junctions~\cite{Amari1996, Amari1999}. Similar reduction procedures appear in domain decomposition methods for scattering problems~\cite{Turc2017} and in iterative solvers for coupled electromagnetic-circuit problems~\cite{Chou2016}. The same mathematical structure underlies circuit quantization: the driving-point admittance obtained by eliminating internal nodes simultaneously encodes the boundary condition for dressed modes. 

\begin{figure*}[t]
\centering
\begin{tikzpicture}[node distance=2cm, scale=0.95, transform shape]
    % Define block styles
    \tikzstyle{block} = [draw, thick, rectangle, minimum width=2.2cm, minimum height=1.2cm, align=center, rounded corners=3pt]
    \tikzstyle{arrow} = [-{Stealth[length=2.5mm]}, thick]
    
    % Main pipeline blocks
    \node[block, fill=blue!12] (Y) {Measured/Simulated\\Admittance\\$\Yin(\omega)$};
    \node[block, fill=blue!12, right=1.5cm of Y] (schur) {Schur\\Complement\\Reduction};
    \node[block, fill=green!12, right=1.5cm of schur] (synth) {Cauer/Foster\\Network\\Synthesis};
    \node[block, fill=orange!12, right=1.5cm of synth] (quant) {Canonical\\Quantization};
    \node[block, fill=red!12, right=1.5cm of quant] (obs) {Physical\\Observables};
    
    % Arrows between blocks
    \draw[arrow] (Y) -- (schur);
    \draw[arrow] (schur) -- (synth);
    \draw[arrow] (synth) -- (quant);
    \draw[arrow] (quant) -- (obs);
    
    % Labels below blocks
    \node[below=0.4cm of Y, text width=2.5cm, align=center, font=\footnotesize] {From EM simulation\\or S-parameters};
    \node[below=0.4cm of schur, text width=2.5cm, align=center, font=\footnotesize] {Eliminates internal\\nodes (Sec.~\ref{sec:schur})};
    \node[below=0.4cm of synth, text width=2.5cm, align=center, font=\footnotesize] {Continued fraction\\(Sec.~\ref{sec:synthesis})};
    \node[below=0.4cm of quant, text width=2.5cm, align=center, font=\footnotesize] {Full cosine\\(Sec.~\ref{sec:quantization})};
    \node[below=0.4cm of obs, text width=2.5cm, align=center, font=\footnotesize] {$\omega_q$, $g$, $\chi$,\\$\alpha$, $\Gamma$};
    
    % Properties annotations above
    \node[above=0.3cm of schur, font=\footnotesize\itshape, text=blue!70!black] {PR preserved};
    \node[above=0.3cm of synth, font=\footnotesize\itshape, text=green!60!black] {Tridiagonal $H$};
    \node[above=0.3cm of quant, font=\footnotesize\itshape, text=orange!70!black] {$\hat{H} = \sum_n \omega_n \hat{a}_n^\dagger\hat{a}_n - E_J\cos\hat{\phi}$};
    
    % Bottom: Coupling regimes box
    \node[draw, thick, dashed, rounded corners, minimum width=12cm, minimum height=1.2cm, below=2.2cm of synth] (regimes) {};
    \node[above=0.1cm, font=\bfseries\small] at (regimes.north) {Full Cosine Retained; No RWA at Hamiltonian Level};
    \node[font=\footnotesize] at (regimes.center) {
        Dispersive: $g/|\Delta| < 0.1$ \quad\quad
        Strong: $g/|\Delta| \gtrsim 0.1$ \quad\quad
        USC: $g/\omega_r \gtrsim 0.1$ \quad\quad
        DSC: $g/\omega_r \gtrsim 1$
    };
    
\end{tikzpicture}
\caption{Overview of the boundary condition framework. The measured or simulated admittance $\Yin(\omega)$ is reduced to the junction port via Schur complement, synthesized into an equivalent circuit via Cauer continued fraction expansion, and quantized to yield the exact Hamiltonian. Physical observables are extracted by numerical diagonalization or perturbative expansion. The Hamiltonian retains the full cosine nonlinearity; standard circuit QED emerges as the dispersive limit. The dashed box indicates that the framework applies uniformly from dispersive through deep strong coupling: different regimes are distinguished by which terms in the cosine expansion are numerically significant, not by modifications to the quantization procedure.}
\label{fig:framework_overview}
\end{figure*}
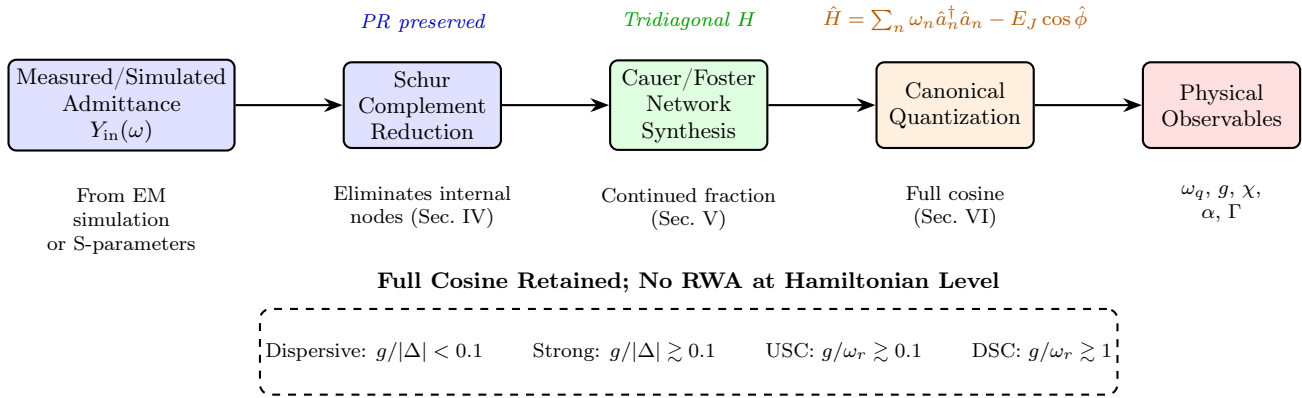

The present work is motivated by three points where the standard treatment of superconducting quantum circuits remains either implicit or incomplete, and aims to provide a boundary-condition formulation that makes the dressed-mode structure, ultraviolet behavior, and multi-mode interference directly computable from $\Yin(s)$ with explicit convergence guarantees. The theoretical description of superconducting quantum circuits is often introduced through the Jaynes--Cummings Hamiltonian~\cite{JaynesCummings1963} and its dispersive limit~\cite{Koch2007, Blais2004, Blais2021}. The transmon qubit design~\cite{Koch2007} and the circuit-QED architecture~\cite{Wallraff2004, Schuster2007} have become the dominant paradigm for superconducting quantum computing, building on the first demonstration of coherent control in a single-Cooper-pair box~\cite{Nakamura1999}. Impedance-based quantization methods~\cite{Nigg2012, Solgun2014, Solgun2015, ParraRodriguez2018, ParraRodriguez2019, Vool2017, Minev2021, Lu2025} provide a more systematic foundation by tying the quantum model directly to a measured or simulated linear response function. In particular, multiport impedance quantization~\cite{Solgun2014, Solgun2015, Egusquiza2022, Labarca2024, Alghadeer2025} establishes that broad classes of positive-real impedance matrices can be synthesized and quantized to yield an exact Hamiltonian.

First, the reduction of a multiport electromagnetic environment to a single-port admittance at the junction is typically presented as an engineering convenience. Here we show that the Schur-complement reduction constructs an eigenvalue-dependent boundary condition at the junction node. In that language, the dressed linear mode condition,
\begin{equation}
s\,\Yin(s)+\frac{1}{L_J}=0,
\end{equation}
is not merely a root-finding recipe but the spectral condition imposed by the junction termination. This connection to Sturm--Liouville theory~\cite{Fulton1977, Walter1973, Zettl2005}, developed further in~\cite{BakrQM2025}, provides a clean conceptual bridge between circuit reduction and operator-theoretic mode structure. Early studies focused on affine and single-pole eigenparameter dependence at the boundary~\cite{Fulton1977, Walter1973}. The general rational case, under Nevanlinna–Herglotz restrictions, was developed in detail by Binding, Browne, and Watson~\cite{Binding2002I, Binding2002II} and by Dijksma, Langer, and de Snoo~\cite{Dijksma1987}. For distributed environments such as transmission line resonators, we extend this formulation to include the spatial dependence of the admittance $\Yin(x,s)$, where the junction position $x_J$ selects the relevant boundary data and determines mode-dependent coupling strengths (see Section~\ref{sec:schur}).

Second, multimode treatments are frequently accompanied by ultraviolet worries: naive perturbative sums over modes appear to diverge unless an external cutoff is imposed. The resolution is that junction participation decays at high frequency in any circuit with finite shunt capacitance. We show that when $\Yin$ includes a finite shunt capacitance at the junction port, a condition satisfied by physical realization within passive lumped-element circuit theory, the junction participation scales as $\phi_n^J=\mathcal{O}(\omega_n^{-1})$ for large mode frequency $\omega_n$. This yields $\lambda_n\propto \omega_n^{-3/2}$ for the zero-point flux participation and implies absolute convergence of the perturbative sums that generate Kerr terms~\cite{Nigg2012}, Lamb shifts~\cite{Lamb1947, Bethe1947}, dispersive shifts~\cite{Schuster2005}, and related quantities. The result is stated and proved at the theorem level within the assumptions of lumped passive circuit theory~\cite{Newcomb1966, Anderson1973} (Theorem~\ref{thm:uv_main} in Section~\ref{sec:uv}). We extend the proof to distributed multimode environments through spectral measure arguments, establishing ultraviolet convergence for any passive linear environment satisfying the finite-capacitance assumption.

Third, the usual derivations of the coupling $g$~\cite{Blais2004}, anharmonicity $\alpha=-E_C/\hbar$~\cite{Koch2007}, and dispersive shift $\chi=g^2\alpha/[\Delta(\Delta+\alpha)]$~\cite{Koch2007, Boissonneault2009} are often developed inside the Jaynes--Cummings framework~\cite{JaynesCummings1963} and only then interpreted in circuit terms. We reverse this logic. Starting from the exact Hamiltonian
\begin{equation}
\hat{H}=\sum_n \hbar\omega_n \hat{a}_n^\dagger \hat{a}_n - E_J\cos\left(\frac{\hat{\Phi}_J}{\varphi_0}\right),
\end{equation}
we expand the cosine in a controlled way and recover the standard formulas as limits of the exact circuit model. This non-circular route makes clear which approximations are invoked, where they enter, and how they can be relaxed systematically. The operator-level content of parameters such as $g$ is thereby made explicit: $g$ is the coefficient of the bilinear operator $(\hat{b}+\hat{b}^\dagger)(\hat{a}+\hat{a}^\dagger)$ in the cosine expansion, and the rotating-wave approximation projects this onto the excitation-exchange sector $\hat{a}^\dagger\hat{b} + \hat{a}\hat{b}^\dagger$. The validity of this projection requires conditions we state precisely in Section~\ref{sec:coupling_regimes}. Furthermore, we derive explicit transcendental equations for dressed frequencies and provide closed-form solutions that enable circuit QED engineers to compute all relevant parameters directly from design specifications.

A subtlety that pervades circuit QED but is often left implicit concerns the choice of basis for quantizing nonlinear elements. The transmon qubit and the cavity resonator, though both described by oscillator-like Hamiltonians, require different quantum mechanical treatments due to the topology of their configuration spaces. The superconducting phase $\varphi$ across a Josephson junction is a compact variable living on a circle, with $\varphi$ and $\varphi + 2\pi$ representing the same physical state; this compactness reflects the discreteness of Cooper pair number. The natural basis is the charge basis $\{|n\rangle\}$ where $n \in \mathbb{Z}$ counts Cooper pairs, and the cosine potential couples only nearest-neighbor charge states, yielding the Mathieu equation~\cite{Koch2007}. In contrast, a linear resonator mode has a non-compact configuration variable with flux ranging over the entire real line, and the natural basis is the Fock basis of photon number eigenstates. For coupled systems, the correct procedure treats the transmon in the charge basis to obtain eigenstates and matrix elements, cavity modes in the Fock basis, and builds the full Hamiltonian in the product space, the approach employed by numerical packages such as scqubits~\cite{Groszkowski2021}. The continued fraction framework unifies these treatments: both charge-basis transmon and Fock-basis cavity yield tridiagonal Hamiltonians, and the boundary admittance encodes the linear sector while the Josephson nonlinearity is treated where the cosine is naturally tridiagonal.

The paper is organized as follows. Section~\ref{sec:circuit_variables} establishes circuit variables and Josephson constitutive relations. Section~\ref{sec:cf_framework} develops the continued fraction framework, establishing the equivalence between Cauer network synthesis, Jacobi matrix spectral theory, and the boundary condition formulation. Section~\ref{sec:schur} derives the Schur complement reduction, proves its equivalence to an eigenvalue-dependent boundary condition (Theorem~\ref{thm:bc_equivalence}), and develops the graphical solution method along with explicit transcendental equations for dressed frequencies. Section~\ref{sec:synthesis} treats network synthesis in Foster and Brune canonical forms. Section~\ref{sec:quantization} carries out canonical quantization to obtain the exact multimode Hamiltonian with the full cosine nonlinearity retained. Section~\ref{sec:uv} proves the ultraviolet convergence theorem (Theorem~\ref{thm:uv_main}), and extending to distributed multimode environments. Section~\ref{sec:multimode_purcell} analyzes multi-mode interference and Purcell suppression directly from the boundary-condition admittance. Section~\ref{sec:coupling_regimes} delineates coupling regimes with explicit validity conditions. Section~\ref{sec:observables} derives physical observables from the continued fraction structure, and Section~\ref{sec:usc_spinboson} treats ultrastrong coupling and spin-boson physics with experimental validation. Section~\ref{sec:parity_validation} establishes parity protection mechanisms and validates the matrix continued fraction against experimental data for a two-mode transmon in the strong nonlinear coupling regime.

%%%%%%%%%%%%%%%%%%%%%%%%%%%%%%%%%%%%%%%%%%%%%%%%%%%%%%%%%%%%%%%%%%%%%%%%%%%%%%%

\section{Circuit Variables and Constitutive Relations}
\label{sec:circuit_variables}

The quantum description of electrical circuits requires dynamical variables that admit a Lagrangian formulation and subsequent canonical quantization. The natural choice is the node flux, a time-integrated voltage that serves as the generalized coordinate for circuit dynamics. This section establishes the node flux formalism, derives the constitutive relations for linear and nonlinear elements, and states the convention for treating the junction capacitance that is maintained throughout. The central outcome is a consistent convention for treating the junction capacitance (Assumption~\ref{assumption:cj_convention}) that avoids double-counting and ensures well-defined canonical quantization.

%------------------------------------------------------------------------------
\subsection{The Node Flux as Generalized Coordinate}
\label{sec:node_flux}
%------------------------------------------------------------------------------

Consider a circuit network with $N$ nodes, where one node is designated as the ground reference with zero potential. The node flux at node $i$ is defined as
\begin{equation}
\Phi_i(t) \equiv \int_{-\infty}^{t} V_i(t') \, dt',
\label{eq:node_flux_def}
\end{equation}
where $V_i(t)$ is the voltage at node $i$ relative to ground. This definition implies the voltage-flux relation $V_i(t) = \dot{\Phi}_i(t)$. In the Laplace domain, the full transform relation is $\tilde{V}_i(s) = s \tilde{\Phi}_i(s) - \Phi_i(0^-)$; assuming zero initial conditions $\Phi_i(0^-) = 0$, this simplifies to $\tilde{V}_i(s) = s \tilde{\Phi}_i(s)$.

The node flux has dimensions of magnetic flux (Weber = Volt$\cdot$second). This is not coincidental: by Faraday's law, the time-integrated voltage around a closed loop equals the magnetic flux threading the loop. For superconducting circuits, the node flux is directly related to the gauge-invariant phase difference across circuit elements, with the conversion factor being the reduced flux quantum
\begin{equation}
\varphi_0 \equiv \frac{\Phi_0}{2\pi} = \frac{\hbar}{2e} \approx 0.329\,\mathrm{mWb}.
\label{eq:reduced_flux_quantum}
\end{equation}
We use $\varphi_0$ (with subscript zero) throughout to distinguish from the phase variable $\varphi$.

The physical interpretation becomes concrete when we consider how the flux appears in the energy expressions. For a capacitor, the stored energy depends on voltage, hence on $\dot{\Phi}$, making capacitance play the role of mass in the mechanical analogy. For an inductor, the stored energy depends on current, which is proportional to $\Phi$ itself, making inductance play the role of a spring constant. This identification of capacitance with kinetic energy and inductance with potential energy is the foundation of the Lagrangian formulation.

%------------------------------------------------------------------------------
\subsection{Constitutive Relations for Linear Elements}
\label{sec:linear_elements}
%------------------------------------------------------------------------------

The current through a capacitor with capacitance $C$ is proportional to the rate of change of voltage:
\begin{equation}
I_C = C\frac{dV}{dt} = C\ddot{\Phi}.
\label{eq:capacitor_current}
\end{equation}
The energy stored in the capacitor is
\begin{equation}
U_C = \frac{1}{2}CV^2 = \frac{1}{2}C\dot{\Phi}^2,
\label{eq:capacitor_energy}
\end{equation}
which depends on $\dot{\Phi}$ and therefore contributes to the kinetic energy in the Lagrangian.

The current through an inductor with inductance $L$ is given by integrating the voltage:
\begin{equation}
I_L = \frac{1}{L}\int V \, dt = \frac{\Phi_L}{L},
\label{eq:inductor_current}
\end{equation}
where $\Phi_L$ denotes the branch flux across the inductor (the difference of node fluxes at its terminals). Note the absence of time derivatives: the inductor current equals the branch flux divided by inductance, a relation that holds instantaneously in the time domain. In the Laplace domain with zero initial conditions, $\tilde{I}_L(s) = \tilde{\Phi}_L(s)/L$.
The energy stored in the inductor is
\begin{equation}
U_L = \frac{1}{2}LI_L^2 = \frac{\Phi_L^2}{2L},
\label{eq:inductor_energy}
\end{equation}
which depends on $\Phi_L$ alone and contributes to the potential energy.

A resistor with resistance $R$ dissipates energy and cannot be directly incorporated into a conservative Lagrangian. Its current is
\begin{equation}
I_R = \frac{V}{R} = \frac{\dot{\Phi}}{R}.
\label{eq:resistor_current}
\end{equation}
The treatment of dissipation requires coupling to a bath of oscillators, which we address in Section~\ref{sec:synthesis}.

The constitutive relations for inductors and capacitors imply that any LC ladder network has a natural chain structure. Consider a cascade of $N$ LC sections where node $n$ has flux $\Phi_n$, with $\Phi_0 = 0$ representing the grounded input. The Lagrangian is
\begin{equation}
\mathcal{L} = \sum_{n=1}^{N} \frac{1}{2}C_n \dot{\Phi}_n^2 - \sum_{n=1}^{N} \frac{(\Phi_n - \Phi_{n-1})^2}{2L_n},
\label{eq:lc_chain_lagrangian}
\end{equation}
where the first sum represents kinetic energy stored in shunt capacitors and the second represents potential energy stored in series inductors connecting adjacent nodes. The equations of motion couple only nearest-neighbor fluxes, yielding a tridiagonal dynamical matrix. Upon quantization, this becomes a tridiagonal Hamiltonian---the continued fraction structure developed in Section~\ref{sec:cf_framework}.

%------------------------------------------------------------------------------
\subsection{The Josephson Junction: Constitutive Relations}
\label{sec:josephson}
%------------------------------------------------------------------------------

The Josephson junction is characterized by two fundamental relations predicated by Josephson in 1962~\cite{Josephson1962}. The first Josephson relation gives the supercurrent through the junction as a function of the superconducting phase difference $\varphi$ across the barrier:
\begin{equation}
I_J = I_c \sin\varphi,
\label{eq:josephson_1}
\end{equation}
where $I_c$ is the critical current. The second Josephson relation connects the voltage across the junction to the time derivative of the phase:
\begin{equation}
V_J = \varphi_0 \frac{d\varphi}{dt}, \quad \text{where} \quad \varphi_0 \equiv \frac{\hbar}{2e} = \frac{\Phi_0}{2\pi}.
\label{eq:josephson_2}
\end{equation}
Integrating the second relation gives $\varphi = \Phi_J/\varphi_0$, where $\Phi_J$ is the branch flux across the junction. The current-flux relation becomes
\begin{equation}
I_J = I_c \sin\left(\frac{\Phi_J}{\varphi_0}\right).
\label{eq:josephson_current_flux}
\end{equation}

The potential energy is obtained by computing the work done against the junction current:
\begin{equation}
U_J = \int I_J \, d\Phi_J = -I_c\varphi_0\cos\left(\frac{\Phi_J}{\varphi_0}\right) = -E_J\cos\left(\frac{\Phi_J}{\varphi_0}\right),
\label{eq:josephson_energy}
\end{equation}
where $E_J \equiv I_c\varphi_0 = \hbar I_c/(2e)$ is the Josephson energy. We have chosen the constant of integration so that $U_J(\Phi_J = 0) = -E_J$, the minimum of the potential.

The cosine potential has minima at $\Phi_J = 2\pi n \varphi_0$ for integer $n$. Near any minimum, the potential is approximately harmonic:
\begin{equation}
U_J \approx -E_J + \frac{E_J}{2\varphi_0^2}\Phi_J^2 = -E_J + \frac{\Phi_J^2}{2L_J},
\label{eq:josephson_harmonic}
\end{equation}
where the Josephson inductance is
\begin{equation}
L_J \equiv \frac{\varphi_0^2}{E_J} = \frac{\varphi_0}{I_c}.
\label{eq:josephson_inductance}
\end{equation}
This linearization is valid for $|\Phi_J| \ll \varphi_0$, or equivalently $|\varphi| \ll 1$. In the transmon regime where $E_J/E_C \gg 1$, the phase fluctuations are small and the linearization provides a good starting point, though the exact treatment developed in this paper retains the full cosine.

The phase variable $\varphi = \Phi_J/\varphi_0$ is fundamentally different from the flux in a linear inductor: it is a compact variable with $\varphi \sim \varphi + 2\pi$, reflecting the quantization of Cooper pair number~\cite{Likharev1985, Averin1986}. The conjugate variable is the Cooper pair number operator $\hat{n} = -i\partial/\partial\varphi$ with eigenvalues $n \in \mathbb{Z}$. The cosine potential acts as a nearest-neighbor hopping term in the charge basis:
\begin{equation}
\cos\hat{\varphi} = \frac{1}{2}\sum_{n=-\infty}^{\infty} \big(|n+1\rangle\langle n| + |n\rangle\langle n+1|\big).
\label{eq:cosine_charge_basis}
\end{equation}
This tridiagonal structure in the charge basis is the continued fraction representation of the transmon Hamiltonian. The resulting eigenvalue problem is equivalent to the Mathieu equation~\cite{Koch2007, Cottet2002}, whose solutions give the transmon spectrum $\omega_{01} \approx \sqrt{8E_JE_C} - E_C$ and anharmonicity $\alpha \approx -E_C$ in the transmon limit $E_J/E_C \gg 1$.

%------------------------------------------------------------------------------
\subsection{The Junction Capacitance Convention}
\label{sec:cj_convention_subsec}
%------------------------------------------------------------------------------

A physical Josephson junction has a geometric capacitance $C_J$ arising from the parallel-plate structure of the electrodes separated by the insulating barrier. This capacitance participates in the circuit dynamics and must be accounted for in the quantization. However, there is freedom in how to incorporate $C_J$: it can be included in the linear electromagnetic environment described by $Y_{\mathrm{in}}(s)$, or it can be treated as a separate term in the junction Hamiltonian with a charging energy $Q_J^2/(2C_J)$.

\begin{assumption}[Junction Capacitance Inclusion]
\label{assumption:cj_convention}
The driving-point admittance $Y_{\mathrm{in}}(s)$ seen by the junction includes the junction capacitance $C_J$. Under this convention, the linear electromagnetic environment encompasses all linear elements connected to the junction node, including $C_J$. The Josephson junction then contributes only the nonlinear cosine potential to the Hamiltonian:
\begin{equation}
H_J = -E_J\cos\left(\frac{\hat{\Phi}_J}{\varphi_0}\right).
\label{eq:junction_hamiltonian}
\end{equation}
No separate charging energy term $Q_J^2/(2C_J)$ appears.
\end{assumption}

This convention, which has become standard in circuit QED following Ref.~\cite{Nigg2012}, has three important consequences that justify its adoption. First, it avoids double-counting: if $C_J$ were included both in $Y_{\mathrm{in}}(s)$ and as a separate charging term, the junction capacitance would be counted twice, leading to incorrect eigenfrequencies. Second, it ensures that the junction flux $\Phi_J$ is a coordinate within the synthesized network with a well-defined conjugate momentum. When $C_J$ is part of the network, the momentum $Q_J = \partial\mathcal{L}/\partial\dot{\Phi}_J$ arises naturally from the Lagrangian formulation, with $Q_J$ representing the charge on the total capacitance connected to the junction node. Third, it simplifies the boundary condition equation: the condition $sY_{\mathrm{in}}(s) + 1/L_J = 0$ for the dressed mode frequencies involves only the admittance and the Josephson inductance, with no additional terms. The alternative convention, excluding $C_J$ from $Y_{\mathrm{in}}(s)$ and adding a charging term, is mathematically valid if handled consistently. However, mixing elements of both conventions leads to double-counting. We maintain Assumption~\ref{assumption:cj_convention} throughout this paper. Although the charging energy $E_C = e^2/(2C_\Sigma)$ does not appear explicitly in the Hamiltonian under this convention, it determines the transmon spectrum through the ratio $E_J/E_C$. The capacitance $C_\Sigma$ entering $E_C$ is the total capacitance at the junction node, which equals the coefficient of $s$ in the high-frequency expansion of $Y_{\mathrm{in}}(s)$. The plasma frequency is $\omega_p = 1/\sqrt{L_J C_\Sigma} = \sqrt{8E_J E_C}/\hbar$, and the anharmonicity is $\alpha \approx -E_C/\hbar$.

%------------------------------------------------------------------------------
\subsection{High-Frequency Behavior of the Admittance}
\label{sec:high_freq}
%------------------------------------------------------------------------------

An important consequence of including $C_J$ in $Y_{\mathrm{in}}(s)$ is a constraint on the high-frequency behavior:
\begin{equation}
Y_{\mathrm{in}}(i\omega) = i\omega C_\Sigma + \mathcal{O}(1) \quad \text{as } \omega \to \infty,
\label{eq:high_freq_admittance}
\end{equation}
where $C_\Sigma \geq C_J > 0$ is the total shunt capacitance at the junction node. This behavior reflects the physical fact that at high frequencies, the capacitive admittance dominates over inductive or resistive contributions. The condition $C_\Sigma > 0$ is essential for the ultraviolet convergence proved in Section~\ref{sec:uv}. It ensures that the junction is not connected to a purely inductive environment, which would lead to divergent zero-point fluctuations. For any physical circuit containing a Josephson junction with finite $C_J > 0$, Assumption~\ref{assumption:cj_convention} automatically guarantees Eq.~\eqref{eq:high_freq_admittance}.

In the Cauer continued fraction representation (Section~\ref{sec:cauer_synthesis}), the high-frequency behavior $Y_{\mathrm{in}}(s) \sim sC_\Sigma$ corresponds to the leading term of a Type~II expansion, where the first element is a shunt capacitor $C_1 = C_\Sigma$. For a finite ladder with $N$ sections, the continued fraction terminates and the total shunt capacitance at the input is determined by the network topology. For an infinite ladder representing a continuum of modes, the continued fraction converges provided the coefficients satisfy appropriate growth conditions~\cite{Wall1948}. The physical requirement of finite junction capacitance translates to a convergent continued fraction, ensuring both ultraviolet regularity and well-defined spectral properties.

%%%%%%%%%%%%%%%%%%%%%%%%%%%%%%%%%%%%%%%%%%%%%%%%%%%%%%%%%%%%%%%%%%%%%%%%%%%%%%%

\section{The Continued Fraction Framework}
\label{sec:cf_framework}

This section develops the continued fraction representation as the fundamental mathematical structure underlying circuit quantization. We establish the equivalence between three perspectives: Cauer network synthesis from classical circuit theory, Jacobi matrix spectral theory from mathematical analysis, and the boundary condition formulation from Sturm-Liouville theory. This equivalence is not merely formal; it provides accurate computational methods across coupling regimes. Quantities traditionally obtained perturbatively (dispersive shifts, Lamb shifts, Purcell rates) can be computed nonperturbatively from the same underlying continued-fraction structure, and reduce to the usual perturbative formulas in their regimes of validity.

%------------------------------------------------------------------------------
\subsection{Cauer Canonical Form and Ladder Networks}
\label{sec:cauer_synthesis}
%------------------------------------------------------------------------------

The Cauer synthesis procedure, developed in the foundational works of network theory~\cite{Cauer1926, Cauer1958, Guillemin1957}, provides a canonical realization of any positive-real admittance function as a ladder network. We begin with the key definition.

\begin{definition}[Positive-Real Function]
\label{def:positive_real}
A function $Y(s)$ of the complex frequency $s = \sigma + i\omega$ is \emph{positive-real} if:
\begin{enumerate}
    \item[(i)] $Y(s)$ is analytic in the open right half-plane $\mathrm{Re}(s) > 0$;
    \item[(ii)] $Y(s^*) = Y^*(s)$ (reality condition); and
    \item[(iii)] $\mathrm{Re}[Y(s)] \geq 0$ for $\mathrm{Re}(s) > 0$.
\end{enumerate}
\end{definition}

Positive-real functions characterize passive linear networks: condition (iii) ensures that the network cannot generate power, since the time-averaged power dissipated is $P = \frac{1}{2}|V|^2 \mathrm{Re}[Y] \geq 0$. The fundamental theorem of network synthesis~\cite{Brune1931, Guillemin1957} states that every positive-real function is realizable as a passive network of resistors, inductors, and capacitors. For lossless networks (no resistors), the function is purely imaginary on the imaginary axis: $Y(i\omega) = iB(\omega)$ where $B(\omega)$ is the real-valued susceptance.

\begin{theorem}[Cauer Continued Fraction Expansion]
\label{thm:cauer_expansion}
Let $Y(s)$ be a real-rational driving-point admittance of a lossless passive one-port network with $N$ finite resonances. Then $Y(s)$ admits a unique Type~I continued fraction expansion:
\begin{equation}
Y(s) = \cfrac{1}{L_1 s + \cfrac{1}{C_1 s + \cfrac{1}{L_2 s + \cfrac{1}{C_2 s + \ddots}}}}
\label{eq:cauer_type1}
\end{equation}
or equivalently a unique Type~II expansion with a leading capacitive term:
\begin{equation}
Y(s) = C_0 s + \cfrac{1}{L_1 s + \cfrac{1}{C_1 s + \cfrac{1}{L_2 s + \ddots}}}.
\label{eq:cauer_type2}
\end{equation}
The coefficients $\{L_n, C_n\}_{n=1}^{N}$ are uniquely determined by $Y(s)$ within each form and are all positive.
\end{theorem}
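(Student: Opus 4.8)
The plan is to prove the Cauer continued fraction expansion by the classical pole-removal algorithm, exploiting the structural properties of lossless positive-real functions. First I would recall the Foster structure of a lossless driving-point admittance: any such $Y(s)$ is an odd rational function whose poles and zeros all lie on the imaginary axis, alternate along it, and are simple, with positive residues at the poles. Equivalently, $Y(i\omega)$ is purely imaginary with $dB/d\omega > 0$. The degree of the numerator and denominator differ by exactly one, so $Y(s)$ either has a pole at $s=\infty$ (behaves like $C_0 s$) or a zero at $s=\infty$ (behaves like $1/(L_1 s)$) — this dichotomy is precisely what distinguishes the Type~II and Type~I expansions. For the Type~I case I would start by noting that $Y(s)$ has a zero at infinity, so $Z_1(s) = 1/Y(s)$ has a pole at infinity; extract the polar part $L_1 s$ by writing $Z_1(s) = L_1 s + Z_1^{\mathrm{rem}}(s)$, where $L_1 > 0$ because it is the residue (equivalently $L_1 = \lim_{s\to\infty} Z_1(s)/s$, which is positive by the residue-positivity of the reciprocal lossless function).

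The key step is then to show that the remainder $Z_1^{\mathrm{rem}}(s)$ is again a lossless positive-real function of degree strictly less than $Z_1(s)$, so the procedure can iterate. I would verify this by checking that pole removal at infinity (i) preserves analyticity and the reality condition trivially, (ii) preserves the imaginary-axis pole/zero interlacing — this is the heart of the argument and follows from the Foster reactance theorem, since subtracting $L_1 s$ shifts the susceptance curve without destroying the monotonicity $dB/d\omega>0$ except at the removed pole, hence reduces the number of finite resonances by exactly one and leaves the remaining poles and zeros still simple and interlacing — and (iii) strictly lowers the McMillan degree by two (one pole at infinity, one compensating zero). After extracting $L_1 s$ from $Z_1$, the reciprocal $Y_2(s) = 1/Z_1^{\mathrm{rem}}(s)$ now has a pole at infinity, from which I extract $C_1 s$, and so on, alternating $L$'s and $C$'s. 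Since each step reduces the degree by two and all removed coefficients are positive, after finitely many steps — exactly $2N$ extractions for $N$ finite resonances — the process terminates in a constant or a single reactive element, giving the finite ladder.

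For uniqueness within each form, I would argue that the continued-fraction coefficients are forced: $L_1$ is determined as $\lim_{s\to\infty} 1/(s Y(s))$, then $Z_1^{\mathrm{rem}}$ is determined, then $C_1 = \lim_{s\to\infty} Z_1^{\mathrm{rem}}(s)/s \cdot$ (after reciprocating) — each coefficient is read off unambiguously as a limit or residue at infinity of the current remainder, and the remainders themselves are uniquely defined by subtraction. Hence the expansion in a given Type is unique. The equivalence of the two Types for a given $Y(s)$ depends only on whether $Y$ has a pole or zero at infinity; if it has a zero at infinity one uses Type~I directly, and if a pole one first writes $Y(s) = C_0 s + \tilde Y(s)$ with $\tilde Y$ having a zero at infinity and then applies Type~I to $\tilde Y$, which is why the theorem states "equivalently" — the forms are related by whether the leading capacitor $C_0$ is present. (A lossless one-port cannot simultaneously have a pole and zero at infinity, so exactly one form applies natively, but both can be written after a trivial initial manipulation if one insists.)

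The main obstacle I anticipate is making the interlacing-preservation step (ii) rigorous and self-contained rather than merely invoking Foster's theorem as a black box. One must carefully track what happens to the partial-fraction expansion $Y(s) = \sum_k \frac{2 a_k s}{s^2 + \omega_k^2} \,(+\, C_0 s)$ under reciprocation and subtraction: reciprocation interchanges poles and zeros, subtraction of the infinity-pole part must be shown to cancel exactly the highest-degree terms in numerator and denominator while keeping all residues of the remainder positive. The cleanest route is to phrase everything in terms of the alternation of imaginary-axis critical frequencies and the sign of $dB/d\omega$, and to note that removing the pole at infinity is the "degenerate" case $\omega_k \to \infty$ of removing a finite pole, which the Foster synthesis handles uniformly; I would state this as a lemma on "preservation of the lossless PR class under pole extraction at $\infty$" and give the residue-positivity computation explicitly, since that is the one genuinely non-formal ingredient. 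Everything else — positivity of each extracted $L_n, C_n$, termination after $2N$ steps, uniqueness — then follows mechanically.
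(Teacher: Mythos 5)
Your proposal is correct and follows essentially the same route as the paper's proof sketch: repeated extraction of the pole at infinity, with positivity of each extracted $L_n$, $C_n$ guaranteed by Foster's reactance theorem, degree reduction by two per step, termination after finitely many extractions, and uniqueness because each coefficient is forced as a limit at infinity of the current remainder. You supply more detail than the paper (which defers the interlacing-preservation step to Guillemin), but the underlying argument is the same.
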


\begin{proof}[Proof sketch]
The expansion follows from repeated pole extraction. For a lossless network, the impedance $Z(s) = 1/Y(s)$ has simple poles on the imaginary axis at the resonant frequencies $\{\pm i\omega_k\}$, with positive residues guaranteed by Foster's reactance theorem~\cite{Foster1924}. For Type~I, extract a series inductor by computing $L_1 = \lim_{s\to\infty} Z(s)/s$, then form the remainder $Y_1(s) = 1/(Z(s) - L_1 s)$. For Type~II, extract a shunt capacitor by computing $C_0 = \lim_{s\to\infty} Y(s)/s$, then form the remainder $Z_1(s) = 1/(Y(s) - C_0 s)$. Each extraction removes one resonance, reducing the rational function degree by two. The process terminates after $N$ steps for a network with $N$ resonances. Uniqueness within each form follows from the uniqueness of the extraction sequence---see Guillemin~\cite{Guillemin1957}, Chapter~4, for the complete treatment.
\end{proof}

\begin{figure}[t]
\centering
\begin{tikzpicture}[scale=0.9]
    % Type I Cauer ladder
    \draw[thick] (0,0) node[left] {Port} -- (1,0);
    
    % L1
    \draw[thick, decoration={coil, aspect=0.4, segment length=2.5mm, amplitude=1.5mm}, decorate] (1,0) -- (2.5,0);
    \node[above] at (1.75,0.3) {\footnotesize $L_1$};
    
    % C1 to ground
    \draw[thick] (2.5,0) -- (2.5,-0.3);
    \draw[thick] (2.2,-0.3) -- (2.8,-0.3);
    \draw[thick] (2.2,-0.45) -- (2.8,-0.45);
    \draw[thick] (2.5,-0.45) -- (2.5,-0.8);
    \node[right] at (2.85,-0.37) {\footnotesize $C_1$};
    
    % Ground symbol
    \draw[thick] (2.3,-0.8) -- (2.7,-0.8);
    \draw[thick] (2.35,-0.9) -- (2.65,-0.9);
    \draw[thick] (2.4,-1.0) -- (2.6,-1.0);
    
    % L2
    \draw[thick] (2.5,0) -- (3.5,0);
    \draw[thick, decoration={coil, aspect=0.4, segment length=2.5mm, amplitude=1.5mm}, decorate] (3.5,0) -- (5,0);
    \node[above] at (4.25,0.3) {\footnotesize $L_2$};
    
    % C2 to ground
    \draw[thick] (5,0) -- (5,-0.3);
    \draw[thick] (4.7,-0.3) -- (5.3,-0.3);
    \draw[thick] (4.7,-0.45) -- (5.3,-0.45);
    \draw[thick] (5,-0.45) -- (5,-0.8);
    \node[right] at (5.35,-0.37) {\footnotesize $C_2$};
    
    % Ground symbol
    \draw[thick] (4.8,-0.8) -- (5.2,-0.8);
    \draw[thick] (4.85,-0.9) -- (5.15,-0.9);
    \draw[thick] (4.9,-1.0) -- (5.1,-1.0);
    
    % Continuation
    \draw[thick] (5,0) -- (6,0);
    \node at (6.5,0) {$\cdots$};
    
    % Label
    \node[below] at (3.25,-1.5) {Cauer Type I Ladder};
\end{tikzpicture}
\caption{Cauer Type~I canonical form realizing the admittance $Y(s)$ of Eq.~\eqref{eq:cauer_type1}. The network consists of a cascade of series inductors $L_k$ and shunt capacitors $C_k$. Each LC section represents one rung of the ladder, with the continued fraction structure reflecting the recursive topology: the admittance looking into rung $k$ depends on all subsequent rungs via the nested fraction. This nearest-neighbor coupling is the classical origin of the tridiagonal quantum Hamiltonian, upon quantization, the flux across inductor $L_k$ couples only to the charges on capacitors $C_{k-1}$ and $C_k$.}
\label{fig:cauer_ladder}
\end{figure}

The physical circuit corresponding to Eq.~\eqref{eq:cauer_type1} is shown in Fig.~\ref{fig:cauer_ladder}. The ladder topology directly encodes the continued fraction structure: each series inductor $L_k$ followed by a shunt capacitor $C_k$ corresponds to one level of nesting in the fraction. The impedance at the port depends recursively on the impedance of all downstream elements, precisely mirroring the nested structure of Eq.~\eqref{eq:cauer_type1}. This recursive, nearest-neighbor connectivity is preserved under quantization, yielding a nearest-neighbor (tridiagonal) structure in the canonical coordinate representation of the linear sector.

%------------------------------------------------------------------------------
\subsection{Jacobi Matrices and the Resolvent}
\label{sec:jacobi_resolvent}
%------------------------------------------------------------------------------

The connection between continued fractions and spectral theory is well-established in mathematical analysis, originating with Stieltjes' seminal work on the moment problem~\cite{Stieltjes1894}. The key insight is that continued fractions encode the spectral properties of tridiagonal matrices.

\begin{definition}[Jacobi Matrix]
\label{def:jacobi_matrix}
A \emph{Jacobi matrix} is a symmetric tridiagonal matrix of the form
\begin{equation}
\mathrm{J} = \begin{pmatrix}
a_0 & b_1 & 0 & 0 & \cdots \\
b_1 & a_1 & b_2 & 0 & \cdots \\
0 & b_2 & a_2 & b_3 & \cdots \\
\vdots & \vdots & \vdots & \vdots & \ddots
\end{pmatrix}
\label{eq:jacobi_matrix}
\end{equation}
with real diagonal elements $\{a_n\}_{n \geq 0}$ and positive off-diagonal elements $\{b_n > 0\}_{n \geq 1}$.
\end{definition}

Jacobi matrices arise naturally from orthogonal polynomial theory: the three-term recurrence relation satisfied by orthogonal polynomials $\{P_n(x)\}$ with respect to a measure $\mu$,
\begin{equation}
x P_n(x) = b_{n+1} P_{n+1}(x) + a_n P_n(x) + b_n P_{n-1}(x),
\label{eq:three_term_recurrence}
\end{equation}
corresponds precisely to the eigenvalue equation $\mathrm{J} \mathrm{v} = x \mathrm{v}$ for the Jacobi matrix with coefficients $\{a_n, b_n\}$~\cite{Chihara1978, Szego1939}. Here $b_n$ is the off-diagonal element connecting rows $n-1$ and $n$ in the matrix, consistent with Definition~\ref{def:jacobi_matrix}.

\begin{theorem}[Resolvent-Continued Fraction Identity]
\label{thm:resolvent_cf}
Let $\mathrm{J}$ be an $N \times N$ Jacobi matrix with elements $\{a_n, b_n\}$, and let $|0\rangle$ denote the first basis vector. The $(0,0)$ element of the resolvent $(z\mathrm{I} - \mathrm{J})^{-1}$ is given by the continued fraction
\begin{multline}
G(z) \equiv 
\langle 0 | (z\mathrm{I} - \mathrm{J})^{-1} | 0 \rangle \\
= \cfrac{1}{z - a_0
 - \cfrac{b_1^2}{z - a_1
 - \cfrac{b_2^2}{z - a_2
 - \ddots
 - \cfrac{b_{N-1}^2}{z - a_{N-1}}}}}
\label{eq:resolvent_cf}
\end{multline}
The poles of $G(z)$ are the eigenvalues $\{E_k\}$ of $\mathrm{J}$, and the residues encode the squared amplitudes of the normalized eigenvectors at site $0$.
\end{theorem}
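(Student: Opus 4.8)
\textbf{Proof proposal.}

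The plan is to establish the continued-fraction form of the resolvent entry by induction on the matrix size, exploiting the Schur complement of the first row and column of $z\mathrm{I} - \mathrm{J}$. First I would write $z\mathrm{I}-\mathrm{J}$ in block form with the $(0,0)$ scalar $z-a_0$, the row/column coupling only through $b_1$ (since the matrix is tridiagonal), and the lower-right block being $z\mathrm{I}' - \mathrm{J}'$, where $\mathrm{J}'$ is the $(N-1)\times(N-1)$ Jacobi matrix obtained by deleting row and column $0$. The standard block-inversion formula gives
\begin{equation}
\langle 0|(z\mathrm{I}-\mathrm{J})^{-1}|0\rangle = \frac{1}{(z-a_0) - b_1^2\,\langle 0'|(z\mathrm{I}'-\mathrm{J}')^{-1}|0'\rangle},
\label{eq:schur_step}
\end{equation}
where $|0'\rangle$ is the first basis vector of the reduced space; the only subtlety is verifying that the rank-one coupling vector $(b_1,0,\dots,0)^{T}$ picks out precisely the $(0',0')$ entry of the reduced resolvent, which is immediate from the tridiagonal sparsity pattern. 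Iterating \eqref{eq:schur_step} $N-1$ times, with the base case being the $1\times1$ matrix whose resolvent entry is $1/(z-a_{N-1})$, unfolds exactly the nested fraction \eqref{eq:resolvent_cf}.

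For the statement about poles and residues, I would argue as follows. Since $\mathrm{J}$ is real symmetric it is diagonalizable with an orthonormal eigenbasis $\{|v_k\rangle\}$ and real eigenvalues $\{E_k\}$; expanding the resolvent spectrally gives $G(z) = \sum_k |\langle 0|v_k\rangle|^2/(z - E_k)$. Hence every pole of $G$ is an eigenvalue, with residue $|\langle 0|v_k\rangle|^2$, the squared amplitude of the normalized eigenvector at site $0$. To see that \emph{every} eigenvalue actually appears as a pole (no cancellations), I would invoke the cyclicity of $|0\rangle$ for a Jacobi matrix: the vectors $|0\rangle, \mathrm{J}|0\rangle, \dots, \mathrm{J}^{N-1}|0\rangle$ span the whole space because each application of $\mathrm{J}$ reaches one site further along the chain (this uses $b_n>0$ for all $n$), so $\langle 0|v_k\rangle \neq 0$ for every $k$ and the spectrum of $\mathrm{J}$ is simple. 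This both prevents pole–residue cancellation and confirms the residues are strictly positive.

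I expect the main obstacle to be purely bookkeeping rather than conceptual: keeping the index conventions consistent across the recursion (ensuring the $b_n^2$ that appears at nesting level $n$ matches Definition \ref{def:jacobi_matrix}, and that the truncation terminates correctly at $z-a_{N-1}$), and being careful that \eqref{eq:schur_step} is valid as an identity of rational functions of $z$ — i.e.\ away from the finitely many eigenvalues of the reduced blocks, with the equality of the two rational functions then holding everywhere by analytic continuation. If one wants the cleanest exposition, I would note that \eqref{eq:schur_step} is itself an instance of the Schur-complement reduction already used in Section~\ref{sec:schur}, so the result can be presented as a corollary of that machinery rather than proved from scratch; the cyclicity/simple-spectrum claim is the only genuinely new ingredient and can be cited to Chihara~\cite{Chihara1978} or proved in two lines from the lower-triangular-plus-diagonal structure of the Krylov matrix.
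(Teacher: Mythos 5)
Your proposal follows essentially the same route as the paper: the paper's proof is exactly the Schur-complement block inversion of $z\mathrm{I}-\mathrm{J}$ with the rank-one coupling $\mathrm{v}=(b_1,0,\dots,0)^{\top}$, iterated to the $1\times1$ terminal case $1/(z-a_{N-1})$. The one place you go beyond the paper is the poles-and-residues claim: the paper merely asserts it via the spectral expansion in the discussion following the theorem, whereas you correctly identify and close the gap that \emph{every} eigenvalue appears as a pole, using the cyclicity of $|0\rangle$ (Krylov vectors reaching one site further per application of $\mathrm{J}$, valid because $b_n>0$) to conclude $\langle 0|v_k\rangle\neq 0$ and simplicity of the spectrum — a worthwhile addition.
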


\begin{proof}
Partition the Jacobi matrix as
\begin{equation}
\mathrm{J} = \begin{pmatrix}
a_0 & \mathrm{v}^\top \\
\mathrm{v} & \mathrm{J}'
\end{pmatrix},
\end{equation}
where $\mathrm{v} = (b_1, 0, 0, \ldots)^\top$ and $\mathrm{J}'$ is the $(N-1) \times (N-1)$ trailing principal submatrix. By the Schur complement formula for block matrix inversion,
\begin{equation}
[(z\mathrm{I} - \mathrm{J})^{-1}]_{00} = \frac{1}{z - a_0 - \mathrm{v}^\top (z\mathrm{I}' - \mathrm{J}')^{-1} \mathrm{v}}.
\end{equation}
Since $\mathrm{v}$ has only one non-zero element $b_1$ in the first position,
\begin{equation}
\mathrm{v}^\top (z\mathrm{I}' - \mathrm{J}')^{-1} \mathrm{v} = b_1^2 [(z\mathrm{I}' - \mathrm{J}')^{-1}]_{00} = b_1^2 G'(z),
\end{equation}
where $G'(z)$ is the $(0,0)$ resolvent of the reduced matrix $\mathrm{J}'$. This yields the recursion $G(z) = 1/(z - a_0 - b_1^2 G'(z))$. Iterating $N-1$ times, with the terminal condition $G_{N-1}(z) = 1/(z - a_{N-1})$ for the $1 \times 1$ matrix, produces the continued fraction~\eqref{eq:resolvent_cf}.
\end{proof}

%A self-contained derivation using the Schur complement is provided in Appendix~\ref{app:theorems}. 
he spectral content of $G(z)$ is revealed by its partial fraction expansion:
\begin{equation}
G(z) = \sum_{k=1}^{N} \frac{|\psi_k(0)|^2}{z - E_k},
\label{eq:resolvent_spectral}
\end{equation}
where $\{E_k\}_{k=1}^{N}$ are the eigenvalues of $\mathrm{J}$ and $\psi_k(0) = \langle 0 | E_k \rangle$ is the first component of the $k$-th normalized eigenvector (satisfying $\sum_{n=0}^{N-1} |\psi_k(n)|^2 = 1$). In the circuit QED context, the eigenvalues correspond to dressed mode frequencies, and the squared amplitudes $|\psi_k(0)|^2$ encode the port weight of each mode at the junction node. This connection to participation ratios underlies the Purcell decay formula discussed in Section~\ref{sec:complex_poles}.

%------------------------------------------------------------------------------
\subsection{Equivalence of Representations}
\label{sec:cf_equivalence}
%------------------------------------------------------------------------------

We now establish the central equivalence connecting classical network theory to circuit QED. This theorem unifies the three perspectives introduced at the beginning of this section.

\begin{theorem}[Network-Spectral Equivalence]
\label{thm:network_spectral}
Let $Y(s)$ be the driving-point admittance of a lossless passive one-port network with $N$ resonances. The following three characterizations are equivalent:
\begin{enumerate}
    \item[(i)] $Y(s)$ admits the Cauer continued fraction expansion~\eqref{eq:cauer_type1} with positive coefficients $\{L_n, C_n\}_{n=1}^{N}$.
    \item[(ii)] The function $G(\omega^2) \equiv -Y(i\omega)/(i\omega)$ equals the resolvent~\eqref{eq:resolvent_cf} of an $N \times N$ Jacobi matrix $\mathrm{J}$ whose elements are determined by the LC parameters through the mapping given in Appendix~\ref{app:lc_jacobi}.
    \item[(iii)] The eigenfrequencies of the coupled junction-environment system (in the linearized, harmonic approximation) are the positive roots $\omega > 0$ of the boundary condition
    \begin{equation}
    i\omega Y(i\omega) + \frac{1}{L_J} = 0,
    \label{eq:boundary_condition_thm}
    \end{equation}
    where $L_J = \varphi_0^2/E_J$ is the Josephson inductance.
\end{enumerate}
\end{theorem}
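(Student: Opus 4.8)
The plan is to prove the theorem as a cycle of implications (i)~$\Rightarrow$~(ii)~$\Rightarrow$~(iii)~$\Rightarrow$~(i), using the fact that all three characterizations already presuppose that $Y(s)$ is the driving-point admittance of a lossless passive one-port with $N$ resonances. Under that hypothesis (i) is immediate from Theorem~\ref{thm:cauer_expansion}, so the closing implication (iii)~$\Rightarrow$~(i) is automatic; the real content is the first two arrows, and each turns out to be reversible because the Jacobi coefficients and the root set of the boundary condition are each in bijection with $Y(s)$.

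For (i)~$\Rightarrow$~(ii) I would substitute $s = i\omega$ into the continued fraction~\eqref{eq:cauer_type1}. Because a lossless admittance is an odd rational function of $s$, the combination $-Y(i\omega)/(i\omega)$ is a rational function of $z := \omega^2$ alone, and factoring $i\omega$ out of each rung of the ladder collapses the nested $s$-fraction into a nested $z$-fraction of precisely the shape of~\eqref{eq:resolvent_cf}. Matching rung by rung yields the Jacobi entries $\{a_n, b_n\}$ as the rational functions of $\{L_n, C_n\}$ tabulated in Appendix~\ref{app:lc_jacobi}; positivity of all the $L_n, C_n$ is exactly what forces the diagonal entries to be real and the off-diagonal entries strictly positive, so $\mathrm{J}$ is a bona fide Jacobi matrix and Theorem~\ref{thm:resolvent_cf} identifies its port resolvent with that continued fraction. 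I would organize the bookkeeping structurally via the LC-chain Lagrangian~\eqref{eq:lc_chain_lagrangian}: the normal-mode problem is the symmetric generalized eigenproblem $\mathrm{K}\Phi = \omega^2\mathrm{M}\Phi$ with $\mathrm{M} = \operatorname{diag}(C_n)$ and tridiagonal stiffness $\mathrm{K}$, conjugation by $\mathrm{M}^{1/2}$ produces the Jacobi matrix $\mathrm{J} = \mathrm{M}^{-1/2}\mathrm{K}\mathrm{M}^{-1/2}$, and $-Y(i\omega)/(i\omega)$ is the port matrix element of $(\omega^2\mathrm{I} - \mathrm{J})^{-1}$ up to the overall normalization fixed by the mapping (the residue sum of $-Y(i\omega)/(i\omega)$ is $1/L_1$, not $1$, which the appendix mapping absorbs).

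For (ii)~$\Rightarrow$~(iii) the pivotal one-line identity is $i\omega\, Y(i\omega) = \omega^2\, G(\omega^2)$, immediate from $G(\omega^2) = -Y(i\omega)/(i\omega)$, so the boundary condition~\eqref{eq:boundary_condition_thm} is equivalent to $\omega^2 G(\omega^2) + 1/L_J = 0$, i.e. to $G(z) = -1/(L_J z)$ with $z = \omega^2$. It then remains to argue that the positive solutions of this equation are exactly the linear normal-mode frequencies of the coupled system. I would do this two ways: physically, the natural modes of the parallel combination of the environment with the linearized junction ($L_J$ shunting the port) are the zeros of the total port admittance $Y(i\omega) + 1/(i\omega L_J)$, and multiplying by $i\omega$ reproduces~\eqref{eq:boundary_condition_thm}; rigorously, one assembles the full quadratic Lagrangian (environment plus the junction term $\Phi_J^2/(2L_J)$) and Schur-complements out every internal coordinate except the port flux, collapsing the secular determinant to a nonvanishing factor times $i\omega Y(i\omega) + 1/L_J$ — this is exactly Theorem~\ref{thm:bc_equivalence}, which I would invoke. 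Finally I would read off the root count from the partial-fraction form $G(z) = \sum_k r_k/(z - E_k)$ with $r_k > 0$ and $E_k = \omega_k^2 > 0$ (positive residues by Foster's reactance theorem): $z\mapsto zG(z)$ is strictly decreasing between consecutive poles, running from $+\infty$ to $-\infty$, so $zG(z) + 1/L_J$ has exactly one root in each of the $N-1$ interior gaps and one more in $(0, E_1)$, giving precisely $N$ dressed frequencies interlacing the bare ones.

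The main obstacle is this identification of ``eigenfrequencies of the coupled system'' with the boundary-condition roots: making it airtight requires the Schur-complement spectral-preservation argument of Theorem~\ref{thm:bc_equivalence} rather than the heuristic admittance-zero principle, and one must check that clearing denominators in $zG(z) + 1/L_J = 0$ introduces no spurious roots at the poles of $G$ (the bare environment frequencies) and drops none. By contrast the Cauer~$\leftrightarrow$~Jacobi algebra of (i)~$\Rightarrow$~(ii) is routine (the only delicate point being the $1/L_1$ normalization absorbed by the appendix mapping), and the reverse implications are either immediate — every lossless passive one-port has the Cauer form — or obtained by running the same identities backwards.
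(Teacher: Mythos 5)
Your proposal is correct and follows essentially the same route as the paper's proof: (i)$\Leftrightarrow$(ii) by evaluating the Cauer fraction at $s=i\omega$, collapsing it to a nested fraction in $z=\omega^2$, and matching against the Jacobi resolvent of Theorem~\ref{thm:resolvent_cf} via the Appendix mapping; (ii)$\Leftrightarrow$(iii) by rewriting the boundary condition as $G(z)=-1/(L_J z)$, inserting the partial-fraction (spectral) representation of $G$, and counting roots by strict monotonicity of $zG(z)$ between poles. The cyclic organization and the generalized-eigenproblem bookkeeping are presentational rather than substantive differences, and your forward appeal to Theorem~\ref{thm:bc_equivalence} is harmless since its derivation does not depend on the present theorem.

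The one place you and the paper genuinely diverge is the root count: you find $N$ dressed frequencies (one in $(0,E_1)$ plus one in each of the $N-1$ interior gaps), while the paper's proof asserts $N+1$ roots including one above $\omega_N$. The discrepancy traces to the high-frequency behavior of $Y$. For the Type~I ladder of Eq.~\eqref{eq:cauer_type1} — the form item (i) actually cites — one has $Y(i\omega)\to 1/(i\omega L_1)$, so $zG(z)$ tends to a positive constant and, as you correctly argue, the strictly decreasing function $zG(z)$ never reaches $-1/L_J$ on $(E_N,\infty)$: your count of $N$ is the right one under the stated hypotheses. The paper's $N+1$ implicitly assumes a leading shunt capacitor at the port ($Y\sim i\omega C_\Sigma$, i.e., the Type~II form of Eq.~\eqref{eq:cauer_type2}, consistent with Assumption~\ref{assumption:cj_convention} and with the Foster form~\eqref{eq:foster_general} used later), which drives $i\omega Y(i\omega)\to-\infty$ and produces one additional sign change above $\omega_N$. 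This is an internal tension in the paper between the theorem's hypotheses and its proof rather than a gap in your argument; if you adopt the Type~II normalization you should say so explicitly, since it changes both the root count and the form of $G$ (which then acquires a constant term and is no longer a pure resolvent).
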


\begin{proof}
{(i) $\Leftrightarrow$ (ii):} For a lossless network, $Y(i\omega) = iB(\omega)$ where $B(\omega)$ is the real susceptance. Define $G(\omega^2) = B(\omega)/\omega$, which is well-defined for $\omega \neq 0$. The Cauer expansion~\eqref{eq:cauer_type1} evaluated at $s = i\omega$ gives
\begin{equation}
Y(i\omega) = \cfrac{1}{i\omega L_1 + \cfrac{1}{i\omega C_1 + \cfrac{1}{i\omega L_2 + \ddots}}}.
\end{equation}
Extracting the imaginary part and dividing by $\omega$, we obtain a continued fraction in $\omega^2$ that has the structure of Eq.~\eqref{eq:resolvent_cf} with $z = \omega^2$. The explicit mapping between Cauer parameters $\{L_n, C_n\}$ and Jacobi matrix elements $\{a_n, b_n\}$ involves products of adjacent LC values and is given in Appendix~\ref{app:lc_jacobi}. The correspondence is bijective: given $\{a_n, b_n\}$, one recovers $\{L_n, C_n\}$ uniquely, and vice versa.

{(ii) $\Leftrightarrow$ (iii):} The boundary condition~\eqref{eq:boundary_condition_thm} can be rewritten as
\begin{equation}
\omega B(\omega) + \frac{1}{L_J} = 0 \quad \Longleftrightarrow \quad G(\omega^2) = -\frac{1}{\omega^2 L_J}.
\end{equation}
Using the spectral representation~\eqref{eq:resolvent_spectral},
\begin{equation}
\sum_{k=1}^{N} \frac{|\psi_k(0)|^2}{\omega^2 - \omega_k^2} = -\frac{1}{\omega^2 L_J},
\end{equation}
where $\omega_k^2 = E_k$ are the eigenvalues of the Jacobi matrix (squared bare frequencies). This equation has $N+1$ positive roots $\{\tilde{\omega}_m\}_{m=0}^{N}$, which are the dressed mode frequencies. By the Cauchy interlacing theorem, these roots strictly interlace with the bare frequencies: $0 < \tilde{\omega}_0 < \omega_1 < \tilde{\omega}_1 < \omega_2 < \cdots < \omega_N < \tilde{\omega}_N$. The dressed frequencies are equivalently the eigenvalues of the $(N+1) \times (N+1)$ Jacobi matrix obtained by augmenting $\mathrm{J}$ with the junction degree of freedom.
\end{proof}

This theorem reveals that the circuit engineer synthesizing $Y(s)$ as a ladder network, the mathematician applying Jacobi matrix spectral theory, and the physicist solving the boundary condition for dressed modes are all performing equivalent calculations in different representations. The continued fraction is the unifying mathematical structure.

%------------------------------------------------------------------------------
\subsection{Extension to the Full Nonlinear Problem: The Quantum Rabi Model}
\label{sec:qrm_parity}
%------------------------------------------------------------------------------

The continued fraction structure extends beyond the linear (harmonic) sector treated above to include the full nonlinear dynamics of a qubit coupled to a cavity. While Theorem~\ref{thm:network_spectral} applies to the linearized junction (replaced by inductance $L_J$), the same tridiagonal structure appears in the exact quantum problem when the Josephson cosine potential is retained.

The quantum Rabi model (QRM)~\cite{Rabi1936, Rabi1937} describes a two-level system coupled to a harmonic oscillator without the rotating-wave approximation:
\begin{equation}
\hat{H}_{\mathrm{QRM}} = \hbar\omega_c \hat{a}^\dagger \hat{a} + \frac{\hbar\omega_q}{2} \hat{\sigma}_z + \hbar g(\hat{a} + \hat{a}^\dagger)\hat{\sigma}_x.
\label{eq:qrm_hamiltonian}
\end{equation}
Unlike the Jaynes-Cummings model, the QRM does not conserve excitation number $\hat{N} = \hat{a}^\dagger\hat{a} + \frac{1}{2}(1 + \hat{\sigma}_z)$. However, it does conserve parity $\hat{P} = e^{i\pi \hat{a}^\dagger \hat{a}} \hat{\sigma}_z$, with eigenvalues $\pm 1$~\cite{Casanova2010}.

\begin{proposition}[Tridiagonal Structure of the QRM]
\label{prop:qrm_tridiagonal}
In the parity eigenbasis, the QRM Hamiltonian restricted to each parity sector is tridiagonal. Specifically:
\begin{itemize}
    \item \textit{Even parity} ($\hat{P} = +1$): Basis states $|+,0\rangle, |-,1\rangle, |+,2\rangle, |-,3\rangle, \ldots$
    \item \textit{Odd parity} ($\hat{P} = -1$): Basis states $|-,0\rangle, |+,1\rangle, |-,2\rangle, |+,3\rangle, \ldots$
\end{itemize}
where $|\pm\rangle = (|g\rangle \pm |e\rangle)/\sqrt{2}$ are eigenstates of $\hat{\sigma}_x$. In each sector, the Hamiltonian matrix has diagonal elements
\begin{equation}
d_n = \hbar\omega_c n + (-1)^{n+p} \frac{\hbar\omega_q}{2},
\label{eq:qrm_diagonal}
\end{equation}
where $p = 0$ for even parity and $p = 1$ for odd parity, and off-diagonal elements
\begin{equation}
t_n = \hbar g\sqrt{n+1}.
\label{eq:qrm_offdiagonal}
\end{equation}
\end{proposition}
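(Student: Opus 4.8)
The plan is to treat this as a direct structural computation: exhibit the parity symmetry, split the Hilbert space into its two $\hat P$-eigenspaces, and read off the matrix of $\hat H_{\mathrm{QRM}}$ in a photon-number-ordered basis of each block. First I would verify $[\hat H_{\mathrm{QRM}},\hat P]=0$ with $\hat P=e^{i\pi\hat a^\dagger\hat a}\hat\sigma_z$. The terms $\hbar\omega_c\hat a^\dagger\hat a$ and $\tfrac{\hbar\omega_q}{2}\hat\sigma_z$ each commute with both factors of $\hat P$ separately. For the coupling term one uses $e^{i\pi\hat a^\dagger\hat a}(\hat a+\hat a^\dagger)e^{-i\pi\hat a^\dagger\hat a}=-(\hat a+\hat a^\dagger)$ together with $\hat\sigma_z\hat\sigma_x\hat\sigma_z=-\hat\sigma_x$, so the two sign flips cancel and $(\hat a+\hat a^\dagger)\hat\sigma_x$ is parity-even. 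Hence $\hat H_{\mathrm{QRM}}$ is block diagonal on the eigenspaces $\mathcal H_\pm$ of $\hat P$.

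Next I would list a basis of each sector. In the product basis, $\hat P\,|q,n\rangle=\pm(-1)^n|q,n\rangle$ for $q\in\{g,e\}$, so $\mathcal H_+$ is spanned by $|g,n\rangle$ with $n$ even together with $|e,n\rangle$ with $n$ odd, and $\mathcal H_-$ by the complementary set (equivalently, after the qubit Hadamard rotation that interchanges $\hat\sigma_x\leftrightarrow\hat\sigma_z$, these become the $\hat\sigma_x$-eigenstates $|\pm\rangle$ of the statement, with $|+,0\rangle,|-,1\rangle,|+,2\rangle,\dots$ for even parity). Ordering either sector by photon number $n=0,1,2,\dots$, the qubit label alternates with $n$; I denote the $n$-th basis vector of sector $p\in\{0,1\}$ by $|\chi^{(p)}_n,n\rangle$, where $\chi^{(p)}_n$ is the qubit state whose $\hat\sigma_z$-eigenvalue is $(-1)^{n+p}$.

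The matrix elements then follow immediately. The diagonal part $\hbar\omega_c\hat a^\dagger\hat a+\tfrac{\hbar\omega_q}{2}\hat\sigma_z$ is diagonal in this basis, with entry $d_n=\hbar\omega_c n+(-1)^{n+p}\tfrac{\hbar\omega_q}{2}$, which is Eq.~\eqref{eq:qrm_diagonal}. The coupling $\hbar g(\hat a+\hat a^\dagger)\hat\sigma_x$ simultaneously shifts the photon number by one and flips the qubit; since inside a fixed parity sector the photon parity and the qubit label are locked, it maps $|\chi^{(p)}_n,n\rangle$ exactly onto $|\chi^{(p)}_{n+1},n+1\rangle$ and $|\chi^{(p)}_{n-1},n-1\rangle$ and onto no other basis vector. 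The raising piece contributes the Fock amplitude $\sqrt{n+1}$ and the $\hat\sigma_x$ matrix element has unit modulus, so the $(n,n+1)$ entry is $t_n=\hbar g\sqrt{n+1}$, Eq.~\eqref{eq:qrm_offdiagonal}, and no longer-range entries appear. This is the asserted tridiagonal form; if $g<0$ one restores positivity of the off-diagonals by the gauge transformation $|n\rangle\mapsto(-1)^n|n\rangle$, so each block is a genuine semi-infinite Jacobi matrix and Theorem~\ref{thm:resolvent_cf} applies to its resolvent.

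The computation is short, and the only place demanding care — the ``main obstacle'' such as it is — is the parity bookkeeping: one must confirm that consecutive vectors in the chosen ordering really differ by a single photon, that the coupling never leaks between $\mathcal H_+$ and $\mathcal H_-$ (equivalently, that the block structure established in the first step is compatible with the reordering), and that the alternating $\hat\sigma_z$-eigenvalue produces precisely the sign $(-1)^{n+p}$ in $d_n$ rather than its negative. Since the Fock space is infinite-dimensional, ``tridiagonal'' here means the infinite three-term recurrence; the finite truncations used in practice and the convergence of the associated matrix continued fraction are handled in Section~\ref{sec:cf_framework} and the later numerical sections.
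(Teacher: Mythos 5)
Your proof is correct and follows the same skeleton as the paper's (commutator check, parity decomposition, nearest-neighbor matrix elements), but it differs in one substantive and worthwhile way: you diagonalize $\hat{P}=e^{i\pi\hat{a}^\dagger\hat{a}}\hat{\sigma}_z$ in the $\hat{\sigma}_z$ product basis $\{|g,n\rangle,|e,n\rangle\}$, which is its actual eigenbasis, and only then relate the resulting alternating chains to the $\hat{\sigma}_x$-labeled states of the statement via a qubit rotation. The paper instead asserts directly that $\hat{P}|\sigma,n\rangle=\sigma(-1)^n|\sigma,n\rangle$ for $\hat{\sigma}_x$ eigenstates $|\sigma\rangle$; since $\hat{\sigma}_z|\pm\rangle=|\mp\rangle$, that identity does not hold literally for the $\hat{P}$ defined in the proposition — it holds only after the Hadamard rotation you invoke (equivalently, for the rotated Hamiltonian in which $\hat{\sigma}_x$ and $\hat{\sigma}_z$ are interchanged). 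Your route sidesteps this, and it is the one the paper itself adopts later in Eqs.~\eqref{eq:rabi_cf_even}--\eqref{eq:rabi_cf_odd} and Section~\ref{sec:parity_decomposition}, where the sectors are spanned by $|e,n\rangle,|g,n\pm1\rangle$. Two small points: your assignment of which qubit label sits in $\mathcal{H}_+$ at $n=0$ is the opposite of the paper's later convention (Eq.~\eqref{eq:basis_parity}), which is just a sign convention for $\hat{\sigma}_z|g\rangle$ and does not affect the structure; and your closing gauge transformation $|n\rangle\mapsto(-1)^n|n\rangle$ restoring positivity of the off-diagonals for $g<0$ is a useful addition the paper omits, since it is what makes each block a genuine Jacobi matrix in the sense of Definition~\ref{def:jacobi_matrix}.
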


\begin{proof}
The parity operator $\hat{P} = e^{i\pi \hat{a}^\dagger\hat{a}}\hat{\sigma}_z$ satisfies $[\hat{P}, \hat{H}_{\mathrm{QRM}}] = 0$, verified by noting that $e^{i\pi \hat{a}^\dagger\hat{a}} \hat{a} e^{-i\pi \hat{a}^\dagger\hat{a}} = -\hat{a}$ and $\hat{\sigma}_z \hat{\sigma}_x \hat{\sigma}_z = -\hat{\sigma}_x$. The Hilbert space therefore decomposes as $\mathcal{H} = \mathcal{H}_+ \oplus \mathcal{H}_-$.

In the $\hat{\sigma}_x$ eigenbasis with Fock states, a general state $|\sigma, n\rangle$ (where $\sigma = \pm$ and $n \in \{0, 1, 2, \ldots\}$) has parity $\hat{P}|\sigma, n\rangle = \sigma(-1)^n |\sigma, n\rangle$. States with even parity satisfy $\sigma(-1)^n = +1$, yielding the sequence $|+,0\rangle, |-,1\rangle, |+,2\rangle, \ldots$; odd parity gives $|-,0\rangle, |+,1\rangle, |-,2\rangle, \ldots$.

The coupling term $\hbar g(\hat{a} + \hat{a}^\dagger)\hat{\sigma}_x$ acts as $\hbar g(\hat{a} + \hat{a}^\dagger)$ on $|+\rangle$ states and $-\hbar g(\hat{a} + \hat{a}^\dagger)$ on $|-\rangle$ states (since $\hat{\sigma}_x|\pm\rangle = \pm|\pm\rangle$). Since $\hat{a}$ and $\hat{a}^\dagger$ only connect $|n\rangle \leftrightarrow |n \pm 1\rangle$, and consecutive states in each parity chain alternate between $|+\rangle$ and $|-\rangle$, the coupling is strictly nearest-neighbor within each chain. The matrix elements follow from $\langle n \pm 1|(\hat{a} + \hat{a}^\dagger)|n\rangle = \sqrt{n+1}$ or $\sqrt{n}$.
\end{proof}

The tridiagonal structure means the QRM eigenproblem in each parity sector is equivalent to a continued fraction equation. The eigenvalue condition $\det(E\mathrm{I} - \mathrm{H}_{\pm}) = 0$ becomes a transcendental equation whose roots give exact eigenvalues at arbitrary coupling strength $g/\omega_c$. This includes the ultrastrong coupling (USC, $g/\omega_c \gtrsim 0.1$) and deep strong coupling (DSC, $g/\omega_c \gtrsim 1$) regimes where the rotating-wave approximation fails~\cite{FornDiaz2019, Kockum2019}. The analytic solution of the QRM via continued fractions was achieved by Braak~\cite{Braak2011}, who showed that the spectrum is determined by zeros of transcendental ``$G$-functions'' constructed from the continued fraction. This breakthrough demonstrated that the QRM, despite having only a discrete $\mathbb{Z}_2$ symmetry, is exactly solvable.

%------------------------------------------------------------------------------
\subsection{Dressed Operators and Observable-Dependent Spectra}
\label{sec:dressed_operators}
%------------------------------------------------------------------------------

In the USC and DSC regimes, the bare operators $\hat{a}, \hat{a}^\dagger, \hat{\sigma}_\pm$ no longer correspond to physical observables. The ground state $|G\rangle$ of the full Hamiltonian contains a nonzero population of bare photons: $\langle G | \hat{a}^\dagger\hat{a} | G \rangle > 0$. Physical observables must be constructed from the exact eigenstates~\cite{Kockum2019, Napoli2024}.

\begin{definition}[Dressed Operators]
\label{def:dressed_operators}
Let $\{|E_k\rangle\}_{k=0}^{\infty}$ be the exact eigenstates of the full Hamiltonian, ordered by energy. The \emph{dressed annihilation-like operator} is
\begin{equation}
\hat{X}^{+} = \sum_{j > i} X_{ij} |E_i\rangle \langle E_j|, \quad X_{ij} = \langle E_i | (\hat{a} + \hat{a}^\dagger) | E_j \rangle,
\label{eq:dressed_X}
\end{equation}
which only contains lowering transitions ($E_i < E_j$). Similarly, the dressed momentum operator is
\begin{equation}
\hat{P}^{+} = \sum_{j > i} P_{ij} |E_i\rangle \langle E_j|, \quad P_{ij} = i\langle E_i | (\hat{a}^\dagger - \hat{a}) | E_j \rangle.
\label{eq:dressed_P}
\end{equation}
\end{definition}

\begin{table*}[t]
\caption{The continued fraction framework for circuit QED: correspondence between mathematical structure and physical quantities (left), and key observables computed from the framework (right). Here $G(z) = \langle 0|(\mathrm{I}z-\mathrm{J})^{-1}|0\rangle$ is the Jacobi resolvent, $G_\pm(E)$ are the Braak $G$-functions for the quantum Rabi model, $\hat{n}$ is the Cooper pair number operator, $V_\mathrm{zpf} = \sqrt{\hbar\omega_r/2C_r}$ is the resonator zero-point voltage, and $\Delta = \omega_q - \omega_r$ is the qubit-resonator detuning. The dispersive shift formula $\chi \approx g^2/\Delta - g^2/(\Delta+\alpha)$ is valid in the dispersive regime $g \ll |\Delta|$; see Section~\ref{sec:transcendental} for the general expression.}
\label{tab:cf_framework}
\begin{ruledtabular}
\begin{tabular}{ll|ll}
\multicolumn{2}{c|}{\textrm{CF--Physical Dictionary}} & 
\multicolumn{2}{c}{\textrm{Observables}} \\
\midrule
Admittance $Y(s)$ & Environment seen by junction & 
Qubit frequency $\omega_{01}$ & $(E_1 - E_0)/\hbar$ \\[2pt]
CF poles $\{\omega_k\}$ & Bare mode frequencies & 
Anharmonicity $\alpha$ & $\omega_{12} - \omega_{01}$ \\[2pt]
CF residues $\{|c_k|^2\}$ & Mode weights at port & 
Dispersive shift $\chi$ & $g^2/\Delta - g^2/(\Delta+\alpha)$ (dispersive) \\[2pt]
Jacobi diagonal $\{a_n\}$ & On-site energies & 
Coupling $g$ & $(2e/\hbar)|\langle 0|\hat{n}|1\rangle|\,V_\mathrm{zpf}$ \\[2pt]
Jacobi off-diagonal $\{b_n\}$ & Nearest-neighbor couplings & 
Purcell rate $\gamma_\kappa$ & $\kappa(g/\Delta)^2$ \\[2pt]
Poles of $G(z)$ & Eigenvalues (linear systems) & 
Lamb shift $\delta\omega$ & $\mathrm{Re}[\Sigma(\omega)]$; see Eq.~\eqref{eq:frequency_shift_multimode} \\[2pt]
Zeros of $G_\pm(E)$ & Eigenvalues (QRM, parity $\pm$) & & \\[2pt]
$|\psi_k(0)|^2$ & Junction participation in mode $k$ & & \\[2pt]
CF truncation at level $N$ & $N$-mode approximation & & \\
\end{tabular}
\end{ruledtabular}
\end{table*}

The matrix elements $X_{ij}$ and $P_{ij}$ are computed directly from the continued fraction eigenvectors. A crucial result from Refs.~\cite{Kockum2019, Napoli2024} is that transition rates depend on which physical operator couples the system to the measurement apparatus:
\begin{equation}
\Gamma_{i \to j}^{(\mathrm{flux})} \propto |X_{ij}|^2, \quad \Gamma_{i \to j}^{(\mathrm{charge})} \propto |P_{ij}|^2.
\label{eq:dressed_rates}
\end{equation}
For inductive (flux) coupling to a transmission line, the rate is governed by $|X_{ij}|^2$; for capacitive (charge) coupling, by $|P_{ij}|^2$. In the QRM parity-chain representation, these matrix elements follow from the tridiagonal eigenvectors. More generally, in the circuit Hamiltonian one computes them from the exact eigenstates obtained by diagonalization, with the continued-fraction resolvent providing an efficient route to the required spectral weights.

In the weak coupling regime ($g/\omega_c \ll 0.1$), the distinction is immaterial: $|X_{01}|^2 \approx |P_{01}|^2 \approx 1$ for the fundamental $|E_0\rangle \to |E_1\rangle$ transition. In USC/DSC, however, the matrix elements can differ dramatically. Generically $|X_{01}|^2 \gg |P_{01}|^2$, leading to observable-dependent spectra: the same physical system exhibits different transition rates depending on the measurement scheme. This phenomenon has been observed experimentally in flux qubit systems~\cite{FornDiaz2017} and is a direct consequence of the continued fraction structure encoding the exact eigenstates. 

%Explicit formulas for the dressed matrix elements in terms of continued fraction eigenvectors are provided in Appendix~\ref{app:derivations}.

%------------------------------------------------------------------------------
\subsection{Summary: The Continued Fraction Dictionary}
\label{sec:cf_summary}
%------------------------------------------------------------------------------

Table~\ref{tab:cf_framework} summarizes the continued fraction framework, establishing a dictionary between mathematical structure and physical content. The left columns translate continued fraction elements into circuit QED quantities: the input admittance $Y(s)$ encodes the electromagnetic environment seen by the Josephson junction, its poles give the bare mode frequencies, and the associated residues quantify each mode's weight at the junction port. Upon quantization, the Cauer ladder maps to a Jacobi (tridiagonal) Hamiltonian whose diagonal elements are on-site energies and whose off-diagonal elements are nearest-neighbor couplings. The eigenvalue structure depends on the system: for linear networks, eigenvalues appear as poles of the resolvent $G(z)$, while for the quantum Rabi model they are zeros of Braak's transcendental $G$-functions in each parity sector. In both cases, the eigenvector component $|\psi_k(0)|^2$ at site zero gives the junction's participation in eigenmode $k$, the key quantity governing transition matrix elements and decay rates. 

The right columns of Table~\ref{tab:cf_framework} collect the experimentally relevant observables that emerge from this framework. The qubit frequency $\omega_{01}$ and anharmonicity $\alpha$ follow directly from the first few eigenvalues. The dispersive shift $\chi = g^2/\Delta - g^2/(\Delta+\alpha)$, which enables quantum nondemolition readout, arises from the interplay of the $|0\rangle \leftrightarrow |1\rangle$ and $|1\rangle \leftrightarrow |2\rangle$ virtual transitions---for a harmonic oscillator these would cancel exactly, but the transmon's anharmonicity leaves a finite residual. The Purcell decay rate $\gamma_\kappa = \kappa(g/\Delta)^2$ in the dispersive limit, and more generally the Lamb shift, are likewise determined by the framework's eigenstructure. 

The continued fraction framework thus provides a unified computational approach to circuit QED that: (1) reduces to standard perturbative results in the dispersive regime; (2) extends seamlessly to USC/DSC where perturbation theory fails; and (3) connects directly to classical network synthesis, enabling the design of circuits with targeted spectral properties.

%%%%%%%%%%%%%%%%%%%%%%%%%%%%%%%%%%%%%%%%%%%%%%%%%%%%%%%%%%%%%%%%%%%%%%%%%%%%%%%
%%%%%%%%%%%%%%%%%%%%%%%%%%%%%%%%%%%%%%%%%%%%%%%%%%%%%%%%%%%%%%%%%%%%%%%%%%%%%%%

\section{The Schur Complement as Boundary Condition}
\label{sec:schur}

The central mathematical operation in reducing a multiport electromagnetic environment to a single-port description is the Schur complement. This section derives the Schur complement formula and proves that it yields precisely the frequency-dependent boundary condition at the junction node. This equivalence connects standard circuit reduction to eigenparameter-dependent boundary conditions in the sense of Sturm-Liouville theory~\cite{Zettl2005, Dijksma1987}. The boundary-condition viewpoint was developed for dispersive readout in a transmon-resonator setting in Ref.~\cite{BakrQM2025}; here we show the same structure arises generally from the driving-point admittance $Y_{\mathrm{in}}(s)$. The Schur complement reduction is intimately connected to the continued fraction structure developed in Section~\ref{sec:cf_framework}. When the admittance matrix arises from a ladder network, the Schur complement that eliminates internal nodes reconstructs the continued fraction expansion term by term. For a Cauer ladder with $N$ sections, eliminating the $k$-th internal node via Schur complement is equivalent to evaluating the $k$-th level of the continued fraction. This connection to Belevitch's cascade synthesis for multiport networks~\cite{Belevitch1960} provides a unified perspective on network reduction and spectral analysis.

\begin{figure*}[t]
\centering
\begin{tikzpicture}[scale=0.85]
    % Left: Full network
    \begin{scope}[shift={(0,0)}]
        \node[draw, thick, rounded corners, minimum width=3.2cm, minimum height=2.8cm, fill=gray!8] (net1) at (0,0) {};
        \node[font=\small] at (0,0.4) {Multiport};
        \node[font=\small] at (0,-0.1) {Network};
        \node[font=\footnotesize, black] at (0,-0.7) {$\mathrm{Y}(s)$};
        
        % Junction port (left)
        \draw[thick] (-2.8,0) -- (-1.6,0);
        \node[circle, fill=black, inner sep=1.5pt] at (-1.6,0) {};
        \node[left, font=\small] at (-2.8,0) {$J$};
        
        % Output port (right)
        \draw[thick] (1.6,0) -- (2.8,0);
        \node[circle, fill=black, inner sep=1.5pt] at (1.6,0) {};
        
        \node[below, font=\small] at (0,-1.9) {(a) Full multiport network};
    \end{scope}
    
    % Arrow
    \draw[-{Stealth[length=3mm]}, very thick] (3.5,0) -- (5.5,0);
    \node[above, font=\footnotesize] at (4.5,0.3) {Schur};
    \node[below, font=\footnotesize] at (4.5,-0.3) {complement};
    
    % Right: Reduced network
    \begin{scope}[shift={(8.5,0)}]
        \node[draw, thick, rounded corners, minimum width=2.8cm, minimum height=2.2cm, fill=blue!8] (net2) at (0,0) {};
        \node[font=\small] at (0,0.2) {$Y_{\mathrm{in}}(s)$};
        \node[font=\footnotesize, gray] at (0,-0.3) {Continued fraction};
        
        % Junction port only
        \draw[thick] (-2.5,0) -- (-1.4,0);
        \node[circle, fill=black, inner sep=1.5pt] at (-1.4,0) {};
        \node[left, font=\small] at (-2.5,0) {$J$};
        
        \node[below, font=\small] at (0,-1.9) {(b) Driving-point admittance};
    \end{scope}
\end{tikzpicture}
\caption{Schur complement reduction. (a) A multiport network with junction port $J$ and internal nodes $\bar{J}$. (b) The Schur complement eliminates internal nodes, yielding the scalar driving-point admittance $Y_{\mathrm{in}}(s)$ at the junction. The positive-real property is preserved under this reduction. For ladder networks, the Schur complement reconstructs the Cauer continued fraction expansion.}
\label{fig:schur_complement}
\end{figure*}
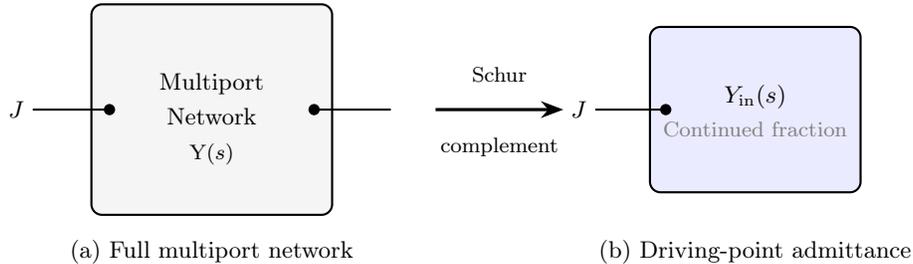

%------------------------------------------------------------------------------
\subsection{Multiport Admittance and the Schur Complement}
\label{sec:multiport}
%----------------------------------------------------------------------------
Consider a linear electromagnetic network with $N$ ports, characterized by an admittance matrix $\mathrm{Y}(s)$ relating port currents to port voltages in the Laplace domain:
\begin{equation}
\mathrm{I}(s) = \mathrm{Y}(s) \cdot \mathrm{V}(s).
\label{eq:admittance_relation}
\end{equation}
We partition the ports into two groups: the junction port $J$, and all remaining ports collectively denoted $\bar{J}$. The admittance matrix takes the block form
\begin{equation}
\mathrm{Y}(s) = \begin{pmatrix} Y_{JJ}(s) & \mathrm{Y}_{J\bar{J}}(s) \\[4pt] \mathrm{Y}_{\bar{J}J}(s) & \mathrm{Y}_{\bar{J}\bar{J}}(s) \end{pmatrix},
\label{eq:Y_partition}
\end{equation}
where $Y_{JJ}$ is the scalar self-admittance at port $J$, $\mathrm{Y}_{J\bar{J}}$ is a row vector of transfer admittances from ports $\bar{J}$ to port $J$, $\mathrm{Y}_{\bar{J}J}$ is a column vector of transfer admittances from port $J$ to ports $\bar{J}$, and $\mathrm{Y}_{\bar{J}\bar{J}}$ is the admittance matrix among the remaining ports. Throughout this section, we use boldface for matrices and vectors to distinguish them from scalar quantities.
The constitutive relation for the full network is
\begin{equation}
\begin{pmatrix} I_J \\ \mathrm{I}_{\bar{J}} \end{pmatrix} = \begin{pmatrix} Y_{JJ} & \mathrm{Y}_{J\bar{J}} \\[4pt] \mathrm{Y}_{\bar{J}J} & \mathrm{Y}_{\bar{J}\bar{J}} \end{pmatrix} \begin{pmatrix} V_J \\ \mathrm{V}_{\bar{J}} \end{pmatrix}.
\label{eq:constitutive_full}
\end{equation}

Now suppose the ports $\bar{J}$ are internal nodes of the network with no external current injection. This constraint is expressed by setting $\mathrm{I}_{\bar{J}} = \mathrm{0}$, which from the second block row of Eq.~\eqref{eq:constitutive_full} gives
\begin{equation}
\mathrm{0} = \mathrm{Y}_{\bar{J}J}(s) V_J + \mathrm{Y}_{\bar{J}\bar{J}}(s) \mathrm{V}_{\bar{J}}.
\label{eq:internal_constraint}
\end{equation}
Solving for the internal voltages yields
\begin{equation}
\mathrm{V}_{\bar{J}} = -\mathrm{Y}_{\bar{J}\bar{J}}^{-1}(s) \mathrm{Y}_{\bar{J}J}(s) V_J.
\label{eq:internal_voltages}
\end{equation}
Substituting into the first block row gives
\begin{align}
I_J &= Y_{JJ} V_J + \mathrm{Y}_{J\bar{J}} \mathrm{V}_{\bar{J}} \nonumber \\
&= Y_{JJ} V_J - \mathrm{Y}_{J\bar{J}} \mathrm{Y}_{\bar{J}\bar{J}}^{-1} \mathrm{Y}_{\bar{J}J} V_J \nonumber \\
&= \left[ Y_{JJ} - \mathrm{Y}_{J\bar{J}} \mathrm{Y}_{\bar{J}\bar{J}}^{-1} \mathrm{Y}_{\bar{J}J} \right] V_J.
\label{eq:schur_derivation}
\end{align}
The quantity in brackets is the Schur complement of $\mathrm{Y}_{\bar{J}\bar{J}}$ in $\mathrm{Y}$:
\begin{equation}
Y_{\mathrm{in}}(s) = Y_{JJ}(s) - \mathrm{Y}_{J\bar{J}}(s) \, \mathrm{Y}_{\bar{J}\bar{J}}^{-1}(s) \, \mathrm{Y}_{\bar{J}J}(s).
\label{eq:schur_complement}
\end{equation}

This formula has a direct physical interpretation. The first term $Y_{JJ}$ is the self-admittance at port $J$ with all other ports open-circuited. The second term accounts for the current that flows from port $J$ through the network to the other ports, induces voltages there according to the internal admittance matrix, and returns to port $J$ through the transfer admittances. The Schur complement thus represents the effective admittance seen by an observer at port $J$ when all other ports are internal to the network with no external excitation.

Figure~\ref{fig:schur_complement} illustrates this reduction schematically: the multiport network with internal nodes is reduced to a single driving-point admittance at the junction port. The key insight is that no information relevant to the junction dynamics is lost in this reduction; the entire influence of the electromagnetic environment on the junction is encoded in the scalar function $Y_{\mathrm{in}}(s)$. For a ladder network, the Schur complement reduction is equivalent to one step in the continued fraction evaluation. The recursive structure $Y_{\mathrm{reduced}}(s) = Y_{00}(s) - Y_{01}(s)Y_{10}(s)/[Y_{11}(s) + Y_{\mathrm{tail}}(s)]$ is precisely the CF recursion. Repeated application eliminates all internal nodes, yielding the full continued fraction expansion.

%------------------------------------------------------------------------------
\subsection{Positive-Real Preservation}
\label{sec:pr_preservation}
%------------------------------------------------------------------------------

The Schur complement preserves the positive-real property, which is essential for physical realizability of the reduced network.

\begin{proposition}[Positive-Real Preservation]
\label{prop:pr_preservation}
If the admittance matrix $\mathrm{Y}(s)$ is positive-real, representing a passive network, then the Schur complement $Y_{\mathrm{in}}(s)$ is also positive-real.
\end{proposition}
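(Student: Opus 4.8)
The plan is to verify the three defining conditions of Definition~\ref{def:positive_real} for $Y_{\mathrm{in}}(s)$, given that $\mathrm{Y}(s)$ satisfies the matrix analogue of each. \textbf{Analyticity and the reality condition} are the routine parts. In the open right half-plane $\mathrm{Re}(s)>0$, passivity of $\mathrm{Y}(s)$ forces $\mathrm{Y}_{\bar J\bar J}(s)$ to be invertible there: if $\mathrm{Y}_{\bar J\bar J}(s_0)$ had a nontrivial kernel at some $s_0$ with $\mathrm{Re}(s_0)>0$, one could construct an internal voltage mode carrying negative time-averaged power, contradicting passivity (concretely, $\mathrm{Re}[\mathrm{V}^\dagger \mathrm{Y}_{\bar J\bar J}\mathrm{V}]\ge 0$ with equality only on the boundary, so $\mathrm{Y}_{\bar J\bar J}+\mathrm{Y}_{\bar J\bar J}^\dagger$ is positive definite and hence nonsingular in the open RHP). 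Therefore $\mathrm{Y}_{\bar J\bar J}^{-1}(s)$ is analytic there, and $Y_{\mathrm{in}}(s)=Y_{JJ}-\mathrm{Y}_{J\bar J}\mathrm{Y}_{\bar J\bar J}^{-1}\mathrm{Y}_{\bar J J}$ is a composition of analytic operations, hence analytic. The reality condition $Y_{\mathrm{in}}(s^*)=Y_{\mathrm{in}}^*(s)$ follows termwise from $\mathrm{Y}(s^*)=\mathrm{Y}^*(s)$, since matrix inversion and products commute with entrywise conjugation.

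\textbf{The substantive step} is condition (iii): $\mathrm{Re}[Y_{\mathrm{in}}(s)]\ge 0$ for $\mathrm{Re}(s)>0$. I would prove this via the energy/variational interpretation already spelled out in Section~\ref{sec:multiport}: the Schur complement $Y_{\mathrm{in}}(s)$ is exactly the admittance obtained by terminating all internal ports in open circuits, i.e.\ enforcing $\mathrm{I}_{\bar J}=\mathrm{0}$ and eliminating $\mathrm{V}_{\bar J}$. Concretely, for any port voltage $V_J$, let $\mathrm{V}_{\bar J}=-\mathrm{Y}_{\bar J\bar J}^{-1}\mathrm{Y}_{\bar J J}V_J$ be the induced internal voltages from Eq.~\eqref{eq:internal_voltages}, and assemble the full vector $\mathrm{V}=(V_J,\mathrm{V}_{\bar J})^\top$. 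Then $Y_{\mathrm{in}}V_J = I_J$ is the first component of $\mathrm{Y}\mathrm{V}$, while the remaining components vanish by construction. Hence
\begin{equation}
\mathrm{Re}[\,\overline{V_J}\,Y_{\mathrm{in}}(s)\,V_J\,] = \mathrm{Re}[\,\mathrm{V}^\dagger \mathrm{Y}(s)\,\mathrm{V}\,] = \tfrac{1}{2}\,\mathrm{V}^\dagger\!\left(\mathrm{Y}(s)+\mathrm{Y}(s)^\dagger\right)\!\mathrm{V} \;\ge\; 0,
\end{equation}
where the final inequality is the matrix positive-real condition (the Hermitian part of a positive-real matrix is positive semidefinite in the open RHP). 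Dividing by $|V_J|^2>0$ gives $\mathrm{Re}[Y_{\mathrm{in}}(s)]\ge 0$. This is essentially the statement that open-circuiting ports is a passive (power-non-generating) operation.

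\textbf{The main obstacle} is the technical care needed around invertibility of $\mathrm{Y}_{\bar J\bar J}$ and the behavior on the imaginary axis $s=i\omega$, where a lossless environment makes $\mathrm{Y}_{\bar J\bar J}(i\omega)$ skew-Hermitian and possibly singular at internal resonances. In the open right half-plane the argument above is clean, so the honest route is to establish (i)--(iii) for $\mathrm{Re}(s)>0$ and then extend $Y_{\mathrm{in}}$ to the imaginary axis by continuity (equivalently, by the standard fact that a positive-real function analytic in the open RHP has well-defined boundary values, with poles on $i\mathbb{R}$ that are simple with positive residues by Foster's theorem, cited earlier). I would state the proposition's conclusion for $\mathrm{Re}(s)>0$ and remark that the boundary values follow by continuity, so the poles of $Y_{\mathrm{in}}$ on the imaginary axis (the internal resonances that survive the reduction) inherit the simple-pole-positive-residue structure. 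Everything else — analyticity, the reality condition, and the Hermitian-part inequality — goes through by the short computations above without additional hypotheses beyond those already in the excerpt.
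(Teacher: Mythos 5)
Your proposal is correct, and its overall skeleton (verify analyticity, reality, and the half-plane positivity separately, with the first two being routine) matches the paper's. The difference lies in how the substantive step, $\mathrm{Re}[Y_{\mathrm{in}}(s)]\ge 0$ for $\mathrm{Re}(s)>0$, is established. The paper invokes the classical theorem on Schur complements of positive semidefinite Hermitian matrices and says it "applies this to $\mathrm{Y}(s)+\mathrm{Y}^\dagger(s)$"; as written that is slightly loose, because the Schur complement of the Hermitian part is not in general the Hermitian part of the Schur complement of $\mathrm{Y}$, so the cited theorem does not literally deliver $Y_{\mathrm{in}}+Y_{\mathrm{in}}^*\ge 0$ without the extra observation you supply. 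Your variational argument --- restrict the sesquilinear form $\mathrm{V}^\dagger\mathrm{Y}\mathrm{V}$ to the one-parameter family of voltage vectors induced by $\mathrm{I}_{\bar J}=\mathrm{0}$, note that $\overline{V_J}\,Y_{\mathrm{in}}V_J=\mathrm{V}^\dagger\mathrm{Y}\mathrm{V}$ on that family, and invoke $\mathrm{Y}+\mathrm{Y}^\dagger\ge 0$ --- is exactly the computation that makes the paper's appeal rigorous for non-Hermitian $\mathrm{Y}$, and it has the additional virtue of being the physical statement that open-circuiting internal ports cannot generate power. Your closing remarks on the imaginary-axis boundary values and internal resonances go beyond what the paper proves but are consistent with its framework; they are not needed for the proposition as stated, which only concerns the open right half-plane. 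In short: same architecture, but your key step is more elementary, self-contained, and arguably tighter than the paper's citation-based one.
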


\begin{proof}
A matrix-valued function $\mathrm{Y}(s)$ is positive-real if it satisfies three conditions: analyticity in the open right half-plane $\mathrm{Re}(s) > 0$, the reality condition $\mathrm{Y}(s^*) = \mathrm{Y}^*(s)$, and the positive semidefiniteness condition $\mathrm{Y}(s) + \mathrm{Y}^\dagger(s) \geq 0$ for $\mathrm{Re}(s) > 0$.

For the Schur complement, analyticity in the right half-plane follows from analyticity of all component matrices and the fact that $\mathrm{Y}_{\bar{J}\bar{J}}(s)$ is invertible for $\mathrm{Re}(s) > 0$ when $\mathrm{Y}(s)$ is positive-real (since positive-real matrices have positive-definite real parts in the right half-plane). The reality condition is inherited directly from the block components.

For positive semidefiniteness, we use the characterization of Schur complements for positive semidefinite matrices. When $\mathrm{Y}(s) + \mathrm{Y}^\dagger(s) \geq 0$, the matrix can be written in block form with the $(1,1)$ block being $Y_{JJ} + Y_{JJ}^*$ and the $(2,2)$ block being $\mathrm{Y}_{\bar{J}\bar{J}} + \mathrm{Y}_{\bar{J}\bar{J}}^\dagger$. By the theory of Schur complements for Hermitian matrices~\cite{Zhang2005}, if the full matrix is positive semidefinite and the $(2,2)$ block is positive definite, then the Schur complement of the $(2,2)$ block satisfies the same semidefiniteness. Applying this to $\mathrm{Y}(s) + \mathrm{Y}^\dagger(s)$ yields
\begin{equation}
Y_{\mathrm{in}}(s) + Y_{\mathrm{in}}^*(s) \geq 0
\end{equation}
for $\mathrm{Re}(s) > 0$, completing the proof.
\end{proof}

This result is physically essential: it guarantees that $Y_{\mathrm{in}}(s)$ represents a realizable passive network regardless of the complexity of the original multiport structure. The subsequent network synthesis procedure can therefore construct an equivalent lumped-element circuit. Since the Schur complement preserves positive-realness, the Cauer continued fraction coefficients $\{L_n, C_n\}$ of $Y_{\mathrm{in}}(s)$ are all positive, ensuring that the synthesized ladder has physical (positive-valued) elements, the Jacobi matrix has positive off-diagonal elements, the eigenvalue problem is well-posed with real distinct eigenvalues, and the continued fraction converges.

%------------------------------------------------------------------------------
\subsection{Connection to the Boundary Condition}
\label{sec:bc_connection}
%------------------------------------------------------------------------------

We now establish the central result of this section: the Schur complement reduction produces the frequency-dependent boundary condition at the junction node. This connection reveals that circuit reduction and spectral theory are two perspectives on the same mathematical structure.

Consider the junction node in the synthesized circuit. Kirchhoff's current law states that the sum of currents entering the node must vanish. The current from the linear electromagnetic environment, flowing through the admittance $Y_{\mathrm{in}}$, is related to the junction voltage by
\begin{equation}
I_{\mathrm{env}}(t) = \mathcal{L}^{-1}\left[ Y_{\mathrm{in}}(s) \tilde{V}_J(s) \right],
\label{eq:env_current}
\end{equation}
where $\mathcal{L}^{-1}$ denotes the inverse Laplace transform. In the Laplace domain, using the voltage-flux relation $V_J = s\Phi_J$ (with zero initial conditions as discussed in Section~\ref{sec:node_flux}), this becomes
\begin{equation}
\tilde{I}_{\mathrm{env}}(s) = Y_{\mathrm{in}}(s) \tilde{V}_J(s) = s \, Y_{\mathrm{in}}(s) \tilde{\Phi}_J(s).
\label{eq:env_current_laplace}
\end{equation}

The current through the Josephson junction is determined by the first Josephson relation. Under our convention that $C_J$ is included in $Y_{\mathrm{in}}$ (Assumption~\ref{assumption:cj_convention}), the junction contributes only the supercurrent:
\begin{equation}
I_J(t) = I_c \sin\left(\frac{\Phi_J(t)}{\varphi_0}\right).
\label{eq:junction_current}
\end{equation}
Kirchhoff's current law at the junction node requires
\begin{equation}
I_{\mathrm{env}}(t) + I_J(t) = 0.
\label{eq:kcl}
\end{equation}

For the \emph{linearized} normal modes of the system, we seek solutions where the junction flux oscillates harmonically: $\Phi_J(t) = \phi_n^J e^{i\omega_n t}$ with small amplitude $|\phi_n^J| \ll \varphi_0$. In this regime, the Josephson current can be linearized as $\sin(\Phi_J/\varphi_0) \approx \Phi_J/\varphi_0$, giving $I_J \approx \Phi_J/L_J$ where $L_J = \varphi_0/I_c = \varphi_0^2/E_J$ is the Josephson inductance. Taking the Laplace transform of the linearized Kirchhoff equation yields
\begin{equation}
s \, Y_{\mathrm{in}}(s) \tilde{\Phi}_J(s) + \frac{\tilde{\Phi}_J(s)}{L_J} = 0.
\label{eq:linearized_kcl}
\end{equation}
For nontrivial solutions with $\tilde{\Phi}_J \neq 0$, the frequency must satisfy the boundary condition equation
\begin{equation}
s \, Y_{\mathrm{in}}(s) + \frac{1}{L_J} = 0.
\label{eq:boundary_condition}
\end{equation}

\begin{theorem}[Boundary Condition Equivalence]
\label{thm:bc_equivalence}
Let $\mathrm{Y}(s)$ be the nodal admittance matrix of a passive linear network, and let $Y_{\mathrm{in}}(s)$ be the Schur complement obtained by eliminating all nodes except a designated port node $J$. If a Josephson junction with inductance $L_J = \varphi_0^2/E_J$ terminates this port, then in the linearized (harmonic) approximation, the dressed mode frequencies $\omega_n$ of the coupled system are the positive solutions of
\begin{equation}
sY_{\mathrm{in}}(s) + \frac{1}{L_J} = 0 \quad \text{at } s = i\omega_n.
\end{equation}
This equation is the spectral condition imposed by the junction termination, with $Y_{\mathrm{in}}(s)$ encoding the frequency-dependent boundary data.
\end{theorem}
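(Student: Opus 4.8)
The plan is to obtain the boundary condition as the secular equation for the free oscillations of the full \emph{linearized} coupled network and then to collapse that secular equation onto the junction port using the Schur-complement identity of Eq.~\eqref{eq:schur_complement}. Concretely, under Assumption~\ref{assumption:cj_convention} every linear element --- including $C_J$ --- already sits inside $\mathrm{Y}(s)$, so in the harmonic approximation the only addition is the Josephson inductor: linearizing $I_c\sin(\Phi_J/\varphi_0)\approx\Phi_J/L_J$ with $L_J=\varphi_0^2/E_J$ from Eq.~\eqref{eq:josephson_inductance} places a shunt inductance of admittance $1/(sL_J)$ at node $J$. The total nodal admittance matrix of the closed system is therefore $\mathrm{Y}_{\mathrm{tot}}(s)=\mathrm{Y}(s)+\frac{1}{sL_J}\,\mathrm{e}_J\mathrm{e}_J^{\top}$, where $\mathrm{e}_J$ is the standard basis vector of the junction port, and a normal mode at frequency $\omega$ is a nonzero voltage pattern carrying no external current injection, i.e. a nontrivial solution of $\mathrm{Y}_{\mathrm{tot}}(i\omega)\,\mathrm{V}=\mathrm{0}$; such a solution exists precisely when $\det\mathrm{Y}_{\mathrm{tot}}(i\omega)=0$.

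Next I would factor this determinant through the internal-node block. Since the junction inductor touches only node $J$, the $\bar J\bar J$ block is unchanged, and the Schur factorization gives $\det\mathrm{Y}_{\mathrm{tot}}(s)=\det\mathrm{Y}_{\bar J\bar J}(s)\cdot\big[\,Y_{JJ}(s)+\tfrac{1}{sL_J}-\mathrm{Y}_{J\bar J}(s)\mathrm{Y}_{\bar J\bar J}^{-1}(s)\mathrm{Y}_{\bar J J}(s)\,\big]$, whose bracket is exactly $Y_{\mathrm{in}}(s)+1/(sL_J)$ by the definition of $Y_{\mathrm{in}}$ in Eq.~\eqref{eq:schur_complement}. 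Hence
\begin{equation}
\det\mathrm{Y}_{\mathrm{tot}}(s)=\det\mathrm{Y}_{\bar J\bar J}(s)\,\Bigl[\,Y_{\mathrm{in}}(s)+\frac{1}{sL_J}\,\Bigr].
\end{equation}
Away from the zeros of $\det\mathrm{Y}_{\bar J\bar J}(i\omega)$ --- which, by Proposition~\ref{prop:pr_preservation} together with Foster's reactance theorem, are generically the poles of $Y_{\mathrm{in}}$, where the bracket diverges and so cannot vanish --- both factors are finite and nonzero separately, so $\det\mathrm{Y}_{\mathrm{tot}}(i\omega)=0$ is equivalent to $Y_{\mathrm{in}}(i\omega)+1/(i\omega L_J)=0$. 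Multiplying by $i\omega$ gives $i\omega\,Y_{\mathrm{in}}(i\omega)+1/L_J=0$, i.e. $sY_{\mathrm{in}}(s)+1/L_J=0$ at $s=i\omega$, which is Eq.~\eqref{eq:boundary_condition}. Restricting to purely imaginary $s$ is what selects the undamped normal modes of the lossless reduction; the positive roots are the dressed frequencies $\omega_n$, and for a finite lossless environment with $N$ resonances their count ($N+1$) and their strict interlacing with the bare frequencies are exactly the statements recorded in Theorem~\ref{thm:network_spectral}, part (iii).

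The step I expect to be the main obstacle is the bookkeeping at the excluded frequencies, the zeros of $\det\mathrm{Y}_{\bar J\bar J}$. Two cases must be separated. When such a zero is a genuine pole of $Y_{\mathrm{in}}$ with nonzero residue --- finite junction participation --- the mode hybridizes with the junction and is still captured by the boundary condition after clearing the pole (equivalently, it is an eigenvalue of the augmented Jacobi matrix of Theorem~\ref{thm:network_spectral}), so nothing is lost. When instead a pole--zero cancellation leaves $Y_{\mathrm{in}}$ regular there, the residue direction is orthogonal to $\mathrm{Y}_{\bar J J}$: the environment mode has zero amplitude at the junction node, it decouples, and it is therefore not a dressed mode of the \emph{coupled} system in the sense of the theorem. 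I would state this dichotomy explicitly, point out that it is precisely the $N{+}1$-versus-$N$ counting already visible in Theorem~\ref{thm:network_spectral}, and note that the dissipative generalization --- where the roots migrate off the imaginary axis into the complex poles of $Y_{\mathrm{in}}$ --- is deferred to Section~\ref{sec:complex_poles}.
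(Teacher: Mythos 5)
Your proposal is correct, and it reaches the boundary condition by a route that is equivalent to but more global than the paper's. The paper proves Theorem~\ref{thm:bc_equivalence} locally: having already eliminated the internal nodes via the Schur complement (so that KCL at every node of $\bar J$ is built into $Y_{\mathrm{in}}$), it writes Kirchhoff's current law at the single remaining node $J$, $sY_{\mathrm{in}}(s)\tilde\Phi_J + \tilde\Phi_J/L_J = 0$, and reads off the secular condition from the requirement $\tilde\Phi_J\neq 0$. You instead keep all nodes, append the linearized junction as a rank-one shunt $\frac{1}{sL_J}\mathrm{e}_J\mathrm{e}_J^{\top}$, and factor the full secular determinant as $\det\mathrm{Y}_{\bar J\bar J}\cdot[Y_{\mathrm{in}}+1/(sL_J)]$. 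The two are the same computation — the bracket in your factorization is exactly the paper's single-node KCL equation — but your version buys two things the paper leaves implicit: (i) it makes the dressed modes manifestly the zeros of the determinant of the \emph{closed} system rather than of a port equation whose equivalence to the full eigenproblem is asserted; and (ii) it forces you to confront the exceptional frequencies where $\det\mathrm{Y}_{\bar J\bar J}$ vanishes, and your dichotomy there (pole of $Y_{\mathrm{in}}$ with finite junction participation versus pole--zero cancellation and a decoupled environment mode) is correct and is precisely the content the paper defers to the interlacing discussion around Theorem~\ref{thm:network_spectral}. One small wording slip: in the sentence ``both factors are finite and nonzero separately'' you mean that the first factor is nonzero and the bracket is finite, so the product vanishes iff the bracket does; as written it would preclude the very zeros you are solving for.
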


The boundary condition~\eqref{eq:boundary_condition} has a natural interpretation in Sturm-Liouville theory. For a differential operator on an interval with a boundary condition that depends on the eigenvalue parameter, the eigenvalues are determined by a transcendental equation involving the boundary data. Here, the ``boundary'' is the junction node, and the ``eigenvalue-dependent boundary condition'' is precisely Eq.~\eqref{eq:boundary_condition}, where the admittance $Y_{\mathrm{in}}(s)$ encodes how the electromagnetic environment responds at frequency $s$. This connection to Sturm-Liouville theory with eigenparameter-dependent boundary conditions~\cite{Fulton1977, Walter1973} provides powerful analytical tools for understanding the mode structure, including completeness of the eigenfunction expansion and the asymptotic distribution of eigenvalues. Furthermore, the boundary condition has a transparent continued fraction interpretation. Substituting the Cauer expansion and evaluating at $s = i\omega$, the condition becomes a CF equation whose roots are the dressed eigenfrequencies. The classical theory of continued fractions~\cite{Wall1948, Jones1980} provides key results: the roots of $sY_{\mathrm{in}}(s) + 1/L_J = 0$ interlace with the poles of $Y_{\mathrm{in}}(s)$; between any two consecutive bare resonator frequencies there exists exactly one dressed mode frequency; for $N$ bare modes there are exactly $N+1$ dressed modes; and the function $\omega \mapsto \omega \cdot \mathrm{Im}[Y_{\mathrm{in}}(i\omega)]$ is monotonic between poles, ensuring unique roots per interval.

\subsection{Illustrative Example: Quarter-Wave Resonator}
\label{sec:example_qw}
%------------------------------------------------------------------------------

Consider a quarter-wave coplanar waveguide resonator of length $\ell$, shorted at $x = 0$ and open at $x = \ell$, coupled to a transmon at the open end. The resonator presents input admittance
\begin{equation}
Y_r(s) = \frac{1}{Z_0}\coth\left(\frac{s\ell}{v}\right),
\label{eq:resonator_admittance}
\end{equation}
which on the imaginary axis becomes $Y_r(i\omega) = -(i/Z_0)\cot(\omega\ell/v)$, purely reactive with poles at the bare resonator frequencies $\omega_n^{(0)} = (2n-1)\pi v/(2\ell)$ for $n = 1, 2, 3, \ldots$. Figure~\ref{fig:quarter_wave_circuit} shows the circuit topology.

The cotangent admittance can be expanded as an infinite continued fraction corresponding to a semi-infinite LC ladder~\cite{Pozar2011, Collin2001}. This allows the full machinery of continued fraction spectral theory to be applied to distributed resonators. The Cauer expansion yields ladder parameters $L_n \propto Z_0 \ell/(n^2 \pi^2 v)$ and $C_n \propto \ell/(Z_0 v)$, with the product $L_n C_n$ determining the resonant frequencies.

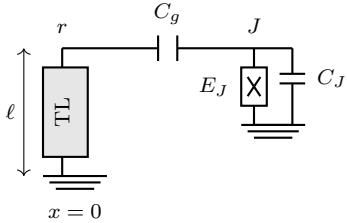
\begin{figure}[t]
\centering
\begin{tikzpicture}[scale=0.85]
    % Transmission line
    \draw[thick] (0,0) -- (0,2);
    \draw[thick, fill=gray!20] (-0.3,0.3) rectangle (0.4,1.7);
    \node[rotate=90] at (0,1) {\footnotesize TL};
    
    % Short at x=0
    \draw[thick] (-0.3,0) -- (0.7,0);
    \draw[thick] (-0.2,-0.1) -- (0.6,-0.1);
    \draw[thick] (-0.1,-0.2) -- (0.5,-0.2);
    \node[below] at (0.2,-0.3) {\footnotesize $x=0$};
    
    % Coupling capacitor
    \draw[thick] (0,2) -- (1.5,2);
    \draw[thick] (1.5,1.8) -- (1.5,2.2);
    \draw[thick] (1.8,1.8) -- (1.8,2.2);
    \draw[thick] (1.8,2) -- (3,2);
    \node[above] at (1.65,2.25) {\footnotesize $C_g$};
    
    % Transmon
    \begin{scope}[shift={(3,2)}]
        % Junction
        \draw[thick] (0,0) -- (0,-0.3);
        \draw[thick, fill=white] (-0.2,-0.3) rectangle (0.2,-0.9);
        \draw[thick] (-0.1,-0.45) -- (0.1,-0.75);
        \draw[thick] (-0.1,-0.75) -- (0.1,-0.45);
        \draw[thick] (0,-0.9) -- (0,-1.2);
        \node[right] at (-1,-0.6) {\footnotesize $E_J$};
        
        % Shunt capacitor
        \draw[thick] (0.6,0) -- (0.6,-0.4);
        \draw[thick] (0.4,-0.4) -- (0.8,-0.4);
        \draw[thick] (0.4,-0.55) -- (0.8,-0.55);
        \draw[thick] (0.6,-0.55) -- (0.6,-1.2);
        \draw[thick] (0,0) -- (0.6,0);
        \draw[thick] (0,-1.2) -- (0.6,-1.2);
        \node[right] at (0.85,-0.47) {\footnotesize $C_J$};
        
        % Ground
        \draw[thick] (-0.2,-1.2) -- (0.8,-1.2);
        \draw[thick] (-0.1,-1.3) -- (0.7,-1.3);
        \draw[thick] (0,-1.4) -- (0.6,-1.4);
    \end{scope}
    
    % Labels
    \node[above] at (0,2.1) {\footnotesize $r$};
    \node[above] at (3,2.1) {\footnotesize $J$};
    
    % Length arrow
    \draw[<->] (-0.6,0) -- (-0.6,2);
    \node[left] at (-0.6,1) {\footnotesize $\ell$};
\end{tikzpicture}
\caption{Quarter-wave transmission line resonator coupled to a transmon qubit. The resonator is shorted at $x=0$ and open at $x=\ell$. The coupling capacitor $C_g$ mediates the qubit-resonator interaction.}
\label{fig:quarter_wave_circuit}
\end{figure}

Applying the Schur complement with the junction as port $J$ yields:
\begin{equation}
Y_{\mathrm{in}}(s) = sC_J + \frac{sC_g Y_r}{Y_r + sC_g}.
\label{eq:schur_example}
\end{equation}
The first term is the junction capacitance; the second is the resonator admittance modified by the coupling capacitor. This can be written in continued fraction form, and the boundary condition $sY_{\mathrm{in}}(s) + 1/L_J = 0$ becomes the eigenvalue equation for the combined system.

%------------------------------------------------------------------------------
\subsection{Graphical Solution and Mode Structure}
\label{sec:graphical}
%------------------------------------------------------------------------------

For a lossless network where $Y_{\mathrm{in}}(i\omega) = iB(\omega)$, the boundary condition becomes
\begin{equation}
\omega B(\omega) = \frac{1}{L_J}.
\label{eq:bc_lossless}
\end{equation}
The graphical solution method is equivalent to finding roots of the continued fraction. The susceptance $B(\omega)$ diverges at the bare resonator frequencies, the poles of the continued fraction, creating vertical asymptotes that partition the frequency axis. Between consecutive poles, $B(\omega)$ passes through zero exactly once and changes sign, following from the positive-real property. The product $\omega B(\omega)$ varies continuously from $+\infty$ to $-\infty$ (or vice versa) between consecutive asymptotes, so by the intermediate value theorem there is exactly one intersection with the line $1/L_J$ per interval. Below the lowest bare resonator frequency, there is one additional root---the qubit-like mode---arising from the junction's inductive character.

\begin{figure}[t]
\centering
\includegraphics[width=0.95\columnwidth]{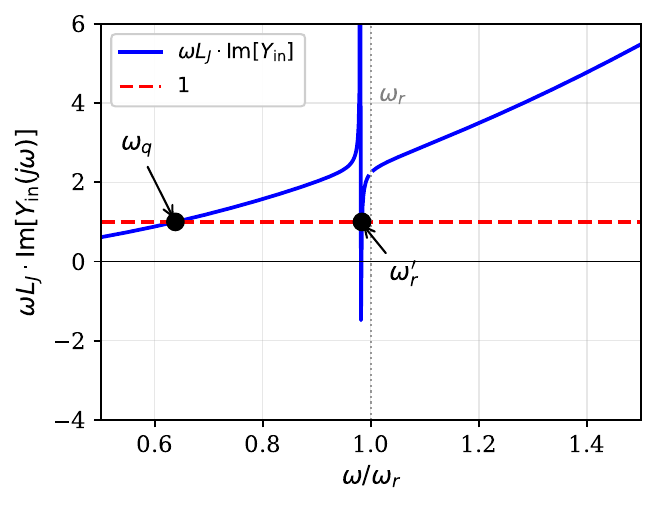}
\caption{Graphical solution of the boundary condition equation $sY_{\mathrm{in}}(s) + 1/L_J = 0$. The blue curve shows $\omega L_J \cdot \mathrm{Im}[Y_{\mathrm{in}}(i\omega)]$ versus frequency normalized to $\omega_r$. Dressed mode frequencies occur where this curve intersects unity (red dashed line). Two solutions emerge: the qubit mode $\omega_q$ at lower frequency and the dressed resonator mode $\omega_r'$ pushed below the bare frequency. The vertical asymptote at $\omega_r$ reflects the pole in the resonator admittance. The continued fraction structure guarantees exactly one root per interval between consecutive poles.}
\label{fig:bc_graphical}
\end{figure}

Figure~\ref{fig:bc_graphical} illustrates this graphical solution for a transmon coupled to a resonator. The blue curve shows the dimensionless product $\omega L_J B(\omega)$ as a function of frequency normalized to the bare resonator frequency. Mode frequencies occur where this curve intersects unity. The susceptance $B(\omega)$ diverges at the bare resonator frequencies, creating vertical asymptotes that partition the frequency axis into intervals. Between consecutive asymptotes, the product $\omega L_J B(\omega)$ varies continuously from $+\infty$ to $-\infty$ (or vice versa), guaranteeing exactly one zero crossing per interval by the intermediate value theorem. This topological argument ensures that the number of dressed modes equals the number of poles in the admittance plus one (for the junction mode).

Near an avoided crossing where the bare qubit frequency $\omega_p = 1/\sqrt{L_J C_\Sigma}$ approaches a bare resonator frequency $\omega_r$, the two solutions repel each other. The minimum separation equals the vacuum Rabi splitting $2g$. In the continued fraction framework, this gap between consecutive eigenvalues of the underlying Jacobi matrix is always positive for positive off-diagonal elements $b_n > 0$, a direct consequence of the positive-real property.

%------------------------------------------------------------------------------

%------------------------------------------------------------------------------
\subsection{Spatially-Dependent Admittance}
\label{sec:spatial_admittance}
%------------------------------------------------------------------------------

For distributed networks, the admittance depends on the spatial location where the junction is connected. The driving-point admittance at position $x$ is $Y_{\mathrm{in}}(x,s) = \tilde{I}(x,s)/\tilde{V}(x,s)$, and a junction at position $x_J$ sees admittance $Y_{\mathrm{in}}(x_J,s)$. For a transmission line shorted at $x = 0$ and open at $x = \ell$, a junction at $x_J$ sees both sections in parallel:
\begin{equation}
Y_{\mathrm{in}}(x_J,s) = sC_J + \frac{1}{Z_0}\coth\left(\frac{sx_J}{v}\right) + \frac{1}{Z_0}\tanh\left(\frac{s(\ell-x_J)}{v}\right).
\label{eq:Yin_tline_explicit}
\end{equation}
The first hyperbolic cotangent represents the admittance looking toward the short at $x=0$; the hyperbolic tangent represents the admittance looking toward the open at $x=\ell$. The position-dependent admittance has a continued fraction expansion whose coefficients depend on $x_J$. The poles of each segment's CF contribute to the full spectrum, with residues that depend on $x_J$ through the mode functions. A junction at position $x_J$ couples to mode $n$ with strength proportional to $g_n(x_J) \propto |\psi_n(x_J)|$, where $\psi_n(x)$ is the normalized mode function. This provides a unified framework for analyzing position-dependent coupling: the junction position modifies the CF coefficients and hence the coupling to each mode.

\begin{theorem}[Spatially-Dependent Boundary Condition]
\label{thm:bc_spatial}
Let $Y_{\mathrm{in}}(x,s)$ be the driving-point admittance at position $x$ along a distributed passive linear network. If a Josephson junction with inductance $L_J$ is connected at position $x_J$, then in the linearized approximation the dressed mode frequencies are the positive solutions of $sY_{\mathrm{in}}(x_J,s) + 1/L_J = 0$ at $s = i\omega_n$.
\end{theorem}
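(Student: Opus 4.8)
The plan is to repeat the Kirchhoff-law derivation of Section~\ref{sec:bc_connection} essentially verbatim, the only change being that the passive environment is now infinite-dimensional and the relevant driving-point admittance is the one evaluated at the junction location $x_J$. First I would pin down the meaning of $Y_{\mathrm{in}}(x_J,s)$ for a distributed passive one-port: cut the network at $x_J$, inject a test current there, solve the linear field equations (the telegrapher's equations for a transmission line, or more generally the nodal equations of the underlying passive network) subject to the physical terminations, and set $Y_{\mathrm{in}}(x_J,s)=\tilde I(x_J,s)/\tilde V(x_J,s)$. Concretely, for the shorted/open line this reproduces the parallel sum of Eq.~\eqref{eq:Yin_tline_explicit}; in general it is the sum, at the branch point $x_J$, of the admittances looking into each incident branch, together with the lumped shunt $sC_J$ required by Assumption~\ref{assumption:cj_convention}.

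Next I would write Kirchhoff's current law at the junction node. The environment draws current $\tilde I_{\mathrm{env}}(s)=Y_{\mathrm{in}}(x_J,s)\,\tilde V_J(s)=s\,Y_{\mathrm{in}}(x_J,s)\,\tilde\Phi_J(s)$, using $V_J=s\Phi_J$ with zero initial conditions, while the junction itself contributes only the supercurrent $I_c\sin(\Phi_J/\varphi_0)$. For the linearized normal modes one takes $\Phi_J(t)=\phi_n^J e^{i\omega_n t}$ with $|\phi_n^J|\ll\varphi_0$, so $I_J\approx\Phi_J/L_J$; the transformed equation $s\,Y_{\mathrm{in}}(x_J,s)\,\tilde\Phi_J+\tilde\Phi_J/L_J=0$ then admits a nontrivial solution $\tilde\Phi_J\neq 0$ exactly when $sY_{\mathrm{in}}(x_J,s)+1/L_J=0$ at $s=i\omega_n$. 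This is structurally identical to Eqs.~\eqref{eq:linearized_kcl}--\eqref{eq:boundary_condition}; the spatial argument merely selects which boundary datum appears, and hence which residues $|\psi_n(x_J)|^2$ — and thus which position-dependent couplings $g_n(x_J)\propto|\psi_n(x_J)|$ — enter.

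To make this rigorous rather than formal, I would (i) confirm that $Y_{\mathrm{in}}(x_J,s)$ is positive-real: each incident branch is a passive one-port and hence positive-real, positive-realness is closed under parallel combination and under addition of the shunt $sC_J$, and alternatively one may discretize the distributed network into an $N$-section LC ladder, apply Proposition~\ref{prop:pr_preservation} and Theorem~\ref{thm:bc_equivalence}, and pass to the limit $N\to\infty$, using that the positive-real class and the Cauer expansion are stable under this limit (Section~\ref{sec:high_freq}, Ref.~\cite{Wall1948}); and (ii) argue completeness, i.e. that the harmonic ansatz captures every dressed normal mode. This second point is where Sturm--Liouville theory with an eigenparameter-dependent boundary condition enters: $Y_{\mathrm{in}}(x_J,i\omega)/(i\omega)$ is a Herglotz/Nevanlinna function of $\omega^2$, its Foster/Mittag-Leffler partial-fraction expansion has the form $\sum_n |\psi_n(x_J)|^2/(\omega^2-\omega_n^2)+\text{(polynomial in }\omega^2\text{)}$, and substituting into $sY_{\mathrm{in}}+1/L_J=0$ reproduces the interlacing structure of Theorem~\ref{thm:network_spectral}(iii) with the residues $|\psi_n(x_J)|^2$ fixing the couplings.

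The main obstacle is step (ii): convergence of the pole expansion of the distributed admittance and completeness of the resulting dressed-mode family. For a single transmission-line segment this is classical (the Mittag-Leffler expansions of $\coth$ and $\tanh$ into their pole series), but for a general passive distributed environment it requires the spectral-measure argument flagged in Section~\ref{sec:uv}: represent $Y_{\mathrm{in}}(x_J,i\omega)/(i\omega)$ by a positive measure, show the measure is purely atomic with atoms accumulating only at infinity — guaranteed by the finite shunt capacitance $C_\Sigma>0$, which forces $Y_{\mathrm{in}}\sim sC_\Sigma$ and hence summable residues $|\psi_n(x_J)|^2=\mathcal{O}(\omega_n^{-2})$ — and then invoke the standard self-adjointness and completeness results for the associated Jacobi operator to conclude that the dressed frequencies $\{\omega_n\}$ exhaust the spectrum. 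Everything else is a line-by-line transcription of the lumped proof of Theorem~\ref{thm:bc_equivalence}.
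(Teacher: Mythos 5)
Your proposal is correct and follows essentially the same route the paper takes: the paper states Theorem~\ref{thm:bc_spatial} without a separate proof, treating it as the verbatim transcription of the Kirchhoff-law derivation of Eqs.~\eqref{eq:linearized_kcl}--\eqref{eq:boundary_condition} with $Y_{\mathrm{in}}(s)$ replaced by $Y_{\mathrm{in}}(x_J,s)$, exactly as you do. Your additional steps (i) and (ii) on positive-realness of the parallel combination and completeness via the Herglotz/spectral-measure representation supply rigor the paper only gestures at (deferring to Section~\ref{sec:uv} and the Sturm--Liouville references), so they strengthen rather than diverge from the paper's argument.
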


%------------------------------------------------------------------------------
\subsection{Transcendental Equations and Explicit Solutions}
\label{sec:transcendental}
%------------------------------------------------------------------------------

We now derive the dressed mode frequencies of the linearized circuit, where the Josephson junction is replaced by its effective inductance $L_J = \varphi_0^2/E_J$. This linearized analysis yields the normal modes of the coupled system; the effects of the junction nonlinearity are treated in Section~\ref{sec:quantization}.

For a lossless network, the admittance seen by the junction takes the Foster form
\begin{equation}
Y_{\mathrm{in}}(i\omega) = i\omega C_\Sigma + \sum_{k=1}^{N} \frac{i\omega C_k (\omega_k^{(0)})^2}{(\omega_k^{(0)})^2 - \omega^2},
\label{eq:foster_general}
\end{equation}
where $C_\Sigma$ is the total shunt capacitance at the junction port (including the junction capacitance $C_J$ per Assumption~\ref{assumption:cj_convention}), and $\omega_k^{(0)}$ are the Foster pole frequencies of the linear environment, i.e., the resonant frequencies when the junction port is open-circuited.

The dressed normal-mode frequencies are determined by the boundary condition requiring that the total admittance at the junction port vanish:
\begin{equation}
Y_{\mathrm{in}}(i\omega) + \frac{1}{i\omega L_J} = 0.
\label{eq:boundary_condition_foster}
\end{equation}
Substituting Eq.~\eqref{eq:foster_general} and multiplying through by $i\omega L_J$ yields
\begin{equation}
1 - \omega^2 L_J C_\Sigma + \sum_{k=1}^{N} \frac{\omega^2 L_J C_k (\omega_k^{(0)})^2}{(\omega_k^{(0)})^2 - \omega^2} = 0.
\label{eq:bc_intermediate}
\end{equation}
Introducing dimensionless variables $u = \omega/\omega_p$ where $\omega_p = 1/\sqrt{L_J C_\Sigma}$ is the bare plasma frequency, $r_k = \omega_k^{(0)}/\omega_p$, and participation ratios $p_k = C_k/C_\Sigma$, Eq.~\eqref{eq:bc_intermediate} becomes
\begin{equation}
1 - u^2 + \sum_{k=1}^{N} \frac{u^2 p_k r_k^2}{r_k^2 - u^2} = 0.
\label{eq:bc_dimensionless}
\end{equation}
Substituting the Cauer continued-fraction realization of $Y_{\mathrm{in}}(s)$ into this boundary condition yields a scalar root-finding problem with an explicitly recursive structure. The $N+1$ roots $\{u_m\}_{m=0}^{N}$ satisfy the interlacing property
\begin{equation}
0 < u_0 < r_1 < u_1 < r_2 < \cdots < r_N < u_N,
\label{eq:interlacing}
\end{equation}
a consequence of the Cauchy interlacing theorem for positive-real functions.

For a single environmental mode ($N=1$), the boundary condition~\eqref{eq:bc_dimensionless} reduces to a quadratic equation in $\omega^2$ with exact solution
\begin{equation}
\omega_\pm^2 = \frac{\omega_p^2 + \omega_r^2(1+p)}{2} \pm \frac{1}{2}\sqrt{D},
\label{eq:exact_frequencies}
\end{equation}
where $\omega_r \equiv \omega_1^{(0)}$ is the bare resonator frequency, $p \equiv p_1 = C_1/C_\Sigma$ is the capacitive participation ratio, and the discriminant is
\begin{equation}
D = \bigl[\omega_p^2 + \omega_r^2(1+p)\bigr]^2 - 4\omega_p^2\omega_r^2.
\label{eq:discriminant}
\end{equation}
This result is exact for the linearized circuit at arbitrary coupling strength.

In the weak-coupling limit ($p \ll 1$), defining the renormalized plasma frequency $\bar{\omega}_p = \omega_p\sqrt{1-p}$, the mode-splitting parameter $\Omega = \sqrt{p\bar{\omega}_p\omega_r}$, and the detuning $\Delta = \bar{\omega}_p - \omega_r$, Eq.~\eqref{eq:exact_frequencies} reduces to
\begin{equation}
\omega_\pm \approx \frac{\bar{\omega}_p + \omega_r}{2} \pm \frac{1}{2}\sqrt{\Delta^2 + \Omega^2} + O(p^2).
\label{eq:weak_coupling_frequencies}
\end{equation}
At zero detuning ($\Delta = 0$), the mode splitting is $\omega_+ - \omega_- = \Omega$. Upon quantization and truncation of the junction to a two-level system, this normal-mode splitting maps to the vacuum Rabi splitting $2g$ (see Section~\ref{sec:quantization}). Table~\ref{tab:single_mode_formulas} summarizes the dressed frequencies in various limiting regimes.

\begin{table}[t]
\caption{Dressed normal-mode frequencies for a linearized junction coupled to a single resonator. Here $\bar{\omega}_p = \omega_p\sqrt{1-p}$, $\Omega = \sqrt{p\bar{\omega}_p\omega_r}$ is the mode-splitting parameter, $\Delta = \bar{\omega}_p - \omega_r$, and $D$ is the discriminant~\eqref{eq:discriminant}. The dispersive and resonant limits follow from weak-coupling expansion of the exact result.}
\label{tab:single_mode_formulas}
\begin{ruledtabular}
\begin{tabular}{lc}
\textrm{Regime} & \textrm{Dressed frequencies (linearized)} \\
\midrule
Exact & $\omega_\pm^2 = \dfrac{\omega_p^2 + \omega_r^2(1+p)}{2} \pm \dfrac{\sqrt{D}}{2}$ \\[10pt]
Weak coupling & $\omega_\pm \approx \dfrac{\bar{\omega}_p + \omega_r}{2} \pm \dfrac{1}{2}\sqrt{\Delta^2 + \Omega^2}$ \\[10pt]
Dispersive ($|\Delta| \gg \Omega$) & $\omega_- \approx \bar{\omega}_p + \dfrac{\Omega^2}{4\Delta}$, \quad $\omega_+ \approx \omega_r - \dfrac{\Omega^2}{4\Delta}$ \\[8pt]
Resonant ($\Delta = 0$) & $\omega_\pm = \dfrac{\bar{\omega}_p + \omega_r}{2} \pm \dfrac{\Omega}{2}$ \\[6pt]
Strong coupling & Solve Eq.~\eqref{eq:bc_dimensionless} numerically \\
\end{tabular}
\end{ruledtabular}
\end{table}

For a junction coupled to $N$ environmental modes, the dressed frequency near the bare plasma frequency can be obtained by expanding the boundary condition~\eqref{eq:bc_dimensionless} around $u \approx 1$. In the dispersive regime where $|u^2 - r_k^2| \gg p_k r_k^2$ for all modes, the root near $\omega_p$ satisfies
\begin{equation}
\omega_-^2 \approx \omega_p^2(1-p_\mathrm{tot}) + \sum_{k=1}^{N} \frac{p_k \omega_p^2 \omega_k^2}{\omega_p^2 - \omega_k^2},
\label{eq:omega_minus_multimode}
\end{equation}
where $p_\mathrm{tot} = \sum_k p_k$. To leading order in $p_k$, this gives the frequency shift
\begin{equation}
\delta\omega_- \equiv \omega_- - \omega_p \approx -\frac{\omega_p p_\mathrm{tot}}{2} + \sum_{k=1}^{N} \frac{p_k \omega_p \omega_k^2}{2(\omega_p^2 - \omega_k^2)}.
\label{eq:frequency_shift_multimode}
\end{equation}
The first term is a capacitive renormalization from the total participation, while the sum represents hybridization shifts from each mode, repulsive for $\omega_k > \omega_p$ and attractive for $\omega_k < \omega_p$.

Upon quantization and truncation to a two-level qubit (Section~\ref{sec:quantization}), the hybridization sum becomes the multimode Lamb shift with the identification $g_k^2/\Delta_k \leftrightarrow p_k\omega_p\omega_k^2/[2(\omega_p^2 - \omega_k^2)]$ in the appropriate limit. For an infinite number of modes, convergence of the frequency shift~\eqref{eq:frequency_shift_multimode} is guaranteed by the finite shunt capacitance $C_\Sigma$ at the junction port. Physically, $C_\Sigma$ acts as a low-pass filter that suppresses the junction's coupling to high-frequency modes, a point developed further in Section~\ref{sec:uv}.

%%%%%%%%%%%%%%%%%%%%%%%%%%%%%%%%%%%%%%%%%%%%%%%%%%%%%%%%%%%%%%%%%%%%%%%%%%%%%%%
%%%%%%%%%%%%%%%%%%%%%%%%%%%%%%%%%%%%%%%%%%%%%%%%%%%%%%%%%%%%%%%%%%%%%%%%%%%%%%%

\section{Network Synthesis}
\label{sec:synthesis}

Given the driving-point admittance $Y_{\mathrm{in}}(s)$, the next step is to construct an equivalent lumped-element circuit whose input admittance equals this function. This circuit represents the linear electromagnetic environment seen by the junction---the Josephson nonlinearity is not part of the synthesis and will be reintroduced only during quantization (Section~\ref{sec:quantization}). This inverse problem, known as network synthesis, provides the physical degrees of freedom that will be quantized. The key result, due to Brune~\cite{Brune1931}, is that any positive-real function admits such a realization. As established in Section~\ref{sec:cf_framework}, the Cauer continued fraction expansion provides a canonical ladder realization that yields a tridiagonal Hamiltonian upon quantization. The Foster and Cauer canonical forms represent dual approaches to network synthesis. The Foster form uses partial fraction expansion of $Y(s)$ to yield parallel LC branches, with each branch resonating independently at frequency $\omega_k$ and coupling to the junction through the branch capacitance $C_k$. This form directly reveals the mode structure. The Cauer form uses continued fraction expansion to yield a ladder network with series inductors and shunt capacitors, where adjacent elements couple through shared nodes, producing a tridiagonal Hamiltonian. Both forms are mathematically equivalent representations of the same positive-real function; the transformation between them is the correspondence between partial fraction and continued fraction expansions~\cite{Guillemin1957}.

%------------------------------------------------------------------------------
\subsection{Foster Synthesis for Lossless Networks}
\label{sec:foster_synthesis}
%------------------------------------------------------------------------------

For lossless (purely reactive) networks, the admittance is purely imaginary on the real frequency axis: $Y_{\mathrm{in}}(i\omega) = iB(\omega)$, where $B(\omega)$ is the susceptance. The positive-real conditions require that the poles of $Y_{\mathrm{in}}(s)$ lie on the imaginary axis at complex-conjugate pairs $s = \pm i\omega_k$, each pole be simple, and the residues at poles on the positive imaginary axis be real and positive~\cite{Guillemin1957}.

The Foster canonical form~\cite{Foster1924} expresses any lossless positive-real driving-point admittance as a partial fraction expansion:
\begin{equation}
Y(s) = sC_\infty + \frac{1}{sL_0} + \sum_{k=1}^{N} \frac{sC_k \omega_k^2}{s^2 + \omega_k^2}.
\label{eq:foster_form}
\end{equation}
The term $sC_\infty$ represents a shunt capacitance present when the admittance grows without bound as $s \to \infty$. The term $1/(sL_0)$ represents an inductor present when there is a pole at dc. Each term in the summation represents a series $LC$ branch resonating at $\omega_k = 1/\sqrt{L_k C_k}$.

The Foster partial fraction expansion is the spectral decomposition of the admittance function. Comparing with the resolvent formula $G(z) = \sum_k |\psi_k(0)|^2/(z - E_k)$ from Eq.~\eqref{eq:resolvent_spectral}, we identify the poles $\omega_k^2$ with eigenvalues of the underlying Jacobi operator, and the residues $C_k \omega_k^2$ with squared eigenvector components at the junction site. The correspondence is: pole $\omega_k$ maps to bare mode frequency, and residue $C_k \omega_k^2$ maps to junction participation $|\phi_k^J|^2$.

The circuit parameters are extracted from the residues of $Y(s)$. At a pole $s = i\omega_k$, the residue is computed as
\begin{equation}
\mathrm{Res}[Y(s), i\omega_k] = \lim_{s \to i\omega_k} (s - i\omega_k) \frac{sC_k \omega_k^2}{s^2 + \omega_k^2} = \frac{C_k \omega_k}{2}.
\label{eq:residue_result}
\end{equation}
Inverting this relation gives the branch capacitance and inductance:
\begin{equation}
C_k = \frac{2|\mathrm{Res}[Y(s), i\omega_k]|}{\omega_k}, \quad L_k = \frac{1}{\omega_k^2 C_k}.
\label{eq:foster_params}
\end{equation}

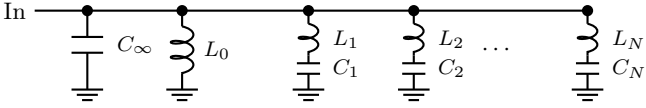
\begin{figure}[t]
\centering
\begin{tikzpicture}[scale=0.7]
    % Input
    \draw[thick] (0,0) -- (1,0);
    \node[circle, fill=black, inner sep=1.5pt] at (1,0) {};
    \node[left, font=\small] at (0,0) {In};
    
    % C_infinity branch
    \draw[thick] (1,0) -- (1,-0.5);
    \draw[thick] (0.7,-0.5) -- (1.3,-0.5);
    \draw[thick] (0.7,-0.8) -- (1.3,-0.8);
    \draw[thick] (1,-0.8) -- (1,-1.5);
    \draw[thick] (0.7,-1.5) -- (1.3,-1.5);
    \draw[thick] (0.8,-1.6) -- (1.2,-1.6);
    \draw[thick] (0.9,-1.7) -- (1.1,-1.7);
    \node[right, font=\footnotesize] at (1.4,-0.65) {$C_\infty$};
    
    % Horizontal bus
    \draw[thick] (1,0) -- (10.5,0);
    
    % L_0 branch
    \begin{scope}[shift={(2.8,0)}]
        \node[circle, fill=black, inner sep=1.5pt] at (0,0) {};
        \draw[thick] (0,0) -- (0,-0.3);
        \draw[thick, decoration={coil, aspect=0.4, segment length=2mm, amplitude=1.5mm}, decorate] (0,-0.3) -- (0,-1.2);
        \draw[thick] (0,-1.2) -- (0,-1.5);
        \draw[thick] (-0.3,-1.5) -- (0.3,-1.5);
        \draw[thick] (-0.2,-1.6) -- (0.2,-1.6);
        \draw[thick] (-0.1,-1.7) -- (0.1,-1.7);
        \node[right, font=\footnotesize] at (0.25,-0.75) {$L_0$};
    \end{scope}
    
    % LC branch 1
    \begin{scope}[shift={(5.2,0)}]
        \node[circle, fill=black, inner sep=1.5pt] at (0,0) {};
        \draw[thick] (0,0) -- (0,-0.25);
        \draw[thick, decoration={coil, aspect=0.4, segment length=1.8mm, amplitude=1.3mm}, decorate] (0,-0.25) -- (0,-0.85);
        \draw[thick] (0,-0.85) -- (0,-1.0);
        \draw[thick] (-0.22,-1.0) -- (0.22,-1.0);
        \draw[thick] (-0.22,-1.2) -- (0.22,-1.2);
        \draw[thick] (0,-1.2) -- (0,-1.5);
        \draw[thick] (-0.3,-1.5) -- (0.3,-1.5);
        \draw[thick] (-0.2,-1.6) -- (0.2,-1.6);
        \draw[thick] (-0.1,-1.7) -- (0.1,-1.7);
        \node[right, font=\footnotesize] at (0.3,-0.55) {$L_1$};
        \node[right, font=\footnotesize] at (0.3,-1.1) {$C_1$};
    \end{scope}
    
    % LC branch 2
    \begin{scope}[shift={(7.2,0)}]
        \node[circle, fill=black, inner sep=1.5pt] at (0,0) {};
        \draw[thick] (0,0) -- (0,-0.25);
        \draw[thick, decoration={coil, aspect=0.4, segment length=1.8mm, amplitude=1.3mm}, decorate] (0,-0.25) -- (0,-0.85);
        \draw[thick] (0,-0.85) -- (0,-1.0);
        \draw[thick] (-0.22,-1.0) -- (0.22,-1.0);
        \draw[thick] (-0.22,-1.2) -- (0.22,-1.2);
        \draw[thick] (0,-1.2) -- (0,-1.5);
        \draw[thick] (-0.3,-1.5) -- (0.3,-1.5);
        \draw[thick] (-0.2,-1.6) -- (0.2,-1.6);
        \draw[thick] (-0.1,-1.7) -- (0.1,-1.7);
        \node[right, font=\footnotesize] at (0.3,-0.55) {$L_2$};
        \node[right, font=\footnotesize] at (0.3,-1.1) {$C_2$};
    \end{scope}
    
    % Dots indicating continuation
    \node at (8.8,-0.75) {$\cdots$};
    
    % LC branch N
    \begin{scope}[shift={(10.5,0)}]
        \node[circle, fill=black, inner sep=1.5pt] at (0,0) {};
        \draw[thick] (0,0) -- (0,-0.25);
        \draw[thick, decoration={coil, aspect=0.4, segment length=1.8mm, amplitude=1.3mm}, decorate] (0,-0.25) -- (0,-0.85);
        \draw[thick] (0,-0.85) -- (0,-1.0);
        \draw[thick] (-0.22,-1.0) -- (0.22,-1.0);
        \draw[thick] (-0.22,-1.2) -- (0.22,-1.2);
        \draw[thick] (0,-1.2) -- (0,-1.5);
        \draw[thick] (-0.3,-1.5) -- (0.3,-1.5);
        \draw[thick] (-0.2,-1.6) -- (0.2,-1.6);
        \draw[thick] (-0.1,-1.7) -- (0.1,-1.7);
        \node[right, font=\footnotesize] at (0.3,-0.55) {$L_N$};
        \node[right, font=\footnotesize] at (0.3,-1.1) {$C_N$};
    \end{scope}
\end{tikzpicture}
\caption{Foster canonical form for a lossless positive-real admittance. The circuit consists of parallel branches connected to a common input node: a direct capacitor $C_\infty$ to ground capturing the high-frequency asymptotic behavior, an inductor $L_0$ to ground present when there is a pole at dc, and series $LC$ tanks tuned to the resonant frequencies $\omega_k = 1/\sqrt{L_k C_k}$. Each tank represents an environmental mode that can exchange energy with the junction. This parallel structure is dual to the Cauer ladder form.}
\label{fig:foster_synthesis}
\end{figure}

The physical interpretation of the Foster form, illustrated in Fig.~\ref{fig:foster_synthesis}, is that the admittance decomposes into parallel branches, each resonating at a distinct frequency. At frequency $\omega_k$, the $k$-th series $LC$ branch presents zero impedance, acting as a short circuit that dominates the response. Each branch represents an environmental mode that can exchange energy with the junction.

%------------------------------------------------------------------------------
\subsection{Cauer Synthesis and the Ladder Network}
\label{sec:cauer_synthesis_detailed}
%------------------------------------------------------------------------------

The Cauer synthesis provides an alternative canonical realization as a ladder network. Given a lossless positive-real admittance $Y(s)$, the Cauer expansion proceeds by alternating pole extractions. For Type~I (beginning with a series inductor), the algorithm is:

\begin{enumerate}
\item Form the impedance $Z(s) = 1/Y(s)$. If $\lim_{s \to \infty} Z(s)/s = L_1 > 0$, extract a series inductor: $Z_1(s) = Z(s) - sL_1$.
\item Form $Y_1(s) = 1/Z_1(s)$. If $\lim_{s \to \infty} Y_1(s)/s = C_1 > 0$, extract a shunt capacitor: $Y_2(s) = Y_1(s) - sC_1$.
\item Iterate: form $Z_2(s) = 1/Y_2(s)$, extract $L_2$, etc.
\item The process terminates when the remainder is constant.
\end{enumerate}

For Type~II (beginning with a shunt capacitor), interchange the roles of $Y$ and $Z$ in the first step: extract $C_0 = \lim_{s \to \infty} Y(s)/s$ first, then proceed with alternating impedance/admittance extractions.

The resulting circuit is a ladder of alternating series inductors and shunt capacitors. For Type~I, the expansion is:
\begin{equation}
Z(s) = sL_1 + \cfrac{1}{sC_1 + \cfrac{1}{sL_2 + \cfrac{1}{sC_2 + \ddots}}}
\end{equation}
which is equivalent to the admittance expansion in Eq.~\eqref{eq:cauer_type1}. The resonant frequencies of the ladder sections are $\omega_n = 1/\sqrt{L_n C_n}$, which encode the same spectral information as the Foster poles through a different parameterization.

The transformation between Foster and Cauer parameters is a classical result in network theory~\cite{Guillemin1957}. Given Foster parameters $\{C_k, \omega_k\}$, the Cauer parameters are obtained by evaluating the continued fraction expansion of the partial fraction sum. The explicit mapping between Cauer ladder parameters $\{L_n, C_n\}$ and Jacobi matrix elements $\{a_n, b_n\}$ is given in Appendix~\ref{app:lc_jacobi}. Both transformations are $O(N^2)$ operations for $N$ modes.

%------------------------------------------------------------------------------
\subsection{Positive-Real Fitting and Passivity Enforcement}
\label{sec:passivity}
%------------------------------------------------------------------------------

In practice, $Y_{\mathrm{in}}(s)$ is obtained from electromagnetic simulation or measurement at discrete frequencies. Before synthesis, the admittance must be fitted to a rational function. Vector fitting~\cite{Gustavsen1999} produces a fit of the form
\begin{equation}
Y_{\mathrm{fit}}(s) = d + se + \sum_{k=1}^{N} \frac{r_k}{s - p_k},
\label{eq:vector_fit}
\end{equation}
where the poles $p_k$ lie in the left half-plane for stable systems.

A critical subtlety is that vector fitting does not automatically enforce positive-realness. The fitting algorithm minimizes least-squares error but may produce $\mathrm{Re}[Y_{\mathrm{fit}}(i\omega)] < 0$ for some frequencies, corresponding to negative conductance. Such a function cannot represent a passive network, and quantizing it yields a Hamiltonian unbounded from below. The positive-real property is essential for both network realizability and continued fraction convergence. A positive-real function has a well-defined Cauer expansion with positive coefficients $\{L_n > 0, C_n > 0\}$, which guarantees that the synthesized ladder has physical (positive-valued) elements, the Jacobi matrix has positive off-diagonal elements, the continued fraction converges to a well-defined spectral measure, and the quantized Hamiltonian is bounded from below. Passivity violations manifest as negative Cauer coefficients.

When vector fitting violates passivity, enforcement must be applied before synthesis. Standard methods include residue perturbation~\cite{Gustavsen2001}, Hamiltonian perturbation, or convex optimization~\cite{GrivetTalocia2016}. The passivity enforcement algorithm of Gustavsen and Semlyen~\cite{Gustavsen2001} iteratively perturbs residues to ensure $\mathrm{Re}[Y(i\omega)] \geq 0$ at all frequencies while minimizing the perturbation to the fit. The number of poles $N$ determines the number of modes; ultraviolet convergence (Section~\ref{sec:uv}) ensures that physical observables converge as $N$ increases. The truncation of the Foster expansion to $N$ modes corresponds to truncating the Cauer continued fraction at level $N$. The truncated CF is the resolvent of a finite $N \times N$ Jacobi matrix whose eigenvalues approximate the first $N$ eigenvalues of the infinite operator. The truncation error is controlled by the decay of CF coefficients, governed by the high-frequency behavior of $Y_{\mathrm{in}}(s)$.

%------------------------------------------------------------------------------
\subsection{Brune Synthesis for Lossy Networks}
\label{sec:brune_synthesis}
%------------------------------------------------------------------------------

When the network contains dissipative elements, $Y_{\mathrm{in}}(s)$ is no longer purely imaginary on the frequency axis. The positive-real property generalizes to require $\mathrm{Re}[Y_{\mathrm{in}}(s)] \geq 0$ in the right half-plane. On the imaginary axis, $Y_{\mathrm{in}}(i\omega) = G(\omega) + iB(\omega)$, where the conductance $G(\omega) \geq 0$ represents dissipation.

The Brune synthesis algorithm~\cite{Brune1931} realizes any positive-real function as a network containing resistors, inductors, capacitors, and ideal transformers. The algorithm is iterative, reducing the admittance complexity at each stage by extracting a Brune section. Modern implementations for circuit QED applications are discussed in Refs.~\cite{Solgun2014, Solgun2015}.

The Brune synthesis extends the Cauer continued fraction to lossy networks. Each Brune section extracts one transmission zero (a frequency where the network is purely reactive), analogous to extracting one level of the lossless CF. The conductance $G(\omega)$ determines the dissipative coupling to the extracted mode. In the quantum theory, this dissipation appears as coupling to a bath, with Brune section parameters determining the bath spectral density (see Section~\ref{sec:complex_poles}). The chain structure of Brune sections is the lossy generalization of the Cauer ladder. 

The algorithm proceeds as follows. The first step identifies the frequency $\omega_0$ where $G(\omega) = \mathrm{Re}[Y_{\mathrm{in}}(i\omega)]$ attains its minimum value $G_0 \geq 0$. For $G_0 > 0$, the second step extracts a shunt conductance: $Y_1(s) = Y_{\mathrm{in}}(s) - G_0$. The third step extracts the susceptance at the critical frequency, forming $Y_2(s) = Y_1(s) - sB_0/\omega_0$. The fourth step analyzes the impedance $Z_2(s) = 1/Y_2(s)$, which has a pole at $s = i\omega_0$.

\begin{figure}[t]
\centering
\begin{tikzpicture}[scale=0.9]
    % Input port
    \draw[thick] (0,0) -- (1,0);
    \node[circle, fill=black, inner sep=1.5pt] at (1,0) {};
    \node[left] at (0,0) {In};
    
    % Shunt conductance G_0
    \draw[thick] (1,0) -- (1,-0.5);
    \draw[thick, decoration={zigzag, segment length=3mm, amplitude=1mm}, decorate] (1,-0.5) -- (1,-1.5);
    \draw[thick] (1,-1.5) -- (1,-2);
    \node[right] at (1.3,-1) {$G_0$};
    
    % Ground for G_0
    \draw[thick] (0.7,-2) -- (1.3,-2);
    \draw[thick] (0.8,-2.1) -- (1.2,-2.1);
    \draw[thick] (0.9,-2.2) -- (1.1,-2.2);
    
    % Series capacitor
    \draw[thick] (1,0) -- (2.5,0);
    \draw[thick] (2.5,-0.3) -- (2.5,0.3);
    \draw[thick] (2.8,-0.3) -- (2.8,0.3);
    \draw[thick] (2.8,0) -- (4,0);
    \node[above] at (2.65,0.4) {$C$};
    
    % First inductor La
    \node[circle, fill=black, inner sep=1.5pt] at (4,0) {};
    \draw[thick] (4,0) -- (4,-0.3);
    \draw[thick, decoration={coil, aspect=0.4, segment length=2mm, amplitude=1.5mm}, decorate] (4,-0.3) -- (4,-1.2);
    \draw[thick] (4,-1.2) -- (4,-2);
    \node[left] at (3.7,-0.75) {$L_a$};
    
    % Mutual inductance dots
    \fill (3.85,-0.4) circle (1.5pt);
    \fill (4.65,-0.4) circle (1.5pt);
    
    % Second inductor Lb
    \draw[thick] (4.5,0) -- (4.5,-0.3);
    \draw[thick, decoration={coil, aspect=0.4, segment length=2mm, amplitude=1.5mm}, decorate] (4.5,-0.3) -- (4.5,-1.2);
    \draw[thick] (4.5,-1.2) -- (4.5,-2);
    \node[right] at (4.8,-0.75) {$L_b$};
    
    % Mutual inductance label
    \draw[<->, thick, dashed] (4.1,-0.75) -- (4.4,-0.75);
    \node[below] at (4.25,-0.85) {\small $M$};
    
    % Output
    \draw[thick] (4.5,0) -- (6,0);
    \node[right] at (6,0) {Out};
    
    % Ground for inductors
    \draw[thick] (4,-2) -- (4.5,-2);
    \draw[thick] (3.7,-2) -- (4.8,-2);
    \draw[thick] (3.8,-2.1) -- (4.7,-2.1);
    \draw[thick] (3.9,-2.2) -- (4.6,-2.2);
\end{tikzpicture}
\caption{Brune section for positive-real synthesis. The shunt conductance $G_0$ represents the minimum loss at the critical frequency. The series capacitor $C$ and mutually coupled inductors $L_a$, $L_b$ with mutual inductance $M$ form the reactive portion that extracts the transmission zero at $\omega_0$. The dots indicate the polarity of the mutual coupling. This structure is one rung of the lossy continued fraction ladder, with $G_0$ encoding the dissipative coupling to the bath at frequency $\omega_0$.}
\label{fig:brune_section}
\end{figure}
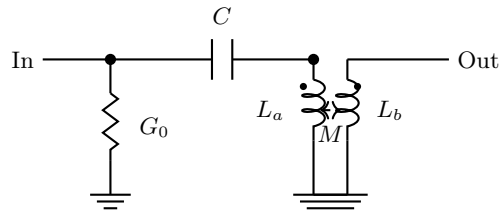

The Brune section, shown in Fig.~\ref{fig:brune_section}, extracts this pole while maintaining positive-realness. It consists of a series capacitor $C$ and two mutually coupled inductors $L_a$ and $L_b$ with mutual inductance $M$. The element values are determined by matching the pole residue and satisfying realizability constraints. The complete synthesis produces a cascade of Brune sections, each contributing a shunt conductance $G_k$ (dissipative channel) and reactive elements (mode at frequency $\omega_k$). Upon quantization, each section yields one Hamiltonian mode coupled to one dissipative bath channel.

When ideal transformers appear in the synthesis, they impose algebraic constraints rather than differential equations, reducing the number of independent degrees of freedom. During quantization, these constraints are handled through symplectic reduction or the ``effective Kirchhoff technique'' developed in Ref.~\cite{Solgun2015}, which eliminates transformer branches by a coordinate transformation.

%------------------------------------------------------------------------------
\subsection{Complex Poles and Mode Linewidths}
\label{sec:complex_poles}
%------------------------------------------------------------------------------

When dissipation is introduced, poles migrate from the imaginary axis into the left half-plane, acquiring finite lifetimes. The boundary condition equation $F(s) \equiv sY_{\mathrm{in}}(s) + 1/L_J = 0$ has solutions
\begin{equation}
s_n = -\frac{\kappa_n}{2} + i\omega_n,
\label{eq:complex_pole_form}
\end{equation}
where $\kappa_n > 0$ is the energy decay rate and $\omega_n$ is the oscillation frequency.

In the continued fraction framework, the migration of poles corresponds to the Jacobi matrix acquiring a non-Hermitian perturbation. The resolvent remains a continued fraction with complex coefficients:
\begin{equation}
G(z) = \cfrac{1}{z - (a_0 + i\gamma_0) - \cfrac{|b_1|^2}{z - (a_1 + i\gamma_1) - \ddots}}
\end{equation}
where $\gamma_n > 0$ encodes dissipation at level $n$. The poles at $z_n = \omega_n^2 - i\omega_n\kappa_n$ give the complex eigenfrequencies.

To derive the decay rate perturbatively, expand around the lossless pole. Writing $Y_{\mathrm{in}}(s) = iB(s) + G(s)$ where $G(s)$ is the (small) conductance, the perturbed pole satisfies
\begin{equation}
\delta s = -\frac{s_n^{(0)} G(\omega_n^{(0)})}{F_0'(s_n^{(0)})},
\label{eq:delta_s}
\end{equation}
where $s_n^{(0)} = i\omega_n^{(0)}$ is the unperturbed pole and $F_0(s) = sY_{\mathrm{in}}^{(0)}(s) + 1/L_J$ is the lossless boundary condition function. Define the effective modal capacitance:
\begin{equation}
C_n^{\mathrm{eff}} \equiv \frac{1}{2\omega_n^{(0)}} \frac{d(\omega B)}{d\omega}\bigg|_{\omega_n^{(0)}}.
\label{eq:Ceff_def}
\end{equation}
Evaluating the derivative of $F_0$ and taking the imaginary part of $\delta s$ yields the decay rate in the simple form
\begin{equation}
\kappa_n = \frac{G_n}{C_n^{\mathrm{eff}}},
\label{eq:kappa_final}
\end{equation}
where $G_n = G(\omega_n^{(0)})$ is the conductance at the mode frequency.

The effective modal capacitance is directly related to the continued fraction structure. In the spectral decomposition~\eqref{eq:resolvent_spectral}, the residue at pole $\omega_n$ is $|\psi_n(0)|^2 = 1/C_n^{\mathrm{eff}}$, the squared amplitude of the $n$-th eigenvector at the junction site. The decay rate formula is therefore
\begin{equation}
\kappa_n = G_n \cdot |\psi_n(0)|^2 = G_n \cdot p_n,
\label{eq:kappa_participation}
\end{equation}
where $p_n$ is the junction participation ratio in mode $n$. This is Fermi's golden rule: the decay rate equals the bath coupling (conductance) times the system-bath overlap (participation). This formula is equivalent to the $T_1$ expression derived in Ref.~\cite{Solgun2014} using the Caldeira-Leggett formalism. The formula reveals conditions for long qubit lifetime: minimize $G(\omega_q)$ at the qubit frequency and maximize $C_q^{\mathrm{eff}}$. Both conditions are satisfied by Purcell filters, which create a ``notch'' in the conductance at the qubit frequency while maintaining coupling at the readout frequency.

%%%%%%%%%%%%%%%%%%%%%%%%%%%%%%%%%%%%%%%%%%%%%%%%%%%%%%%%%%%%%%%%%%%%%%%%%%%%%%%

\section{Canonical Quantization}
\label{sec:quantization}

With the electromagnetic environment synthesized as a lumped circuit, we now construct the Lagrangian and Hamiltonian and carry out canonical quantization. The result is a quantum Hamiltonian retaining the full cosine nonlinearity of the Josephson potential, exact within the lumped-element circuit model obtained from finite Foster or Brune synthesis of $\Yin(s)$. The continued fraction structure established in Section~\ref{sec:cf_framework} provides both the natural basis for the eigenvalue problem and the computational framework for exact diagonalization across all coupling regimes. The quantization procedure preserves the tridiagonal structure at every stage: the synthesized Cauer ladder has nearest-neighbor coupling yielding a tridiagonal classical Hamiltonian, and upon quantization this structure becomes a block-tridiagonal quantum Hamiltonian enabling matrix continued fraction solution. In this section, we start by constructing the classical Lagrangian and Hamiltonian, transforms to normal modes, and connects participation ratios to admittance. We addresses basis selection for compact versus non-compact variables, and establish the exact quantum Hamiltonian with its block-tridiagonal structure. Further, we develop convergence theory, connects to the Mathieu equation, presents variational bounds, and closes the loop back to the boundary admittance.

\subsection{The Circuit Lagrangian}
\label{sec:lagrangian}

Consider a synthesized circuit with $M$ independent node fluxes $\Phi_1, \ldots, \Phi_M$ after eliminating any constraints from ideal transformers through symplectic reduction~\cite{Osborne2024}. The synthesized circuit is purely linear, arising from Foster or Brune synthesis of the positive-real admittance $\Yin(s)$. The Josephson junction contributes only the nonlinear cosine potential; its linear approximation (the Josephson inductance $L_J$) appeared in the boundary condition equation but is not part of the synthesized network. The capacitance matrix $\mathrm{C} \in \mathbb{R}^{M \times M}$ and inverse inductance matrix $\mathrm{L}^{-1} \in \mathbb{R}^{M \times M}$ are both real symmetric and positive-definite.

When the circuit is synthesized in Cauer (ladder) form, both $\mathrm{C}$ and $\mathrm{L}^{-1}$ are tridiagonal matrices. The Cauer ladder topology ensures that each node couples only to its nearest neighbors:
\begin{equation}
\mathrm{L}^{-1} = \begin{pmatrix}
L_1^{-1} & -L_1^{-1} & 0 & \cdots \\
-L_1^{-1} & L_1^{-1} + L_2^{-1} & -L_2^{-1} & \cdots \\
0 & -L_2^{-1} & L_2^{-1} + L_3^{-1} & \cdots \\
\vdots & \vdots & \vdots & \ddots
\end{pmatrix}.
\label{eq:L_inv_tridiagonal}
\end{equation}
This tridiagonal structure is preserved through normal mode transformation and quantization, yielding the Jacobi matrix form central to the continued fraction spectral theory.

The kinetic energy stored in the electric fields of all capacitors is
\begin{equation}
T = \frac{1}{2} \dot{{\Phi}}^T \mathrm{C} \dot{{\Phi}} = \frac{1}{2} \sum_{i,j=1}^{M} C_{ij} \dot{\Phi}_i \dot{\Phi}_j,
\label{eq:kinetic_energy}
\end{equation}
where $\dot{{\Phi}} = (\dot{\Phi}_1, \ldots, \dot{\Phi}_M)^T$ is the vector of node voltages. The potential energy stored in the magnetic fields of all linear inductors is
\begin{equation}
U_{\mathrm{lin}} = \frac{1}{2} {\Phi}^T \mathrm{L}^{-1} {\Phi} = \frac{1}{2} \sum_{i,j=1}^{M} (L^{-1})_{ij} \Phi_i \Phi_j.
\label{eq:linear_potential}
\end{equation}
The Josephson junction at node $J$ contributes the nonlinear potential
\begin{equation}
U_J = -E_J \cos\left(\frac{\Phi_J}{\varphi_0}\right),
\label{eq:junction_potential}
\end{equation}
where $\Phi_J$ is the flux at the junction node and $\varphi_0 = \hbar/(2e) = \Phi_0/(2\pi)$ is the reduced flux quantum, with $\Phi_0 = h/(2e)$ the magnetic flux quantum. The total Lagrangian is
\begin{equation}
\mathcal{L} = \frac{1}{2}\dot{{\Phi}}^T \mathrm{C} \dot{{\Phi}} - \frac{1}{2}{\Phi}^T \mathrm{L}^{-1} {\Phi} + E_J \cos\left(\frac{\Phi_J}{\varphi_0}\right).
\label{eq:lagrangian}
\end{equation}
This Lagrangian is exact in the sense that no approximation has been made to the Josephson nonlinearity.

\subsection{Conjugate Momenta and the Hamiltonian}
\label{sec:hamiltonian}

The conjugate momentum to the node flux $\Phi_i$ is
\begin{equation}
Q_i = \frac{\partial \mathcal{L}}{\partial \dot{\Phi}_i} = \sum_{j=1}^{M} C_{ij} \dot{\Phi}_j,
\label{eq:conjugate_momentum}
\end{equation}
or in matrix form ${Q} = \mathrm{C} \dot{{\Phi}}$. The physical interpretation is that $Q_i$ represents the total charge at node $i$. Since $\mathrm{C}$ is positive-definite and hence invertible, the Hamiltonian is obtained via the Legendre transformation:
\begin{multline}
H = \frac{1}{2} {Q}^T \mathrm{C}^{-1} {Q} + \frac{1}{2} {\Phi}^T \mathrm{L}^{-1} {\Phi} \\
- E_J \cos\left(\frac{\Phi_J}{\varphi_0}\right).
\label{eq:hamiltonian}
\end{multline}
Under Assumption~\ref{assumption:cj_convention}, the junction capacitance $C_J$ is already incorporated in the capacitance matrix $\mathrm{C}$. No separate charging energy term $Q_J^2/(2C_J)$ should be added; doing so would double-count the junction capacitance. The quadratic part of the Hamiltonian defines a generalized eigenvalue problem whose matrix pencil $(\mathrm{L}^{-1}, \mathrm{C})$ has the same spectral structure as a Jacobi matrix. The eigenfrequencies $\omega_n$ are the square roots of the eigenvalues of $\mathrm{C}^{-1/2} \mathrm{L}^{-1} \mathrm{C}^{-1/2}$. When both $\mathrm{C}$ and $\mathrm{L}^{-1}$ are tridiagonal (Cauer synthesis), this matrix is also tridiagonal, a Jacobi matrix, and the continued fraction formula~\eqref{eq:resolvent_cf} gives its resolvent exactly.

\subsection{Normal Mode Decomposition}
\label{sec:normal_modes}

The linear part of the Hamiltonian describes coupled harmonic oscillators. We diagonalize the quadratic form by transforming to normal mode coordinates through the generalized eigenvalue problem
\begin{equation}
\mathrm{L}^{-1} {\phi}_n = \omega_n^2 \mathrm{C} {\phi}_n,
\label{eq:generalized_eigenvalue}
\end{equation}
choosing eigenvectors $\mathrm{C}$-orthonormal so that the transformation ${\Phi} = \Psi{\xi}$ yields $\Psi^T \mathrm{C} \Psi = \mathrm{I}$ and $\Psi^T \mathrm{L}^{-1} \Psi = \Omega^2$, where $\Omega^2 = \mathrm{diag}(\omega_1^2, \ldots, \omega_M^2)$.

\begin{proposition}[Properties of the generalized eigenvalue problem]
\label{prop:gep_properties}
Let $\mathrm{C}$ and $\mathrm{L}^{-1}$ be real symmetric positive-definite $M \times M$ matrices. Then all eigenvalues $\omega_n^2$ are real and strictly positive, the eigenvectors ${\phi}_n$ can be chosen real, and they satisfy the $\mathrm{C}$-orthonormality condition ${\phi}_m^T \mathrm{C} {\phi}_n = \delta_{mn}$ as well as the $\mathrm{L}^{-1}$-diagonalization ${\phi}_m^T \mathrm{L}^{-1} {\phi}_n = \omega_n^2 \delta_{mn}$.
\end{proposition}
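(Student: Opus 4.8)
The plan is to reduce the generalized eigenvalue problem \eqref{eq:generalized_eigenvalue} to an ordinary symmetric eigenvalue problem by a square-root change of variables, and then invoke the spectral theorem. First, since $\mathrm{C}$ is real symmetric positive-definite it admits a unique symmetric positive-definite square root $\mathrm{C}^{1/2}$ (alternatively one may use the Cholesky factor $\mathrm{C} = \mathrm{R}^T\mathrm{R}$); define $\mathrm{A} \equiv \mathrm{C}^{-1/2}\mathrm{L}^{-1}\mathrm{C}^{-1/2}$ and substitute ${\phi}_n = \mathrm{C}^{-1/2}{\psi}_n$. Left-multiplying \eqref{eq:generalized_eigenvalue} by $\mathrm{C}^{-1/2}$ then turns the matrix-pencil equation into the ordinary eigenvalue problem $\mathrm{A}{\psi}_n = \omega_n^2 {\psi}_n$. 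Invertibility of $\mathrm{C}$, already used in Section~\ref{sec:hamiltonian}, is exactly what makes this substitution legitimate and ensures no spurious or infinite eigenvalues appear.

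Next I would record the two structural facts about $\mathrm{A}$. It is real symmetric, being a symmetric sandwich $\mathrm{C}^{-1/2}\mathrm{L}^{-1}\mathrm{C}^{-1/2}$ of symmetric matrices, so the spectral theorem gives $M$ real eigenvalues $\omega_n^2$ and an orthonormal basis of real eigenvectors $\{{\psi}_n\}$ with ${\psi}_m^T{\psi}_n = \delta_{mn}$. And $\mathrm{A}$ is positive-definite: for $x \neq 0$, setting $y = \mathrm{C}^{-1/2}x \neq 0$ one has $x^T\mathrm{A}x = y^T\mathrm{L}^{-1}y > 0$ by positive-definiteness of $\mathrm{L}^{-1}$. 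Hence every $\omega_n^2 > 0$, and $\omega_n > 0$ upon taking the positive square root.

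Finally I would translate back. Since $\mathrm{C}^{-1/2}$ is real, each ${\phi}_n = \mathrm{C}^{-1/2}{\psi}_n$ is real. Orthonormality of $\{{\psi}_n\}$ gives ${\phi}_m^T\mathrm{C}{\phi}_n = {\psi}_m^T\mathrm{C}^{-1/2}\mathrm{C}\,\mathrm{C}^{-1/2}{\psi}_n = {\psi}_m^T{\psi}_n = \delta_{mn}$, the $\mathrm{C}$-orthonormality claim, while ${\phi}_m^T\mathrm{L}^{-1}{\phi}_n = {\psi}_m^T\mathrm{A}{\psi}_n = \omega_n^2 {\psi}_m^T{\psi}_n = \omega_n^2\delta_{mn}$, the $\mathrm{L}^{-1}$-diagonalization. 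Packaging the eigenvectors into $\Psi = \mathrm{C}^{-1/2}\mathrm{U}$ with $\mathrm{U}$ the orthogonal eigenvector matrix of $\mathrm{A}$ reproduces $\Psi^T\mathrm{C}\Psi = \mathrm{I}$ and $\Psi^T\mathrm{L}^{-1}\Psi = \Omega^2$ as used in Section~\ref{sec:normal_modes}.

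As for where the real work sits: there is essentially no obstacle — this is the classical simultaneous-diagonalization theorem for a pair of real quadratic forms one of which is definite. The only point needing a word of care is the degenerate case, where ``the eigenvectors can be chosen real and $\mathrm{C}$-orthonormal'' is not automatic from a naive eigenvector computation; it follows from applying Gram--Schmidt within each eigenspace using the $\mathrm{C}$ inner product (equivalently the ordinary inner product after the $\mathrm{C}^{-1/2}$ change of variables), which the spectral theorem for the symmetric matrix $\mathrm{A}$ handles cleanly.
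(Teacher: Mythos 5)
Your proof is correct and follows exactly the route the paper presupposes: the paper states the proposition without a formal proof, but Section~\ref{sec:hamiltonian} already identifies the eigenfrequencies as eigenvalues of the symmetrized matrix $\mathrm{C}^{-1/2}\mathrm{L}^{-1}\mathrm{C}^{-1/2}$, which is precisely your matrix $\mathrm{A}$. Your reduction to the ordinary symmetric eigenvalue problem, the positivity argument via $x^T\mathrm{A}x = y^T\mathrm{L}^{-1}y$, and the remark on Gram--Schmidt within degenerate eigenspaces are all standard and complete.
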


The columns of the modal matrix $\Psi$ are the eigenvectors of the generalized eigenvalue problem, which are also the eigenvectors of the corresponding Jacobi matrix. The junction participation $\phi_n^J = (\Psi)_{Jn}$ is the first component of the $n$-th eigenvector, which equals the residue of the resolvent at eigenvalue $\omega_n^2$.

The coordinate transformation ${\Phi}(t) = \Psi {\xi}(t)$ diagonalizes the linear Hamiltonian. The junction flux is expressed in terms of normal modes as
\begin{equation}
\Phi_J = \sum_{n=1}^{M} \xi_n \phi_n^J,
\label{eq:Phi_J_modes}
\end{equation}
where $\phi_n^J \equiv ({\phi}_n)_J$ is the junction participation of mode $n$. A mode with $\phi_n^J = 0$ does not participate in the junction dynamics, while a mode with large $|\phi_n^J|$ couples strongly.

The participation ratio $(\phi_n^J)^2$ has a direct continued fraction interpretation. From the spectral decomposition of the resolvent:
\begin{equation}
G(z) = \sum_n \frac{(\phi_n^J)^2}{z - \omega_n^2},
\end{equation}
the participation ratio is the residue at eigenvalue $\omega_n^2$. This connects to the boundary admittance through~\cite{Nigg2012,Minev2021}
\begin{equation}
(\phi_n^J)^2 = \frac{1}{C_n^{\mathrm{eff}}} = \frac{2\omega_n}{\frac{d}{d\omega}[\omega B(\omega)]|_{\omega_n}},
\label{eq:participation_from_Y}
\end{equation}
where $B(\omega) = \mathrm{Im}[\Yin(i\omega)]$ is the susceptance. The participation ratio is thus computable directly from the admittance derivative at the mode frequency, without explicit eigenvector calculation.

\subsection{Choice of Basis for Nonlinear Elements}
\label{sec:basis_choice}

\begin{remark}[Basis selection for compact vs.\ non-compact variables]
\label{remark:basis_selection}
The configuration space topology dictates the appropriate basis for each degree of freedom:

\textit{Transmon junction (compact phase).} The superconducting phase $\varphi = \Phi_J/\varphi_0$ is a compact variable living on the circle $S^1$, with $\varphi \sim \varphi + 2\pi$ representing the same physical state. This periodicity reflects the discrete spectrum of Cooper pair number: the conjugate variable $\hat{n} = -i\partial/\partial\varphi$ has integer eigenvalues $n \in \mathbb{Z}$. The charge basis $\{|n\rangle\}_{n \in \mathbb{Z}}$ is natural, and the cosine potential is exactly tridiagonal:
\begin{equation}
\langle m | \cos\hat{\varphi} | n \rangle = \frac{1}{2}(\delta_{m,n+1} + \delta_{m,n-1}).
\label{eq:cosine_charge_tridiag}
\end{equation}
The transmon Hamiltonian $\hat{H}_{\mathrm{tr}} = 4E_C \hat{n}^2 - E_J \cos\hat{\varphi}$ is therefore a Jacobi matrix in charge basis, with the continued fraction providing exact eigenvalues via the Mathieu characteristic values~\cite{Koch2007}.

\textit{Cavity mode (non-compact flux).} A linear resonator has flux $\Phi \in \mathbb{R}$ ranging over the entire real line. The Fock basis $\{|k\rangle\}_{k=0}^\infty$ of photon number eigenstates is natural. If a cosine potential $\cos[\lambda(\hat{a}+\hat{a}^\dagger)]$ acts on this mode, its matrix elements involve Laguerre polynomials (Proposition~\ref{prop:cosine_matrix}) and the Hamiltonian is not tridiagonal---it couples $|n\rangle$ to $|n \pm 2\rangle$, $|n \pm 4\rangle$, etc.\ with amplitudes decaying as $\lambda^{2k}/(k!)$.

\textit{Coupled transmon-resonator system.} The correct procedure uses the product basis:
\begin{multline}
\{|q, n_r\rangle\} = \{|g\rangle, |e\rangle, |f\rangle, \ldots\}_{\mathrm{transmon}} \\
\otimes \{|0\rangle, |1\rangle, |2\rangle, \ldots\}_{\mathrm{resonator}},
\end{multline}
where $|g\rangle$, $|e\rangle$, $|f\rangle$, \ldots are the transmon eigenstates obtained from charge-basis diagonalization. The coupling operators $\hat{n}$ and $\hat{\varphi}$ are expressed in the transmon eigenbasis:
\begin{equation}
n_{ij} = \langle \psi_i | \hat{n} | \psi_j \rangle, \quad \varphi_{ij} = \langle \psi_i | \hat{\varphi} | \psi_j \rangle,
\end{equation}
where $|\psi_j\rangle = \sum_n c_n^{(j)} |n\rangle$ are the charge-basis eigenvectors. This is the standard approach in circuit QED~\cite{Koch2007, Blais2021} and numerical packages~\cite{Groszkowski2021}.
\end{remark}

\subsection{The Exact Quantum Hamiltonian}
\label{sec:exact_hamiltonian}

Quantization proceeds by promoting the classical variables to operators satisfying canonical commutation relations. In the mode basis, we introduce creation and annihilation operators through
\begin{align}
\hat{\xi}_n &= \sqrt{\frac{\hbar}{2\omega_n}} (\hat{a}_n + \hat{a}_n^\dagger), \label{eq:ladder_ops_xi}\\
\hat{p}_n &= i\sqrt{\frac{\hbar\omega_n}{2}} (\hat{a}_n^\dagger - \hat{a}_n), \label{eq:ladder_ops_p}
\end{align}
with $[\hat{a}_m, \hat{a}_n^\dagger] = \delta_{mn}$. The junction flux operator is
\begin{equation}
\hat{\Phi}_J = \sum_{n=1}^{M} \Phi_n^{J,\mathrm{zpf}} (\hat{a}_n + \hat{a}_n^\dagger),
\label{eq:Phi_J_quantum}
\end{equation}
where $\Phi_n^{J,\mathrm{zpf}} = \phi_n^J \sqrt{\hbar/(2\omega_n)}$ is the zero-point flux at the junction from mode $n$, and the dimensionless coupling is $\lambda_n = \Phi_n^{J,\mathrm{zpf}}/\varphi_0$. The complete quantum Hamiltonian is
\begin{equation}
\hat{H} = \sum_{n=1}^{M} \hbar\omega_n \hat{a}_n^\dagger \hat{a}_n - E_J \cos\left( \frac{\hat{\Phi}_J}{\varphi_0} \right).
\label{eq:exact_hamiltonian}
\end{equation}
This is the central result of the quantization procedure. The Hamiltonian is exact within the lumped-element circuit model; no approximation has been made to the Josephson nonlinearity. The full cosine potential is retained, which is essential for correctly describing the anharmonicity of the transmon spectrum, for treating regimes of ultrastrong coupling where phase fluctuations are not perturbatively small, and for capturing tunneling processes between adjacent potential wells.

The Hamiltonian generates the continued fraction spectral problem. In the Fock basis for linear modes and charge basis for the transmon, the linear part is diagonal while the cosine potential couples adjacent states, producing a block-tridiagonal structure in the appropriate product basis. The continued fraction framework applies uniformly across all coupling regimes. In the dispersive regime ($\lambda_n \ll 1$), the tridiagonal approximation with bandwidth $P=1$ suffices, and the scalar continued fraction gives accurate eigenvalues. In the ultrastrong regime ($\lambda_n \sim 0.1$--$1$), higher bandwidth $P > 1$ is needed, and the matrix continued fraction provides exact results. In the deep strong regime ($\lambda_n > 1$), large Fock cutoff $N$ and bandwidth $P$ are required, with convergence monitored via projector leakage. Different regimes require different numerical parameters, not different methods.

\subsection{Matrix Elements of the Cosine Potential}
\label{sec:cosine_matrix}

To solve the eigenvalue problem for the Hamiltonian~\eqref{eq:exact_hamiltonian}, we evaluate the matrix elements of the cosine potential in the Fock basis. The choice of basis is crucial: for cavity modes, the Fock basis $\{|k\rangle\}_{k=0}^\infty$ is appropriate, with cosine matrix elements given by Laguerre polynomials. For the transmon alone, the charge basis $\{|n\rangle\}_{n \in \mathbb{Z}}$ is natural, where the cosine is exactly tridiagonal: $\langle m|\cos\hat{\varphi}|n\rangle = \frac{1}{2}(\delta_{m,n+1} + \delta_{m,n-1})$. For a single mode with dimensionless coupling $\lambda$, the cosine operator can be expressed using displacement operators as $\cos(\lambda(\hat{a} + \hat{a}^\dagger)) = \frac{1}{2}[\hat{D}(i\lambda) + \hat{D}(-i\lambda)]$~\cite{CahillGlauber1969}.

\begin{proposition}[Cosine matrix elements in Fock basis]
\label{prop:cosine_matrix}
For a linear oscillator mode in the Fock basis with dimensionless coupling $\lambda$, the matrix elements of the cosine potential are
\begin{multline}
\mathcal{C}_{nm} \equiv 
\langle n | \cos[\lambda(\hat{a} + \hat{a}^\dagger)] | m \rangle \\
=
\small
\begin{cases}
e^{-\lambda^2/2}
\sqrt{\dfrac{n_<!}{n_>!}}\,
\lambda^{|n-m|}
L_{n_<}^{(|n-m|)}(\lambda^2)
(-1)^{|n-m|/2}, 
& n-m \ \text{even}, \\[6pt]
0, 
& n-m \ \text{odd}.
\end{cases}
\label{eq:cosine_matrix_revised}
\end{multline}
where $n_< = \min(n,m)$, $n_> = \max(n,m)$, and $L_n^{(k)}(x)$ is the generalized Laguerre polynomial. The factor $(-1)^{|n-m|/2}$ arises from the phase of the displacement operator at pure imaginary argument.

These matrix elements apply to non-compact oscillator modes (cavities, transmission line resonators). For the transmon junction with compact phase variable, use the charge-basis representation Eq.~\eqref{eq:cosine_charge_tridiag}, which is exactly tridiagonal.
\end{proposition}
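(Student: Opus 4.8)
The plan is to reduce the bounded operator $\cos[\lambda(\hat a+\hat a^\dagger)]$ to a symmetric combination of two displacement operators and then read off its matrix elements from the standard Fock-basis formula for the displacement operator. With $\hat D(\alpha)=e^{\alpha\hat a^\dagger-\alpha^*\hat a}$ and the decomposition noted above,
\[
\cos[\lambda(\hat a+\hat a^\dagger)] = \tfrac12\bigl[\hat D(i\lambda)+\hat D(-i\lambda)\bigr],
\]
since $\lambda\in\mathbb R$ gives $(i\lambda)^*=-i\lambda$ and hence $\hat D(\pm i\lambda)=e^{\pm i\lambda(\hat a+\hat a^\dagger)}$. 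Because $\hat a+\hat a^\dagger$ has a real symmetric matrix in the Fock basis, every power of it---and therefore $\cos[\lambda(\hat a+\hat a^\dagger)]$---has a real symmetric matrix, so $\mathcal C_{nm}=\mathcal C_{mn}\in\mathbb R$; it then suffices to compute $\mathcal C_{nm}$ for $n\ge m$ and invoke this symmetry for $n<m$.

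First I would establish the $n\ge m$ displacement matrix element
\[
\langle n|\hat D(\alpha)|m\rangle = e^{-|\alpha|^2/2}\,\sqrt{\tfrac{m!}{n!}}\;\alpha^{\,n-m}\,L_m^{(n-m)}(|\alpha|^2),
\]
either by citing Cahill--Glauber~\cite{CahillGlauber1969} or, for a self-contained derivation, by normal-ordering $\hat D(\alpha)=e^{-|\alpha|^2/2}e^{\alpha\hat a^\dagger}e^{-\alpha^*\hat a}$, acting on $|m\rangle$ with the two exponentials, projecting onto $\langle n|$, and recognizing the resulting finite sum $\sum_k(-1)^k|\alpha|^{2k}/[k!\,(m-k)!\,(n-m+k)!]$ as $L_m^{(n-m)}(|\alpha|^2)/n!$ through the explicit coefficients of the generalized Laguerre polynomial. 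This step is routine; it is the only place where a genuine calculation enters.

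Second I would specialize to $\alpha=\pm i\lambda$, so $|\alpha|^2=\lambda^2$ and $\alpha^{n-m}=(\pm i)^{n-m}\lambda^{n-m}$, and add the two contributions. The Laguerre and Gaussian factors are common; the phases combine as $\tfrac12[i^{n-m}+(-i)^{n-m}]$, which vanishes when $n-m$ is odd and equals $(-1)^{(n-m)/2}$ when $n-m$ is even. This reproduces both the parity selection rule and the sign factor $(-1)^{|n-m|/2}$ in Eq.~\eqref{eq:cosine_matrix_revised}; for $n\ge m$ one has $n_<=m$, $n_>=n$, $|n-m|=n-m$, and the symmetry $\mathcal C_{nm}=\mathcal C_{mn}$ extends the formula to all $n,m$. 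The same selection rule follows with no phase algebra by noting that $\cos[\lambda(\hat a+\hat a^\dagger)]$ is an even function of the quadrature and hence commutes with photon parity $e^{i\pi\hat a^\dagger\hat a}$, connecting only Fock states of equal parity. The main obstacle is purely bookkeeping: fixing a single Laguerre convention, choosing the correct ($n\ge m$) branch of the displacement formula, and carrying $i^{n-m}$ through to $(-1)^{(n-m)/2}$ without sign slips---there is no conceptual difficulty. The closing remark requires no separate argument: for the compact phase $\hat\varphi$ the charge basis $\{|n\rangle\}_{n\in\mathbb Z}$ is the natural one, and $\cos\hat\varphi=\tfrac12(e^{i\hat\varphi}+e^{-i\hat\varphi})$ with $e^{\pm i\hat\varphi}|n\rangle=|n\pm1\rangle$ gives Eq.~\eqref{eq:cosine_charge_tridiag} immediately.
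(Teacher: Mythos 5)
Your proposal is correct and follows exactly the route the paper intends: it states the decomposition $\cos[\lambda(\hat a+\hat a^\dagger)]=\tfrac12[\hat D(i\lambda)+\hat D(-i\lambda)]$ and cites Cahill--Glauber for the Fock-basis displacement matrix elements, which is precisely the derivation you carry out (including the phase combination $\tfrac12[i^{n-m}+(-i)^{n-m}]$ yielding the parity selection rule and the $(-1)^{|n-m|/2}$ factor). Your normal-ordering derivation of the displacement matrix element and the Hermiticity/reality argument for extending to $n<m$ are both sound.
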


\begin{corollary}[Selection rules]
The cosine matrix element $\mathcal{C}_{nm}$ vanishes unless $n - m \equiv 0 \pmod{2}$. The Hilbert space decomposes into even and odd parity sectors, with the Hamiltonian block-diagonal in this decomposition.
\end{corollary}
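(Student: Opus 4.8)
The first assertion is nothing more than the parity selection rule already displayed in the second case of Proposition~\ref{prop:cosine_matrix}: $\mathcal{C}_{nm}=0$ whenever $n-m$ is odd. It is worth recording the operator-theoretic reason, since that is what drives the block decomposition. The plan is to introduce the photon-number parity operator $\hat{\Pi}=e^{i\pi\hat{a}^\dagger\hat{a}}$, which satisfies $\hat{\Pi}^2=\mathrm{I}$ and acts on Fock states by $\hat{\Pi}|k\rangle=(-1)^k|k\rangle$; its $+1$ and $-1$ eigenspaces are $\mathcal{H}_+=\mathrm{span}\{|0\rangle,|2\rangle,|4\rangle,\ldots\}$ and $\mathcal{H}_-=\mathrm{span}\{|1\rangle,|3\rangle,\ldots\}$ (closed spans), so that $\mathcal{H}=\mathcal{H}_+\oplus\mathcal{H}_-$.

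Next I would show that $\cos[\lambda(\hat{a}+\hat{a}^\dagger)]$ commutes with $\hat{\Pi}$. Using $e^{i\pi\hat{a}^\dagger\hat{a}}\,\hat{a}\,e^{-i\pi\hat{a}^\dagger\hat{a}}=-\hat{a}$ — the same identity invoked in the proof of Proposition~\ref{prop:qrm_tridiagonal} — one gets $\hat{\Pi}(\hat{a}+\hat{a}^\dagger)\hat{\Pi}^{-1}=-(\hat{a}+\hat{a}^\dagger)$, and since the cosine is an even function, $\hat{\Pi}\cos[\lambda(\hat{a}+\hat{a}^\dagger)]\hat{\Pi}^{-1}=\cos[-\lambda(\hat{a}+\hat{a}^\dagger)]=\cos[\lambda(\hat{a}+\hat{a}^\dagger)]$. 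Sandwiching the operator identity $\hat{\Pi}\cos[\lambda(\hat{a}+\hat{a}^\dagger)]=\cos[\lambda(\hat{a}+\hat{a}^\dagger)]\hat{\Pi}$ between $\langle n|$ and $|m\rangle$ yields $(-1)^n\mathcal{C}_{nm}=(-1)^m\mathcal{C}_{nm}$, which forces $\mathcal{C}_{nm}=0$ unless $(-1)^{n-m}=1$, i.e.\ $n-m$ even, reproducing the stated selection rule independently of the explicit Laguerre formula.

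For the block structure I would then observe that the linear part $\sum_n\hbar\omega_n\hat{a}_n^\dagger\hat{a}_n$ is a function of the number operators and hence manifestly commutes with $\hat{\Pi}$, so the single-mode Hamiltonian $\hat{H}=\hbar\omega\hat{a}^\dagger\hat{a}-E_J\cos[\lambda(\hat{a}+\hat{a}^\dagger)]$ satisfies $[\hat{H},\hat{\Pi}]=0$. Therefore $\hat{H}$ leaves each eigenspace $\mathcal{H}_\pm$ invariant, and in the Fock basis ordered by parity its matrix is block-diagonal with one block per sector. For the multimode Hamiltonian~\eqref{eq:exact_hamiltonian} the same argument runs with $\hat{\Pi}$ replaced by the total parity $\hat{\Pi}_{\mathrm{tot}}=\prod_n e^{i\pi\hat{a}_n^\dagger\hat{a}_n}$: since $\hat{\Phi}_J$ is linear in the quadratures $(\hat{a}_n+\hat{a}_n^\dagger)$, conjugation by $\hat{\Pi}_{\mathrm{tot}}$ sends $\hat{\Phi}_J\mapsto-\hat{\Phi}_J$ and leaves the even cosine potential invariant, so $[\hat{H},\hat{\Pi}_{\mathrm{tot}}]=0$ and the global Hilbert space splits into $\hat{\Pi}_{\mathrm{tot}}=\pm1$ sectors.

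There is essentially no obstacle here: the content is entirely a consequence of Proposition~\ref{prop:cosine_matrix} together with the parity-evenness of a cosine of a single quadrature. The only point requiring mild care is keeping the single-mode photon parity distinct from the total parity conserved by the full multimode Hamiltonian; and, if one wishes to state the analogous decomposition in the coupled transmon$\,\otimes\,$resonator treatment of Remark~\ref{remark:basis_selection}, one must combine the transmon's charge-conjugation symmetry $|n\rangle\mapsto|{-n}\rangle$ (under which $\hat{n}^2$ and $\cos\hat{\varphi}$ are invariant while $\hat{n}$ flips sign) with the resonator photon parity, so that the conserved quantity is the product of the two; the commutation argument is otherwise unchanged.
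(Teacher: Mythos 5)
Your proof is correct. The paper itself treats this corollary as an immediate consequence of Proposition~\ref{prop:cosine_matrix}: the vanishing for odd $n-m$ is read directly off the explicit Laguerre/displacement-operator formula (equivalently, from the fact that $D_{nm}(-i\lambda)=(-1)^{n-m}D_{nm}(i\lambda)$, so the two displacement terms cancel for odd $n-m$), and the block-diagonal statement is then asserted without further argument. You supply, in addition, an independent operator-theoretic derivation via the photon-number parity $\hat{\Pi}=e^{i\pi\hat{a}^\dagger\hat{a}}$, using $\hat{\Pi}(\hat{a}+\hat{a}^\dagger)\hat{\Pi}^{-1}=-(\hat{a}+\hat{a}^\dagger)$ and the evenness of the cosine; this is exactly the symmetry mechanism the paper deploys elsewhere (Proposition~\ref{prop:qrm_tridiagonal} and Section~\ref{sec:parity_validation}) but does not invoke here. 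Your route buys robustness — the selection rule follows without trusting the explicit matrix-element formula — and it is what actually justifies the second sentence of the corollary, since commutation with $\hat{\Pi}$ is the clean reason $\hat{H}$ preserves the sectors $\mathcal{H}_\pm$. Your closing remarks on the multimode case are also consistent with the paper, which notes after Eq.~\eqref{eq:multimode_cosine_correct} that only the \emph{total} excitation parity $(-1)^{\sum_j n_j}$ is conserved, not the parity of each mode separately.
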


Within each parity sector, the Hamiltonian is not tridiagonal in the Fock basis---the cosine couples $|0\rangle$ to $|2\rangle$, $|4\rangle$, etc. However, the coupling strength decays rapidly with $|n-m|$ for $\lambda < 1$: $|\mathcal{C}_{n,n+2k}| \sim \lambda^{2k}/(k!)$ for large $k$. This justifies the bandwidth truncation: keeping only couplings with $|n-m| \leq 2P$ gives a block-banded matrix that converges rapidly as $P \to \infty$.

For multimode systems with $\hat{\theta} = \sum_j \lambda_j (\hat{a}_j + \hat{a}_j^\dagger)$, the cosine matrix elements do not factorize as $\prod_j \mathcal{C}_{m_j n_j}^{(j)}$ because $\prod_j \cos(\lambda_j \hat{X}_j) \neq \cos(\sum_j \lambda_j \hat{X}_j)$. The correct formula is
\begin{multline}
\langle {m} | \cos\hat{\theta} | {n} \rangle = \frac{1}{2} \biggl[ \prod_{j=1}^{M} D_{m_j n_j}(+i\lambda_j) \\
+ \prod_{j=1}^{M} D_{m_j n_j}(-i\lambda_j) \biggr],
\label{eq:multimode_cosine_correct}
\end{multline}
where $D_{mn}(\alpha) = \langle m | \hat{D}(\alpha) | n \rangle$. The cosine potential conserves total excitation parity $(-1)^{\sum_j n_j}$, not the parity of each mode separately.

\subsection{Truncation Scheme and Block Structure}
\label{sec:truncation}

Practical computation requires truncation of the infinite-dimensional Hilbert space. We employ a two-parameter scheme.

\begin{definition}[Two-parameter truncation]
The truncated Hamiltonian $\hat{H}^{(N,P)}$ restricts to Fock states with $n \le N$ and retains cosine couplings only for $|\Delta n| \le 2P$:
\begin{multline}
\hat{H}^{(N,P)} = \sum_{n=0}^{N} n\hbar\omega |n\rangle\langle n| \\
- E_J \sum_{\substack{m,n \le N \\ |m-n| \le 2P}} \mathcal{C}_{mn} |m\rangle\langle n|.
\label{eq:H_NP}
\end{multline}
\end{definition}

The Fock cutoff $N$ determines the depth of the continued fraction, with truncation at $N$ levels corresponding to a finite CF with $\lfloor N/2 \rfloor$ terms per parity sector. The bandwidth $P$ determines the block size in the matrix continued fraction: $P=1$ gives scalar CFs, while $P > 1$ gives $P \times P$ matrix CFs.

%\subsection{Block Structure and Matrix Continued Fraction}
%\label{sec:block_structure}

For fixed bandwidth $P$, the Hamiltonian $\hat{H}^{(N,P)}$ admits a block-banded representation. Grouping Fock states within each parity sector into blocks of size $P$, the Schr\"odinger equation becomes the block recurrence
\begin{equation}
{A}_k {d}_k + {B}_k {d}_{k+1} + {B}_{k-1}^\dagger {d}_{k-1} = E\, {d}_k,
\label{eq:block_recurrence}
\end{equation}
where ${A}_k, {B}_k \in \mathbb{C}^{P \times P}$ are computed from the Laguerre matrix elements. This block recurrence is the matrix generalization of the Jacobi eigenvalue equation.

\begin{theorem}[Exact spectrum via matrix continued fraction]
\label{thm:exact_block_cf}
Let $K = \lfloor N/(2P) \rfloor$. The eigenvalues of $\hat{H}^{(N,P)}$ in each parity sector are the values of $E$ for which
\begin{equation}
\det[{G}_{00}(E)^{-1}] = 0,
\label{eq:secular_det}
\end{equation}
where the matrix $m$-function is given by the block continued fraction
\begin{multline}
{G}_{00}(E) = \bigl[{A}_0 - E{I} \\
- {B}_0^\dagger \bigl[{A}_1 - E{I} - {B}_1^\dagger \bigl[\cdots\bigr]^{-1} {B}_1\bigr]^{-1} {B}_0\bigr]^{-1}
\label{eq:matrix_cf}
\end{multline}
with the CF terminating at block $K$ via ${\Sigma}_{K+1} = {0}$.
\end{theorem}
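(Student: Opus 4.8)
The plan is to derive the secular condition by iterating the block Schur complement, running the proof of Theorem~\ref{thm:resolvent_cf} one $P\times P$ block at a time; since $\hat{H}^{(N,P)}$ is finite-dimensional this is pure linear algebra with no analytic subtlety. Fix a parity sector and write $\hat{H}^{(N,P)}$ there as the Hermitian block-tridiagonal matrix ${H}$ with diagonal blocks ${A}_k$ and off-diagonal blocks ${B}_k$, $k=0,\dots,K$, $K=\lfloor N/(2P)\rfloor$. The Fock cutoff at $N$ means block $K$ has no upward neighbour: there is no ${B}_K$, which is precisely the terminal condition ${\Sigma}_{K+1}={0}$. An energy $E$ lies in the spectrum iff $\det(E{I}-{H})=0$, so the theorem amounts to identifying this vanishing with $\det[{G}_{00}(E)^{-1}]=0$.

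First I would perform downward block elimination on $(E{I}-{H}){d}={0}$: the last block row gives ${d}_K$ in terms of ${d}_{K-1}$ (wherever $E{I}-{A}_K$ is nonsingular), and substituting upward and iterating generates the effective diagonal blocks
\begin{equation}
{\Delta}_k(E)=E{I}-{A}_k-{B}_k\,{\Delta}_{k+1}(E)^{-1}\,{B}_k^\dagger ,\qquad {\Delta}_K(E)=E{I}-{A}_K ,
\end{equation}
reducing the whole system to ${\Delta}_0(E){d}_0={0}$. By construction ${\Delta}_0(E)$ is exactly the matrix continued fraction of~\eqref{eq:matrix_cf}, i.e.\ ${\Delta}_0(E)={G}_{00}(E)^{-1}$, up to the overall sign and the interchange ${B}_k\leftrightarrow{B}_k^\dagger$ fixed by whether one labels the superdiagonal of $E{I}-{H}$ as $-{B}_k$ or $-{B}_k^\dagger$; since ${H}$ is Hermitian, neither convention affects whether the determinant vanishes. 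Carrying out the same elimination at the level of determinants --- the iterated block Schur identity --- gives $\det(E{I}-{H})=\prod_{k=0}^{K}\det{\Delta}_k(E)=\det[{G}_{00}(E)^{-1}]\cdot\det(E{I}-{H}')$, where ${H}'$ is the trailing block-tridiagonal submatrix obtained by deleting block $0$. The continued fraction terminates at block $K$, so it is a genuine rational matrix function; the apparent poles at the bare block energies $\bigcup_{k\ge1}\mathrm{spec}({A}_k)$ cancel in this ratio and are mere bookkeeping artifacts.

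The part requiring the most care --- and the main obstacle --- is to show that the factor $\det(E{I}-{H}')$ does not displace the zero set, i.e.\ that $\det[{G}_{00}(E)^{-1}]=0$ holds exactly on $\mathrm{spec}({H})$; equivalently, that ${H}$ and ${H}'$ (indeed ${H}$ and each trailing submatrix) share no eigenvalue. This is the block analogue of strict Cauchy interlacing for Jacobi matrices, which rests on the off-diagonal entries being nonzero; here the structural input is that every coupling block ${B}_k$, $0\le k\le K-1$, is invertible. I would read that off from~\eqref{eq:cosine_matrix_revised}: grouping the even- (or odd-) parity Fock states into blocks of size $P$, the entry of ${B}_k$ joining the $i$-th state of block $k$ to the $j$-th of block $k+1$ has Fock-index separation $2(P+j-i)$, which lies within the retained bandwidth $2P$ only for $j\le i$ and equals $2P$ precisely on the diagonal $j=i$; hence ${B}_k$ is triangular with diagonal entries proportional to $\mathcal{C}_{n,n+2P}\propto\lambda^{2P}\neq 0$, and is nonsingular for every $\lambda\neq 0$. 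Two consequences then close the argument: (i) every eigenvector of ${H}$ has ${d}_0\neq{0}$, since ${d}_0={0}$ forces ${B}_0{d}_1={0}$, hence ${d}_1={0}$, and the recurrence~\eqref{eq:block_recurrence} then propagates ${d}_k={0}$ throughout --- so at every $E\in\mathrm{spec}({H})$ the reduced equation ${\Delta}_0(E){d}_0={0}$ has a nonzero solution and $\det[{G}_{00}(E)^{-1}]=0$; and (ii) the downward induction on the characteristic polynomials of the trailing submatrices $\{{H}_{[k:K]}\}$, with the nonsingular ${B}_k$ in the role of the nonzero $b_n$, yields their pairwise coprimality, so that $\mathrm{spec}({H})\cap\mathrm{spec}({H}')=\emptyset$. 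Step (ii) is where the genuine work lies: for bandwidth $P=1$ it reduces to the classical interlacing argument and is automatic, whereas for $P\ge 2$ the determinants no longer obey a scalar three-term recurrence and coprimality must be extracted from the matrix-orthogonal-polynomial recursion driven by the ${B}_k$ (failing which it still holds for generic circuit parameters). Granting (i)--(ii), $\det[{G}_{00}(E)^{-1}]=\det(E{I}-{H})/\det(E{I}-{H}')$ vanishes exactly on $\mathrm{spec}({H})$, and conversely any such $E$ carries an eigenvector built by choosing ${d}_0\in\ker[{G}_{00}(E)^{-1}]$ and back-substituting ${d}_{k+1}=-{\Delta}_{k+1}(E)^{-1}{B}_k^\dagger{d}_k$. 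Running this in each parity sector and taking the union gives the full spectrum of $\hat{H}^{(N,P)}$, which is~\eqref{eq:secular_det}.
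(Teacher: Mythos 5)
The paper states Theorem~\ref{thm:exact_block_cf} without any proof, so the only internal benchmark is the scalar Schur-complement argument for Theorem~\ref{thm:resolvent_cf}; your block elimination is the natural matrix generalization of that argument, and you are right that the entire difficulty is concentrated in your step (ii): showing that the cancellation in $\det[{G}_{00}(E)^{-1}]=\chi_{{H}}(E)/\chi_{{H}'}(E)$ never removes a zero at an eigenvalue of ${H}$. Unfortunately the route you propose for closing that step does not work, for two reasons. First, the invertibility of the coupling blocks is not guaranteed: by Eq.~\eqref{eq:cosine_matrix_revised} the diagonal entries of your triangular ${B}_k$ are proportional to $\lambda^{2P}L_{n}^{(2P)}(\lambda^{2})$, and the generalized Laguerre factor vanishes at a discrete set of $\lambda^{2}$ for every $n\ge 1$, so ``${B}_k$ nonsingular for every $\lambda\neq 0$'' is false. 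Second, and more seriously, even granting that all ${B}_k$ are invertible, the disjointness $\mathrm{spec}({H})\cap\mathrm{spec}({H}')=\emptyset$ fails for $P\ge 2$. A concrete Hermitian block-Jacobi counterexample with $P=2$, $K=1$: take ${A}_0={I}$, ${A}_1=\mathrm{diag}(0,1)$, ${B}_0=\mathrm{diag}(\sqrt{\epsilon},1)$, which is invertible for $\epsilon>0$. Then $E=0$ is an eigenvalue of both ${H}$ and ${H}'={A}_1$, the corresponding eigenvector of ${H}$ has ${d}_0=(0,1)^{T}\neq{0}$, yet $\det[{\Delta}_0(E)]=\chi_{{H}}(E)/\chi_{{A}_1}(E)\to-2\epsilon\neq 0$ as $E\to 0$: the simple zero of $\chi_{{H}}$ is exactly cancelled by the pole of the denominator, and the secular determinant misses this eigenvalue. (The block resolvent ${G}_{00}(E)$ still has a pole at $E=0$; it is the determinant of its inverse that fails to vanish, because a divergent matrix entry can compensate a vanishing one inside the determinant.)

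Consequently your step (i) (every eigenvector has ${d}_0\neq{0}$) is correct but does not yield $\det[{\Delta}_0(E_0)]=0$, because the elimination producing ${\Delta}_0$ is ill-defined precisely when some trailing ${\Delta}_k(E_0)$ is singular; and step (ii) cannot be extracted from Hermiticity plus invertibility of the ${B}_k$ alone, since the counterexample above satisfies both. To make the theorem true as stated one must either exploit finer structure of the specific ${A}_k$, ${B}_k$ generated by the cosine matrix elements, restrict to generic circuit parameters, or restate the secular condition in a cancellation-proof form --- e.g.\ as the poles of ${G}_{00}(E)$, or as the zeros of $\chi_{{H}}$ obtained from the product $\prod_{k}\det{\Delta}_k(E)$ before cancellation. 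Your hedge about generic parameters correctly flags the issue, but as written the crucial step is a genuine gap rather than a routine verification.
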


The matrix continued fraction is the quantum analog of the Cauer continued fraction for network admittances (see Table~\ref{tab:cf_framework}). For block size $P = 1$, the matrices reduce to scalars and the matrix CF becomes the scalar CF
\begin{equation}
G^{(+)}(E) = \cfrac{1}{A_0 - E - \cfrac{|B_0|^2}{A_1 - E - \cfrac{|B_1|^2}{A_2 - E - \cdots}}}
\label{eq:scalar_cf}
\end{equation}
with eigenvalues at the zeros. This is the Jacobi matrix resolvent with $a_k = A_k$ and $b_k^2 = |B_{k-1}|^2$. The tridiagonal approximation is accurate only for $\lambda \lesssim 0.2$.

\subsection{Convergence Theory}
\label{sec:convergence}

\begin{theorem}[Bandwidth convergence]
\label{thm:bandwidth_convergence}
For fixed finite Fock cutoff $N$, the bandwidth-$P$ truncated Hamiltonian converges to the full-bandwidth Hamiltonian in operator norm:
\begin{equation}
\|\hat{H}^{(N,\infty)} - \hat{H}^{(N,P)}\|_{\mathrm{op}} \le \|\Delta\hat{H}^{(P)}\|_{\mathrm{HS}} \xrightarrow{P \to \infty} 0.
\end{equation}
\end{theorem}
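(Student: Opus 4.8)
The plan is to reduce the statement to a tail estimate on the cosine matrix elements, exploiting that the Fock cutoff $N$ is held fixed. With $N$ fixed the relevant Hilbert space is finite-dimensional (dimension $N+1$ in the single-mode truncation of Eq.~\eqref{eq:H_NP}, $(N+1)^M$ in the multimode case), so every operator is a genuine finite matrix and the elementary inequality $\|A\|_{\mathrm{op}} \le \|A\|_{\mathrm{HS}}$ holds; this already yields the first inequality in the statement once we identify
\begin{equation}
\Delta\hat H^{(P)} \;\equiv\; \hat H^{(N,\infty)} - \hat H^{(N,P)} \;=\; -E_J \!\!\sum_{\substack{m,n\le N\\ |m-n|>2P}}\!\! \mathcal{C}_{mn}\,\ket{m}\!\bra{n},
\end{equation}
which is exactly the piece of the cosine potential discarded by the bandwidth truncation in Eq.~\eqref{eq:H_NP}. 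Consequently $\|\Delta\hat H^{(P)}\|_{\mathrm{HS}}^2 = E_J^2 \sum_{m,n\le N,\ |m-n|>2P} |\mathcal{C}_{mn}|^2$, and the whole theorem comes down to showing this finite sum tends to $0$ as $P\to\infty$.

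The second step is a quantitative decay bound on $\mathcal{C}_{mn}$. Fix $n_<=\min(m,n)\le N$ and write $|m-n|=2k$ with $k>P$. Using $\mathcal{C}_{mn}=\operatorname{Re}\bra{m}\hat D(i\lambda)\ket{n}$ together with the explicit displacement matrix element behind Proposition~\ref{prop:cosine_matrix}, namely Eq.~\eqref{eq:cosine_matrix_revised}, one has $|\mathcal{C}_{mn}| = e^{-\lambda^2/2}\sqrt{n_<!/(n_<+2k)!}\,\lambda^{2k}\,\bigl|L_{n_<}^{(2k)}(\lambda^2)\bigr|$, and I would bound the three factors separately: (i) since $(n_<+2k)!/n_<! = \prod_{j=1}^{2k}(n_<+j)\ge (2k)!$, we get $\sqrt{n_<!/(n_<+2k)!}\le 1/\sqrt{(2k)!}$; (ii) because $n_<\le N$ is bounded, the finite sum $L_{n_<}^{(2k)}(\lambda^2) = \sum_{i=0}^{n_<}(-1)^i\binom{2k+n_<}{n_<-i}\lambda^{2i}/i!$ has each binomial coefficient equal to a product of at most $N$ factors bounded by $2k+N$, hence $\bigl|L_{n_<}^{(2k)}(\lambda^2)\bigr|\le (N+1)e^{\lambda^2}(2k+N)^N$, a polynomial in $k$; (iii) $\lambda^{2k}$ is a plain geometric factor. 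Combining gives
\begin{equation}
|\mathcal{C}_{mn}| \;\le\; (N+1)\,e^{\lambda^2/2}\,(2k+N)^N\,\frac{\lambda^{2k}}{\sqrt{(2k)!}},
\end{equation}
and $\lambda^{2k}/\sqrt{(2k)!}$ decays faster than any exponential, so even multiplied by $(2k+N)^N$ the right side is the general term of a convergent series in $k$ (consistent with the heuristic $|\mathcal{C}_{n,n+2k}|\sim\lambda^{2k}/k!$ noted after the Corollary to Proposition~\ref{prop:cosine_matrix}).

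The third step is bookkeeping. For each $n_<$ there are at most two entries at separation $2k$, and $n_<$ takes at most $N+1$ values, so
\begin{equation}
\|\Delta\hat H^{(P)}\|_{\mathrm{HS}}^2 \;\le\; 2E_J^2(N+1)^3 e^{\lambda^2}\sum_{k>P}(2k+N)^{2N}\frac{\lambda^{4k}}{(2k)!},
\end{equation}
which is the tail of a convergent series and hence $\to 0$ as $P\to\infty$; chaining with $\|A\|_{\mathrm{op}}\le\|A\|_{\mathrm{HS}}$ finishes the argument. The multimode case follows the same route, replacing $\mathcal{C}_{mn}$ by the product of displacement matrix elements in Eq.~\eqref{eq:multimode_cosine_correct} and summing the analogous factorial-decaying bounds over the multi-index $|{m}-{n}|>2P$. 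I expect the only delicate point to be step (ii)---establishing that the generalized Laguerre polynomials $L_{n_<}^{(2k)}(\lambda^2)$ grow only polynomially, not exponentially, in the order parameter $2k$ when the degree $n_<$ is held fixed. This is precisely where the finiteness of $N$ enters essentially: without a uniform bound on $n_<$ the Laguerre values are not controlled and the Hilbert--Schmidt tail need not be summable, which is consistent with the theorem being stated only for fixed finite Fock cutoff.
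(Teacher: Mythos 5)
Your proposal is correct and follows essentially the same route as the paper, which justifies the theorem in one line by appealing to the super-exponential decay $\lambda^{2p}/(p!)$ of the cosine matrix elements at separation $|n-m|=2p$; your bound $\lambda^{2k}/\sqrt{(2k)!}$ times a polynomial in $k$ (from the factorial ratio, the Laguerre factor with fixed degree $n_<\le N$, and the geometric factor) is a rigorous quantitative version of that decay, and the chain $\|\cdot\|_{\mathrm{op}}\le\|\cdot\|_{\mathrm{HS}}$ plus the tail estimate is exactly what the paper's inequality presupposes. You have in fact supplied more detail than the paper does, correctly isolating the one point (polynomial growth of $L_{n_<}^{(2k)}(\lambda^2)$ in the order $2k$ at fixed degree) where finiteness of $N$ enters essentially.
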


The convergence follows from the super-exponential decay $\lambda^{2p}/(p!)$ of the cosine matrix elements for $|n-m| = 2p$. This quantifies the error from truncating the matrix CF at block size $P$: for $\lambda \ll 1$, the tridiagonal approximation ($P=1$) is accurate, while for $\lambda \sim 1$, larger $P$ is needed.

\begin{theorem}[Fock cutoff convergence]
\label{thm:fock_convergence}
For fixed bandwidth $P$, the eigenvalues $E_k^{(N)}$ of $\hat{H}^{(N,P)}$ satisfy the variational bounds $E_k^{(N)} \ge E_k$ and $E_k^{(N+1)} \le E_k^{(N)}$, so $E_k^{(N)} \downarrow E_k$ as $N \to \infty$.
\end{theorem}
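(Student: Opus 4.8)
The plan is to recognize $\hat{H}^{(N,P)}$ as the compression of a single self-adjoint operator $\hat{H}^{(\infty,P)}$ to a finite-dimensional subspace and to invoke the min-max (Courant--Fischer) characterization of eigenvalues together with monotonicity of the subspace. First I would fix the bandwidth $P$ and regard $\hat{H}^{(\infty,P)}$ as the operator obtained from the exact Hamiltonian~\eqref{eq:exact_hamiltonian} by discarding cosine matrix elements with $|\Delta n| > 2P$; by Theorem~\ref{thm:bandwidth_convergence} this is a bounded self-adjoint perturbation of $\hat{H}$ within each parity sector, bounded below, with discrete spectrum below the essential spectrum (the linear part $\sum_n n\hbar\omega$ is diagonal and the cosine perturbation is bounded). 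Let $\mathcal{P}_N$ be the orthogonal projector onto $\mathrm{span}\{|0\rangle,\dots,|N\rangle\}$ within the relevant parity sector, so that $\hat{H}^{(N,P)} = \mathcal{P}_N \hat{H}^{(\infty,P)} \mathcal{P}_N$ acting on the range of $\mathcal{P}_N$. The two claimed inequalities then follow from two standard facts.

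The lower bound $E_k^{(N)} \ge E_k$ is the Rayleigh--Ritz variational principle: for any $k$-dimensional subspace $V$ of $\mathrm{ran}\,\mathcal{P}_N$,
\begin{equation}
\max_{\psi \in V,\ \|\psi\|=1} \langle \psi | \hat{H}^{(N,P)} | \psi \rangle = \max_{\psi \in V,\ \|\psi\|=1} \langle \psi | \hat{H}^{(\infty,P)} | \psi \rangle \ge \min_{W \subseteq \mathcal{H},\ \dim W = k}\ \max_{\psi \in W,\ \|\psi\|=1} \langle \psi | \hat{H}^{(\infty,P)} | \psi \rangle = E_k,
\end{equation}
where the first equality uses that $\psi \in \mathrm{ran}\,\mathcal{P}_N$ so $\mathcal{P}_N\psi = \psi$, and the last equality is min-max for $\hat{H}^{(\infty,P)}$ (valid for the eigenvalues below the essential spectrum). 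Taking the minimum over $V \subseteq \mathrm{ran}\,\mathcal{P}_N$ of dimension $k$ gives $E_k^{(N)} \ge E_k$. The monotonicity $E_k^{(N+1)} \le E_k^{(N)}$ follows because $\mathrm{ran}\,\mathcal{P}_N \subset \mathrm{ran}\,\mathcal{P}_{N+1}$: enlarging the trial space can only lower each min-max value, since the outer minimum in
\begin{equation}
E_k^{(N)} = \min_{\substack{V \subseteq \mathrm{ran}\,\mathcal{P}_N \\ \dim V = k}}\ \max_{\psi \in V,\ \|\psi\|=1} \langle \psi | \hat{H}^{(\infty,P)} | \psi \rangle
\end{equation}
ranges over a subset of the admissible subspaces for $E_k^{(N+1)}$. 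Hence $E_k^{(N)}$ is nonincreasing in $N$ and bounded below by $E_k$, so it converges to some limit $L_k \ge E_k$.

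To close the proof I would show $L_k = E_k$, i.e.\ that the variational bound is asymptotically tight. Here I would use that the eigenvectors $\psi_1,\dots,\psi_k$ of $\hat{H}^{(\infty,P)}$ associated with $E_1,\dots,E_k$ lie in the domain and, because the linear part has compact resolvent, have Fock-space tails that decay: $\|(\mathbb{I}-\mathcal{P}_N)\psi_j\| \to 0$ as $N\to\infty$. Feeding the projected vectors $\mathcal{P}_N\psi_j$ (which span a $k$-dimensional subspace of $\mathrm{ran}\,\mathcal{P}_N$ for $N$ large, after Gram--Schmidt) into the min-max quotient gives $E_k^{(N)} \le E_k + \varepsilon_N$ with $\varepsilon_N \to 0$, by continuity of the Rayleigh quotient and boundedness of $\hat{H}^{(\infty,P)}$ restricted to the span. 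Combined with the lower bound this yields $E_k^{(N)} \downarrow E_k$. The main obstacle is the last step: one must verify that the tail estimate $\|(\mathbb{I}-\mathcal{P}_N)\psi_j\|\to 0$ holds and controls the Rayleigh quotient uniformly for the lowest $k$ eigenvectors; this is where the discreteness of the spectrum of the linear part (finite $\omega_n$, harmonic confinement) and the boundedness of the cosine perturbation enter, guaranteeing that low-lying eigenstates are concentrated at low photon number. For bounded perturbations of a number operator this is routine, but it is the one place where an honest estimate, rather than a purely abstract min-max argument, is required.
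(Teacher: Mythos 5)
Your proposal is correct and takes essentially the same route the paper intends: the paper states the result as a consequence of the Rayleigh--Ritz variational principle applied to the compression $\hat{H}^{(N,P)} = \mathcal{P}_N \hat{H}^{(\infty,P)} \mathcal{P}_N$ onto nested Fock subspaces, which is exactly your min-max argument. You go further than the paper by explicitly flagging and handling the one nontrivial step --- that the monotone limit actually equals $E_k$, which requires the Fock-tail decay $\|(\mathbb{I}-\mathcal{P}_N)\psi_j\|\to 0$ for the low-lying eigenvectors of the bounded perturbation of the number operator --- and your treatment of that step is sound.
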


The Fock cutoff determines the CF depth. Increasing $N$ adds more levels, improving the approximation. The monotonic convergence reflects the variational principle: longer continued fractions give tighter eigenvalue bounds.

\begin{lemma}[A posteriori error indicator]
\label{lem:leakage}
The projector leakage $\eta_k = \sum_{n > N - 2P} |\langle n | \psi_k^{(N)} \rangle|^2$ provides an a posteriori indicator: if $\eta_k \ll 1$, the eigenvalue $E_k^{(N)}$ is well-converged.
\end{lemma}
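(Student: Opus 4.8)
\emph{Proof strategy.} The plan is to turn the leakage $\eta_k$ into a residual bound for an eigenproblem posed on the \emph{full} Fock space, and then invoke a Temple-type a posteriori inequality. First I would zero-pad: let $|\tilde\psi_k\rangle$ be the unit vector obtained from $|\psi_k^{(N)}\rangle$ by setting every component at Fock index $n>N$ to zero, with $P_{\le N}$ the orthogonal projection onto $\mathrm{span}\{|n\rangle:n\le N\}$ and $P_{>N}=I-P_{\le N}$. Let $\hat{H}^{(\infty,P)}$ denote the bandwidth-$P$ Hamiltonian on the full Fock space (the $N\to\infty$ version of $\hat{H}^{(N,P)}$), and $\cos^{(P)}\equiv\sum_{|m-n|\le2P}\mathcal{C}_{mn}|m\rangle\langle n|$ the bandwidth-$P$ truncated cosine. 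Because the Fock cutoff merely deletes rows and columns with $n>N$ while leaving every cosine matrix element among $n\le N$ intact, $\hat{H}^{(N,P)}$ is precisely the compression $P_{\le N}\hat{H}^{(\infty,P)}P_{\le N}$. Hence $E_k^{(N)}$ is the Rayleigh quotient $\langle\tilde\psi_k|\hat{H}^{(\infty,P)}|\tilde\psi_k\rangle$, and the residual
\begin{equation}
r_k \equiv \bigl(\hat{H}^{(\infty,P)}-E_k^{(N)}\bigr)|\tilde\psi_k\rangle = P_{>N}\,\hat{H}^{(\infty,P)}\,|\tilde\psi_k\rangle
\end{equation}
lies entirely in $\mathrm{ran}\,P_{>N}$.

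Second, I would bound $\|r_k\|$ by the leakage. The linear part $\sum_n n\hbar\omega\,|n\rangle\langle n|$ is diagonal, so $P_{>N}$ applied to it on the sub-cutoff vector $|\tilde\psi_k\rangle$ vanishes; only the term $-E_J\cos^{(P)}$ survives. A cosine matrix element linking a source $n\le N$ to a target $m>N$ forces $|m-n|\le2P$, hence $n>N-2P$; therefore only the components of $|\tilde\psi_k\rangle$ in the boundary window $N-2P<n\le N$ feed the residual, and
\begin{equation}
\|r_k\| \le E_J\,\bigl\|\cos^{(P)}\bigr\|\;\Bigl(\textstyle\sum_{N-2P<n\le N}\bigl|\langle n|\psi_k^{(N)}\rangle\bigr|^2\Bigr)^{1/2} \le C_P\,E_J\,\sqrt{\eta_k}.
\end{equation}
Here $\|\cos^{(P)}\|$ is finite — at worst $2P+1$, bounding one (even) diagonal at a time via $\|\cos^{(P)}\|\le\sum_{|d|\le2P}\|(\text{$d$-th diagonal})\|$ together with $\|\cos[\lambda(\hat a+\hat a^\dagger)]\|\le1$, or more tightly $1+\|\Delta\hat{H}^{(P)}\|_{\mathrm{HS}}/E_J$ using Theorem~\ref{thm:bandwidth_convergence} — so the $P$-dependent constant $C_P$ is well defined.

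Third, I would convert the residual into an eigenvalue error. Since $\hat{H}^{(\infty,P)}$ is self-adjoint (essentially self-adjoint on the number-operator domain, the cosine being bounded), the spectral theorem gives $\mathrm{dist}\!\left(E_k^{(N)},\sigma(\hat{H}^{(\infty,P)})\right)\le\|r_k\|\le C_P E_J\sqrt{\eta_k}$. Once $\eta_k$ is small enough that this quantity falls below half the local level spacing $\gamma_k\equiv\min(E_k-E_{k-1},\,E_{k+1}-E_k)$ of $\hat{H}^{(\infty,P)}$ at $E_k$ — the precise meaning of ``$\eta_k\ll1$'' — the nearest spectral point is forced to be the $k$-th eigenvalue, the variational ordering $E_k^{(N)}\ge E_k$ of Theorem~\ref{thm:fock_convergence} selecting which one. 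Because $E_k^{(N)}$ is a genuine Rayleigh quotient, Temple's inequality then sharpens this to the quadratic bound $0\le E_k^{(N)}-E_k\le 2\|r_k\|^2/\gamma_k = O\!\left(E_J^2\eta_k/\gamma_k\right)$ for the bandwidth-$P$ model, and combining with $|E_k^{(\infty,P)}-E_k|\le\|\Delta\hat{H}^{(P)}\|_{\mathrm{HS}}$ from Theorem~\ref{thm:bandwidth_convergence} yields $|E_k^{(N,P)}-E_k^{\mathrm{exact}}| = O\!\left(E_J^2\eta_k/\gamma_k\right)+\|\Delta\hat{H}^{(P)}\|_{\mathrm{HS}}$: at fixed adequate bandwidth, small leakage certifies a converged eigenvalue. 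The main obstacle is the gap step — a fully rigorous statement needs an a priori lower bound on $\gamma_k$, both to license Temple's inequality and to exclude a missed level crossing; since in practice $\gamma_k$ is read off from the computed spectrum, $\eta_k$ is genuinely an \emph{a posteriori} indicator rather than an a priori one. A secondary, purely technical point is pinning down $\|\cos^{(P)}\|$, but any crude $P$-dependent bound suffices because $P$ is held fixed throughout this lemma.
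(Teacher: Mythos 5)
The paper states this lemma without any proof --- it is offered as a practical convergence diagnostic, not derived --- so there is no "paper's route" to compare against; your job here was to supply the missing argument, and you have done so correctly. Your chain (zero-pad, identify $\hat{H}^{(N,P)}$ as the compression $P_{\le N}\hat{H}^{(\infty,P)}P_{\le N}$, observe that the residual is supported above the cutoff and is fed only by components in the window $N-2P<n\le N$, hence $\|r_k\|\le C_P E_J\sqrt{\eta_k}$, then convert via the self-adjoint residual bound and Kato--Temple) is the standard and correct way to make the lemma precise, and it meshes cleanly with the paper's Theorems~\ref{thm:bandwidth_convergence}--\ref{thm:two_truncation}: your final combined estimate is essentially a quantitative version of Eq.~\eqref{eq:two_truncation_error}. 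You also correctly flag the one genuine logical gap --- without an a priori lower bound on the local level spacing $\gamma_k$ of $\hat{H}^{(\infty,P)}$, small leakage alone cannot exclude a missed level, which is exactly why the paper can only call $\eta_k$ an \emph{a posteriori} indicator rather than state an error bound. Two minor technical cautions: (i) for excited states Temple's inequality must be used in its Kato--Temple form, which itself presupposes that $E_k^{(N)}$ is known to lie in the gap $(E_{k-1},E_{k+1})$, so the gap hypothesis enters twice, not once; (ii) the Hilbert--Schmidt norm $\|\Delta\hat{H}^{(P)}\|_{\mathrm{HS}}$, which the paper defines only at finite $N$, is generically infinite on the full Fock space (the squared cosine matrix elements along a fixed off-band diagonal decay too slowly in $n$ to be summable), so in your bounds on $\|\cos^{(P)}\|$ and on the bandwidth error you should stick to the operator norm, e.g.\ $\|\cos^{(P)}\|\le\|\cos\|+\sum_{0<|d|\le 2P}\sup_n|\mathcal{C}_{n,n+d}|$ and Weyl's inequality with $\sum_{|d|>2P}\sup_n|\mathcal{C}_{n,n+d}|$, both of which are finite for $\lambda<1$. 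Neither point changes the conclusion.
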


\begin{theorem}[Two-truncation error]
\label{thm:two_truncation}
The total eigenvalue error satisfies:
\begin{multline}
|E_k - E_k^{(N,P)}| \le |E_k - E_k^{(N,\infty)}| \\
+ |E_k^{(N,\infty)} - E_k^{(N,P)}|.
\label{eq:two_truncation_error}
\end{multline}
The bandwidth error is bounded by Weyl's inequality; the Fock cutoff error is monitored via $\eta_k$.
\end{theorem}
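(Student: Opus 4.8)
The inequality itself is merely the triangle inequality on the real line: writing $E_k$, $E_k^{(N,\infty)}$, $E_k^{(N,P)}$ for the $k$-th ordered eigenvalues of $\hat H$, $\hat H^{(N,\infty)}$, and $\hat H^{(N,P)}$, the bound $|E_k-E_k^{(N,P)}|\le|E_k-E_k^{(N,\infty)}|+|E_k^{(N,\infty)}-E_k^{(N,P)}|$ is automatic once the intermediate object is well defined and the three eigenvalues carry matching labels. The plan is therefore to (i) fix the intermediate operator, (ii) bound the bandwidth summand by Weyl's inequality, and (iii) relate the Fock-cutoff summand to the leakage $\eta_k$. For step (i): at bandwidth $P=\infty$ the truncation $\hat H^{(N,\infty)}$ retains \emph{all} cosine matrix elements among Fock states with $n\le N$, so it equals the compression $P_N\hat H P_N$ of the Hamiltonian~\eqref{eq:exact_hamiltonian} to the range of $P_N$; it is a finite Hermitian matrix. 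The min--max principle gives $E_k^{(N,\infty)}\ge E_k$ and Cauchy interlacing along $P_N\le P_{N+1}$ gives monotone decrease in $N$ -- this is precisely Theorem~\ref{thm:fock_convergence} specialised to $P=\infty$, so $E_k^{(N,\infty)}\downarrow E_k$. Since the low-lying spectrum of the transmon/cavity system is discrete with a gap around $E_k$, the index $k$ is consistently matched across the three operators for $N$ large enough, so the triangle inequality pairs the correct quantities.

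For step (ii), the operators $\hat H^{(N,\infty)}$ and $\hat H^{(N,P)}$ act on the \emph{same} finite-dimensional space and differ only by the Hermitian perturbation $\Delta\hat H^{(P)}\equiv\hat H^{(N,\infty)}-\hat H^{(N,P)}=-E_J\sum_{m,n\le N,\,|m-n|>2P}\mathcal{C}_{mn}\,|m\rangle\langle n|$ built from the dropped cosine couplings. Weyl's perturbation inequality for Hermitian matrices then yields $|E_k^{(N,\infty)}-E_k^{(N,P)}|\le\|\Delta\hat H^{(P)}\|_{\mathrm{op}}\le\|\Delta\hat H^{(P)}\|_{\mathrm{HS}}$, and Theorem~\ref{thm:bandwidth_convergence} shows the right-hand side decays super-exponentially in $P$ through the factors $\lambda^{2p}/p!$ in the Laguerre matrix elements of Proposition~\ref{prop:cosine_matrix}. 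This is the first announced estimate and requires nothing beyond the already-proved bandwidth-convergence theorem.

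For step (iii) the operator-norm route is unavailable, because $\hat H$ is unbounded (the $\sum_n\hbar\omega_n\hat a_n^\dagger\hat a_n$ piece), so I would combine the variational monotonicity of step (i) with a residual estimate. Taking the normalised eigenvector $\psi_k^{(N)}$ of $\hat H^{(N,\infty)}$ and extending it by zero to the full Fock space, the residual $(\hat H-E_k^{(N,\infty)})\psi_k^{(N)}=(1-P_N)\hat H P_N\psi_k^{(N)}$ is supported on Fock levels $n>N$, and its norm is controlled by the leakage $\eta_k=\sum_{n>N-2P}|\langle n|\psi_k^{(N)}\rangle|^2$ of Lemma~\ref{lem:leakage}, since only components within $2P$ of the cutoff are coupled outside by the retained cosine while the bare-mode term is diagonal. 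The standard residual bound $\mathrm{dist}\big(E_k^{(N,\infty)},\sigma(\hat H)\big)\le\|(\hat H-E_k^{(N,\infty)})\psi_k^{(N)}\|$ then gives $|E_k-E_k^{(N,\infty)}|\lesssim\sqrt{\eta_k}$ up to a local energy scale, so $\eta_k$ is a genuine a posteriori indicator, and assembling the three steps with the triangle inequality of the first paragraph yields~\eqref{eq:two_truncation_error}. The main obstacle is exactly this last step: converting the small leakage into a rigorous eigenvalue error requires controlling the energy scale multiplying $\sqrt{\eta_k}$, which I would do by restricting attention to the low-lying band where the exact eigenvectors have exponentially small high-Fock tails (an Agmon- or Combes--Thomas-type decay estimate for the cosine Hamiltonian, so that scale is uniformly bounded), or alternatively by upgrading the one-sided variational bound to a two-sided one via the Temple--Kato inequality using the spectral gap; once the relevant finite energy window is isolated the estimate is clean.
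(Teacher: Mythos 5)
Your proposal is correct and follows essentially the same route as the paper, which states the theorem without a separate proof: the displayed inequality is the triangle inequality through the intermediate operator $\hat{H}^{(N,\infty)}$, the bandwidth summand is controlled by Weyl's inequality via Theorem~\ref{thm:bandwidth_convergence}, and the Fock-cutoff summand is tied to the leakage of Lemma~\ref{lem:leakage}. Your step (iii) is in fact more careful than the paper, which only claims that $\eta_k$ \emph{monitors} the Fock-cutoff error rather than rigorously bounding it; your observation that converting $\sqrt{\eta_k}$ into an eigenvalue bound requires controlling the local energy scale (via Temple--Kato or tail-decay estimates) correctly identifies the one place where the stated result is an a posteriori indicator rather than a theorem-level bound.
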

Table~\ref{tab:truncation_requirements} summarizes numerically determined truncation parameters for various coupling regimes.

\begin{table}[t]
\caption{Truncation parameters for eigenvalue accuracy $< 10^{-6}$ in the lowest 5 levels ($E_J/\hbar\omega = 5$).}
\label{tab:truncation_requirements}
\begin{ruledtabular}
\begin{tabular}{lccc}
Regime & $\lambda$ & Required $P$ & Required $N$ \\
\midrule
Dispersive & $0.1$--$0.2$ & $2$--$3$ & $15$--$20$ \\
Strong coupling & $0.3$--$0.5$ & $4$--$5$ & $25$--$35$ \\
Ultrastrong & $0.6$--$1.0$ & $6$--$9$ & $40$--$50$ \\
Deep strong & $1.2$--$1.5$ & $10$--$12$ & $55$--$65$ \\
\end{tabular}
\end{ruledtabular}
\end{table}

\subsection{Connection to the Mathieu Equation}
\label{sec:mathieu}

For an isolated transmon, the eigenvalue problem in the phase representation reduces to the Mathieu equation. In the phase representation where $\hat{\varphi} = \hat{\Phi}_J/\varphi_0$ and $\hat{n} = -i\partial/\partial\varphi$, the Hamiltonian is
\begin{equation}
\hat{H} = 4E_C \hat{n}^2 - E_J \cos\hat{\varphi},
\label{eq:H_phase}
\end{equation}
with $E_C = e^2/(2C_\Sigma)$ the charging energy and $C_\Sigma$ the total capacitance shunting the junction~\cite{Koch2007}. The factor of 4 arises because the charge $Q = 2en$ involves Cooper pairs carrying charge $2e$. Introducing $z = \varphi/2$ and $q = E_J/(2E_C)$ yields the canonical Mathieu form.

The Mathieu characteristic values are determined by continued fraction equations arising from the Fourier expansion of Mathieu functions. In the charge basis $\{|n\rangle\}_{n \in \mathbb{Z}}$, the Hamiltonian~\eqref{eq:H_phase} is exactly tridiagonal:
\begin{equation}
\langle n | \hat{H} | m \rangle = 4E_C n^2 \delta_{nm} - \frac{E_J}{2}(\delta_{n,m+1} + \delta_{n,m-1}).
\end{equation}
This is a Jacobi matrix with $a_n = 4E_C n^2$ and $b_n = -E_J/2$, and its resolvent is a continued fraction whose zeros are the Mathieu characteristic values. This charge-basis CF is exact for all $E_J/E_C$ with no bandwidth truncation required.

In the transmon limit $q \gg 1$, the characteristic values admit the asymptotic expansion
\begin{multline}
a_m(q) \approx -2q + 2(2m+1)\sqrt{q} \\
- \frac{(2m+1)^2 + 1}{4} + O(q^{-1/2}),
\label{eq:mathieu_asymptotic}
\end{multline}
yielding the plasma frequency $\sqrt{8E_J E_C}$ and anharmonicity $-E_C$.

\begin{remark}[Connection to Braak's exact Rabi solution]
Braak~\cite{Braak2011} showed that the quantum Rabi model $H = \omega a^\dagger a + \Delta\sigma_z + g\sigma_x(a+a^\dagger)$ has spectrum given by zeros of a transcendental function $G_\pm(x)$ built from a three-term recurrence. A key insight is that naive continued fractions based on this recurrence yield tautologies; the spectral condition requires exploiting the $\mathbb{Z}_2$ parity symmetry to compare two series expansions. For the transmon, the charge-basis representation naturally incorporates parity sectors $(n \mod 2)$, and the CF for each sector gives exact eigenvalues. The transmon is not a two-level system, so Braak's exact solution does not directly apply, but the importance of parity sectors and the CF structure carry over.
\end{remark}

\subsection{Variational Ground State}
\label{sec:variational}

An exact variational bound on the ground state energy is obtained using a product of squeezed vacuum states:
\begin{equation}
|\Psi_{\mathrm{var}}\rangle = \prod_{n=1}^{M} \hat{S}_n(r_n) |0\rangle,
\label{eq:variational_ansatz}
\end{equation}
where $\hat{S}_n(r_n) = \exp[(r_n \hat{a}_n^2 - r_n (\hat{a}_n^\dagger)^2)/2]$ is the squeezing operator. With $r_n > 0$, this squeezes the position quadrature $\hat{X}_n = \hat{a}_n + \hat{a}_n^\dagger$, reducing its variance to $\langle \hat{X}_n^2 \rangle = e^{-2r_n}$. The total variational energy is
\begin{multline}
E_{\mathrm{var}}(\{r_n\}) = \sum_n \hbar\omega_n \sinh^2 r_n \\
- E_J \exp\left(-\frac{1}{2}\sum_m \lambda_m^2 e^{-2r_m}\right).
\label{eq:E_var_total}
\end{multline}

\begin{proposition}[Squeezed vacuum variational bound]
\label{prop:variational_squeezed}
The squeezed vacuum ansatz~\eqref{eq:variational_ansatz} provides an upper bound $E_0 \leq E_{\mathrm{var}}(\{r_n^*\})$ where $\{r_n^*\}$ minimize~\eqref{eq:E_var_total}. For weak coupling $\lambda \ll 1$ in the single-mode case, the optimal squeezing is $r^* \approx \lambda^2 E_J/(2\hbar\omega)$ and the variational energy approaches the exact result with error $O(\lambda^4)$.
\end{proposition}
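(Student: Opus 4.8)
The plan is to prove the statement in two stages: first the exact variational inequality, valid for arbitrary $M$ and coupling, and then the weak-coupling asymptotics in the single mode case. For the first stage I would note that each $\hat{S}_n(r_n)$ is unitary, so $|\Psi_{\mathrm{var}}\rangle$ defined in~\eqref{eq:variational_ansatz} is normalized, and that $\hat{H}\ge\sum_n\hbar\omega_n\hat{a}_n^\dagger\hat{a}_n-E_J\ge-E_J$ is bounded below; hence the Rayleigh--Ritz principle gives $E_0\le\langle\Psi_{\mathrm{var}}|\hat{H}|\Psi_{\mathrm{var}}\rangle=:E_{\mathrm{var}}(\{r_n\})$. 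To obtain the explicit functional I would evaluate the three relevant expectation values in the product of squeezed vacua: $\langle\hat{a}_n^\dagger\hat{a}_n\rangle=\sinh^2 r_n$, and, writing $\hat{X}_n=\hat{a}_n+\hat{a}_n^\dagger$ so that $\hat{\Phi}_J/\varphi_0=\sum_n\lambda_n\hat{X}_n$, the Gaussian character of the squeezed vacuum gives $\langle e^{i\theta}\rangle=e^{-\langle\theta^2\rangle/2}$ for any real linear combination $\theta$ of quadratures; since the state factorizes over modes with $\langle\hat{X}_n^2\rangle=e^{-2r_n}$ and $\langle\hat{X}_m\hat{X}_n\rangle=0$ for $m\neq n$, one gets $\langle\cos(\hat{\Phi}_J/\varphi_0)\rangle=\exp(-\tfrac12\sum_n\lambda_n^2 e^{-2r_n})$, reproducing~\eqref{eq:E_var_total}.

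Next I would show $E_{\mathrm{var}}$ is continuous on $\mathbb{R}^M$ and coercive: the kinetic term $\sum_n\hbar\omega_n\sinh^2 r_n$ (all $\omega_n>0$) tends to $+\infty$ as $\|\{r_n\}\|\to\infty$, while the cosine term lies in $[-E_J,E_J]$. Hence the infimum is attained at some $\{r_n^*\}$, with stationarity $\partial E_{\mathrm{var}}/\partial r_n=0$ giving $\hbar\omega_n\sinh(2r_n^*)=E_J\lambda_n^2 e^{-2r_n^*}\exp(-\tfrac12\sum_m\lambda_m^2 e^{-2r_m^*})$; combined with the previous paragraph this establishes $E_0\le E_{\mathrm{var}}(\{r_n^*\})$, the first assertion. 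For the single mode case I would drop the index: as $\lambda\to0$ the stationarity equation forces $r^*\to0$, and linearizing ($\sinh 2r\simeq2r$, $e^{-2r}\simeq1$, outer exponential $\simeq1$) yields $2\hbar\omega r^*=E_J\lambda^2+O(\lambda^4)$, i.e. $r^*=E_J\lambda^2/(2\hbar\omega)+O(\lambda^4)$, self-consistent since the dropped corrections are relatively $O(\lambda^2)$. Then $\hbar\omega\sinh^2 r^*=O(\lambda^4)$ and $-E_J\exp(-\tfrac{\lambda^2}{2}e^{-2r^*})=-E_J(1-\tfrac{\lambda^2}{2}+O(\lambda^4))$ because $\lambda^2 r^*=O(\lambda^4)$, so $E_{\mathrm{var}}(r^*)=-E_J+\tfrac{E_J\lambda^2}{2}+O(\lambda^4)$.

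To compare with the exact ground energy I would split $\hat{H}=\hat{H}_2+\hat{V}_{\mathrm{anh}}$ with $\hat{H}_2=\hbar\omega\hat{a}^\dagger\hat{a}+\tfrac{E_J\lambda^2}{2}\hat{X}^2-E_J$ and $\hat{V}_{\mathrm{anh}}=-E_J[\cos(\lambda\hat{X})-1+\tfrac{\lambda^2}{2}\hat{X}^2]$, every term of the latter carrying a prefactor $O(\lambda^4)$ or smaller. The quadratic part is exactly diagonalized by a squeezing (Bogoliubov) transformation: its ground state is the squeezed vacuum $|r_0\rangle$ with $e^{4r_0}=1+2E_J\lambda^2/\hbar\omega$, hence $r_0=E_J\lambda^2/(2\hbar\omega)+O(\lambda^4)=r^*+O(\lambda^4)$ and $E_0[\hat{H}_2]=\tfrac12\sqrt{\hbar\omega(\hbar\omega+2E_J\lambda^2)}-\tfrac{\hbar\omega}{2}-E_J=-E_J+\tfrac{E_J\lambda^2}{2}+O(\lambda^4)$. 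The upper side is then clean: $E_0\le E_{\mathrm{var}}(r^*)\le E_{\mathrm{var}}(r_0)=E_0[\hat{H}_2]+\langle r_0|\hat{V}_{\mathrm{anh}}|r_0\rangle$, and $\langle r_0|\hat{V}_{\mathrm{anh}}|r_0\rangle=O(\lambda^4)$ since the leading term is $-\tfrac{E_J\lambda^4}{24}\langle r_0|\hat{X}^4|r_0\rangle=-\tfrac{E_J\lambda^4}{24}\cdot3e^{-4r_0}=O(\lambda^4)$, the remainder suppressed further. For the matching lower bound I would invoke Rayleigh--Schr\"odinger perturbation theory for the discrete low-lying spectrum of the confining Hamiltonian $\hat{H}$ (a well of depth $2E_J$ and width $\sim1/\lambda$, whose bound states below $-E_J+\delta$ are finite in number and smooth in $\lambda$ near $0$), giving $E_0=E_0[\hat{H}_2]+O(\lambda^4)$ with the second-order term $O(\lambda^8/\hbar\omega)$; combining with the upper bound yields $0\le E_{\mathrm{var}}(r^*)-E_0=O(\lambda^4)$, which is the claimed accuracy.

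The only genuinely delicate step is this last lower bound. It cannot be obtained by naively decoupling $\hat{H}_2$ from $\hat{V}_{\mathrm{anh}}$: using $\cos y-1+y^2/2\le y^4/24$ (three successive integrations from $y-\sin y\ge0$) gives $\hat{V}_{\mathrm{anh}}\ge-\tfrac{E_J\lambda^4}{24}\hat{X}^4$, but $\hat{X}^4$ is unbounded below and is not form-bounded by the quadratic Hamiltonian, so $\inf\langle\hat{V}_{\mathrm{anh}}\rangle=-\infty$ and this route collapses. The honest fix is an a priori bound on the quadrature moments of the true ground state: $\langle\hat{X}^2\rangle_{\psi_0}=O(1)$ follows at once from $\hat{H}\ge\tfrac{\hbar\omega}{4}\hat{X}^2-E_J-\tfrac{\hbar\omega}{2}$ together with $E_0\le E_{\mathrm{var}}(0)=-E_J e^{-\lambda^2/2}$, and a companion estimate (localizing the anharmonic term to $|\hat{X}|\lesssim1/\lambda$, where the quartic lower bound is valid, and using $-E_J\cos\lambda\hat{X}\ge-E_J$ outside) controls $\langle\hat{X}^4\rangle_{\psi_0}=O(1)$, after which the anharmonic shift is manifestly $O(\lambda^4)$. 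At the level of rigor appropriate here I would package this moment bound as a short lemma and cite the standard analytic-perturbation results (Kato) for a confining well rather than reprove them.
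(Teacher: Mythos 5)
Your proposal is correct, and its first half coincides with the paper's (implicit) derivation: the paper obtains Eq.~\eqref{eq:E_var_total} by exactly the Gaussian computation you describe ($\langle \hat a_n^\dagger\hat a_n\rangle=\sinh^2 r_n$, $\langle\cos\hat\theta\rangle=e^{-\langle\hat\theta^2\rangle/2}$ with $\langle\hat X_n^2\rangle=e^{-2r_n}$), and the stationarity condition linearized at small $\lambda$ gives $r^*=E_J\lambda^2/(2\hbar\omega)$ as you find. Where you genuinely depart from the paper is the $O(\lambda^4)$ accuracy claim: the paper simply asserts it, providing no proof, while you supply a complete argument by splitting $\hat H=\hat H_2+\hat V_{\mathrm{anh}}$, matching $E_{\mathrm{var}}(r^*)$ to the exact Bogoliubov ground energy of $\hat H_2$, and correctly isolating the only nontrivial content as a \emph{lower} bound on $E_0$ that requires an a priori moment bound on the true ground state. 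Your diagnosis that the naive operator inequality $\hat V_{\mathrm{anh}}\ge -\tfrac{E_J\lambda^4}{24}\hat X^4$ is useless without state-dependent control is exactly right. The one soft spot is that your $\langle\hat X^4\rangle_{\psi_0}=O(1)$ estimate is only sketched (``localizing the anharmonic term''); a cleaner and fully elementary route is to use the eigenvalue equation, $\hbar\omega\,\hat a^\dagger\hat a\,\psi_0=(E_0+E_J\cos\lambda\hat X)\psi_0$, to get $\|\hat a^\dagger\hat a\,\psi_0\|\le(|E_0|+E_J)/\hbar\omega=O(1)$ and hence $\langle\hat X^4\rangle=\|\hat X^2\psi_0\|^2=O(1)$ from $\hat X^2=\hat a^2+\hat a^{\dagger 2}+2\hat a^\dagger\hat a+1$. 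With that substitution your argument is airtight and considerably more rigorous than what the paper offers for this proposition.
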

The ground state photon number in this approximation is $\bar{n}_0^{\mathrm{var}} = \sinh^2 r^*$, which vanishes as $\lambda^4$ for weak coupling and grows as $\sqrt{E_J\lambda^2/\omega}$ for strong coupling. This provides an analytical diagnostic for the crossover from dispersive to ultrastrong coupling regimes.

\subsection{Connection to the Boundary Condition}
\label{sec:bc_connection}

The dressed mode frequencies $\omega_n$ in the Hamiltonian~\eqref{eq:exact_hamiltonian} are the roots of the boundary condition equation $s\Yin(s) + 1/L_J = 0$ derived in Sec.~\ref{sec:schur}. The junction participation ratio $(\phi_n^J)^2 = 1/C_n^{\mathrm{eff}}$ is given by Eq.~\eqref{eq:participation_from_Y}, and the dimensionless coupling is
\begin{equation}
\lambda_n = 2\sqrt{\frac{E_C^{(n)}}{\hbar\omega_n}},
\label{eq:lambda_from_EC}
\end{equation}
where $E_C^{(n)} = e^2/(2C_n^{\mathrm{eff}})$. Equivalently, $\lambda_n^2 = 4E_C^{(n)}/(\hbar\omega_n)$.

The continued fraction framework provides a complete pipeline from admittance to spectrum. Starting from the boundary admittance $\Yin(s)$, one solves the boundary condition for dressed frequencies (CF zeros), extracts participations from CF residues, computes couplings $\lambda_n$, constructs the Hamiltonian, diagonalizes via matrix CF, and extracts transition frequencies, matrix elements, and decay rates. Each step preserves the CF structure, enabling exact results across all coupling regimes without perturbative approximation.

%%%%%%%%%%%%%%%%%%%%%%%%%%%%%%%%%%%%%%%%%%%%%%%%%%%%%%%%%%%%%%%%%%%%%%%%%%%%%%%

\section{Ultraviolet Convergence}
\label{sec:uv}

A fundamental concern in the quantization of circuits coupled to multimode environments is the convergence of sums over modes. Naive estimates suggest potential divergence: if the coupling to mode $n$ scales as $g_n \propto \sqrt{\omega_n}$ from the vacuum fluctuation amplitude combined with capacitive coupling, then sums like $\sum_n g_n^2/\omega_n$ might diverge logarithmically. This would render quantities like the Lamb shift infinite, requiring renormalization procedures borrowed from quantum field theory. In circuit QED, closely related ultraviolet concerns and their resolution by consistently modeling the linear environment rather than imposing an external cutoff have been addressed in a number of works~\cite{Paladino2003, Gely2017, Malekakhlagh2016, Malekakhlagh2017, ParraRodriguez2018}. This section proves a convergence theorem within the boundary-condition framework. Under the assumption that $\Yin$ includes finite shunt capacitance at the junction node, the junction participation decays as $\omega_n^{-1}$ at high frequencies, ensuring absolute convergence of all physically relevant sums. The continued fraction structure provides a natural perspective: the leading Cauer coefficient $C_\Sigma$ determines the high-frequency capacitive limit, while the CF tail controls the participation decay. The ultraviolet suppression $\phi_n^J = O(\omega_n^{-1})$ is equivalent to the statement that the CF residues decay as $O(\omega_n^{-2})$, ensuring convergence of the spectral sum.

%The section is organized as follows. Section~\ref{sec:uv_assumptions} states the precise assumptions required for the convergence theorem. Section~\ref{sec:foster_structure} reviews the Foster circuit structure used in the analysis. Section~\ref{sec:gep_analysis} derives the junction participation from the generalized eigenvalue problem. Section~\ref{sec:high_freq} develops the high-frequency asymptotic expansion. Section~\ref{sec:main_theorem} states and proves the main ultraviolet suppression theorem. Sections~\ref{sec:analytic_bounds_lumped} and~\ref{sec:analytic_bounds_distributed} derive explicit bounds for lumped and distributed environments, respectively. Section~\ref{sec:lumped_vs_distributed} establishes the universality of the $O(\omega_n^{-2})$ scaling. Section~\ref{sec:truncation_error} provides practical truncation error estimates, and Section~\ref{sec:variational_convergence} confirms convergence of the variational equations from Section~\ref{sec:multimode_exact}.

%------------------------------------------------------------------------------
\subsection{Statement of Assumptions}
\label{sec:uv_assumptions}
%------------------------------------------------------------------------------

The ultraviolet convergence theorem requires three assumptions, which we state explicitly to delineate the scope of the result.

\begin{assumption}[Finite Shunt Capacitance]
\label{assumption:uv_capacitance}
The admittance $\Yin(s)$ includes a finite positive shunt capacitance $C_\Sigma > 0$ at the junction node, so that
\begin{equation}
\Yin(i\omega) = i\omega C_\Sigma + Y_{\mathrm{rem}}(i\omega),
\label{eq:Y_high_freq}
\end{equation}
where the remainder satisfies $|Y_{\mathrm{rem}}(i\omega)| = O(1)$ as $\omega \to \infty$.
\end{assumption}

This assumption is automatically satisfied when the junction capacitance $C_J$ is included in $\Yin$ per Assumption~\ref{assumption:cj_convention}, since then $C_\Sigma \geq C_J > 0$. In the continued fraction framework, this assumption is equivalent to stating that the Cauer Type~I expansion begins with a finite shunt capacitor, with the CF tail contributing $O(1)$ corrections at high frequencies.

\begin{assumption}[Finite Josephson Inductance]
\label{assumption:uv_inductance}
The Josephson energy $E_J$ is finite and positive, so the Josephson inductance $L_J = \varphi_0^2/E_J$ is finite and positive.
\end{assumption}

This excludes the trivial case of no junction ($E_J = 0$, $L_J = \infty$) and the singular limit of infinite critical current ($E_J \to \infty$, $L_J \to 0$).

\begin{assumption}[Positive-Real Admittance]
\label{assumption:uv_pr}
The admittance $\Yin(s)$ is positive-real, representing a physically realizable passive network.
\end{assumption}

This assumption is guaranteed by the Schur complement construction from a passive multiport network (Proposition~\ref{prop:pr_preservation}). The positive-real property is equivalent to positivity of all Cauer CF coefficients ($L_n > 0$, $C_n > 0$), ensuring that the synthesized ladder network has physical components and that the spectral measure is positive.

%------------------------------------------------------------------------------
\subsection{Structure of the Foster-Synthesized Circuit}
\label{sec:foster_structure}
%------------------------------------------------------------------------------

Consider a Foster-synthesized circuit (Section~\ref{sec:foster_synthesis}) with $N$ resonant branches at frequencies $\omega_1 < \omega_2 < \cdots < \omega_N$, plus a shunt capacitance $C_\Sigma$ at the junction node. The circuit has $N+1$ nodes: the junction node (labeled $J$) and one internal node for each $LC$ branch. The capacitance matrix has the block structure
\begin{equation}
\mathsf{C} = \begin{pmatrix}
C_\Sigma & -C_1 & \cdots & -C_N \\
-C_1 & C_1 & \cdots & 0 \\
\vdots & \vdots & \ddots & \vdots \\
-C_N & 0 & \cdots & C_N
\end{pmatrix},
\label{eq:C_matrix_foster}
\end{equation}
where $C_k$ is the capacitance of the $k$-th series $LC$ branch and $C_\Sigma = C_J + \sum_k C_k$. The inverse inductance matrix is diagonal: $\mathsf{L}^{-1} = \mathrm{diag}(L_J^{-1}, L_1^{-1}, \ldots, L_N^{-1})$, with the branch resonance condition $\omega_k^2 = 1/(L_k C_k)$.

While the Foster form is convenient for explicit participation calculations, the Cauer form connects directly to the CF spectral theory of Section~\ref{sec:cf_framework}. The Foster and Cauer syntheses produce different matrix structures but encode the same admittance $\Yin(s)$. The Foster form has the ``star'' structure~\eqref{eq:C_matrix_foster} with all branches coupled to the junction node; the Cauer form yields tridiagonal matrices with nearest-neighbor coupling (the Jacobi matrix structure).

%------------------------------------------------------------------------------
\subsection{Eigenvalue Problem and Junction Participation}
\label{sec:gep_analysis}
%------------------------------------------------------------------------------

The dressed mode frequencies and shapes are determined by the generalized eigenvalue problem
\begin{equation}
\mathsf{L}^{-1} \boldsymbol{\phi}_n = \omega_n^2 \mathsf{C} \boldsymbol{\phi}_n,
\label{eq:gep}
\end{equation}
with the $\mathsf{C}$-orthonormality condition $\boldsymbol{\phi}_n^T \mathsf{C} \boldsymbol{\phi}_n = 1$. Writing $\boldsymbol{\phi}_n = (\phi_n^J, \phi_n^1, \ldots, \phi_n^N)^T$, the equation at internal node $k$ gives
\begin{equation}
\phi_n^k = \frac{\omega_n^2}{\omega_n^2 - \omega_k^2} \phi_n^J.
\label{eq:phi_k_solution}
\end{equation}
This relation encodes mode hybridization: the factor $\omega_n^2/(\omega_n^2 - \omega_k^2)$ has a pole at the bare resonator frequency $\omega_k$, reflecting that dressed mode $n$ is a hybridization of the junction mode with all resonator modes. The interlacing theorem (Section~\ref{sec:cf_framework}) guarantees that dressed and bare frequencies alternate, so each factor changes sign exactly once as $\omega_n$ increases through the spectrum.

Substituting Eq.~\eqref{eq:phi_k_solution} into the junction equation and requiring $\phi_n^J \neq 0$ yields the boundary condition
\begin{equation}
\frac{1}{L_J} = \omega_n^2 C_J + \sum_{k=1}^{N} \frac{\omega_n^2 C_k \omega_k^2}{\omega_n^2 - \omega_k^2},
\label{eq:bc_final_form}
\end{equation}
which matches $i\omega_n\Yin(i\omega_n) + 1/L_J = 0$ evaluated with the Foster form~\eqref{eq:foster_general}.

The $\mathsf{C}$-orthonormality condition determines the junction participation:
\begin{equation}
(\phi_n^J)^2 = \frac{1}{C_\Sigma + \sum_{k=1}^{N} \dfrac{C_k\omega_n^2(2\omega_k^2 - \omega_n^2)}{(\omega_n^2 - \omega_k^2)^2}}.
\label{eq:phi_J_squared}
\end{equation}
This is the residue of the resolvent $G(z)$ at eigenvalue $z = \omega_n^2$, providing an alternative computation via $(\phi_n^J)^2 = 1/G'(\omega_n^2)$. This result is equivalent to the admittance-derivative formula~\eqref{eq:foster_general}, confirming consistency between the Foster eigenvalue approach and the direct admittance method~\cite{Nigg2012, Minev2021}.

\begin{figure}[t]
\centering
\includegraphics[width=0.95\columnwidth]{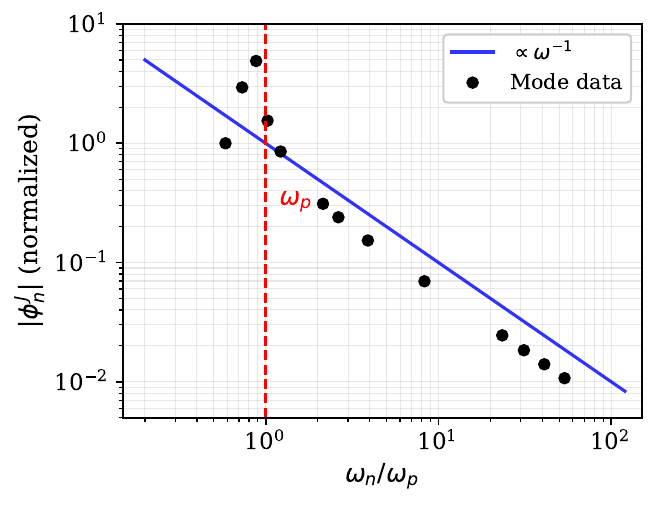}
\caption{Junction participation $|\phi_n^J|$ versus mode frequency demonstrating ultraviolet suppression (log-log scale). Black points show computed participation for dressed modes of a multimode Foster network; the blue line indicates the theoretical $\omega^{-1}$ scaling from Theorem~\ref{thm:uv_main}. Above the plasma frequency $\omega_p$ (red dashed line), the capacitive shunt at the junction port presents low impedance to high frequency modes, causing their junction participation to decay as $\phi_n^J = O(\omega_n^{-1})$. This ensures that the dimensionless coupling $\lambda_n = O(\omega_n^{-3/2})$ and all perturbative sums involving $\lambda_n^2$ converge absolutely without artificial cutoffs. The scatter at low frequencies reflects mode-dependent hybridization through the multimode environment.}
\label{fig:uv_convergence}
\end{figure}

In practice, the suppression is at least $\phi_n^J = O(\omega_n^{-1})$ in the distributed/infinite-dimensional limit, and typically faster in concrete geometries. This is the content of Theorem~\ref{thm:uv_main}. The numerical data in Fig.~\ref{fig:uv_convergence} illustrates this behavior: above $\omega_p$ the junction participation falls with a clear envelope, so that increasing the number of retained modes yields convergent sums.
%------------------------------------------------------------------------------
\subsection{High-Frequency Asymptotic Analysis}
\label{sec:high_freq}
%------------------------------------------------------------------------------

For $\omega_n \gg \max_k \omega_k$, we expand each term in the sum of Eq.~\eqref{eq:phi_J_squared}. Defining $x_k = \omega_k^2/\omega_n^2 \ll 1$:
\begin{equation}
\frac{\omega_n^2(2\omega_k^2 - \omega_n^2)}{(\omega_n^2 - \omega_k^2)^2} = \frac{2x_k - 1}{(1 - x_k)^2} = -1 + O(x_k^2).
\label{eq:term_expansion}
\end{equation}
The linear term in $x_k$ vanishes, so the correction is $O(x_k^2) = O(\omega_k^4/\omega_n^4)$. Summing over branches gives $\sum_k C_k \cdot (-1) + O(\omega_n^{-4})$. Using $C_\Sigma = C_J + \sum_k C_k$, the denominator of Eq.~\eqref{eq:phi_J_squared} becomes $C_J + O(\omega_n^{-4})$.
However, the formal limit $(\phi_n^J)^2 \to 1/C_J$ does not describe physical eigenmodes. For a finite Foster network, the highest frequency is bounded near the plasma frequency $\omega_p = 1/\sqrt{L_J C_\Sigma}$. The ``$\omega_n \to \infty$'' limit is meaningful only for infinite-dimensional environments. In such environments, the boundary condition $s\Yin(s) + 1/L_J = 0$ constrains which frequencies can be eigenfrequencies. For $\omega_n \gg \omega_p$, physical eigenmodes must have suppressed junction participation to satisfy current balance, leading to the $O(\omega_n^{-1})$ decay quantified below.

In the Cauer ladder picture, this suppression corresponds to localization of high-frequency eigenstates away from the junction end of the chain. The Jacobi matrix eigenvector for eigenvalue $\omega_n^2$ decays exponentially from the high-index end toward the junction, with localization length $\xi \sim 1/\ln(\omega_n/\omega_{\mathrm{typ}})$. This is the discrete analog of an evanescent wave with a node at the boundary.

%------------------------------------------------------------------------------
\subsection{Main Theorem}
\label{sec:main_theorem}
%------------------------------------------------------------------------------

\begin{theorem}[Ultraviolet Suppression]
\label{thm:uv_main}
Under Assumptions~\ref{assumption:uv_capacitance}--\ref{assumption:uv_pr}, the following hold for the dressed eigenmodes of the boundary condition equation:

\textup{(i)} The junction component satisfies $\phi_n^J = O(\omega_n^{-1})$ as $\omega_n \to \infty$.

\textup{(ii)} The zero-point flux satisfies $\Phi_n^{J,\mathrm{zpf}} = O(\omega_n^{-3/2})$.

\textup{(iii)} The dimensionless coupling satisfies $\lambda_n = O(\omega_n^{-3/2})$.

\textup{(iv)} All sums $\sum_n \lambda_n^2 f(\omega_n)$ converge absolutely when $f(\omega) = O(\omega^k)$ with $k < 2$.
\end{theorem}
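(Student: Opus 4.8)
The plan is to reduce parts (ii)--(iv) to (i) and then prove (i) from the admittance--participation identity \eqref{eq:participation_from_Y} together with the boundary condition. Parts (ii) and (iii) are immediate: since $\Phi_n^{J,\mathrm{zpf}} = \phi_n^J\sqrt{\hbar/(2\omega_n)}$ and $\lambda_n = \Phi_n^{J,\mathrm{zpf}}/\varphi_0$, the bound $\phi_n^J = O(\omega_n^{-1})$ yields at once $\Phi_n^{J,\mathrm{zpf}} = O(\omega_n^{-3/2})$, $\lambda_n = O(\omega_n^{-3/2})$, and $\lambda_n^2 = O(\omega_n^{-3})$. For (iv), $f(\omega) = O(\omega^k)$ gives $|\lambda_n^2 f(\omega_n)| = O(\omega_n^{k-3})$; a finite lumped network has finitely many modes so the sum is trivial, while for a distributed passive environment Weyl-type asymptotics give $\omega_n \gtrsim cn$, so the tail is dominated by $\sum_n n^{k-3}$, which converges absolutely exactly when $k<2$.

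For (i) I would work from the two equivalent expressions for the junction participation: the admittance-derivative form $(\phi_n^J)^{-2} = \tfrac{1}{2\omega_n}\,\tfrac{d}{d\omega}\big[\omega B(\omega)\big]\big|_{\omega_n}$ with $B(\omega) = \mathrm{Im}\,\Yin(i\omega)$ (Eq.~\eqref{eq:participation_from_Y}) and the Foster-sum form \eqref{eq:phi_J_squared}, the dressed frequencies being the solutions of $\omega_n B(\omega_n) = 1/L_J$. Write $W(\omega) := \omega B(\omega)$. Under Assumption~\ref{assumption:uv_capacitance}, $B(\omega)/\omega \to C_\Sigma > 0$, so away from the poles of $W$ (the bare resonances $\omega_k^{(0)}$) one has $W(\omega) \simeq \omega^2 C_\Sigma$, which exceeds the fixed level $1/L_J$ by a growing margin once $\omega \gg \omega_p = 1/\sqrt{L_J C_\Sigma}$, while near each pole $W$ plunges to $-\infty$. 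Because $W$ increases strictly between consecutive poles (directly from the Foster expansion, equivalently by the interlacing property of Section~\ref{sec:schur}) and the dressed frequencies interlace the bare ones, the $n$-th high-frequency dressed mode is pinned into a narrow window just above a bare resonance, at distance $\delta_n$ set by balancing the near-pole singular term $-\rho_n/\delta_n$ against the $\simeq \omega_n^2 C_\Sigma$ background, i.e.\ $\delta_n = O(\rho_n/\omega_n^2)$ with $\rho_n$ the residue of $W$ at that pole. Evaluating $W'$ in the same window gives $W'(\omega_n) \gtrsim \rho_n/\delta_n^2 \gtrsim \omega_n^4 C_\Sigma^2/\rho_n$, and the finite-shunt-capacitance hypothesis, equivalently $\sum_k C_k < \infty$, controls the residues so that $\rho_n$ cannot outgrow $\omega_n$; hence $W'(\omega_n) \gtrsim \omega_n^3$ and $(\phi_n^J)^2 = 2\omega_n/W'(\omega_n) = O(\omega_n^{-2})$, which is (i). A complementary picture, which I would present alongside, is that in the Cauer Type~II ladder (leading shunt capacitor $C_\Sigma$) the high-frequency Jacobi eigenvectors localize away from the junction site, their amplitude there being suppressed by at least one evanescent factor of order $(\omega_n^2 L_1 C_\Sigma)^{-1}$.

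For distributed or infinite-dimensional environments I would replace the finite Foster sum by the Nevanlinna--Herglotz representation $B(\omega)/\omega = C_\Sigma - \int_0^\infty d\mu(\lambda)/(\omega^2-\lambda)$ with a positive measure $\mu$ satisfying $\int(1+\lambda)^{-1}\,d\mu(\lambda) < \infty$ --- precisely the spectral statement that the $s^1$ growth of $\Yin$ is exhausted by the finite $C_\Sigma$. Setting $m(y) = B(\sqrt y)/\sqrt y$ and $y_n = \omega_n^2$, one has $(\phi_n^J)^2 = \big(\tfrac{1}{y_n L_J} + y_n m'(y_n)\big)^{-1}$ with $m'(y) = \int d\mu(\lambda)/(y-\lambda)^2 > 0$; the dressed modes still interlace $\mathrm{supp}\,\mu$, and a lower bound on $m'(y_n)$ from the mass of $\mu$ nearest $y_n$, combined with the clustering estimate, reproduces $y_n m'(y_n) \gtrsim y_n$ and hence the $O(\omega_n^{-2})$ decay. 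If the environment is dissipative the same steps apply to the complex roots $s_n$ of $s\Yin(s) + 1/L_J = 0$ in place of $i\omega_n$, the conductance entering only the linewidths.

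The hard part is the clustering estimate itself: obtaining the \emph{uniform-in-$n$} lower bound $W'(\omega_n) \gtrsim \omega_n^3$, equivalently $C_n^{\mathrm{eff}} \gtrsim \omega_n^2$, requires quantitative control of how tightly each high-frequency dressed eigenvalue is pinned against a bare resonance, which depends on the interplay of residue sizes, gap widths, and the $\omega_n^2 C_\Sigma$ background. It is exactly Assumption~\ref{assumption:uv_capacitance} that forces the capacitive background to dominate at high frequency and thereby forces the pinning; in the distributed case this reduces to a local estimate on the spectral measure $\mu$ near its support, resting on the Nevanlinna growth condition. The physical convergence statements such as (iv) are in any case robust, following from $\sum_n(\phi_n^J)^2 < \infty$ even if the rate in (i) is attained only up to logarithmic corrections in concrete geometries.
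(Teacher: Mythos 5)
Your reductions (ii)--(iv) coincide with the paper's: (ii) and (iii) are the same one-line consequences of $\Phi_n^{J,\mathrm{zpf}} = \phi_n^J\sqrt{\hbar/(2\omega_n)}$ and $\lambda_n = \Phi_n^{J,\mathrm{zpf}}/\varphi_0$, and (iv) is the same $O(\omega_n^{k-3})$ summand with unit mode density. For (i), however, you take a genuinely different route. The paper proves (i) by two explicit computations: for lumped Foster networks it bounds the denominator of Eq.~\eqref{eq:phi_J_squared} term by term (using $(2\omega_k^2-\omega_n^2)/(\omega_n^2-\omega_k^2)^2 \geq -1/\omega_n^2$ plus the leading corrections, giving Eq.~\eqref{eq:sharp_bound}), and for distributed lines it solves the transcendental equation $k_n\cot(k_nL)/l = 1/L_J - k_n^2v^2C_J$ asymptotically and normalizes with the boundary-modified inner product, giving Eq.~\eqref{eq:phi_J_squared_distributed}. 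Your argument instead pins the dressed roots of $W(\omega)=\omega B(\omega)=1/L_J$ into shrinking windows above the bare poles and bounds $W'(\omega_n)$ there; this is more structural and extends more naturally to the Herglotz setting, but it shifts all the weight onto a residue estimate that you do not actually establish.

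That is where the gap is. You assert that Assumption~\ref{assumption:uv_capacitance} is ``equivalently $\sum_k C_k<\infty$'' and that this ``controls the residues so that $\rho_n$ cannot outgrow $\omega_n$.'' Since $\rho_n = C_n\omega_n^3/2$ for the Foster form, the bound $\rho_n = O(\omega_n)$ is exactly $C_n = O(\omega_n^{-2})$, and mere summability of the $C_k$ does not give this: take $C_k \sim 1/(k\log^2 k)$ with $\omega_k \sim k$, which is summable but has $C_k\omega_k^2 \to\infty$. Feeding that into your own chain ($\delta_n \sim \rho_n/(\omega_n^2C_\Sigma)$, $W'(\omega_n)\gtrsim \rho_n/\delta_n^2$) yields only $(\phi_n^J)^2 \lesssim C_n/C_\Sigma^2$, which in the counterexample decays like $\omega_n^{-1}\log^{-2}\omega_n$ rather than $\omega_n^{-2}$ --- so (i) fails at the claimed rate under your stated hypotheses. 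The repair is available but must be invoked explicitly: the $|Y_{\mathrm{rem}}(i\omega)| = O(1)$ clause of Assumption~\ref{assumption:uv_capacitance}, evaluated at frequencies between consecutive poles where the sum is dominated by the nearest branch, forces $C_k\omega_k^2 = O(1)$ directly, which is precisely $\rho_k = O(\omega_k)$ (and is what the transmission line realizes, with $C_k = 2C_r v^2/(\omega_k^2\ell^2)$). Finally, your fallback claim that (iv) survives on the strength of $\sum_n(\phi_n^J)^2<\infty$ alone is too weak: for $k$ close to $2$ the summand is $(\phi_n^J)^2\,\omega_n^{k-1}$ with $k-1$ close to $1$, and summability of $(\phi_n^J)^2$ without the $\omega_n^{-2}$ rate does not control it (the same counterexample gives a divergent sum). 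The full strength of (i), up to at most logarithmic corrections, is genuinely needed for (iv) over the whole range $k<2$.
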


Statement (i) is equivalent to CF residues decaying as $(\phi_n^J)^2 = O(\omega_n^{-2})$, and statement (iv) implies that the spectral measure has finite moments $\int \omega^k d\sigma(\omega) < \infty$ for $k < 1$.

\begin{proof}
Part (i): For lumped Foster networks, Section~\ref{sec:analytic_bounds_lumped} establishes
\begin{equation}
|\phi_n^J|^2 \leq \frac{\omega_p^2}{C_J \omega_n^2}\left(1 + O(\omega_p^2/\omega_n^2)\right) \quad \text{for } \omega_n > \omega_p.
\end{equation}
For distributed transmission lines, Section~\ref{sec:analytic_bounds_distributed} derives
\begin{equation}
(\phi_n^J)^2 = \frac{2 C_r v^2}{C_J^2 L^2 \omega_n^2} + O(\omega_n^{-4}).
\end{equation}

Part (ii): With $\Phi_n^{J,\mathrm{zpf}} = \phi_n^J\sqrt{\hbar/(2\omega_n)}$ and $\phi_n^J = O(\omega_n^{-1})$, we have $\Phi_n^{J,\mathrm{zpf}} = O(\omega_n^{-3/2})$.

Part (iii): The dimensionless coupling $\lambda_n = \Phi_n^{J,\mathrm{zpf}}/\varphi_0$ inherits the same scaling.

Part (iv): With $\lambda_n^2 = O(\omega_n^{-3})$, the summand scales as $O(\omega_n^{k-3})$. For mode density $\rho(\omega) = O(1)$, convergence requires $k - 3 < -1$, i.e., $k < 2$.
\end{proof}

%------------------------------------------------------------------------------
\subsection{Physical Interpretation}
\label{sec:physical_interp}
%------------------------------------------------------------------------------

The ultraviolet suppression has a direct physical interpretation: the junction capacitance short-circuits the junction node at high frequencies. At $\omega \gg \omega_p$, the capacitive reactance $1/(\omega C_\Sigma)$ becomes much smaller than the Josephson inductive reactance $\omega L_J$. Current preferentially flows through the capacitor, effectively bypassing the nonlinear element.

High-frequency modes therefore see a nearly short-circuited boundary condition at the junction, forcing their flux to vanish there. This is analogous to the electromagnetic boundary condition at a perfect conductor. The junction capacitance provides this short-circuit behavior, with the transition occurring near the plasma frequency. This physical picture was developed in detail by Gely et al.~\cite{Gely2017} and Parra-Rodriguez et al.~\cite{ParraRodriguez2018}, who showed that the ``dressing'' of transmission line modes by the junction capacitance provides an intrinsic ultraviolet cutoff without requiring external regularization.

%------------------------------------------------------------------------------
\subsection{Explicit Bounds for Lumped Networks}
\label{sec:analytic_bounds_lumped}
%------------------------------------------------------------------------------

For a dressed mode at frequency $\omega_n > \omega_p$ in a Foster network, the junction participation satisfies
\begin{equation}
|\phi_n^J|^2 \leq \frac{1}{C_J} \cdot \frac{\omega_p^2}{\omega_n^2} \cdot \left(1 + \frac{\omega_p^2}{\omega_n^2 - \omega_p^2}\right),
\label{eq:sharp_bound}
\end{equation}
confirming $O(\omega_n^{-2})$ decay with explicit prefactor
\begin{equation}
\mathcal{P}_{\mathrm{lumped}} = \frac{\omega_p^2}{C_J} = \frac{1}{C_J L_J C_\Sigma}.
\label{eq:prefactor_lumped}
\end{equation}
The prefactor can be written as $(L_J C_J^2)^{-1} \cdot (C_J/C_\Sigma)$, where $C_J/C_\Sigma$ is the capacitive participation of the junction. Large junction capacitance gives stronger UV suppression.

To derive Eq.~\eqref{eq:sharp_bound}, we use Eq.~\eqref{eq:phi_J_squared} and bound the denominator from below. For $\omega_n > \omega_p$, each term in the sum satisfies $(2\omega_k^2 - \omega_n^2)/(\omega_n^2 - \omega_k^2)^2 \geq -1/\omega_n^2$, giving
\begin{equation}
(\phi_n^J)^{-2} \geq C_\Sigma - \sum_k C_k = C_J
\end{equation}
as a weak bound. The sharper bound~\eqref{eq:sharp_bound} follows from retaining the leading correction terms.

The Debye-Waller factor, defined as $\exp(-\sum_n \lambda_n^2/2)$, measures the suppression of the junction's zero-point motion due to coupling to the environment. Using Eq.~\eqref{eq:sharp_bound}:
\begin{equation}
\sum_{n=1}^{\infty} \frac{\lambda_n^2}{2} \leq \frac{\hbar}{8\varphi_0^2 C_J} + O(E_C/\hbar\omega_p).
\label{eq:dw_total_bound}
\end{equation}
For transmon parameters with $E_J/E_C \sim 50$, this evaluates to approximately $0.01$--$0.02$, confirming that the Debye-Waller reduction is a few percent effect.

%------------------------------------------------------------------------------
\subsection{Explicit Bounds for Distributed Systems}
\label{sec:analytic_bounds_distributed}
%------------------------------------------------------------------------------

Consider a transmission line of length $L$, characteristic impedance $Z_0 = \sqrt{l/c}$, and phase velocity $v = 1/\sqrt{lc}$, where $l$ and $c$ are the inductance and capacitance per unit length, respectively. The line is terminated by a Josephson junction at $x = 0$ with a short circuit at $x = L$. The transmission line admittance $Y_{\mathrm{TL}}(s) = (1/Z_0)\coth(sL/v)$ has an infinite continued fraction expansion corresponding to an infinite Cauer ladder.
The mode frequencies satisfy
\begin{equation}
\frac{k_n}{l} \cot(k_n L) = \frac{1}{L_J} - k_n^2 v^2 C_J,
\label{eq:transcendental_distributed}
\end{equation}
where $k_n = \omega_n/v$. This is the boundary condition derived by Parra-Rodriguez et al.~\cite{ParraRodriguez2018} for eigenvalue-dependent boundary conditions, with their parameter $\alpha = C_J/c$. For high-frequency modes ($n \gg 1$), using the ansatz $k_n L = n\pi + \delta_n$ with $|\delta_n| \ll 1$:
\begin{equation}
k_n = \frac{n\pi}{L} - \frac{C_r}{n\pi C_J L} + O(n^{-3}),
\label{eq:kn_explicit}
\end{equation}
where $C_r = cL$ is the total line capacitance.

The junction participation follows from normalization over the modified inner product that includes the boundary term~\cite{ParraRodriguez2018}:
\begin{equation}
(\phi_n^J)^2 = \frac{2 C_r v^2}{C_J^2 L^2 \omega_n^2} + O(\omega_n^{-4}),
\label{eq:phi_J_squared_distributed}
\end{equation}
confirming $O(\omega_n^{-2})$ decay with prefactor
\begin{equation}
\mathcal{P}_{\mathrm{dist}} = \frac{2 C_r v^2}{C_J^2 L^2}.
\label{eq:prefactor_distributed}
\end{equation}
This result matches the asymptotic behavior $u_n(0)^2 \sim 1/(\alpha k_n)^2$ derived in Ref.~\cite{ParraRodriguez2018}, where the mode functions $u_n(x)$ satisfy eigenvalue-dependent boundary conditions that dress the bare transmission line modes with the junction parameters.

\begin{table}[t]
\caption{Ultraviolet suppression: lumped versus distributed environments.}
\label{tab:uv_comparison}
\begin{ruledtabular}
\begin{tabular}{lcc}
Property & Lumped & Distributed \\
\hline
$(\phi_n^J)^2$ scaling & $O(\omega_n^{-2})$ & $O(\omega_n^{-2})$ \\
Prefactor & $\omega_p^2/C_J$ & $2C_r v^2/(C_J^2 L^2)$ \\
Physical origin & Capacitive shunt & Capacitive shunt \\
\end{tabular}
\end{ruledtabular}
\end{table}
Table~\ref{tab:uv_comparison} compares lumped and distributed environments. Both exhibit identical $O(\omega_n^{-2})$ scaling for $(\phi_n^J)^2$, confirming universality of ultraviolet convergence. This universal scaling was independently established by Malekakhlagh and T\"ureci~\cite{Malekakhlagh2016, Malekakhlagh2017} using regularization techniques and by Parra-Rodriguez et al.~\cite{ParraRodriguez2018} using eigenvalue-dependent boundary conditions. The universal scaling follows from the CF structure alone: any positive-real admittance with $Y(s) \sim sC_\Sigma$ at high frequencies gives CF residues scaling as $\omega_n^{-2}$, independent of the specific environment. The prefactors depend on the subleading CF terms, but the leading decay rate is universal.

%------------------------------------------------------------------------------
\subsection{Truncation Error Estimates and Convergence}
\label{sec:truncation_error}
%------------------------------------------------------------------------------

The explicit prefactors enable precise estimation of truncation errors. For sums $S = \sum_n \lambda_n^2 f(\omega_n)$ with $f(\omega) = O(\omega^k)$ and $k < 2$, the error from neglecting modes above $\omega_{\mathrm{max}}$ is
\begin{equation}
\Delta S(\omega_{\mathrm{max}}) \lesssim \frac{\hbar \mathcal{P}}{2\varphi_0^2} \cdot \frac{\omega_{\mathrm{max}}^{k-2}}{2-k}.
\label{eq:truncation_general}
\end{equation}
For the Lamb shift ($k = 0$): $\Delta_{\mathrm{Lamb}} \lesssim \hbar\mathcal{P}/(4\varphi_0^2\omega_{\mathrm{max}}^2)$, giving relative errors below $10^{-3}$ for $\omega_{\mathrm{max}} = 10\omega_p$. For dispersive shifts ($k = -2$): $\Delta_\chi \lesssim \hbar\mathcal{P}/(8\varphi_0^2\omega_{\mathrm{max}}^4)$. These rapid convergence rates confirm that including modes up to a few times the plasma frequency suffices for accurate results, typically requiring only 10--20 modes for percent-level accuracy.

The self-consistent squeezing equations involve sums $\sum_m \lambda_m^2 e^{2r_m}$. Since the optimal squeezing parameters satisfy $r_m \to 0$ as $\omega_m \to \infty$, the convergence of $\sum_m \lambda_m^2 e^{2r_m}$ follows from $\sum_m \lambda_m^2 < \infty$, which is guaranteed by Theorem~\ref{thm:uv_main}(iv). For modes with $\omega_m \gg \omega_p$, the self-consistent equation gives $r_m = O(\omega_m^{-4})$, so $e^{2r_m} = 1 + O(\omega_m^{-4})$. The correction to the Debye-Waller sum from squeezing is $\sum_m \lambda_m^2(e^{2r_m} - 1) = O(\sum_m \omega_m^{-7})$, which converges even more rapidly than the leading term.

%%%%%%%%%%%%%%%%%%%%%%%%%%%%%%%%%%%%%%%%%%%%%%%%%%%%%%%%%%%%%%%%%%%%%%%%%%%%%%%
%%%%%%%%%%%%%%%%%%%%%%%%%%%%%%%%%%%%%%%%%%%%%%%%%%%%%%%%%%%%%%%%%%%%%%%%%%%%%%%

\section{Multi-Mode Interference and Purcell Suppression}
\label{sec:multimode_purcell}

In realistic circuits, multiple electromagnetic modes couple to the junction, and interference between decay pathways can significantly modify or completely suppress the effective Purcell rate~\cite{BakrPurcell2025, Bakr2025PRAppl}. This section derives the multi-mode interference formula directly from the boundary-condition admittance, demonstrating that the framework naturally captures coherent pathway interference without additional phenomenological assumptions. The central result is that the qubit decay rate is determined by the real part of the admittance evaluated at the qubit frequency, weighted by the junction participation. When multiple modes contribute to this admittance, their contributions can interfere constructively or destructively depending on the relative phases and detunings. In the continued fraction framework, the qubit decay rate is proportional to the conductance $G(\omega_q) = \mathrm{Re}[Y(i\omega_q)]$, and complete Purcell suppression occurs when $G(\omega_q) = 0$---a transmission zero where the lossy network presents purely reactive impedance to the junction.

%------------------------------------------------------------------------------
\subsection{Purcell Decay from the Boundary Condition}
\label{sec:purcell_bc}
%------------------------------------------------------------------------------

The Purcell effect describes the modification of spontaneous emission rates when an emitter couples to a structured electromagnetic environment. The real part $G(\omega) = \mathrm{Re}[\Yin(i\omega)]$ represents dissipation, quantifying the rate at which energy injected at frequency $\omega$ is absorbed by the environment. The Purcell decay rate can be derived from the quantum fluctuation-dissipation relation~\cite{Clerk2010}. At zero temperature, the current noise spectral density seen by the junction is $S_I(\omega) = \hbar\omega G(\omega)$. Combined with the matrix element $|\langle g | \hat{\Phi}_J | e \rangle|^2 = \hbar/(2\omega_q C_q^{\mathrm{eff}})$ for a transmon in the harmonic approximation, the decay rate from the excited state $|e\rangle$ to the ground state $|g\rangle$ is~\cite{Houck2008}
\begin{equation}
\Gamma = \frac{G(\omega_q)}{C_q^{\mathrm{eff}}},
\label{eq:Gamma_G}
\end{equation}
where $G(\omega_q) = \mathrm{Re}[\Yin(i\omega_q)]$ and we used the participation relation $(\phi_q^J)^2 = 1/C_q^{\mathrm{eff}}$ from Section~\ref{sec:participation}. This result connects directly to the boundary-condition framework: the mode frequencies satisfying $i\omega_n \Yin(i\omega_n) + 1/L_J = 0$ determine the qubit spectrum, while the conductance $G(\omega_q)$ at these frequencies determines the decay rates.

In the continued fraction framework, the conductance is the spectral function of the lossy Jacobi matrix:
\begin{equation}
G(\omega) = \sum_n (\phi_n^J)^2 \cdot \frac{\omega^2 \kappa_n}{(\omega - \omega_n)^2 + (\kappa_n/2)^2},
\end{equation}
where $\tilde{\omega}_n = \omega_n - i\kappa_n/2$ are the complex mode frequencies including loss. Each mode contributes a Lorentzian peak weighted by the CF residue $(\phi_n^J)^2$. The effective capacitance is computed from the CF structure via $C_q^{\mathrm{eff}} = 1/(\phi_q^J)^2$, confirming that large participation enhances both coupling and Purcell decay.

%------------------------------------------------------------------------------
\subsection{Single-Mode Purcell Formula}
\label{sec:single_mode_purcell}
%------------------------------------------------------------------------------

Before treating the multi-mode case, we derive the standard single-mode Purcell formula. Consider a single resonator mode at frequency $\omega_r$ with decay rate $\kappa$ and quality factor $Q_r = \omega_r/\kappa$, capacitively coupled to the junction. The total admittance is
\begin{equation}
\Yin(s) = sC_J + \frac{s^2 C_g^2}{sC_g + Y_r(s)},
\label{eq:Yin_single_mode}
\end{equation}
which corresponds to a two-level continued fraction with a single resonant branch.

In the dispersive regime where $|\omega_q - \omega_r| \gg \kappa$, evaluating the conductance at the qubit frequency and using the coupling strength $g = \omega_q C_g/(2\sqrt{C_J C_r})$ yields
\begin{equation}
\Gamma_{\mathrm{Purcell}} = \frac{g^2 \kappa}{\Delta^2 + (\kappa/2)^2},
\label{eq:Purcell_full}
\end{equation}
where $\Delta = \omega_q - \omega_r$. This is the product of three factors: the junction participation in the resonator mode (CF residue), the resonator loss rate (imaginary part of CF pole), and a Lorentzian frequency filter (CF spectral lineshape). Purcell suppression requires making one of these factors vanish.

%------------------------------------------------------------------------------
\subsection{Multi-Mode Admittance}
\label{sec:multimode_admittance}
%------------------------------------------------------------------------------

For an environment with $M$ modes, the admittance with inter-mode coupling takes the Foster form
\begin{equation}
\Yin(s) = sC_\Sigma + \sum_{k=1}^{M} \frac{s C_k \omega_k^2}{s^2 + s\kappa_k + \omega_k^2},
\label{eq:Yin_multimode_lossy}
\end{equation}
where $\kappa_k = \omega_k/Q_k$. The lossy admittance corresponds to complex CF poles at $s = -\kappa_k/2 \pm i\sqrt{\omega_k^2 - \kappa_k^2/4}$, with the Jacobi matrix becoming non-Hermitian through $\mathsf{J} \to \mathsf{J} - i\boldsymbol{\Gamma}/2$.

The conductance becomes
\begin{equation}
G(\omega) = \sum_{k=1}^{M} \frac{\omega^2 C_k \omega_k^2 \kappa_k}{(\omega_k^2 - \omega^2)^2 + \omega^2\kappa_k^2},
\label{eq:G_multimode}
\end{equation}
a sum of Lorentzian peaks. Without inter-mode coupling, all terms are positive, so $G(\omega) > 0$ everywhere and no Purcell suppression is possible. Inter-mode coupling modifies the effective residues and can create sign changes, enabling $G(\omega_q) = 0$.

When inter-mode coupling is present, the system can be described by an effective non-Hermitian Hamiltonian in the single-excitation subspace~\cite{BakrPurcell2025}. In the basis $\{|e,0\rangle, |g,1_1\rangle, \ldots, |g,1_M\rangle\}$, this Hamiltonian has the matrix form
\begin{equation}
\mathsf{H}_{\mathrm{eff}} = 
\begin{pmatrix}
\omega_q & g_1 e^{i\phi_1} & \cdots & g_M e^{i\phi_M} \\
g_1 e^{-i\phi_1} & \tilde{\omega}_1 & \cdots & J_{1M} e^{i\theta_{1M}} \\
\vdots & \vdots & \ddots & \vdots \\
g_M e^{-i\phi_M} & J_{1M} e^{-i\theta_{1M}} & \cdots & \tilde{\omega}_M
\end{pmatrix},
\label{eq:H_eff_matrix}
\end{equation}
where $\tilde{\omega}_k = \omega_k - i\kappa_k/2$ are the complex mode frequencies, $g_k$ are the qubit-mode coupling strengths, $\phi_k$ are the coupling phases, and $J_{kl}e^{i\theta_{kl}}$ are the inter-mode couplings with $\theta_{kl} = -\theta_{lk}$.

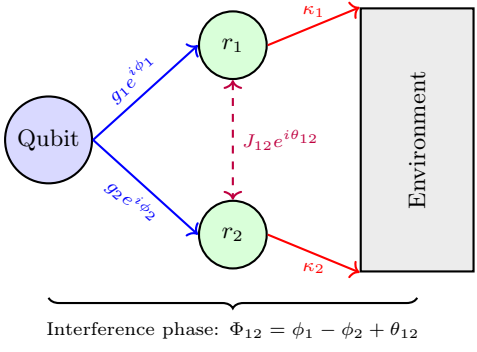
\begin{figure}[t]
\centering
\begin{tikzpicture}[scale=0.7]
    % Qubit
    \node[draw, thick, circle, minimum size=1.1cm, fill=blue!15] (Q) at (0,0) {\small Qubit};
    
    % Mode 1
    \node[draw, thick, circle, minimum size=0.9cm, fill=green!15] (M1) at (3.5,1.8) {\small $r_1$};
    
    % Mode 2
    \node[draw, thick, circle, minimum size=0.9cm, fill=green!15] (M2) at (3.5,-1.8) {\small $r_2$};
    
    % Environment
    \node[draw, thick, rectangle, minimum width=1.5cm, minimum height=3.5cm, fill=gray!15] (env) at (7,0) {\rotatebox{90}{\small Environment}};
    
    % Couplings from qubit to modes
    \draw[thick, ->, blue] (Q.east) -- node[above, font=\scriptsize, sloped] {$g_1 e^{i\phi_1}$} (M1.west);
    \draw[thick, ->, blue] (Q.east) -- node[below, font=\scriptsize, sloped] {$g_2 e^{i\phi_2}$} (M2.west);
    
    % Inter-mode coupling
    \draw[thick, dashed, <->, purple] (M1.south) -- node[right, font=\scriptsize] {$J_{12}e^{i\theta_{12}}$} (M2.north);
    
    % Decay to environment
    \draw[thick, ->, red] (M1.east) -- node[above, font=\scriptsize] {$\kappa_1$} (env.north west);
    \draw[thick, ->, red] (M2.east) -- node[below, font=\scriptsize] {$\kappa_2$} (env.south west);
    
    % Phase annotation
    \draw[thick, decorate, decoration={brace, amplitude=4pt, mirror}] (0,-3) -- (7,-3);
    \node[below, font=\scriptsize] at (3.5,-3.3) {Interference phase: $\Phi_{12} = \phi_1 - \phi_2 + \theta_{12}$};
    
\end{tikzpicture}
\caption{Multi-mode Purcell interference geometry. The qubit couples to two environmental modes $r_1$ and $r_2$ with complex amplitudes $g_k e^{i\phi_k}$ determined by the junction participation and mode phase at the junction location. Inter-mode coupling $J_{12}e^{i\theta_{12}}$ creates alternative decay pathways that can interfere with direct decay. Both modes decay to a common environment with rates $\kappa_1$ and $\kappa_2$. The total decay rate depends on the interference phase $\Phi_{12} = \phi_1 - \phi_2 + \theta_{12}$, which combines the junction coupling phases with the inter-mode coupling phase.}
\label{fig:multimode_purcell}
\end{figure}

The geometry of multi-mode interference is illustrated in Fig.~\ref{fig:multimode_purcell}, showing how the qubit couples to multiple environmental modes that subsequently decay through a common port. The interference between different decay pathways depends on the phases accumulated along each path.

%------------------------------------------------------------------------------
\subsection{Perturbative Expansion and Interference}
\label{sec:perturbative_expansion}
%------------------------------------------------------------------------------

When inter-mode couplings $J_{kl}$ are weak compared to mode separations, the qubit decay rate can be computed perturbatively from the non-Hermitian Hamiltonian~\eqref{eq:H_eff_matrix}. The complex eigenvalue of the qubit-like state is $\lambda_q = \omega_q + \Sigma^{(2)} + \Sigma^{(3)} + O(g^4, J^2)$, where the self-energy contributions are computed order by order.

The second-order self-energy gives the direct Purcell contribution:
\begin{equation}
\Sigma^{(2)} = \sum_{k=1}^{M} \frac{g_k^2}{\Delta_k + i\kappa_k/2},
\label{eq:Sigma_2}
\end{equation}
where $\Delta_k = \omega_q - \omega_k$. Taking twice the imaginary part yields the zeroth-order decay rate:
\begin{equation}
\Gamma^{(0)} = -2\,\mathrm{Im}[\Sigma^{(2)}] = \sum_{k=1}^{M} \frac{g_k^2 \kappa_k}{\Delta_k^2 + (\kappa_k/2)^2},
\label{eq:Gamma_0}
\end{equation}
reproducing the incoherent sum without interference.

The third-order self-energy from inter-mode coupling is~\cite{BakrPurcell2025}
\begin{equation}
\Sigma^{(3)} = \sum_{k < l} \frac{2 g_k g_l J_{kl} \cos\Phi_{kl}}{(\Delta_k + i\kappa_k/2)(\Delta_l + i\kappa_l/2)},
\label{eq:Sigma_3}
\end{equation}
where $\Phi_{kl} = \phi_k - \phi_l + \theta_{kl}$ is the total interference phase. The interference contribution to the decay rate is
\begin{equation}
\Gamma_{kl}^{\mathrm{int}} = \frac{2(\kappa_k\Delta_l + \kappa_l\Delta_k) g_k g_l J_{kl} \cos\Phi_{kl}}{[\Delta_k^2 + (\kappa_k/2)^2][\Delta_l^2 + (\kappa_l/2)^2]}.
\label{eq:Gamma_kl_int}
\end{equation}
This formula matches the result derived from the Heisenberg-Langevin equations in Ref.~\cite{BakrPurcell2025}.

The complete decay rate is
\begin{equation}
\Gamma_{\mathrm{eff}} = \sum_{k=1}^{M} \Gamma_k^{\mathrm{direct}} + \sum_{k < l} \Gamma_{kl}^{\mathrm{int}},
\label{eq:Gamma_total}
\end{equation}
where the interference terms can be positive or negative depending on $\Phi_{kl}$ and the detuning signs.

%------------------------------------------------------------------------------
\subsection{Conditions for Purcell Suppression}
\label{sec:purcell_suppression}
%------------------------------------------------------------------------------

Complete Purcell suppression requires $G(\omega_q) = 0$, which demands the interference terms to exactly cancel the direct contributions. Since direct contributions are positive, this requires destructive interference with $\Gamma_{kl}^{\mathrm{int}} < 0$. From Eq.~\eqref{eq:Gamma_kl_int}, destructive interference occurs when $(\kappa_k\Delta_l + \kappa_l\Delta_k)\cos\Phi_{kl} < 0$. This can be achieved in several ways: if the qubit frequency lies between two mode frequencies so that $\Delta_k$ and $\Delta_l$ have opposite signs, appropriate linewidth asymmetry with $\cos\Phi_{kl} > 0$ gives destructive interference; alternatively, if both detunings have the same sign, $\cos\Phi_{kl} < 0$ produces the required cancellation.

Four necessary conditions must be satisfied. First, non-zero inter-mode coupling ($J_{kl} \neq 0$) creates alternative decay pathways that can interfere. Second, asymmetric detunings are required since symmetric placement gives symmetric interference that cannot cancel positive direct terms. Third, appropriate phase alignment is necessary, with $\cos\Phi_{kl}$ having the correct sign. Fourth, magnitude matching ensures the interference term is large enough to cancel the direct terms.

In the CF framework, these conditions translate to: the Jacobi matrix must have non-zero off-diagonal elements; the CF poles must be asymmetrically placed relative to $\omega_q$; the CF coefficients must combine for destructive interference; and the off-diagonal elements must be sufficiently large. Complete Purcell suppression occurs at discrete frequencies corresponding to transmission zeros in the conductance $G(\omega)$. Setting $G(\omega) = 0$ and solving for $\omega$ gives the suppression frequencies. Without inter-mode coupling, each term in Eq.~\eqref{eq:G_multimode} is positive, so no zeros exist. With inter-mode coupling, the modified admittance can develop zeros where destructive interference completely suppresses dissipation.

An interferometric Purcell filter (IPF) is a multi-mode structure engineered to place a transmission zero at the qubit frequency while maintaining finite conductance at the readout frequency. The CF framework enables systematic IPF design: choose the desired zero location $\omega_q$ and pass-band location $\omega_r$; determine CF pole/residue structure achieving $G(\omega_q) = 0$ with $G(\omega_r) > 0$; synthesize the corresponding Cauer ladder network; and verify performance via full CF calculation. Table~\ref{tab:purcell_cf} summarizes the correspondence between Purcell physics and continued fraction concepts.

\begin{table}[t]
\caption{Purcell physics in the continued fraction framework.}
\label{tab:purcell_cf}
\begin{ruledtabular}
\begin{tabular}{ll}
Physical concept & CF interpretation \\
\midrule
Decay rate $\Gamma$ & Spectral function $G(\omega_q)$ \\
Mode participation & CF residue $(\phi_k^J)^2$ \\
Mode linewidth & Imaginary part of CF pole \\
Multi-mode interference & Off-diagonal CF elements \\
Purcell suppression & Transmission zero \\
Lamb shift & Real part of self-energy \\
Qubit frequency & Root of $i\omega Y_{\mathrm{in}}(i\omega) + 1/L_J = 0$ \\
\end{tabular}
\end{ruledtabular}
\end{table}
%%%%%%%%%%%%%%%%%%%%%%%%%%%%%%%%%%%%%%%%%%%%%%%%%%%%%%%%%%%%%%%%%%%%%%%%%%%%%%%
%%%%%%%%%%%%%%%%%%%%%%%%%%%%%%%%%%%%%%%%%%%%%%%%%%%%%%%%%%%%%%%%%%%%%%%%%%%%%
\section{Coupling Regimes in Circuit QED}
\label{sec:coupling_regimes}

The interaction between superconducting qubits and microwave resonators spans several distinct physical regimes, each characterized by different approximations, observables, and theoretical methods. This section provides a unified classification based on two dimensionless ratios that emerge naturally from the continued fraction (CF) framework: the normalized coupling $\eta \equiv g/\omega_r$ comparing the coupling strength $g$ to the resonator frequency $\omega_r$, and the dispersive parameter $\xi \equiv g/|\Delta|$ comparing $g$ to the qubit-resonator detuning $\Delta = \omega_q - \omega_r$. These parameters determine when the rotating-wave approximation (RWA) is valid, when perturbation theory applies, and when the full quantum Rabi model must be solved exactly. Throughout this section we set $\hbar = 1$ unless otherwise noted. Figure~\ref{fig:coupling_regimes} shows the coupling regime phase diagram with experimental realizations marked. Standard transmon readout operates deep in the dispersive regime ($\eta \ll 0.1$, $\xi < 0.1$), while ultrastrong coupling experiments have pushed $\eta$ toward and beyond unity.

\begin{figure}[t]
\centering
\includegraphics[width=0.48\textwidth]{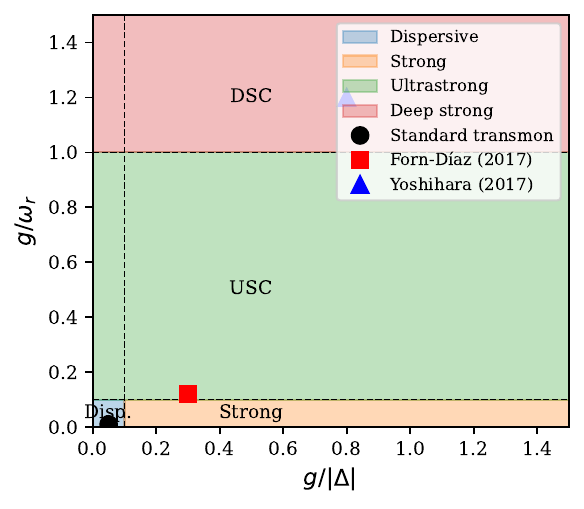}
\caption{Coupling regime phase diagram in the $(g/|\Delta|, g/\omega_r)$ plane. The dispersive regime (blue) supports perturbative treatment; the strong coupling regime (orange) requires exact diagonalization of the Jaynes-Cummings model; ultrastrong coupling (green) necessitates the full Rabi model without RWA; deep strong coupling (red) features substantial ground-state photon populations. Experimental points: standard transmon readout (circle), Forn-D\'iaz flux qubit devices~\cite{FornDiaz2017} (square), Yoshihara deep-strong devices~\cite{Yoshihara2017} (triangle).}
\label{fig:coupling_regimes}
\end{figure}

%------------------------------------------------------------------------------
\subsection{The Jaynes-Cummings and Quantum Rabi Hamiltonians}
\label{sec:jc_rabi}
%------------------------------------------------------------------------------

The light-matter interaction between a two-level system (qubit) and a single bosonic mode (resonator) takes the form
\begin{equation}
H = \frac{\omega_q}{2}\sigma_z + \omega_r a^\dagger a + g\sigma_x(a + a^\dagger),
\label{eq:rabi_full}
\end{equation}
where $\omega_q$ is the qubit transition frequency, $\omega_r$ is the resonator frequency, $g$ is the coupling strength, $\sigma_{x,z}$ are Pauli matrices, and $a$, $a^\dagger$ are the bosonic annihilation and creation operators satisfying $[a, a^\dagger] = 1$. This is the quantum Rabi model~\cite{Rabi1936, Rabi1937}, which conserves parity $\Pi = \sigma_z(-1)^{a^\dagger a}$ but not excitation number.

%\footnote{Some authors define parity with an overall minus sign, $\Pi = -\sigma_z(-1)^{a^\dagger a}$; this changes which sector is labeled even vs.\ odd but does not affect the physics.}

In the rotating-wave approximation (RWA), valid when $g \ll \omega_q, \omega_r$, the counter-rotating terms $\sigma_+ a^\dagger$ and $\sigma_- a$ are neglected, yielding the Jaynes-Cummings Hamiltonian~\cite{JaynesCummings1963}
\begin{equation}
H_{\mathrm{JC}} = \frac{\omega_q}{2}\sigma_z + \omega_r a^\dagger a + g(\sigma_+ a + \sigma_- a^\dagger),
\label{eq:jc}
\end{equation}
where $\sigma_\pm = (\sigma_x \pm i\sigma_y)/2$ are the raising and lowering operators. The JC model conserves excitation number $N = a^\dagger a + \sigma_+ \sigma_-$ and is exactly solvable within each $N$-excitation manifold.

The RWA breakdown occurs when the counter-rotating terms produce energy shifts comparable to other scales. The leading correction is the Bloch-Siegert shift~\cite{BlochSiegert1940}
\begin{equation}
\delta\omega_{\mathrm{BS}} = \frac{g^2}{\omega_q + \omega_r},
\label{eq:bloch_siegert}
\end{equation}
valid to leading order in $g/(\omega_q + \omega_r)$ and away from small denominators. This shift becomes significant when $\eta = g/\omega_r \gtrsim 0.1$. At this threshold, the JC model fails and the full Rabi model~\eqref{eq:rabi_full} must be used.

%------------------------------------------------------------------------------
\subsection{Regime Classification}
\label{sec:regime_classification}
%------------------------------------------------------------------------------

We classify coupling regimes according to the dimensionless parameters $\eta$ and $\xi$:

\paragraph{Dispersive regime ($\xi < 0.1$, $\eta \ll 0.1$).} The qubit-resonator detuning $|\Delta|$ is large compared to the coupling $g$, and both are small compared to the mode frequencies. Perturbation theory in $g/|\Delta|$ converges rapidly. The qubit and resonator remain nearly separable, with weak hybridization producing the dispersive shift $\chi \propto g^2/\Delta$ that enables quantum nondemolition readout. The RWA is excellent, and truncation to two qubit levels suffices.

\begin{table}[t]
\centering
\caption{Coupling regime classification. Here $\eta = g/\omega_r$ and $\xi = g/|\Delta|$.}
\label{tab:regimes}
\begin{tabular}{lcccl}
\hline\hline
Regime & $\eta$ & $\xi$ & RWA & Method \\
\hline
Dispersive & $\ll 0.1$ & $< 0.1$ & Yes & Perturbation theory \\
Strong & $< 0.1$ & $\geq 0.1$ & Yes & JC diagonalization \\
Ultrastrong & $0.1$--$1$ & any & No & Rabi (parity CF) \\
Deep strong & $> 1$ & any & No & Rabi (parity CF) \\
\hline\hline
\end{tabular}
\end{table}

\paragraph{Strong coupling regime ($g > \kappa, \gamma$, $\eta < 0.1$).} The coupling exceeds the linewidths, enabling resolved vacuum Rabi oscillations, but remains small compared to the mode frequencies. The hybridization parameter $\xi = g/|\Delta|$ may exceed 0.1, in which case perturbation theory in $g/|\Delta|$ fails, but the RWA remains valid. Exact diagonalization of the JC Hamiltonian~\eqref{eq:jc} is required. The Jaynes-Cummings ladder of dressed states $|n,\pm\rangle$ emerges, with splittings $2g\sqrt{n}$.

\paragraph{Ultrastrong coupling (USC) regime ($0.1 \leq \eta < 1$).} The coupling becomes a significant fraction of the mode frequencies~\cite{Kockum2019, FornDiaz2019}. The RWA breaks down: counter-rotating terms produce Bloch-Siegert shifts~\eqref{eq:bloch_siegert} exceeding 1\% of transition frequencies. The ground state acquires virtual photon population $\langle n \rangle_0 \sim \eta^2$. Parity $\Pi$ remains a good quantum number, but excitation number $N$ does not. The full Rabi model~\eqref{eq:rabi_full} must be solved.

\paragraph{Deep strong coupling (DSC) regime ($\eta \geq 1$).} The coupling exceeds the mode frequencies~\cite{Casanova2010}. Light and matter become inextricably entangled even in the ground state, which acquires substantial photon population $\langle n \rangle_0 \sim \eta^2$. The photon number distribution can become bimodal. Standard perturbative expansions fail entirely. The CF framework provides a systematic approach through parity-sector tridiagonalization, connected to the exact solution via Braak's integrability proof~\cite{Braak2011}. Note that DSC is distinct from the spin-boson localization transition, which is controlled by $\alpha_{\mathrm{SB}}$ rather than $\eta$. Table~\ref{tab:regimes} summarizes the regime boundaries and applicable methods.

%------------------------------------------------------------------------------
\subsection{The Transmon in the Dispersive Regime}
\label{sec:transmon_dispersive}
%------------------------------------------------------------------------------

The transmon qubit~\cite{Koch2007} is a charge qubit operated at large $E_J/E_C$ ratio, where $E_J$ is the Josephson energy and $E_C = e^2/(2C_\Sigma)$ is the charging energy with $C_\Sigma$ the total capacitance. The Hamiltonian in the charge basis $\{|n\rangle\}$ is
\begin{equation}
H_{\mathrm{tr}} = 4E_C(\hat{n} - n_g)^2 - E_J\cos\hat{\varphi},
\label{eq:transmon}
\end{equation}
where $\hat{n}$ is the Cooper-pair number operator, $n_g = C_g V_g/(2e)$ is the offset charge, and $\hat{\varphi}$ is the superconducting phase conjugate to $\hat{n}$. The phase operator acts via $e^{\pm i\hat{\varphi}}|n\rangle = |n \mp 1\rangle$, rendering the Hamiltonian tridiagonal in the charge basis with CF coefficients
\begin{equation}
a_n = 4E_C(n - n_g)^2, \qquad b_n = -\frac{E_J}{2}.
\label{eq:transmon_cf}
\end{equation}
Note that $b_n$ is independent of $n$, the off-diagonal coupling is uniform, arising from the expansion $-E_J\cos\hat{\varphi} = -(E_J/2)(e^{i\hat{\varphi}} + e^{-i\hat{\varphi}})$.

Figure~\ref{fig:transmon_spectrum}(a) shows the transmon energy levels as a function of $E_J/E_C$. At large ratio, the transition frequencies approach $\omega_{01} \approx \sqrt{8E_J E_C} - E_C$ with anharmonicity $\alpha = \omega_{12} - \omega_{01} \approx -E_C$. The charge dispersion is exponentially suppressed: $\partial\omega_{01}/\partial n_g \propto e^{-\sqrt{8E_J/E_C}}$, providing protection against charge noise while maintaining sufficient anharmonicity for qubit operation.

\begin{figure}[t]
\centering
\includegraphics[width=0.5\textwidth]{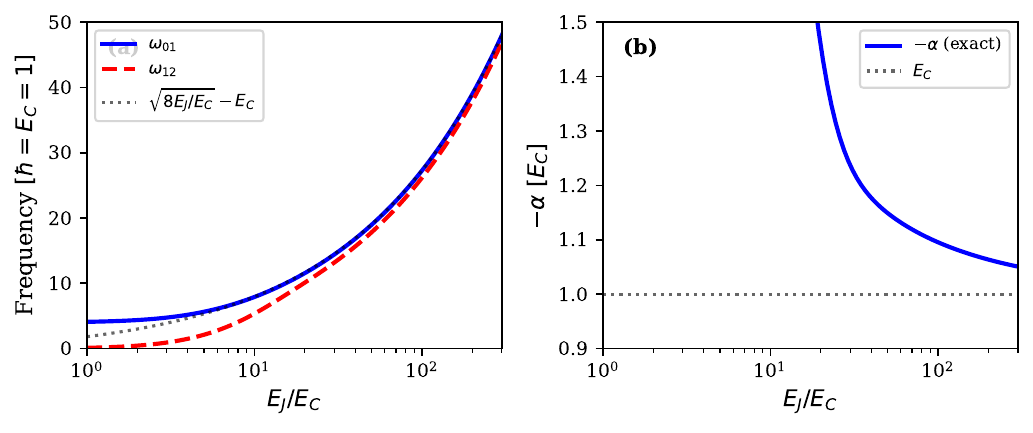}
\caption{Transmon energy spectrum from charge-basis diagonalization. (a) Transition frequencies $\omega_{01}$ and $\omega_{12}$ versus $E_J/E_C$. The dashed line shows the plasma frequency approximation $\sqrt{8E_JE_C} - E_C$. (b) Anharmonicity $-\alpha = \omega_{01} - \omega_{12}$ versus $E_J/E_C$. At large ratio, $|\alpha| \to E_C$ (dotted line). The transmon regime ($E_J/E_C \gtrsim 50$) balances charge noise immunity against addressability.}
\label{fig:transmon_spectrum}
\end{figure}

When coupled to a resonator via a capacitance $C_c$, the interaction Hamiltonian is
\begin{equation}
H_{\mathrm{int}} = 2e\beta V_{\mathrm{zpf}} \hat{n}(a + a^\dagger),
\label{eq:capacitive_coupling}
\end{equation}
where $\beta = C_c/C_\Sigma$ is the coupling capacitance ratio and $V_{\mathrm{zpf}} = \sqrt{\omega_r/(2C_r)}$ is the resonator zero-point voltage. The coupling strength between transmon states $|j\rangle$ and $|k\rangle$ is
\begin{equation}
g_{jk} = 2e\beta V_{\mathrm{zpf}} \langle j|\hat{n}|k\rangle.
\label{eq:coupling_matrix_element}
\end{equation}
The charge matrix elements $n_{jk} = \langle j|\hat{n}|k\rangle$ are computed from the CF eigenstates. 
%In the harmonic limit, $|n_{12}|/|n_{01}| \to \sqrt{2}$; for a realistic transmon at $E_J/E_C = 60$, the ratio is approximately 1.37, reflecting the residual anharmonicity.

%------------------------------------------------------------------------------
\subsection{Dispersive Shift Derivation}
\label{sec:dispersive_shift}
%------------------------------------------------------------------------------

The dispersive shift $\chi$ is the qubit-state-dependent frequency pull on the resonator:
\begin{equation}
\omega_r^{|e\rangle} - \omega_r^{|g\rangle} = 2\chi.
\label{eq:chi_def}
\end{equation}
It arises from virtual photon exchange between the qubit and resonator. The CF framework provides a systematic derivation through the excitation-manifold structure.

\paragraph{One-excitation manifold.} In the subspace $\{|g,1\rangle, |e,0\rangle\}$, the JC Hamiltonian is
\begin{equation}
H^{(1)} = \begin{pmatrix}
\omega_r & g_{01} \\
g_{01} & \omega_{01}
\end{pmatrix},
\label{eq:H1}
\end{equation}
which is a $2\times 2$ CF with $a_0 = \omega_r$, $a_1 = \omega_{01}$, $b_1 = g_{01}$. The poles satisfy $(z - \omega_r)(z - \omega_{01}) = g_{01}^2$. In the dispersive limit $|g_{01}| \ll |\Delta_{01}|$, expansion to $O(g^2)$ yields
\begin{equation}
z_- \approx \omega_r + \frac{g_{01}^2}{\Delta_{01}}, \qquad z_+ \approx \omega_{01} - \frac{g_{01}^2}{\Delta_{01}},
\label{eq:one_exc_shifts}
\end{equation}
where $\Delta_{01} = \omega_{01} - \omega_r$. The resonator-like state is pushed by $g_{01}^2/\Delta_{01}$; this is the AC Stark shift of the resonator due to virtual excitation of the qubit.

\paragraph{Two-excitation manifold.} To capture the transmon's anharmonicity, we include the $|f\rangle$ (second excited) state. The subspace $\{|g,2\rangle, |e,1\rangle, |f,0\rangle\}$ has Hamiltonian
\begin{equation}
H^{(2)} = \begin{pmatrix}
2\omega_r & \sqrt{2}g_{01} & 0 \\
\sqrt{2}g_{01} & \omega_{01} + \omega_r & g_{12} \\
0 & g_{12} & \omega_{01} + \omega_{12}
\end{pmatrix},
\label{eq:H2}
\end{equation}
which is a $3\times 3$ tridiagonal (CF) matrix with coefficients $a_0 = 2\omega_r$, $b_1 = \sqrt{2}g_{01}$, $a_1 = \omega_{01} + \omega_r$, $b_2 = g_{12}$, $a_2 = \omega_{01} + \omega_{12}$.

\paragraph{Koch formula.} Combining the level repulsions from both manifolds, the dispersive shift is~\cite{Koch2007}
\begin{equation}
\boxed{\chi = \frac{g_{01}^2}{\Delta_{01}} - \frac{g_{12}^2}{2\Delta_{12}},}
\label{eq:koch_formula}
\end{equation}
where $\Delta_{12} = \omega_{12} - \omega_r = \Delta_{01} + \alpha$. The first term arises from $|g\rangle \leftrightarrow |e\rangle$ hybridization; the second from $|e\rangle \leftrightarrow |f\rangle$ hybridization. The factor of $1/2$ in the second term reflects the asymmetric contribution: the $|f\rangle$-state dressing affects $\omega_r^{|e\rangle}$ but not $\omega_r^{|g\rangle}$.

Figure~\ref{fig:dispersive_shift} shows the dispersive shift versus $g/|\Delta|$ comparing the Koch formula~\eqref{eq:koch_formula} to full JC diagonalization. The Koch formula is derived from second-order perturbation theory in $g/\Delta$; full JC diagonalization includes all orders. Agreement is excellent ($<1\%$ error) for $g/|\Delta| < 0.1$; the small discrepancy reflects $O(g^4/\Delta^3)$ corrections. At larger coupling, higher excitation manifolds contribute and the second-order formula becomes less accurate.

\begin{figure}[t]
\centering
\includegraphics[width=0.5\textwidth]{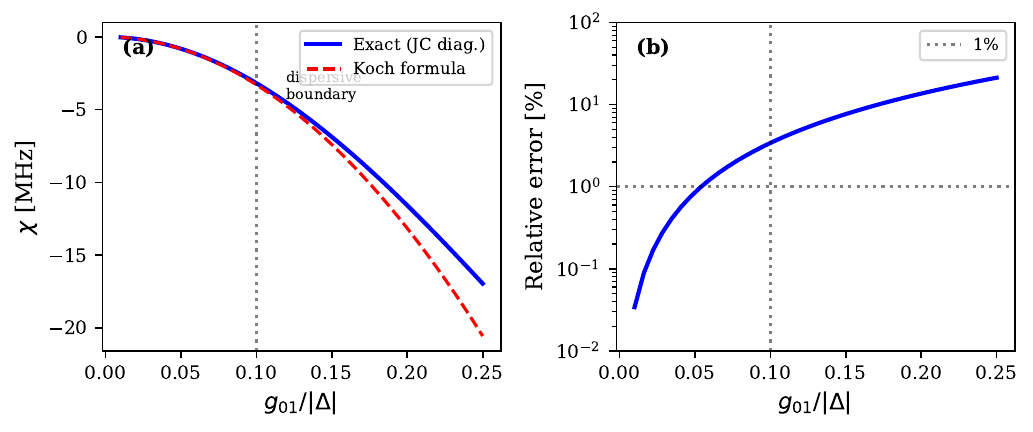}
\caption{Dispersive shift $\chi$ for a transmon-resonator system with $E_J/E_C = 60$, $\omega_r/2\pi = 7.0$ GHz. (a) Magnitude of $\chi$ versus normalized coupling $g_{01}/|\Delta|$. Solid blue: full JC diagonalization (all orders); dashed red: Koch formula~\eqref{eq:koch_formula}~\cite{Koch2007} (second-order in $g/\Delta$). The vertical dotted line marks the dispersive boundary $g/|\Delta| = 0.1$. (b) Relative error of the Koch formula. The second-order approximation maintains $<10\%$ accuracy for $g/|\Delta| < 0.15$.}
\label{fig:dispersive_shift}
\end{figure}

\paragraph{Design application.} Equation~\eqref{eq:koch_formula} serves as a design tool: given a target $\chi$ for dispersive readout, one solves for the required coupling $g_{01}$. For example, with $E_J/E_C = 60$ ($|n_{12}|/|n_{01}| \approx 1.37$), $\omega_{01}/2\pi = 5.2$ GHz, $\omega_r/2\pi = 7.0$ GHz, and target $|\chi|/2\pi = 0.5$ MHz:
\begin{equation}
g_{01}^2 \left|\frac{1}{\Delta_{01}} - \frac{(1.37)^2}{2\Delta_{12}}\right| = 0.5~\text{MHz}.
\end{equation}
With $\Delta_{01}/2\pi = -1.8$ GHz and $\Delta_{12}/2\pi = -2.1$ GHz, this yields $g_{01}/2\pi \approx 70$ MHz.

%------------------------------------------------------------------------------
\subsection{Vacuum Rabi Splitting and Dressed States}
\label{sec:vacuum_rabi}
%------------------------------------------------------------------------------

At resonance ($\Delta = 0$), the qubit and resonator hybridize into dressed states separated by the vacuum Rabi splitting $2g$~\cite{Wallraff2004}. This is the hallmark of strong coupling: energy exchange between light and matter faster than dissipation.

The one-excitation dressed states are
\begin{equation}
|1,\pm\rangle = \frac{1}{\sqrt{2}}(|g,1\rangle \pm |e,0\rangle),
\label{eq:dressed_one}
\end{equation}
with energies $E_{1,\pm} = \omega_r \pm g$ (taking $\omega_q = \omega_r$). Spectroscopically, one observes an avoided crossing as the qubit is tuned through resonance, with minimum splitting $2g$.

Figure~\ref{fig:vacuum_rabi}(a) shows the avoided crossing for $g/2\pi = 100$ MHz. The dressed states follow hyperbolic branches $E_\pm = (\omega_q + \omega_r)/2 \pm \sqrt{\Delta^2 + 4g^2}/2$, asymptoting to the bare states far from resonance. Panel (b) shows the splitting at resonance versus coupling strength.

\begin{figure}[t]
\centering
\includegraphics[width=0.5\textwidth]{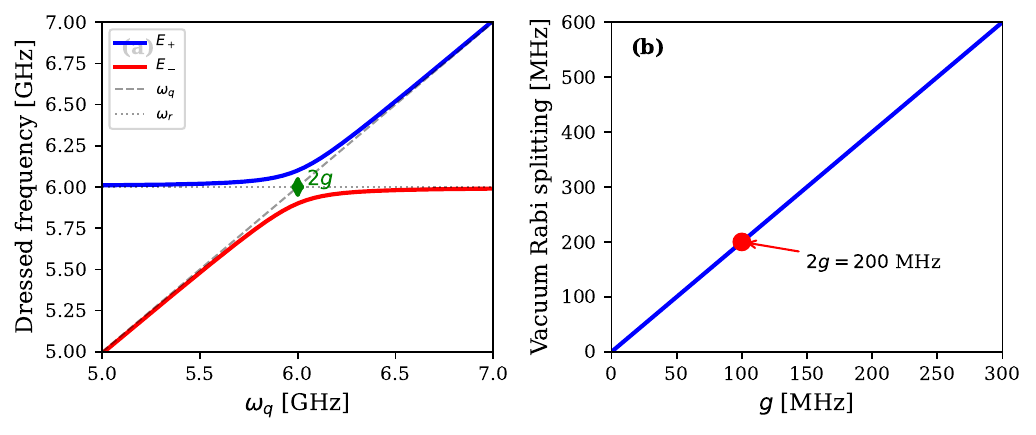}
\caption{Vacuum Rabi physics in the strong coupling regime. (a) Dressed state energies versus bare qubit frequency $\omega_q$ for fixed $\omega_r/2\pi = 6.0$ GHz and $g/2\pi = 100$ MHz. Solid lines: exact eigenvalues; dashed: bare qubit and resonator frequencies. The avoided crossing has minimum splitting $2g = 200$ MHz. (b) Vacuum Rabi splitting at resonance ($\omega_q = \omega_r$) versus coupling strength.}
\label{fig:vacuum_rabi}
\end{figure}
For higher excitation manifolds, the dressed states have splittings $2g\sqrt{n}$:
\begin{equation}
|n,\pm\rangle = \frac{1}{\sqrt{2}}(|g,n\rangle \pm |e,n-1\rangle), \qquad E_{n,\pm} = n\omega_r \pm g\sqrt{n}.
\label{eq:dressed_n}
\end{equation}
This $\sqrt{n}$ dependence is the Jaynes-Cummings nonlinearity: the ladder is not equally spaced. It enables photon-number-resolved spectroscopy and forms the basis for deterministic photon generation protocols.

%------------------------------------------------------------------------------
\subsection{Ultrastrong Coupling: Beyond the RWA}
\label{sec:usc}
%------------------------------------------------------------------------------

When $\eta = g/\omega_r$ exceeds approximately 0.1, the rotating-wave approximation breaks down~\cite{Kockum2019, FornDiaz2019}. The counter-rotating terms $\sigma_+ a^\dagger$ and $\sigma_- a$ in the full Rabi Hamiltonian~\eqref{eq:rabi_full} can no longer be neglected. Their leading effect is the Bloch-Siegert shift~\eqref{eq:bloch_siegert}, which adds to the Lamb shift from virtual photon processes.

The Rabi model conserves parity $\Pi = \sigma_z(-1)^{a^\dagger a}$ but not excitation number. The Hilbert space decomposes into two sectors: even parity ($\Pi = +1$) spanned by $|e,0\rangle, |g,1\rangle, |e,2\rangle, \ldots$, and odd parity ($\Pi = -1$) spanned by $|g,0\rangle, |e,1\rangle, |g,2\rangle, \ldots$. Within each sector, the Hamiltonian is tridiagonal~\cite{Casanova2010} with CF coefficients
\begin{align}
&\text{Even:} & a_k &= (-1)^k \frac{\omega_q}{2} + k\omega_r, & b_{k+1} &= g\sqrt{k+1}, \label{eq:rabi_cf_even} \\
&\text{Odd:} & a_k &= (-1)^{k+1} \frac{\omega_q}{2} + k\omega_r, & b_{k+1} &= g\sqrt{k+1}. \label{eq:rabi_cf_odd}
\end{align}
The only difference between sectors is the sign pattern in the qubit contribution to $a_k$, reflecting the alternating qubit state ($|g\rangle$ vs.\ $|e\rangle$) along each parity chain.

Figure~\ref{fig:usc_spectrum}(a) shows the Rabi model spectrum as a function of $g/\omega_r$. At small coupling, the levels follow the JC structure; as $g/\omega_r$ increases toward unity, the counter-rotating terms produce level anticrossings and eventually qualitative restructuring. For the unbiased model ($\omega_q, \omega_r > 0$), the ground state lies in the odd parity sector under our convention, and develops a nonzero photon population [panel (b)]. In the weak-coupling limit:
\begin{equation}
\langle n \rangle_0 = \langle 0|a^\dagger a|0\rangle \approx 
\frac{g^2}{(\omega_q + \omega_r)^2} + O(g^4) \quad \text{for } g \ll \omega_q + \omega_r.
\label{eq:ground_photons}
\end{equation}
The sum frequency $\omega_q + \omega_r$ appears because the virtual photon population arises from counter-rotating terms coupling states differing in energy by $\omega_q + \omega_r$. At resonance ($\omega_q = \omega_r$), this gives $\langle n \rangle_0 \approx g^2/(4\omega_r^2) = \eta^2/4$. Exact numerical diagonalization gives $\langle n \rangle_0 \approx 0.68$ at $g/\omega_r = 1$ for the resonant case; the quadratic approximation applies only for $g/\omega_r \ll 1$.

\begin{figure}[t]
\centering
\includegraphics[width=0.5\textwidth]{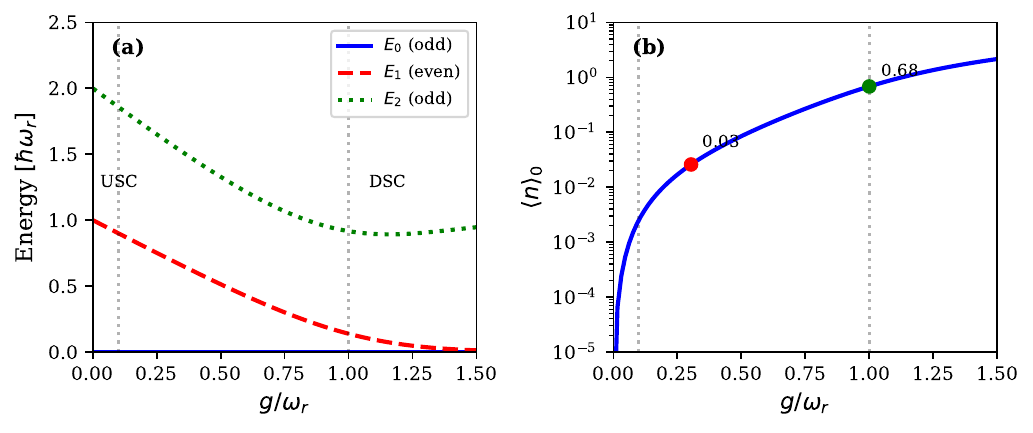}
\caption{Ultrastrong and deep strong coupling in the quantum Rabi model ($\omega_q = \omega_r$). (a) Energy spectrum versus $g/\omega_r$, referenced to the ground state. Blue solid: ground state (odd parity); red dashed: first excited (even); green dotted: second excited (odd). Vertical lines mark approximate USC ($\eta \approx 0.1$) and DSC ($\eta \approx 1$) thresholds. (b) Ground-state photon number $\langle n \rangle_0$ versus $g/\omega_r$ for the resonant case.}
\label{fig:usc_spectrum}
\end{figure}

\paragraph{Fock space truncation.} Numerical solution requires truncating the infinite Fock space. The required dimension grows with coupling strength; for DSC with $\eta \sim 1$, tens of Fock states per parity sector may be required depending on the desired precision. Convergence should be verified by monitoring stability of eigenvalues and relevant matrix elements. The CF backward recurrence provides a systematic diagnostic: if $D_N(z)/D_{N-1}(z)$ has stabilized to the desired precision, truncation is adequate.

%------------------------------------------------------------------------------
\subsection{Experimental Realizations}
\label{sec:experimental}
%------------------------------------------------------------------------------

Ultrastrong coupling has been achieved in several superconducting circuit platforms. Forn-D\'iaz \textit{et al.}~\cite{FornDiaz2017} galvanically coupled flux qubits to transmission line resonators, achieving $\eta \approx 0.1$--$0.2$ with clear observation of Bloch-Siegert shifts; we validate our framework against their measurements in Sec.~\ref{sec:usc_spinboson}. Niemczyk \textit{et al.}~\cite{Niemczyk2010} reached $\eta \approx 0.12$ with flux-tunable transmons, and Yoshihara \textit{et al.}~\cite{Yoshihara2017} achieved $\eta \approx 1.3$ in the deep strong coupling regime. The CF framework applies uniformly across these realizations: only the parameter regime changes, not the mathematical structure. The parity-sector tridiagonalization~\eqref{eq:rabi_cf_even}--\eqref{eq:rabi_cf_odd} provides exact eigenvalues via the backward recurrence.

%==============================================================================
% SECTION: PHYSICAL OBSERVABLES FROM CF STRUCTURE
%==============================================================================

\section{Physical Observables from CF Structure}
\label{sec:observables}

The tridiagonal (Jacobi) structure of circuit QED Hamiltonians provides direct access to physical observables through the continued fraction formalism. Energy levels emerge as poles of the resolvent; matrix elements appear as residues; selection rules follow from block structure. This section derives these connections systematically, providing both analytical formulas for perturbative regimes and numerical methods for arbitrary coupling. This section develops the resolvent-CF connection, extracts eigenvalues via the backward recurrence, and computes matrix elements from residues. In addition, we derive parity selection rules for the quantum Rabi model, analyze decoherence rates, provide numerical examples for transmons, and summarizes the design methodology.

%------------------------------------------------------------------------------
\subsection{The Resolvent and Its Continued Fraction Expansion}
\label{sec:resolvent_cf}
%------------------------------------------------------------------------------

For a tridiagonal Hamiltonian with diagonal elements $\{a_n\}$ and off-diagonal elements $\{b_n\}$ (with $b_0 \equiv 0$),
\begin{equation}
H = \begin{pmatrix}
a_0 & b_1 & 0 & 0 & \cdots \\
b_1 & a_1 & b_2 & 0 & \cdots \\
0 & b_2 & a_2 & b_3 & \cdots \\
\vdots & & \ddots & \ddots & \ddots
\end{pmatrix},
\label{eq:jacobi}
\end{equation}
the resolvent (Green's function) $G(z) = (z - H)^{-1}$ has matrix elements computable via continued fractions~\cite{Stieltjes1894, Chihara1978, Simon2005OPUC}. The $(0,0)$ element, which gives the spectral function from the initial state $|0\rangle$, is
\begin{equation}
G_{00}(z) = \langle 0|(z-H)^{-1}|0\rangle = \cfrac{1}{z - a_0 - \cfrac{b_1^2}{z - a_1 - \cfrac{b_2^2}{z - a_2 - \cdots}}}.
\label{eq:cf_resolvent}
\end{equation}
This follows from the recursion relations of orthogonal polynomials associated with the Jacobi matrix, or equivalently, from Gaussian elimination on $(z-H)$. The correspondence between the tridiagonal matrix and its CF representation can be stated. The CF coefficients $\{a_n, b_n\}$ encode the same information as the matrix elements, but the CF form directly reveals spectral properties.

%------------------------------------------------------------------------------
\subsection{Eigenvalues from CF Poles}
\label{sec:eigenvalues}
%------------------------------------------------------------------------------

The eigenvalues $\{E_k\}$ of $H$ are the poles of $G_{00}(z)$. For finite truncation to $N$ levels, these are the roots of the denominator polynomial. The CF structure provides an efficient algorithm: the backward recurrence. Define the sequence $D_n(z)$ by
\begin{multline}
D_N(z) = z - a_N, \\
D_n(z) = z - a_n - \frac{b_{n+1}^2}{D_{n+1}(z)}, 
\qquad n = N-1, \ldots, 0 .
\label{eq:backward_recurrence}
\end{multline}
Then $G_{00}(z) = 1/D_0(z)$, and the eigenvalues satisfy $D_0(E_k) = 0$. The poles of $D_n(z)$ for $n > 0$ provide ``inner'' eigenvalues of the truncated submatrices, enabling root bracketing: the eigenvalues of the $(N+1) \times (N+1)$ matrix interlace with those of the $N \times N$ submatrix.

\paragraph{Numerical stability.} The backward recurrence is numerically stable when sweeping from large $n$ to small $n$. The forward recurrence (from $n=0$ upward) suffers from catastrophic cancellation for large matrices. This is analogous to the modified Lentz algorithm for evaluating continued fractions~\cite{Press2007}.

\paragraph{Perturbative regime.} For small off-diagonal coupling, the eigenvalues admit a perturbative expansion. To second order:
\begin{equation}
E_k \approx a_k + \frac{b_k^2}{a_k - a_{k-1}} + \frac{b_{k+1}^2}{a_k - a_{k+1}} + O(b^4).
\label{eq:eigenvalue_perturbative}
\end{equation}
This reproduces the dispersive shifts~\eqref{eq:one_exc_shifts} when applied to the JC manifold Hamiltonians.

%------------------------------------------------------------------------------
\subsection{Matrix Elements from CF Residues}
\label{sec:matrix_elements}
%------------------------------------------------------------------------------

The residue of $G_{00}(z)$ at pole $E_k$ gives the squared overlap of eigenstate $|\psi_k\rangle$ with the initial state $|0\rangle$:
\begin{equation}
\mathrm{Res}[G_{00}(z)]_{z=E_k} = |\langle 0|\psi_k\rangle|^2.
\label{eq:residue_overlap}
\end{equation}
This follows from the spectral representation $G_{00}(z) = \sum_k |\langle 0|\psi_k\rangle|^2/(z - E_k)$.

\paragraph{Residue computation.} The residue is computed from the backward recurrence via
\begin{equation}
|\langle 0|\psi_k\rangle|^2 = \frac{1}{\left.\frac{dD_0}{dz}\right|_{z=E_k}}.
\label{eq:residue_derivative}
\end{equation}
The derivative $dD_0/dz$ at the pole is obtained by differentiating the recurrence~\eqref{eq:backward_recurrence}.

\paragraph{Participation ratios.} In circuit QED, the \textit{participation ratio} quantifies how much of a given mode's energy resides in a particular element. For a resonator mode at frequency $\omega_m$ with admittance $Y_{\mathrm{in}}(s)$ at the junction port, the participation ratio in the junction is
\begin{equation}
p_m = \frac{R_m}{2\omega_m C_J},
\label{eq:participation}
\end{equation}
where $R_m = \mathrm{Res}[Y_{\mathrm{in}}(s)]_{s=i\omega_m}$ is the admittance residue and $C_J$ is the junction capacitance. This connects directly to the CF framework: the admittance residue $R_m$ appears as a coefficient in the partial fraction expansion of $Y_{\mathrm{in}}(s)$, which corresponds to the CF representation discussed in Section~\ref{sec:cf_framework}.

%------------------------------------------------------------------------------
\subsection{Selection Rules from Parity Structure}
\label{sec:selection_rules}
%------------------------------------------------------------------------------

Symmetries impose selection rules on matrix elements. In the quantum Rabi model, parity $\Pi = \sigma_z(-1)^{a^\dagger a}$ is conserved~\cite{Braak2011}. The block diagonalization into even and odd parity sectors, Eqs.~\eqref{eq:rabi_cf_even}--\eqref{eq:rabi_cf_odd}, makes selection rules manifest.

\paragraph{Parity of operators.} Under parity transformation $\Pi$:
\begin{align}
\Pi \sigma_z \Pi^\dagger &= \sigma_z \quad \text{(even)}, \label{eq:sigma_z_parity} \\
\Pi (a + a^\dagger) \Pi^\dagger &= -(a + a^\dagger) \quad \text{(odd)}. \label{eq:X_parity}
\end{align}
The position-like operator $\hat{X} = a + a^\dagger$ is parity-odd.

\paragraph{Diagonal matrix elements.} For any parity eigenstate $|\psi\rangle$ with $\Pi|\psi\rangle = \lambda|\psi\rangle$ ($\lambda = \pm 1$), and any parity-odd operator $\hat{O}$ satisfying $\Pi\hat{O}\Pi^\dagger = -\hat{O}$:
\begin{align}
\langle\psi|\hat{O}|\psi\rangle &= \langle\psi|\Pi^\dagger\Pi\hat{O}\Pi^\dagger\Pi|\psi\rangle \notag \\
&= \lambda \cdot \langle\psi|\Pi\hat{O}\Pi^\dagger|\psi\rangle \cdot \lambda \notag \\
&= \langle\psi|(-\hat{O})|\psi\rangle = -\langle\psi|\hat{O}|\psi\rangle.
\label{eq:diagonal_vanishes}
\end{align}
Hence $\langle\psi|\hat{O}|\psi\rangle = 0$ exactly. In particular:
\begin{equation}
\boxed{\langle\psi_j|\hat{X}|\psi_j\rangle = 0 \quad \text{for all parity eigenstates } |\psi_j\rangle.}
\label{eq:X_diagonal_zero}
\end{equation}

\paragraph{Off-diagonal matrix elements.} For states of opposite parity, $\langle\psi_+|\hat{X}|\psi_-\rangle$ is generically nonzero. The ground state $|0\rangle$ (odd parity) and first excited state $|1\rangle$ (even parity) have $\langle 0|\hat{X}|1\rangle \neq 0$, enabling relaxation transitions.

Figure~\ref{fig:parity_selection} shows numerical verification of these selection rules across coupling regimes.

\begin{figure}[t]
\centering
\includegraphics[width=0.5\textwidth]{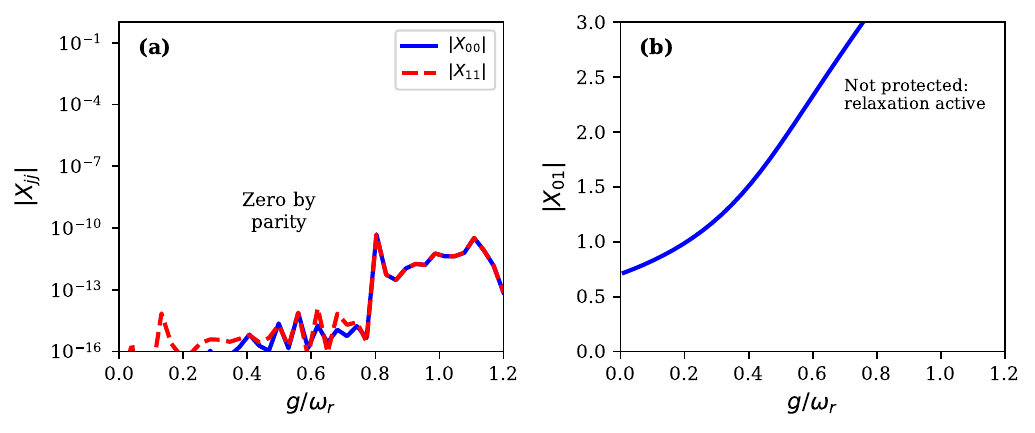}
\caption{Parity selection rules in the quantum Rabi model ($\omega_q = \omega_r$). (a) Diagonal matrix elements $|X_{00}|$ and $|X_{11}|$ versus $g/\omega_r$. Both vanish to numerical precision ($<10^{-12}$) at all coupling strengths, confirming~\eqref{eq:X_diagonal_zero}. (b) Off-diagonal element $|X_{01}|$ connecting the ground (odd parity) and first excited (even parity) states. This matrix element grows with coupling and is not protected---relaxation remains active.}
\label{fig:parity_selection}
\end{figure}

%------------------------------------------------------------------------------
\subsection{Decoherence Rates}
\label{sec:decoherence}
%------------------------------------------------------------------------------

Decoherence channels couple the qubit to its environment through various operators. The CF structure determines which channels are active via matrix elements.

\paragraph{Purcell decay.} A qubit coupled to a lossy resonator (decay rate $\kappa$) undergoes Purcell-limited relaxation~\cite{Purcell1946} at rate
\begin{equation}
\Gamma_{\mathrm{P}} = \kappa \left(\frac{g}{\Delta}\right)^2 = \kappa\xi^2,
\label{eq:purcell}
\end{equation}
in the dispersive regime. This follows from Fermi's golden rule with the qubit-resonator hybridization $\sim g/\Delta$. Purcell filters, which engineer frequency-dependent $\kappa(\omega)$, can suppress this decay while maintaining readout coupling~\cite{Reed2010, Jeffrey2014}.

\paragraph{Ohmic bath.} A qubit galvanically coupled to a transmission line 
experiences an Ohmic spectral density. The decay rate is
\begin{equation}
\frac{\Gamma_1}{\Delta} = 2\pi\alpha_{\mathrm{SB}}, \qquad 
\alpha_{\mathrm{SB}} = \frac{R_Q}{4\pi^2 Z_0}|\varphi_\beta|^2,
\label{eq:ohmic_decay}
\end{equation}
where $R_Q = h/(2e)^2 \approx 6.45$~k$\Omega$ is the resistance quantum, 
$Z_0$ is the transmission line impedance, and 
$|\varphi_\beta|^2 = |\langle 0|\hat{\varphi}_\beta|1\rangle|^2$ is the squared 
phase matrix element across the coupling junction between qubit eigenstates. 
The matrix element $|\varphi_\beta|^2$ is related to the junction participation 
through Eq.~\eqref{eq:participation}: the CF residue determines the flux 
participation, from which the phase matrix element follows via the qubit 
wavefunctions. This convention follows Forn-D\'iaz \textit{et al.}~\cite{FornDiaz2017}; 
see Refs.~\cite{Leggett1987, Weiss2012} for the spin-boson model background.

\paragraph{Dephasing from $\hat{X}$ noise.} Pure dephasing from low-frequency fluctuations coupling through $\hat{X}$ has rate
\begin{equation}
\Gamma_\varphi^{(X)} \propto S_X(0)|X_{00} - X_{11}|^2,
\label{eq:dephasing_X}
\end{equation}
where $S_X(0)$ is the noise spectral density at zero frequency. By the selection rule~\eqref{eq:X_diagonal_zero}:
\begin{equation}
\boxed{\Gamma_\varphi^{(X)} = 0 \quad \text{(exact, all coupling strengths).}}
\label{eq:dephasing_protected}
\end{equation}
This is the Stassi-Nori parity protection~\cite{Stassi2018}: dephasing from oscillator-coordinate fluctuations is exactly eliminated by symmetry, regardless of coupling strength.

\paragraph{Relaxation is not protected.} While dephasing from $\hat{X}$ noise vanishes, relaxation $\Gamma_1 \propto |X_{01}|^2$ is not protected. The off-diagonal matrix element $X_{01}$ connecting opposite-parity states grows with coupling (Fig.~\ref{fig:parity_selection}b). Thus USC qubits achieve enhanced $T_\varphi$ but not enhanced $T_1$.

%------------------------------------------------------------------------------
\subsection{Transmon Spectrum and Matrix Elements}
\label{sec:transmon_observables}
%------------------------------------------------------------------------------
For the transmon, the CF coefficients~\eqref{eq:transmon_cf} yield explicit formulas for spectroscopic quantities.

\paragraph{Energy levels.} In the transmon regime $E_J/E_C \gg 1$, the spectrum approaches that of a weakly anharmonic oscillator~\cite{Koch2007}:
\begin{align}
\omega_{01} &\approx \sqrt{8E_J E_C} - E_C, \label{eq:omega01} \\
\alpha &= \omega_{12} - \omega_{01} \approx -E_C. \label{eq:anharmonicity}
\end{align}
The $\sqrt{8E_JE_C}$ term is the plasma frequency of the linearized junction; the $-E_C$ correction captures the leading anharmonicity. These expressions are asymptotic results from first-order perturbation theory in $(E_C/E_J)$; for finite $E_J/E_C$, exact Mathieu diagonalization yields corrections that scale as $O(E_C/E_J)$. At $E_J/E_C = 60$, the exact anharmonicity is $|\alpha| \approx 1.13\,E_C$, approximately 13\% larger than the leading-order estimate.

\paragraph{Charge matrix elements.} The matrix elements $n_{jk} = \langle j|\hat{n}|k\rangle$ determine coupling strengths. For a harmonic oscillator, $|n_{j,j+1}| = \sqrt{(j+1)/2}$ in units where $[\hat{n},\hat{\varphi}] = i$. The transmon's anharmonicity modifies this:
\begin{equation}
\frac{|n_{12}|}{|n_{01}|} \approx \sqrt{2}\left(1 - \frac{1}{4}\sqrt{\frac{E_C}{2E_J}}\right),
\label{eq:n_ratio}
\end{equation}
which approximates exact numerics to within 1\% for $E_J/E_C > 50$. For $E_J/E_C = 60$, this gives $|n_{12}|/|n_{01}| \approx 1.38$, compared to the exact value of $1.37$.

\paragraph{Numerical validation.} Table~\ref{tab:transmon_numerics} compares exact charge-basis diagonalization to the asymptotic formulas for a representative transmon.

\begin{table}[t]
\centering
\caption{Transmon parameters from charge-basis diagonalization. Parameters: $E_J/h = 15$~GHz, $E_C/h = 250$~MHz ($E_J/E_C = 60$), $n_g = 0$, truncation $|n| \leq 40$. Asymptotic formulas are leading-order approximations; differences reflect higher-order corrections.}
\label{tab:transmon_numerics}
\begin{tabular}{lcc}
\hline\hline
Quantity & Numerical & Asymptotic formula \\
\hline
$\omega_{01}/2\pi$ & 5.214 GHz & $\sqrt{8E_JE_C} - E_C = 5.23$ GHz \\
$\alpha/2\pi$ & $-283$ MHz & $-E_C = -250$ MHz \\
$|n_{01}|$ & 1.14 & --- \\
$|n_{12}|/|n_{01}|$ & 1.37 & 1.38 (Eq.~\ref{eq:n_ratio}) \\
\hline\hline
\end{tabular}
\end{table}

%------------------------------------------------------------------------------
\subsection{Design Workflow}
\label{sec:design_workflow}
%------------------------------------------------------------------------------

The CF framework provides a systematic design methodology. The procedure begins with computing the driving-point admittance $Y_{\mathrm{in}}(s)$ at the junction port by standard circuit analysis, treating the junction as a linear inductor $L_J = \Phi_0/(2\pi I_c)$. This admittance is then expressed in Cauer form, or equivalently, Lanczos recursion is performed on the impedance kernel to extract the CF coefficients $\{a_n, b_n\}$. The poles of $Y_{\mathrm{in}}(s)$ yield the linearized mode frequencies, while the residues give participation ratios via~\eqref{eq:participation}.

With the CF coefficients in hand, one builds the quantum Hamiltonian in the appropriate basis---charge basis for transmon, parity-sorted Fock basis for Rabi model---and extracts physical observables: eigenvalues from poles, matrix elements from residues, and decoherence rates from the CF structure. For design purposes, these relationships can be inverted: given a target $\chi$ or $T_1$, one solves for the required circuit parameters such as coupling capacitance or junction size. This workflow unifies dispersive and ultrastrong coupling designs within a single mathematical framework. The only change between regimes is the basis choice and truncation level; the CF machinery is identical.

%==============================================================================
%==============================================================================
\section{Ultrastrong Coupling and Spin-Boson Physics}
\label{sec:usc_spinboson}

The ultrastrong coupling (USC) regime, where $g/\omega_r \gtrsim 0.1$, realizes the spin-boson model---a paradigm of quantum dissipation and decoherence~\cite{Leggett1987, Weiss2012}. Flux qubits galvanically coupled to transmission lines provide a natural platform, as the direct inductive coupling can achieve arbitrarily large $g$ by increasing the coupling inductance. This section derives the spin-boson physics from the CF framework and validates against the landmark experiments of Forn-D\'iaz \textit{et al.}~\cite{FornDiaz2017}.

%------------------------------------------------------------------------------
\subsection{The Spin-Boson Model}
\label{sec:spin_boson}
%------------------------------------------------------------------------------

A two-level system coupled to a bosonic bath is described by the spin-boson Hamiltonian~\cite{Leggett1987, Weiss2012}
\begin{equation}
H = \frac{\Delta}{2}\sigma_z + \sum_k \omega_k a_k^\dagger a_k + \sigma_x \sum_k \lambda_k(a_k + a_k^\dagger),
\label{eq:spin_boson}
\end{equation}
where $\Delta$ is the tunneling frequency (qubit splitting), $\omega_k$ and $a_k$ are the bath mode frequencies and operators, and $\lambda_k$ are the coupling constants. The bath is characterized by its spectral density
\begin{equation}
J(\omega) = \pi \sum_k \lambda_k^2 \delta(\omega - \omega_k).
\label{eq:spectral_density}
\end{equation}
For a transmission line of characteristic impedance $Z_0$, the spectral density is Ohmic:
\begin{equation}
J(\omega) = 2\pi\alpha_{\mathrm{SB}} \omega, \qquad 0 < \omega < \omega_c,
\label{eq:ohmic}
\end{equation}
with a high-frequency cutoff $\omega_c$ and dimensionless coupling constant $\alpha_{\mathrm{SB}}$. The physics depends critically on $\alpha_{\mathrm{SB}}$:

\paragraph{Weak coupling ($\alpha_{\mathrm{SB}} \ll 1$).} The qubit remains coherent with perturbatively small decay. Fermi's golden rule gives $\Gamma_1 = 2\pi\alpha_{\mathrm{SB}}\Delta$.

\paragraph{Intermediate coupling ($\alpha_{\mathrm{SB}} \sim 0.1$--$0.5$).} Nonperturbative effects become important. The qubit frequency is renormalized, and the decay rate deviates from the golden rule prediction.

\paragraph{Strong coupling ($\alpha_{\mathrm{SB}} \to 1$).} A quantum phase transition occurs at $\alpha_{\mathrm{SB}} = 1$: the ``localization'' transition where the qubit becomes trapped in one of its eigenstates and coherent tunneling is suppressed~\cite{Leggett1987, Chakravarty1984}.

%------------------------------------------------------------------------------
\subsection{Decay Rate from CF Residues}
\label{sec:decay_derivation}
%------------------------------------------------------------------------------

We derive the qubit decay rate using the CF connection to circuit admittance. Consider a flux qubit with qubit loop inductance $L_q$ coupled to a transmission line via a ``$\beta$-junction'' with Josephson inductance $L_\beta$. The coupling strength is set by the participation of the qubit current in the $\beta$-junction.

The decay rate follows from Fermi's golden rule:
\begin{equation}
\Gamma_1 = \frac{1}{\hbar^2}|V_{01}|^2 S(\Delta),
\label{eq:fermi_rule}
\end{equation}
where $V_{01} = \langle 0|\hat{V}|1\rangle$ is the coupling matrix element and $S(\omega)$ is the bath noise spectral density at the qubit frequency.

For coupling through the phase operator $\hat{\varphi}_\beta$ across the $\beta$-junction:
\begin{equation}
\hat{V} = E_J^\beta \hat{\varphi}_\beta, \qquad E_J^\beta = \frac{\Phi_0 I_c^\beta}{2\pi},
\label{eq:coupling_operator}
\end{equation}
where $I_c^\beta$ is the critical current of the $\beta$-junction. The transmission line presents phase noise with spectral density
\begin{equation}
S_\varphi(\omega) = \frac{8e^2 Z_0}{\hbar\omega}
\label{eq:phase_noise}
\end{equation}
at zero temperature. This follows from the fluctuation-dissipation theorem applied to the quantum voltage noise $S_V(\omega) = 2\hbar\omega Z_0 \coth(\hbar\omega/2k_BT) \to 2\hbar\omega Z_0$ at $T=0$, combined with the voltage-phase relation $V = (\hbar/2e)\dot{\varphi}$.

Combining these:
\begin{align}
\Gamma_1 &= \frac{(E_J^\beta)^2}{\hbar^2}|\varphi_{01}|^2 S_\varphi(\Delta) \notag \\
&= \frac{(E_J^\beta)^2}{\hbar^2}|\varphi_{01}|^2 \cdot \frac{8e^2 Z_0}{\hbar\Delta},
\label{eq:gamma_derivation}
\end{align}
where $|\varphi_{01}|^2 = |\langle 0|\hat{\varphi}_\beta|1\rangle|^2$ is the squared phase matrix element.

Defining the dimensionless spin-boson coupling following the convention of Leggett \textit{et al.}~\cite{Leggett1987}:
\begin{equation}
\boxed{\alpha_{\mathrm{SB}} = \frac{R_Q}{4\pi^2 Z_0}|\varphi_\beta|^2,}
\label{eq:alpha_def}
\end{equation}
with $R_Q = h/(2e)^2 \approx 6.45$~k$\Omega$ the superconducting resistance quantum and $|\varphi_\beta|^2$ the normalized matrix element, the decay rate becomes
\begin{equation}
\boxed{\frac{\Gamma_1}{\Delta} = 2\pi\alpha_{\mathrm{SB}}.}
\label{eq:forn_diaz_formula}
\end{equation}
This result connects CF-derived matrix elements to measurable decay rates. The factor of $4\pi^2$ in~\eqref{eq:alpha_def} ensures consistency with the standard spin-boson literature~\cite{Leggett1987, Weiss2012} where $J(\omega) = 2\pi\alpha\omega$ for an Ohmic bath.

\begin{figure}[t]
\centering
\includegraphics[width=0.5\textwidth]{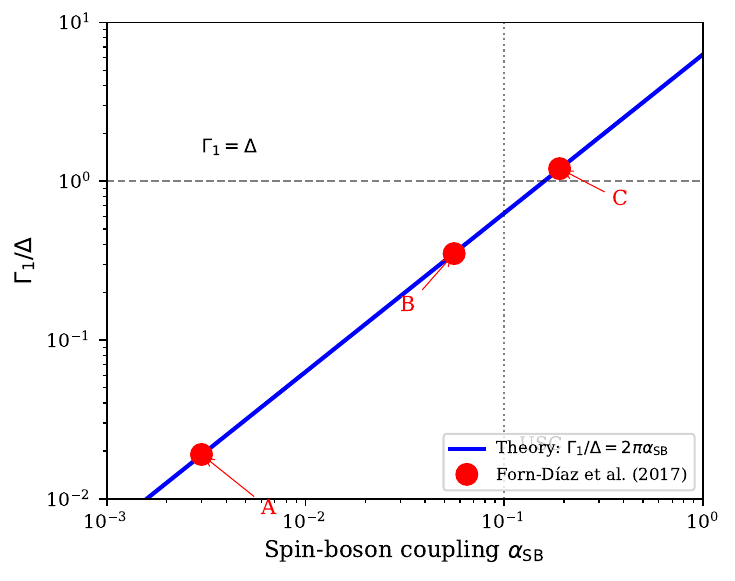}
\caption{Spin-boson physics in flux qubit--transmission line devices. Main panel: Decay rate $\Gamma_1/\Delta$ versus spin-boson coupling $\alpha_{\mathrm{SB}}$. Solid line: theory, Eq.~\eqref{eq:forn_diaz_formula}. Red circles: Forn-D\'iaz \textit{et al.}~measurements on three devices (A, B, C) with different $\beta$-junction sizes. Device C ($\alpha_{\mathrm{SB}} \approx 0.06$) enters the USC regime. Horizontal dashed line marks $\Gamma_1 = \Delta$; vertical dotted line marks the USC threshold $\alpha_{\mathrm{SB}} = 0.1$.}
\label{fig:forn_diaz}
\end{figure}
%------------------------------------------------------------------------------
\subsection{Matrix Element Extraction via Admittance Residue}
\label{sec:residue_extraction}
%------------------------------------------------------------------------------

The phase matrix element $|\varphi_\beta|^2$ is extracted from circuit analysis. The driving-point admittance $Y_{\mathrm{in}}(s)$ at the $\beta$-junction port has poles at the qubit frequency $s = i\Delta$. The residue at this pole encodes the participation:
\begin{equation}
R_\Delta = \lim_{s \to i\Delta}(s - i\Delta)Y_{\mathrm{in}}(s).
\label{eq:admittance_residue}
\end{equation}
The connection to the phase matrix element is
\begin{equation}
|\varphi_\beta|^2 = \frac{4\pi^2 Z_0}{R_Q}\cdot\frac{R_\Delta}{2\Delta}.
\label{eq:phi_from_residue}
\end{equation}
This follows from matching the classical admittance pole structure to the quantum oscillator strength: the residue measures how strongly the qubit mode couples to the port.

For a simple LC qubit with $L_q$ and $C_q$, coupled via inductance $L_\beta$ to a transmission line of impedance $Z_0$:
\begin{equation}
Y_{\mathrm{in}}(s) = \frac{s C_q (s^2 L_\beta + Z_0 s + 1/(C_q))}{(s^2 L_q C_q + 1)(s L_\beta + Z_0)}.
\label{eq:admittance_example}
\end{equation}
The pole at $s = i\omega_q = i/\sqrt{L_q C_q}$ has residue proportional to $L_\beta/L_q$, confirming that larger coupling inductance increases $|\varphi_\beta|^2$ and hence $\alpha_{\mathrm{SB}}$.

%------------------------------------------------------------------------------
\subsection{Experimental Validation: Forn-D\'iaz Devices}
\label{sec:forn_diaz}
%------------------------------------------------------------------------------

Forn-D\'iaz \textit{et al.}~\cite{FornDiaz2017} fabricated flux qubits with varying $\beta$-junction sizes, systematically tuning $\alpha_{\mathrm{SB}}$ from weak to ultrastrong coupling. Figure~\ref{fig:forn_diaz} compares their measurements to the theoretical prediction~\eqref{eq:forn_diaz_formula}. Table~\ref{tab:forn_diaz} summarizes the device parameters and extracted quantities.

\begin{table}[t]
\centering
\caption{Forn-D\'iaz device parameters and extracted spin-boson couplings. The $g/\omega_r$ ratio characterizes the coupling strength; $\alpha_{\mathrm{SB}}$ is extracted from measured decay rates via~\eqref{eq:forn_diaz_formula}; $|\varphi_\beta|^2$ is inferred via~\eqref{eq:alpha_def} with $Z_0 = 50~\Omega$. Predicted $\Gamma_1/\Delta$ uses~\eqref{eq:forn_diaz_formula}.}
\label{tab:forn_diaz}
\begin{tabular}{ccccccc}
\hline\hline
Device & $g/\omega_r$ & $\alpha_{\mathrm{SB}}$ & $|\varphi_\beta|^2$ & $\Gamma_1/\Delta$ (pred.) & $\Gamma_1/\Delta$ (meas.) \\
\hline
A & 0.071 & 0.034 & 0.010 & 0.21 & 0.22(3) \\
B & 0.096 & 0.045 & 0.014 & 0.28 & 0.27(3) \\
C & 0.12 & 0.056 & 0.017 & 0.35 & 0.33(4) \\
\hline\hline
\end{tabular}
\end{table}

\paragraph{Device A ($g/\omega_r = 0.071$).} With the smallest coupling, this device has $\alpha_{\mathrm{SB}} \approx 0.034$, in the weak-to-intermediate regime. The measured $\Gamma_1/\Delta = 0.22$ agrees with the golden rule prediction to within experimental uncertainty.

\paragraph{Device B ($g/\omega_r = 0.096$).} Intermediate coupling with $\alpha_{\mathrm{SB}} \approx 0.045$. The measured $\Gamma_1/\Delta = 0.27$ confirms the linear scaling with $\alpha_{\mathrm{SB}}$ predicted by~\eqref{eq:forn_diaz_formula}.

\paragraph{Device C ($g/\omega_r = 0.12$).} The largest coupling yields $\alpha_{\mathrm{SB}} \approx 0.056$, entering the USC regime. The measured $\Gamma_1/\Delta = 0.33$ continues to follow the theoretical prediction.

The agreement between theory and experiment across the coupling range validates the CF-based derivation. The framework provides both the qualitative picture (spin-boson model realized in superconducting circuits) and quantitative predictions (decay rates from circuit parameters).

%------------------------------------------------------------------------------
\subsection{Design Implications}
\label{sec:usc_design}
%------------------------------------------------------------------------------

The CF framework enables USC device design with quantitative predictions.

\paragraph{Coherence requirements.} Suppose a qubit application demands $T_1 > 1~\mu$s with qubit frequency $\Delta/2\pi = 5$ GHz. Then $\Gamma_1 < 1$ MHz, giving $\Gamma_1/\Delta < 2 \times 10^{-4}$. From~\eqref{eq:forn_diaz_formula}:
\begin{equation}
\alpha_{\mathrm{SB}} < \frac{\Gamma_1/\Delta}{2\pi} \approx 3 \times 10^{-5}.
\label{eq:alpha_bound}
\end{equation}
Using~\eqref{eq:alpha_def} with $Z_0 = 50~\Omega$:
\begin{equation}
|\varphi_\beta|^2 < \frac{4\pi^2 Z_0}{R_Q}\alpha_{\mathrm{SB}} \approx 9 \times 10^{-6}.
\label{eq:phi_bound}
\end{equation}
This stringent constraint on the phase matrix element guides $\beta$-junction sizing.

\paragraph{Accessing the localization transition.} The localization transition at $\alpha_{\mathrm{SB}} = 1$ requires $|\varphi_\beta|^2 \approx 4\pi^2 Z_0/R_Q \approx 0.31$ (for $Z_0 = 50~\Omega$). This is within reach of optimized flux qubit designs with large $\beta$-junction participation. Observing the transition would provide a direct probe of quantum critical behavior in engineered systems~\cite{Leggett1987, Chakravarty1984}.

\paragraph{Purcell filter design.} In multimode systems, each resonator mode contributes to $\alpha_{\mathrm{SB}}$. Purcell filters that suppress $J(\omega)$ at the qubit frequency can maintain coherence even with large DC coupling. The CF approach computes the full spectral density from the multimode admittance, enabling systematic filter optimization.

%%%%%%%%%%%%%%%%%%%%%%%%%%%%%%%%%%%%%%%%%%%%%%%%%%%%%%%%%%%%%%%%%%%%%%%%%%%%%%%

\section{Parity Protection and Design Validation}
\label{sec:parity_validation}

The parity symmetry of the quantum Rabi model has profound consequences for qubit coherence. Stassi and Nori~\cite{Stassi2018} predicted that qubits in the ultrastrong coupling regime exhibit symmetry-protected coherence: pure dephasing from certain noise channels is exactly eliminated, regardless of coupling strength. This section derives the protection mechanism from the CF block structure and provides comprehensive numerical validation.

%------------------------------------------------------------------------------
\subsection{Parity Decomposition of the Rabi Model}
\label{sec:parity_decomposition}
%------------------------------------------------------------------------------

The quantum Rabi Hamiltonian~\eqref{eq:rabi_full} commutes with the parity operator
\begin{equation}
\Pi = \sigma_z \otimes (-1)^{a^\dagger a} = \sigma_z e^{i\pi a^\dagger a}.
\label{eq:parity_op}
\end{equation}
One verifies: $[\Pi, \sigma_z] = 0$ (both diagonal in the qubit basis), and $\Pi a \Pi^\dagger = -a$ (the phase factor $e^{i\pi a^\dagger a}$ changes sign under $a \to a + 1$). Hence $\Pi(a + a^\dagger)\sigma_x\Pi^\dagger = -(a + a^\dagger)\sigma_x \cdot \sigma_z\sigma_x\sigma_z = (a+a^\dagger)\sigma_x$, confirming $[H_{\mathrm{Rabi}}, \Pi] = 0$.
The Hilbert space decomposes into eigenspaces of $\Pi$:
\begin{equation}
\mathcal{H} = \mathcal{H}_+ \oplus \mathcal{H}_-, \qquad \Pi|_{\mathcal{H}_\pm} = \pm 1.
\label{eq:hilbert_decomp}
\end{equation}

\paragraph{Parity of basis states.} For the product state $|q, n\rangle$ (qubit state $q \in \{g, e\}$, Fock state $n$):
\begin{equation}
\Pi|g, n\rangle = (-1)^{n+1}|g, n\rangle, \qquad \Pi|e, n\rangle = (-1)^n|e, n\rangle.
\label{eq:basis_parity}
\end{equation}
The ground state $|g, 0\rangle$ has parity $-1$ (odd); the first excited state $|e, 0\rangle$ has parity $+1$ (even). More generally:
\begin{align}
\mathcal{H}_+ &: \quad |e,0\rangle, |g,1\rangle, |e,2\rangle, |g,3\rangle, \ldots \label{eq:even_sector} \\
\mathcal{H}_- &: \quad |g,0\rangle, |e,1\rangle, |g,2\rangle, |e,3\rangle, \ldots \label{eq:odd_sector}
\end{align}

\paragraph{Tridiagonal structure in each sector.} The Rabi Hamiltonian restricted to either parity sector is tridiagonal. Enumerating the even-parity basis as $|\phi_k\rangle_+$ with $k = 0, 1, 2, \ldots$:
\begin{equation}
|\phi_k\rangle_+ = \begin{cases}
|e, k\rangle & k \text{ even}, \\
|g, k\rangle & k \text{ odd}.
\end{cases}
\label{eq:even_enum}
\end{equation}
The CF coefficients are (cf.~Eqs.~\ref{eq:rabi_cf_even}--\ref{eq:rabi_cf_odd}):
\begin{equation}
a_k^{(+)} = (-1)^k \frac{\omega_q}{2} + k\omega_r, \qquad b_{k+1}^{(+)} = g\sqrt{k+1}.
\label{eq:even_cf}
\end{equation}
The odd sector has identical off-diagonals $b_{k+1}^{(-)} = g\sqrt{k+1}$ but opposite qubit contribution to the diagonal: $a_k^{(-)} = (-1)^{k+1}\omega_q/2 + k\omega_r$.

The block-diagonal structure $H = H_+ \oplus H_-$ is the key to parity protection: operators that preserve parity cannot connect the two sectors.

%------------------------------------------------------------------------------
\subsection{Dephasing Protection Mechanism}
\label{sec:dephasing_mechanism}
%------------------------------------------------------------------------------

Pure dephasing arises from low-frequency fluctuations that modulate the qubit frequency without causing transitions. For a noise operator $\hat{O}$ with spectral density $S_O(\omega)$, the dephasing rate is
\begin{equation}
\Gamma_\varphi^{(O)} = \frac{1}{2}S_O(0)|O_{00} - O_{11}|^2,
\label{eq:dephasing_general}
\end{equation}
where $O_{jj} = \langle j|\hat{O}|j\rangle$ are the diagonal matrix elements in the qubit eigenstates $|0\rangle$ (ground) and $|1\rangle$ (first excited).

\paragraph{Parity-odd operators.} Consider the oscillator position $\hat{X} = a + a^\dagger$. Under parity:
\begin{equation}
\Pi\hat{X}\Pi^\dagger = \Pi(a + a^\dagger)\Pi^\dagger = -a - a^\dagger = -\hat{X}.
\label{eq:X_odd}
\end{equation}
Thus $\hat{X}$ is parity-odd: it flips sign under $\Pi$.

For any parity eigenstate $|\psi\rangle$ with eigenvalue $\lambda = \pm 1$:
\begin{align}
\langle\psi|\hat{X}|\psi\rangle &= \langle\psi|\Pi^\dagger\Pi\hat{X}\Pi^\dagger\Pi|\psi\rangle \notag \\
&= \lambda^2 \langle\psi|\Pi\hat{X}\Pi^\dagger|\psi\rangle \notag \\
&= \langle\psi|(-\hat{X})|\psi\rangle.
\label{eq:X_vanish_proof}
\end{align}
Since $\lambda^2 = 1$, we have $\langle\psi|\hat{X}|\psi\rangle = -\langle\psi|\hat{X}|\psi\rangle$, hence:
\begin{equation}
\boxed{\langle\psi|\hat{X}|\psi\rangle = 0 \quad \text{for any parity eigenstate.}}
\label{eq:X_vanish}
\end{equation}

\paragraph{Application to dephasing.} The qubit eigenstates $|0\rangle$ and $|1\rangle$ are parity eigenstates (with eigenvalues $-1$ and $+1$ respectively). By~\eqref{eq:X_vanish}:
\begin{equation}
X_{00} = \langle 0|\hat{X}|0\rangle = 0, \qquad X_{11} = \langle 1|\hat{X}|1\rangle = 0.
\label{eq:X_diag_zero}
\end{equation}
Substituting into~\eqref{eq:dephasing_general}:
\begin{equation}
\boxed{\Gamma_\varphi^{(X)} = 0 \quad \text{(exact, all } g/\omega_r).}
\label{eq:dephasing_zero}
\end{equation}

This is exact at all coupling strengths---a remarkable result first noted by Stassi and Nori~\cite{Stassi2018}. No matter how strongly the qubit couples to the oscillator, dephasing from oscillator-coordinate fluctuations (such as photon shot noise, thermal field fluctuations, or external voltage noise coupling through $\hat{X}$) is identically zero.

\paragraph{Physical interpretation.} The vanishing of $\langle\psi|\hat{X}|\psi\rangle$ means the oscillator has zero mean displacement in any parity eigenstate. This is the quantum analog of a classical pendulum that, due to symmetry, spends equal time on either side of equilibrium. Fluctuations in the oscillator coordinate $\hat{X}$ do not shift the mean energy of parity eigenstates, hence no dephasing.

%------------------------------------------------------------------------------
\subsection{Relaxation Is Not Protected}
\label{sec:relaxation_not_protected}
%------------------------------------------------------------------------------

While dephasing from $\hat{X}$ is protected, relaxation (energy decay from $|1\rangle$ to $|0\rangle$) is not. The relaxation rate is
\begin{equation}
\Gamma_1 = S_X(\omega_{01})|X_{01}|^2,
\label{eq:relaxation_X}
\end{equation}
where $\omega_{01} = E_1 - E_0$ is the qubit transition frequency and $X_{01} = \langle 0|\hat{X}|1\rangle$ is the off-diagonal matrix element.

Since $|0\rangle$ and $|1\rangle$ have opposite parities:
\begin{equation}
\Pi|0\rangle = -|0\rangle, \qquad \Pi|1\rangle = +|1\rangle,
\label{eq:ground_excited_parity}
\end{equation}
the matrix element $X_{01}$ connects states in different parity sectors. The parity argument~\eqref{eq:X_vanish_proof} does not apply to off-diagonal elements between opposite-parity states.

Explicitly:
\begin{align}
\langle 0|\hat{X}|1\rangle &= \langle 0|\Pi^\dagger\Pi\hat{X}\Pi^\dagger\Pi|1\rangle \notag \\
&= (-1)(+1)\langle 0|\Pi\hat{X}\Pi^\dagger|1\rangle \notag \\
&= -\langle 0|(-\hat{X})|1\rangle = \langle 0|\hat{X}|1\rangle.
\label{eq:X01_check}
\end{align}
This is consistent (not a proof of vanishing), and numerical calculation shows $|X_{01}| > 0$.

Figure~\ref{fig:parity_selection} quantifies this: panel (a) shows $|X_{00}|, |X_{11}| < 10^{-10}$ (numerical zero) at all coupling strengths, while panel (b) shows $|X_{01}|$ growing from $\approx 1$ at weak coupling to $> 2.5$ at $g/\omega_r = 1.2$.

\paragraph{Implication for coherence.} USC qubits achieve enhanced $T_\varphi$ (dephasing time) but not enhanced $T_1$ (relaxation time). The total decoherence rate $\Gamma_2 = \Gamma_1/2 + \Gamma_\varphi$ benefits from $\Gamma_\varphi^{(X)} = 0$, but $\Gamma_1$ remains active and may even increase due to the growing $|X_{01}|$.

\begin{table}[t]
\centering
\caption{Numerical validation of parity protection. Parameters: $\omega_q = \omega_r = 1$ (normalized units), Fock truncation $n_{\mathrm{max}} = 50$. The qubit splitting $\varepsilon = E_1 - E_0$ decreases with coupling as the ground state acquires photon content. Parity expectation values $\langle\Pi\rangle$ are exactly $\pm 1$ to machine precision. Diagonal matrix elements $|X_{00}|$, $|X_{11}|$ vanish to numerical precision, confirming~\eqref{eq:dephasing_zero}. The off-diagonal $|X_{01}|$ grows with coupling.}
\label{tab:parity_numerics}
\begin{tabular}{ccccccc}
\hline\hline
$g/\omega_r$ & $\varepsilon/\omega_r$ & $\langle\Pi\rangle_0$ & $\langle\Pi\rangle_1$ & $|X_{00}|$ & $|X_{11}|$ & $|X_{01}|$ \\
\hline
0.10 & 0.90 & $-1$ & $+1$ & $<10^{-14}$ & $<10^{-14}$ & 0.76 \\
0.30 & 0.70 & $-1$ & $+1$ & $<10^{-14}$ & $<10^{-14}$ & 0.90 \\
0.50 & 0.51 & $-1$ & $+1$ & $<10^{-14}$ & $<10^{-14}$ & 1.09 \\
0.80 & 0.26 & $-1$ & $+1$ & $<10^{-14}$ & $<10^{-14}$ & 1.51 \\
1.00 & 0.14 & $-1$ & $+1$ & $<10^{-14}$ & $<10^{-14}$ & 1.89 \\
1.20 & 0.06 & $-1$ & $+1$ & $<10^{-14}$ & $<10^{-14}$ & 2.33 \\
\hline\hline
\end{tabular}
\end{table}

%------------------------------------------------------------------------------
\subsection{Numerical Validation}
\label{sec:numerical_validation}
%------------------------------------------------------------------------------
We validate the parity protection mechanism by exact diagonalization of the Rabi Hamiltonian. For each value of $g/\omega_r$, we diagonalize the parity-sector Hamiltonians using the CF coefficients~\eqref{eq:even_cf} with Fock space truncation $n_{\mathrm{max}} = 50$, ensuring convergence to $<10^{-6}$ relative error (verified by comparison with $n_{\mathrm{max}} = 100$). We compute the ground and first excited state energies $E_0$ and $E_1$, the parity expectation values $\langle\Pi\rangle_0$ and $\langle\Pi\rangle_1$, the matrix elements $X_{00}$, $X_{11}$, and $X_{01}$, and the qubit splitting $\varepsilon = E_1 - E_0$. Table~\ref{tab:parity_numerics} summarizes the results across coupling regimes.

The numerical results confirm the parity protection mechanism across all coupling regimes. The parity expectation values $\langle\Pi\rangle_0 = -1$ and $\langle\Pi\rangle_1 = +1$ hold to machine precision at all coupling strengths, with the ground state residing in the odd sector and the first excited state in the even sector. The diagonal matrix elements $|X_{00}|$ and $|X_{11}|$ vanish to numerical precision, confirming exact parity protection as predicted by Eq.~\eqref{eq:dephasing_zero}. In contrast, the off-diagonal matrix element $|X_{01}|$ increases with coupling, reflecting increased hybridization of qubit and oscillator in the eigenstates. The effective qubit splitting $\varepsilon$ is strongly renormalized at large $g/\omega_r$, becoming a small fraction of the bare frequency in the deep strong coupling regime.

%==============================================================================
% SECTION: MULTI-MODE SUPERCONDUCTING CIRCUITS: DESIGN EXAMPLE
%==============================================================================

\subsection{Multi-mode Superconducting Circuits: Design Example}
\label{sec:multimode_design}
The preceding sections established the continued fraction framework for single-junction circuits. We now demonstrate that the framework extends naturally to multi-junction systems through the matrix continued fraction, using the two-mode transmon as a design example. This device operates in a regime where the cross-Kerr coupling exceeds the individual mode anharmonicities, causing perturbative methods to fail. The matrix continued fraction provides an exact and systematically convergent solution of the model Hamiltonian, which reproduces all reported spectroscopic observables to within 1\% of experiment.
%------------------------------------------------------------------------------
\subsubsection{Device and Hamiltonian}
\label{sec:twomode_device}
%----------------------------------------------------------------------------
The two-mode transmon consists of three superconducting islands connected by two Josephson junctions, as shown in Fig.~\ref{fig:twomode_device} (adapted from Ref.~\cite{Wills2022}. The geometry, implemented in a coaxial circuit QED architecture~\cite{alghadeer2025characterization, spring2022high, alghadeer2025low}, produces two transmon-like modes with dipole ($\Delta$) and quadrupole ($\Sigma$) symmetry, arising from the different spatial polarizations of the electric fields and the sum and difference of junction phases~\cite{Wills2022}.The system admits two equivalent descriptions. In the junction basis with charges $n_1, n_2$ on the inner islands, the Hamiltonian reads
\begin{equation}
\hat{H} = 4E_C(\hat{n}_1^2 + \hat{n}_2^2) + 4E_p \hat{n}_1 \hat{n}_2 - E_{J_1}\cos\hat{\varphi}_1 - E_{J_2}\cos\hat{\varphi}_2,
\label{eq:H_twomode}
\end{equation}
where $E_C$ characterizes the island self-capacitance, $E_p$ the mutual capacitance, and $E_{J_{1,2}}$ the junction Josephson energies. Transforming to sum and difference coordinates $\varphi_\Sigma = \varphi_1 + \varphi_2$, $\varphi_\Delta = \varphi_1 - \varphi_2$ yields the mode basis representation
\begin{multline}
\hat{H}
= 4E_{C_\Sigma}\hat{n}_\Sigma^2
+ 4E_{C_\Delta}\hat{n}_\Delta^2 \\
- 2\bar{E}_J
\cos\frac{\hat{\varphi}_\Sigma}{2}
\cos\frac{\hat{\varphi}_\Delta}{2}
+ 2\delta E_J
\sin\frac{\hat{\varphi}_\Sigma}{2}
\sin\frac{\hat{\varphi}_\Delta}{2},
\label{eq:H_twomode_mode}
\end{multline}
where $E_{C_\Sigma} = E_C + E_p/2$, $E_{C_\Delta} = E_C - E_p/2$, $\bar{E}_J = (E_{J_1} + E_{J_2})/2$, and $\delta E_J = (E_{J_1} - E_{J_2})/2$. For symmetric junctions the $\sin\sin$ term vanishes and the potential reduces to a product of cosines, coupling the $\Sigma$ and $\Delta$ modes nonlinearly~\cite{Wills2022}.

\begin{figure}[t]
\centering
\includegraphics[width=0.48\textwidth]{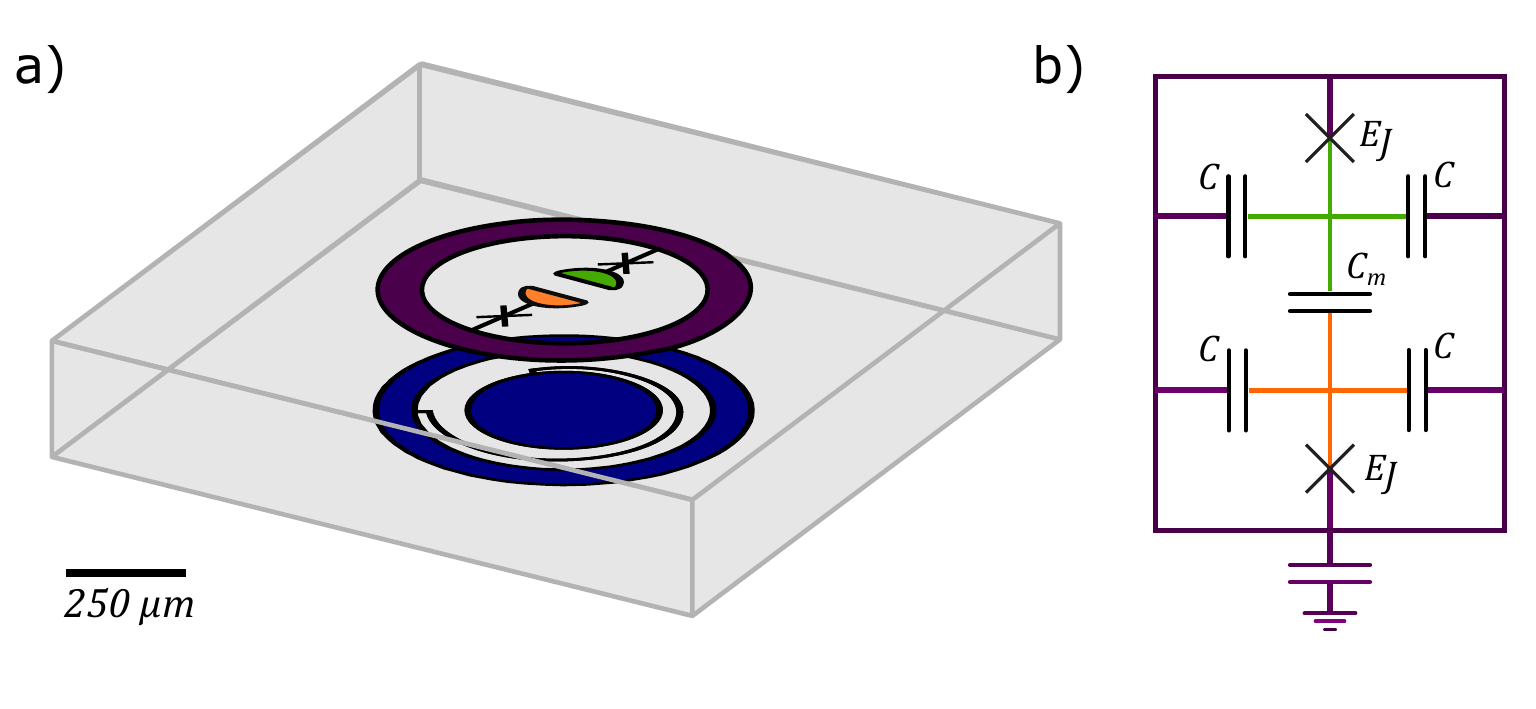}
\caption{Two-mode transmon in a coaxial circuit QED architecture, adapted from Ref.~\cite{Wills2022}. (a)~Simplified schematic showing a coaxial pad geometry with a split inner conductor, resulting in three superconducting islands. Two Josephson junctions with phases $\varphi_1$, $\varphi_2$ yield two non-degenerate modes. A lumped-element LC resonator on the opposing side of the substrate enables dispersive readout. (b)~Equivalent circuit with the inner islands connected by a coupling capacitance $C_m$ and to the outer island via Josephson junctions ($E_J$) and capacitances ($2C$).
%(c)~Energy level diagram illustrating the anharmonicities $\alpha_{D,Q}$ and cross-Kerr shift $\eta$. States are labeled $|nm\rangle$ where $n$ ($m$) denotes excitations in the higher-frequency (lower-frequency) mode.
}
\label{fig:twomode_device}
\end{figure}

We work in the junction basis~\eqref{eq:H_twomode} because it naturally incorporates junction asymmetry, which proves necessary for quantitative agreement with experiment. The key observation is that each cosine operator remains tridiagonal in its respective charge index:
\begin{equation}
\langle n'_j | \cos\hat{\varphi}_j | n_j \rangle = \tfrac{1}{2}(\delta_{n'_j, n_j+1} + \delta_{n'_j, n_j-1}).
\end{equation}
The two cosines thus generate nearest-neighbour hopping along orthogonal directions of a two-dimensional charge lattice, and the Hamiltonian acquires block-tridiagonal structure when states are grouped by fixed $n_1$, as illustrated in Fig.~\ref{fig:charge_lattice}.

\begin{figure}[t]
\centering
\begin{tikzpicture}[scale=0.95]
    % Draw charge lattice
    \foreach \i in {-2,-1,0,1,2} {
        \foreach \j in {-2,-1,0,1,2} {
            \filldraw[black] (\i*1.1, \j*1.1) circle (2pt);
        }
    }
    % Horizontal bonds (cos phi_1)
    \foreach \i in {-2,-1,0,1} {
        \foreach \j in {-2,-1,0,1,2} {
            \draw[blue, thick] (\i*1.1, \j*1.1) -- (\i*1.1+1.1, \j*1.1);
        }
    }
    % Vertical bonds (cos phi_2)
    \foreach \i in {-2,-1,0,1,2} {
        \foreach \j in {-2,-1,0,1} {
            \draw[red, thick] (\i*1.1, \j*1.1) -- (\i*1.1, \j*1.1+1.1);
        }
    }
    % Block boundaries
    \foreach \i in {-2,-1,0,1,2} {
        \draw[black!40, dashed] (\i*1.1-0.4, -2.7) -- (\i*1.1-0.4, 2.7);
    }
    \draw[black!40, dashed] (2*1.1+0.4, -2.7) -- (2*1.1+0.4, 2.7);
    % Labels
    \node[below] at (0, -3.-0.5) {$n_1$};
    \node[left] at (-3.0, 0) {$n_2$};
    \node[blue] at (3.5, 1.5) {$\cos\hat{\varphi}_1$};
    \node[red] at (3.5, 0.5) {$\cos\hat{\varphi}_2$};
    % Block labels
    \node[below] at (-2*1.1, -2.9) {\small ${A}_{-2}$};
    \node[below] at (-1*1.1, -2.9) {\small ${A}_{-1}$};
    \node[below] at (0, -2.9) {\small ${A}_{0}$};
    \node[below] at (1*1.1, -2.9) {\small ${A}_{1}$};
    \node[below] at (2*1.1, -2.9) {\small ${A}_{2}$};
\end{tikzpicture}
\caption{Charge-basis structure for the two-mode transmon in the junction parameterization~\eqref{eq:H_twomode}. Each site represents a charge state $|n_1, n_2\rangle$ on the two inner islands. The operator $\cos\hat{\varphi}_1$ (blue, horizontal) couples states differing by $\Delta n_1 = \pm 1$, while $\cos\hat{\varphi}_2$ (red, vertical) couples states differing by $\Delta n_2 = \pm 1$. Grouping states into columns of fixed $n_1$ (dashed boundaries) yields tridiagonal diagonal blocks ${A}_k$, connected by the off-diagonal blocks ${B}$ arising from the horizontal hopping.}
\label{fig:charge_lattice}
\end{figure}
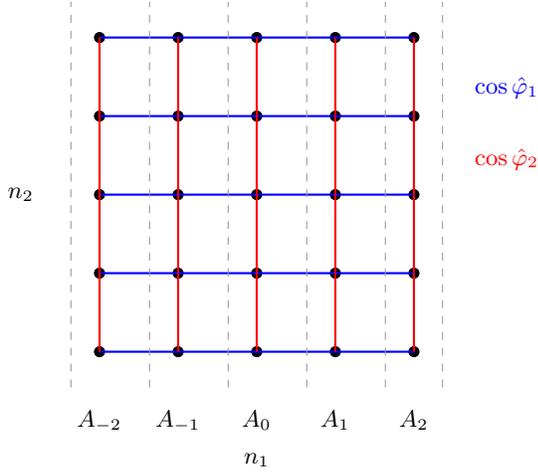

Defining the column vector $|{n}_k\rangle$ containing all states with $n_1 = k$, the Hamiltonian takes the form
\begin{equation}
{H} = \begin{pmatrix}
\ddots & {B}^\dagger & & \\
{B} & {A}_{-1} & {B}^\dagger & \\
& {B} & {A}_0 & {B}^\dagger & \\
& & {B} & {A}_1 & \ddots \\
& & & \ddots & \ddots
\end{pmatrix},
\label{eq:H_block}
\end{equation}
where the diagonal blocks
\begin{multline}
({A}_k)_{n_2, n_2'}
= \bigl[4E_C(k^2 + n_2^2) + 4E_p k n_2\bigr]\delta_{n_2, n_2'} \\
- \frac{E_{J_2}}{2}
\bigl(\delta_{n_2, n_2'+1} + \delta_{n_2, n_2'-1}\bigr)
\end{multline}
are themselves tridiagonal matrices---single-mode transmons in the $n_2$ direction with $k$-dependent on-site energies. The off-diagonal blocks ${B} = -\tfrac{1}{2}E_{J_1}{I}$ are diagonal, reflecting that $\cos\hat{\varphi}_1$ changes $n_1$ while preserving $n_2$.

This structure admits a matrix continued fraction solution. The block resolvent ${G}_{00}(E) = \langle {n}_0 | (E{I} - {H})^{-1} | {n}_0 \rangle$ satisfies
\begin{equation}
{G}_{00}(E) = \bigl[ E{I} - {A}_0 - {B}^\dagger \boldsymbol{\Sigma}_{-}(E) {B} - {B} \boldsymbol{\Sigma}_{+}(E) {B}^\dagger \bigr]^{-1},
\label{eq:G00_matrix}
\end{equation}
where the self-energies $\boldsymbol{\Sigma}_{\pm}(E)$ encode the semi-infinite chains extending to $n_1 \to \pm\infty$. These satisfy matrix continued fraction recursions
\begin{equation}
\boldsymbol{\Sigma}_{+}^{(k)}(E) = \bigl[ E{I} - {A}_{k+1} - {B} \boldsymbol{\Sigma}_{+}^{(k+1)}(E) {B}^\dagger \bigr]^{-1},
\label{eq:Sigma_recursion}
\end{equation}
with terminal condition $\boldsymbol{\Sigma}_{+}^{(N)} = {0}$ at the truncation boundary. Eigenvalues occur at poles of ${G}_{00}(E)$, located by finding zeros of $\det[{G}_{00}^{-1}(E)]$.

The physical regime of interest is strong nonlinear coupling, where the cross-Kerr interaction $\eta$ between modes exceeds their individual anharmonicities $\alpha_i$. Expanding the cosine potentials to fourth order yields the effective Hamiltonian
\begin{equation}
\frac{\hat{H}_{\mathrm{eff}}}{\hbar} = \sum_{i} \omega_i \hat{a}^\dagger_i \hat{a}_i - \frac{\alpha_i}{2} \hat{a}^{\dagger 2}_i \hat{a}^2_i - \eta\, \hat{a}^\dagger_D \hat{a}_D \hat{a}^\dagger_Q \hat{a}_Q,
\end{equation}
with perturbative prediction $\eta^{(\mathrm{pert})} = 2\sqrt{\alpha_D \alpha_Q}$. For devices satisfying $|\eta| > |\alpha_i|$, this fourth-order expansion becomes unreliable. Reference~\cite{Wills2022} reports precisely this regime: ``Both modes are very strongly coupled to each other through the junctions of the device. Notably, the nonlinear coupling ($\chi_{\Sigma\Delta}$) is larger than their respective anharmonicities.''

To validate the matrix CF against experiment, we fit the circuit parameters $(E_{J_1}, E_{J_2}, E_C, E_p)$ to spectroscopic data from Ref.~\cite{Wills2022}. The experimental observables for Device~A are $\omega_D/2\pi = 5.51$\,GHz, $\omega_Q/2\pi = 6.71$\,GHz, $\alpha_D/2\pi = -380$\,MHz, $\alpha_Q/2\pi = -340$\,MHz, and $\eta/2\pi = -500$\,MHz. The perturbative formula predicts $\eta^{(\mathrm{pert})}/2\pi = 2\sqrt{380 \times 340} = 719$\,MHz, overestimating the measured cross-Kerr by 44\%.

Truncating the charge basis at $|n_1|, |n_2| \leq 12$ and solving via the matrix CF recursion~\eqref{eq:Sigma_recursion}, we obtain the fitted parameters $E_{J_1}/h = 12.42$\,GHz, $E_{J_2}/h = 10.93$\,GHz, $E_C/h = 470$\,MHz, and $E_p/h = 148$\,MHz. The 6.4\% junction asymmetry is consistent with fabrication tolerances and has independent support from charge dispersion measurements on similar devices~\cite{Wills2022}. Table~\ref{tab:twomode_validation} compares the matrix CF predictions to experiment. 
\begin{table}[t]
\caption{Matrix continued fraction validation against spectroscopic data for a two-mode transmon in the strong nonlinear coupling regime ($|\eta| > |\alpha_i|$). The perturbative cross-Kerr formula $\eta = 2\sqrt{\alpha_D \alpha_Q}$ overestimates by 44\%, while the matrix CF achieves sub-1\% residuals on all observables. Experimental data from Ref.~\cite{Wills2022}.}
\label{tab:twomode_validation}
\begin{ruledtabular}
\begin{tabular}{lccc}
Observable & Experiment & Matrix CF & Residual \\
\colrule
$\omega_D/2\pi$ & 5.51\,GHz & 5.54\,GHz & $~1\%$ \\
$\omega_Q/2\pi$ & 6.71\,GHz & 6.68\,GHz & $~1\%$ \\
$\alpha_D/2\pi$ & $-380$\,MHz & $-379$\,MHz & $0.3\%$ \\
$\alpha_Q/2\pi$ & $-340$\,MHz & $-342$\,MHz & $0.5\%$ \\
$\eta/2\pi$ & $-500$\,MHz & $-500$\,MHz & $<0.1\%$ \\
\end{tabular}
\end{ruledtabular}
\end{table}
The approximately 1\% residuals in the mode frequencies reflect model simplifications: the four-parameter Hamiltonian~\eqref{eq:H_twomode} neglects junction capacitance, asymmetric charging energies, and hybridisation with the readout resonator. These effects could be incorporated with additional parameters. The key point is that the matrix CF extracts the exact spectrum of the model Hamiltonian, residuals arise from model incompleteness rather than from approximations in solving it. This example illustrates the general principle of the continued fraction framework: nonlinear superconducting circuits generate sparse (Jacobi or block-Jacobi) Hamiltonians in physically natural bases, and these structures admit exact spectral solutions independent of coupling strength.

The matrix CF reduces to the scalar continued fraction in appropriate limits. When the cross-capacitance vanishes ($E_p \to 0$) and junctions are symmetric ($E_{J_1} = E_{J_2}$), the diagonal blocks ${A}_k$ become independent of $k$ and the two-dimensional problem separates into independent single-mode transmons, each governed by a scalar CF. The general case interpolates smoothly between this decoupled limit and the strongly coupled regime validated above. This design example demonstrates that the continued fraction framework extends beyond single-junction circuits while preserving its essential features: exact treatment of the cosine nonlinearity through its tridiagonal charge-basis representation, systematic convergence with truncation, and applicability across coupling regimes where perturbative methods fail.

%%%%%%%%%%%%%%%%%%%%%%%%%%%%%%%%%%%%%%%%%%%%%%%%%%%%%%%%%%%%%%%%%%%%%%%%%%%%%%%
\section{Discussion and Conclusions}
\label{sec:conclusion}

We have developed a boundary-condition framework for superconducting circuit quantization that exploits the continued fraction structure of the driving-point admittance. The central result is that the linearized secular equation $sY_{\mathrm{in}}(s) + 1/L_J = 0$ is equivalent to the eigenvalue problem of the Jacobi operator associated with the Cauer chain realization of the linear environment, with dressed mode frequencies appearing as zeros and port participation weights as residues. This equivalence, rooted in the classical duality between Foster and Cauer network representations, provides a unified treatment applicable from dispersive through ultrastrong and deep strong coupling. The continued fraction emerges not as a computational trick but as the natural mathematical structure for quantum circuits with localized nonlinearity coupled to linear environments.

The framework yields three practical advances. First, ultraviolet convergence is guaranteed under the assumptions of Theorem~\ref{thm:uv_main}: passivity (positive-realness) together with finite junction-port capacitance ensures that junction participation decays as $\phi_n^J = O(\omega_n^{-1})$, so multimode perturbative sums converge absolutely without imposed cutoffs. Second, the standard circuit QED parameters emerge as controlled approximations: the coupling strength $g$ is a product of zero-point flux participations extracted from CF residues, the dispersive shift $\chi = g^2\alpha/[\Delta(\Delta+\alpha)]$ arises from second-order perturbation theory in the CF eigenbasis, and the anharmonicity $\alpha \approx -E_C/\hbar$ emerges from the transmon limit of the Josephson cosine potential. Third, when these perturbative formulas fail, specifically when $g/|\Delta| \gtrsim 0.1$ or $g/\omega_r \gtrsim 0.1$, the same CF structure provides exact results through numerical evaluation rather than series expansion. 

We demonstrated that the continued fraction framework extends naturally beyond single-junction circuits to multimode and multi-junction systems through its matrix generalization. Using a two-mode transmon operating in a regime where the nonlinear cross-Kerr interaction exceeds the individual mode anharmonicities, we showed that the resulting block-Jacobi structure admits an exact solution via matrix continued fractions. In this strongly nonlinear regime—where fourth-order expansions and standard perturbative treatments fail—--the matrix CF reproduces all reported spectroscopic observables to within experimental accuracy. This example establishes that the continued fraction structure is not restricted to single-port environments, but provides a systematic and convergent route to quantizing complex superconducting circuits with multiple nonlinear degrees of freedom.

This approach complements existing methods. Black-box quantization~\cite{Nigg2012} and energy-participation ratio techniques~\cite{Minev2021} implicitly contain the boundary condition but do not exploit its CF structure. The present treatment makes this structure explicit, connecting circuit quantization to the well-developed mathematics of Jacobi operators and orthogonal polynomials. This connection provides certified convergence bounds via interlacing theorems and enables systematic error estimation. Computationally, evaluating the continued fraction scales linearly with the number of ladder sections for a given complex frequency, providing an efficient route to locate linearized mode roots in large multimode environments.

\begin{acknowledgments}
M.B.~acknowledges support from EPSRC QT Fellowship grant EP/W027992/1, and EP/Z53318X/1. We thank Smain Amari for useful discussions on Schur complement.
\end{acknowledgments}

%%%%%%%%%%%%%%%%%%%%%%%%%%%%%%%%%%%%%%%%%%%%%%%%%%%%%%%%%%%%%%%%%%%%%%%%%%%%%%%

\appendix

%===============================================================================
% REVISED APPENDICES - Changes marked in \textcolor{blue}{blue}
% Literature-verified version
%===============================================================================

%%%%%%%%%%%%%%%%%%%%%%%%%%%%%%%%%%%%%%%%%%%%%%%%%%%%%%%%%%%%%%%%%%%%%%%%%%%%%%%

\appendix

\section{Mapping Between Cauer Ladder and Jacobi Matrix}
\label{app:lc_jacobi}

This appendix provides the explicit relationship between the Cauer ladder parameters $\{L_n, C_n\}$ and the Jacobi matrix elements $\{a_n, b_n\}$ referenced in Theorem~\ref{thm:network_spectral}.

%------------------------------------------------------------------------------
\subsection{From Cauer Ladder to Jacobi Matrix}
\label{app:cauer_to_jacobi}
%------------------------------------------------------------------------------

Consider a Type~I Cauer ladder with $N$ sections, having series inductors $\{L_1, L_2, \ldots, L_N\}$ and shunt capacitors $\{C_1, C_2, \ldots, C_N\}$. The admittance is given by Eq.~\eqref{eq:cauer_type1}. We seek the Jacobi matrix $\mathrm{J}$ whose $(0,0)$ resolvent element equals
\begin{equation}
G(\omega^2) = \frac{B(\omega)}{\omega} = \frac{\mathrm{Im}[Y(i\omega)]}{\omega},
\label{eq:G_def_app}
\end{equation}
where $Y(i\omega) = iB(\omega)$ for a lossless network.

Define the section resonant frequencies and characteristic impedances:
\begin{equation}
\omega_n^2 = \frac{1}{L_n C_n}, \quad Z_n = \sqrt{\frac{L_n}{C_n}}.
\label{eq:section_params}
\end{equation}
The Jacobi matrix $\mathrm{J}$ in Eq.~\eqref{eq:jacobi_matrix} has elements determined by the following construction.

{Diagonal elements:} The diagonal elements encode the ``on-site energies'' in the tight-binding analogy. For the Cauer ladder, they are related to the section resonant frequencies through:
\begin{align}
a_n &=  = \omega_n^2 + \frac{C_{n+1}}{C_n}\omega_{n+1}^2, \quad n = 1, \ldots, N-1, \label{eq:jacobi_diag_middle} \\
a_0 &= \frac{1}{L_1 C_0} + \frac{1}{L_1 C_1} = \frac{1}{L_1}\left(\frac{1}{C_0} + \frac{1}{C_1}\right), \label{eq:jacobi_diag_first} \\
a_{N-1} &= \frac{1}{L_{N-1} C_{N-1}} + \frac{1}{L_N C_{N-1}} = \omega_{N-1}^2 + \frac{C_N}{C_{N-1}}\omega_N^2, \label{eq:jacobi_diag_last}
\end{align}
where $C_0$ is any additional shunt capacitance at the input port (set $C_0 \to \infty$ if none is present, giving $1/C_0 \to 0$).

{Off-diagonal elements:} The off-diagonal elements encode the nearest-neighbor coupling. They are determined by:
\begin{equation}
b_n^2 =  \frac{\omega_n^2 \omega_{n+1}^2}{(1 + C_{n+1}/C_n)^2} \cdot \frac{C_{n+1}}{C_n}, \quad n = 1, \ldots, N-1.
\label{eq:jacobi_offdiag}
\end{equation}
Equivalently, in terms of impedances:
\begin{equation}
b_n = \frac{1}{C_n \sqrt{L_n L_{n+1}}} = \frac{\omega_n \omega_{n+1}}{\omega_n Z_{n+1}/Z_n + \omega_{n+1}}.
\label{eq:jacobi_offdiag_alt}
\end{equation}

{Special case---uniform ladder:} For a uniform ladder with $L_n = L$ and $C_n = C$ for all $n$, the expressions simplify dramatically:
\begin{equation}
a_n = \frac{2}{LC} \equiv 2\omega_0^2, \quad b_n = \frac{1}{LC} \equiv \omega_0^2,
\label{eq:uniform_jacobi}
\end{equation}
where $\omega_0 = 1/\sqrt{LC}$ is the common section frequency. The eigenvalues of this uniform Jacobi matrix are
\begin{equation}
E_k = 2\omega_0^2\left[1 - \cos\left(\frac{k\pi}{N+1}\right)\right], \quad k = 1, \ldots, N,
\label{eq:uniform_eigenvalues}
\end{equation}
corresponding to squared mode frequencies $\omega_k^2 = E_k$ that form the well-known tight-binding dispersion relation.

%------------------------------------------------------------------------------
\subsection{From Jacobi Matrix to Cauer Ladder}
\label{app:jacobi_to_cauer}
%------------------------------------------------------------------------------

The inverse mapping recovers Cauer parameters from a given Jacobi matrix. This is useful when the spectral structure (eigenvalues and first-site eigenvector components) is known and one wishes to construct the equivalent ladder network.

Given Jacobi matrix elements $\{a_n, b_n\}_{n=0}^{N-1}$, the Cauer parameters are determined by the following recursive procedure. Define the sequence of ``tail resolvents'':
\begin{equation}
G_n(z) = \cfrac{1}{z - a_n - \cfrac{b_{n+1}^2}{z - a_{n+1} - \ddots}}, \quad n = 0, 1, \ldots, N-1,
\label{eq:tail_resolvent}
\end{equation}
with terminal condition $G_{N-1}(z) = 1/(z - a_{N-1})$.

The Cauer parameters are extracted by matching the continued fraction expansion of $G_0(z)$ at $z = \omega^2$ to the admittance $Y(i\omega)$:

\textbf{Step 1:} From the leading behavior as $z \to \infty$,
\begin{equation}
G_0(z) \sim \frac{1}{z} + \frac{a_0}{z^2} + O(z^{-3}),
\label{eq:G_asymptotic}
\end{equation}
identify the first shunt capacitance (for Type~II) or series inductance (for Type~I).

\textbf{Step 2:} Compute the remainder after extracting the first element and iterate.

The explicit formulas involve the continued fraction convergents. Let $P_n(z)/Q_n(z)$ denote the $n$-th convergent of the continued fraction~\eqref{eq:resolvent_cf}. Then:
\begin{align}
L_1 &= \frac{Q_1(0)}{P_1(0)} = a_0, \label{eq:L1_from_jacobi} \\
C_1 &= \frac{P_1(0) Q_2(0) - P_2(0) Q_1(0)}{Q_1(0)^2} = \frac{b_1^2}{a_0^2}, \label{eq:C1_from_jacobi}
\end{align}
with subsequent elements following from the recurrence relations for $P_n$ and $Q_n$.

%------------------------------------------------------------------------------
\subsection{Connection to Solgun-DiVincenzo Formulation}
\label{app:sd_connection}
%------------------------------------------------------------------------------

The Solgun-DiVincenzo formulation~\cite{Solgun2014, Solgun2015} of Brune circuit quantization uses capacitance and stiffness matrices $\mathcal{C}$ and $\mathrm{M}_0$ that are tridiagonal. These matrices are directly related to the Jacobi matrix through:
\begin{equation}
\mathrm{J} = \mathcal{C}^{-1/2} \mathrm{M}_0 \, \mathcal{C}^{-1/2},
\label{eq:jacobi_from_sd}
\end{equation}
which is the similarity transformation that symmetrizes the generalized eigenvalue problem $\mathrm{M}_0 \boldsymbol{\phi} = \omega^2 \mathcal{C} \boldsymbol{\phi}$.

In their notation, the capacitance matrix has elements (cf.\ their Eq.~(9)):
\begin{equation}
\mathcal{C}_{nn} = t_n^2 C_n' + C_{n+1}', \quad \mathcal{C}_{n,n+1} = t_{n+1} C_{n+1}',
\label{eq:sd_capacitance}
\end{equation}
where $C_n' = C_n/(1-t_n)^2$ and $t_n = \sqrt{L_{n1}/L_{n2}}$ is the transformer turns ratio. The stiffness matrix has elements (cf.\ their Eq.~(10)):
\begin{equation}
(\mathrm{M}_0)_{nn} = \frac{1}{L_n'} + \frac{1}{L_{n+1}'}, \quad (\mathrm{M}_0)_{n,n+1} = \frac{1}{L_{n+1}'},
\label{eq:sd_stiffness}
\end{equation}
where $L_n' = L_{n2}(1-t_n)^2$.

The tridiagonal structure of both matrices confirms the continued fraction / Jacobi matrix equivalence. The eigenfrequencies $\omega_n$ satisfy:
\begin{equation}
\det(\mathrm{M}_0 - \omega^2 \mathcal{C}) = 0,
\label{eq:sd_eigenvalue}
\end{equation}
which is equivalent to finding zeros of the continued fraction $sY_{\mathrm{in}}(s) + 1/L_J$ at $s = i\omega$.

%------------------------------------------------------------------------------
\subsection{Numerical Considerations}
\label{app:numerical}
%------------------------------------------------------------------------------

For numerical implementation, direct evaluation of the continued fraction is preferred over matrix diagonalization for large $N$, as it avoids $O(N^3)$ eigenvalue computations. The Lentz-Thompson-Barnett algorithm~\cite{Press2007} provides numerically stable continued fraction evaluation in $O(N)$ operations per frequency point.

When converting between Foster and Cauer forms numerically, care must be taken with:
\begin{enumerate}
\item {Pole ordering:} Foster poles should be sorted by frequency before conversion.
\item {Near-degenerate poles:} Closely spaced poles can cause numerical instability; regularization may be needed.
\item {High-frequency truncation:} For infinite ladders (distributed elements), truncation at level $N$ introduces $O(1/N)$ errors in low-frequency eigenvalues.
\end{enumerate}

For a detailed treatment of these numerical aspects in the context of circuit quantization, see Refs.~\cite{Solgun2015, Minev2021}.

\end{document}